\theoremstyle{definition}
\newtheorem{definition}{Definition}[section]
\def\be{\begin{equation}}
	\def\ee{\end{equation}}
\def\bea{\begin{eqnarray}}
	\def\eea{\end{eqnarray}}
\def\gtimes{\otimes _\mathfrak{g}}
\theoremstyle{plain}
\newtheorem{thm}{Theorem}[section]
\newtheorem{prop}[thm]{Proposition}
\newtheorem{cor}[thm]{Corollary}
\newtheorem{lem}[thm]{Lemma}
\newtheorem{defn}[thm]{Definition}
\newtheorem{rmk}[thm]{Remark}
\tikzset{
	ncbar angle/.initial=90,
	ncbar/.style={
		to path=(\tikztostart)
		-- ($(\tikztostart)!#1!\pgfkeysvalueof{/tikz/ncbar angle}:(\tikztotarget)$)
		-- ($(\tikztotarget)!($(\tikztostart)!#1!\pgfkeysvalueof{/tikz/ncbar angle}:(\tikztotarget)$)!\pgfkeysvalueof{/tikz/ncbar angle}:(\tikztostart)$)
		-- (\tikztotarget)
	},
	ncbar/.default=0.5cm,
}
\tikzset{square left brace/.style={ncbar=0.5cm}}
\tikzset{square right brace/.style={ncbar=-0.5cm}}
\tikzset{round left paren/.style={ncbar=0.5cm,out=120,in=-120}}
\tikzset{round right paren/.style={ncbar=0.5cm,out=60,in=-60}}
\begin{document}
	
\title{Fermionic quantum cellular automata and generalized matrix product unitaries}
	
\author{Lorenzo Piroli}
\affiliation{Max-Planck-Institut f\"ur Quantenoptik, Hans-Kopfermann-Str.~1, 85748 Garching, Germany}
\affiliation{Munich Center for Quantum Science and Technology, Schellingstra\ss e 4, 80799 M\"unchen, Germany}	
\author{Alex Turzillo}
\affiliation{Max-Planck-Institut f\"ur Quantenoptik, Hans-Kopfermann-Str.~1, 85748 Garching, Germany}
\affiliation{Munich Center for Quantum Science and Technology, Schellingstra\ss e 4, 80799 M\"unchen, Germany}	
\author{Sujeet K. Shukla$^\ast$}
\affiliation{Department of Physics, University of Washington, Seattle WA, 98105 USA}
\author{J. Ignacio Cirac}
\affiliation{Max-Planck-Institut f\"ur Quantenoptik, Hans-Kopfermann-Str.~1, 85748 Garching, Germany}
\affiliation{Munich Center for Quantum Science and Technology, Schellingstra\ss e 4, 80799 M\"unchen, Germany}	

\begin{abstract}
We study matrix product unitary operators (MPUs) for fermionic one-dimensional ($1D$) chains. In stark contrast with the case of $1D$ qudit systems, we show that $(i)$ fermionic MPUs do not necessarily feature a strict causal cone and $(ii)$ not all fermionic Quantum Cellular Automata (QCA) can be represented as fermionic MPUs. We then introduce a natural generalization of the latter, obtained by allowing for an additional operator acting on their auxiliary space. We characterize a family of such generalized MPUs that are locality-preserving, and show that, up to appending inert ancillary fermionic degrees of freedom, any representative of this family is a fermionic QCA and viceversa. Finally, we prove an index theorem for generalized MPUs, recovering the recently derived classification of fermionic QCA in one dimension. As a technical tool for our analysis, we also introduce a graded canonical form for fermionic matrix product states, proving its uniqueness up to similarity transformations. 
\end{abstract}
	
\maketitle

	\def\thefootnote{*}\footnotetext{Present address: Amazon Research, WA, USA}\def\thefootnote{\arabic{footnote}}

%%%%%%%%%%%%%%%%%%%%%%%%%%%%%%%%%%

\section{Introduction}

Among the achievements of Tensor Network (TN) theory, several results stand out in the context of classification of topological phases of matter, now a pillar in modern quantum many-body physics~\cite{Chiu2016Classification}. For one-dimensional $(1D)$ systems, this problem can be naturally formulated in terms of Matrix Product States (MPSs)~\cite{fannes1992finitely,perez2006matrix}, and consists, loosely speaking, in determining all the possible equivalence classes under suitably defined smooth deformations. Arguably, the case where such a problem is most well-understood is that of  $1D$ bosonic symmetry-protected topological (SPT) phases~\cite{Haldane1983Nonlinear,Affleck1987Rigorous,Gu2009Tensor,Pollmann2010Entanglement}, where they have been completely classified by the second cohomology group~\cite{Schuch2011Classifying,Chen2011Classification,Chen2011Complete} using previously-derived results on the canonical forms of MPSs~\cite{perez2006matrix,perez2008string}.

Recently, increasing attention has been devoted to the classification of topological systems far from equilibrium. This was also motivated by the experimental advances in atomic, molecular and optical physics, which now make it possible to probe the quantum many-body dynamics in exquisite detail~\cite{barreiro2011open,schindler2013quantum,choi2017observation,bernien2017probing,zhang2017observation,zhang2017observationMany}. The classification of periodically driven Floquet systems, in particular, has attracted a lot of theoretical work in the past few years~\cite{Kitagawa2010Topological,bardyn2013topology,Else2016Classification,Keyserlingk2016Phase,Potter2016Classification,Roy2017Periodic,Yao2017Topological,Gong2018Topological,Gong2018TopologicalEntanglement,Higashikawa2019Floquet,McGinley2018Topology,McGinley2019Classification,McGinley2019Interacting,Coser2019Classification}. 

In this context, a relevant problem pertains the study of Matrix Product Unitary operators (MPUs)~\cite{Chen2011Complete,Po2016Chiral,cirac2017matrix,sahinoglu2018matrix,Gong2020Classification} in one dimension, namely of Matrix Product Operators (MPOs) that are also unitary. This is intimately connected to the classification of two-dimensional ($2D$) Floquet SPT phases~\cite{harper2019topology}: indeed, given a $2D$ Floquet system which exhibits many-body localization (MBL) in the bulk, its edge dynamics is well described by an MPU~\cite{Po2016Chiral,Potter2017Dynamically,Harper2017FloquetTopological,Roy2017Floquet}.

The theory of MPUs was first developed in Refs.~\cite{cirac2017matrix,sahinoglu2018matrix} for $1D$ qudit systems (see also Ref.~\cite{piroli2020Quantum}, for a recent generalization in higher spatial dimensions). As a nontrivial result, it was shown that any MPU is in fact a Quantum Cellular Automaton (QCA) and viceversa~\cite{schumacher2004reversible,farrelly2019review,farrelly2019review}, namely MPUs feature a causal cone, propagating information strictly by a finite distance only. This observation is particularly interesting, because it allows one to address the analysis of QCA by means of the powerful tools  developed within the theory of MPSs. For example, based on the latter, it was proven that, in the absence of symmetries, equivalence classes of MPUs under smooth deformations are labeled by an index which can be computed directly from their local tensors. This index quantifies the net quantum information flow through the MPU, and was shown to be equivalent to the GNVW index (for Gross, Nesme, Vogts, and Werner) introduced in Ref.~\cite{gross2012index}. Later, the same problem was addressed in the presence of a local symmetry in Ref.~\cite{Gong2020Classification}, where it was shown that a complete classification can be obtained by also taking into account the cohomology class of the symmetry group, thus proving a conjecture raised earlier by Hastings~\cite{hastings2013Classifying}.

While so far MPUs have been studied exclusively for qudit systems, $1D$ fermionic QCA (fQCA) have been recently analyzed in Refs.~\cite{fidkowski2017Radical,fidkowski2019interacting}. In particular, it was shown that while one can develop an index theory along the lines of Ref.~\cite{gross2012index}, the fermionic index need not be a rational number, as for qudits, but can also include a factor of  square root of two. Furthermore, such a classification is complete only if we allow for a more general notion of \emph{stable equivalence} of QCA, which involves enlarging the Hilbert space by appending inert ancillary fermionic degrees of freedom (d.o.f.)~\cite{fidkowski2019interacting}. We note that the emergence of a richer picture might have been expected based on previous studies on fermionic SPT phases in one dimension~\cite{kitaev2001unpaired,Fidkowski2010Effects,Fidkowski2011Topological}. In that case, by mapping fermions onto qudits via the Jordan-Wigner (JW) transformation~\cite{wigner1928paulische}, it was shown that the fermionic classification problem can be reduced to the bosonic one, in the presence of an additional $\mathbb{Z}_2$-symmetry, corresponding to the conserved fermionic parity~\cite{Fidkowski2011Topological,Chen2011Complete}.

At this point, a fundamental question is whether fQCA are also equivalent to fermionic MPUs (fMPUs), similarly to what happens for qudits. For this problem, it is not natural to reduce ourselves to the latter case via a JW transformation, since it typically generates nonlocal terms for periodic boundary conditions (PBC). In fact, it is more convenient to work using a genuinely fermionic formalism~\cite{kraus_fermionic_2010,wahl_symmetries_2014}, where the definition of MPSs and MPOs can be generalized in a straightforward way. This is the approach that we take in this work, where we provide a thorough study of fMPUs and their connections to fQCA. Our results show that the picture is much richer with respect  to the case of qudits, as we outline in the following. 

\subsection{Summary of our  results}

\paragraph{{\bf fMPUs are not equivalent to fQCA}.} Based on Refs.~\cite{cirac2017matrix, sahinoglu2018matrix}, one could expect that any fMPU is automatically a QCA, namely it displays a strict causal cone. Our first result is to show that this is not the case. In particular, in Sec.~\ref{sec:nonlocal_fmpu} we exhibit an example of an fMPU with periodic boundary conditions which is not locality-preserving. In fact, the inverse statement is also not true, and we find that the most natural generalization of MPUs to fermionic degrees of freedom is not enough to capture all fermionic QCA. This is discussed in Sec.~\ref{sec:majorana_shift_operator}, were we built an MPO implementing a translation of Majorana modes. This is a QCA~\cite{fidkowski2019interacting}, but we show that it can not be written in the expected form.\\[-0.2cm]

\paragraph{{\bf Characterization of locality-preserving fMPUs}.} As a second main result, we introduce a class of ``generalized'' fMPUs, and identify a condition on the corresponding local tensors such that any representative of this family is a fQCA and viceversa. This is discussed in Sec.~\ref{sec:generalized_fMPUs}, where generalized fMPUs are defined by allowing for an additional operator acting on the corresponding auxiliary  space, and further characterized in Sec.~\ref{sec:gfMPUs_two_kinds}. The condition guaranteeing a strictly causal cone is expressed in Eq.~\eqref{eq:simpleness_fermionic}, and generalizes the simpleness condition introduced for qudits~\cite{cirac2017matrix}. In addition, we find that any tensor generating an fMPU with antiperiodic boundary conditions (ABC) is necessarily simple after blocking, while this is not true in the periodic case. \\[-0.2cm]

\paragraph{{\bf Index theory from fMPUs}.} Third, we show that for the class of locality-preserving fMPUs,  one can define an index based on the local tensors, which coincides with that of Ref.~\cite{fidkowski2019interacting}. Although our construction is analogous to the one carried out for qudits~\cite{cirac2017matrix,sahinoglu2018matrix}, there are some practical differences, and the definition for the index contains additional signs, as discussed in Sec.~\ref{sec:index_theory}. Here, our main result is the Index Theorem~\ref{th:femionic_index}, which states that the fermionic index displays all the expected stability properties that are present in the case of qudits, and correctly classifies fMPUs with respect to smooth deformations preserving unitarity.\\[-0.2cm]

\paragraph{{\bf  Graded canonical form for fMPSs}.} Finally, as a byproduct of our work, we introduce and study a new graded canonical form (GCF) for fMPSs, which is based on the definition of irreducible fermionic tensors recently presented in Ref.~\cite{bultinck2017fermionic}. This is detailed in Appendix~\ref{sec:graded_canonical_form}, where we prove the existence and uniqueness of the GCF in the case of antiperiodic boundary conditions. This is a technical result which allows  us to generalize directly some derivations of Ref.~\cite{cirac2017matrix}, but which is also interesting per se and might have applications in other problems.

\subsection{Structure of the paper} 

The rest of this work is organized as follows. We begin in Sec.~\ref{sec:QCA}, where we briefly recall the definition of QCA, and review some key results obtained in the recent literature. We proceed with Sec.~\ref{sec:bosonic_tn}, where we introduce basic aspects of TNs in qudit systems,  and review some of the main results derived in Ref.~\cite{cirac2017matrix} on MPUs. We move on with Sec.~\ref{sec:fermionic_MPSs}, where we introduce the standard language of fermionic TNs using the so-called fiducial state formalism. In Sec.~\ref{sec:fMPUs} we discuss our first results on fMPUs, while generalized fMPUs are finally introduced and analyzed in Sec.~\ref{sec:generalized_fMPUs}. The corresponding index theory is then developed in Sec.~\ref{sec:index_theory}, which represents the most technical part of our work, and is carried out using the formalism of graded TNs. Finally, we report our conclusions in Sec.~\ref{sec:conclusions}.

\section{QCA and index theory}
\label{sec:QCA}

Before embarking on the study of fMPUs, we briefly recall some known facts about QCA. While they are usually defined in terms of the automorphisms of the $C^\ast$ algebra of local operators in infinite systems~\cite{gross2012index}, one can also define them as unitary operators acting on finite systems, which is the point of view taken in this work. Specifically,  considering first a $1D$ qudit system associated with the Hilbert space  $\mathcal{H}=\bigotimes_{j}\mathcal{H}_j$, with $\mathcal{H}_j\simeq \mathbb{C}_j^d$, a QCA of range $r$ is a unitary operator $U$ such that for any local operator $\mathcal{O}_j$ acting non-trivially only on $\mathcal{H}_j$, the transformed operator $U^\dagger\mathcal{O}_j U$ acts non-trivially only on the qudits $k=j-r,j-r+1,\ldots j+r$. A completely analogous definition can be given for fermionic degrees of freedom.

QCA have been completely characterized for $1D$ qudit systems in Ref.~\cite{gross2012index}. There, it was shown that any QCA is obtained by composing a finite number of shift operators, and ``finite depth'' quantum circuits, which are obtained by applying a finite number of ``layers'' of two-site local unitary gates acting on disjoint sets of qudits. Moreover, QCA can be classified according to a rational index (denoted by GNVW in this work), which measures the net quantum information flow through the system. Two QCA have the same index if and only if they can be continuously deformed into one another, or, equivalently, are related by the application of a finite-depth quantum circuit. For instance, a one-site shift of qubits on a ring has GNVW index $2$, and is thus in a different class from a finite-depth quantum circuit, which has GNVW index $1$. Finally, it was shown that this index is multiplicative with respect to composition.

These results were generalized for $1D$ fermionic systems in Ref.~\cite{fidkowski2019interacting}. In these works, it was shown that one can define a similar index, although technical complications arise due to the structure of the local fermionic algebra. Physically, the main result was the discovery that the classification of fQCA is richer than in the case of qudits: indeed, the fermionic index is not necessarily a rational number, but also includes factors of square root of two. This is due to the existence of a special fQCA: the Majorana-shift operator, which consists in the translation of Majorana modes (rather than physical fermions). A physical picture for this fQCA was already given in Ref.~\cite{fidkowski2017Radical}, where it was shown that this shift can be understood as the unitary operator that exchanges the topological phases of the $1D$ complex fermion chain~\cite{kitaev2001unpaired}. For PBC, the latter have different parity, implying that the Majorana-shift fQCA with PBC necessarily does not preserve it.

The above picture provides a strong motivation for the study of fMPUs. In particular, since ``plain'' fermionic MPOs (i.e. with no additional operator acting on the auxiliary space) are by construction parity-preserving, cf. Sec.~\ref{sec:fMPOs}, one can expect that the most natural generalization of MPUs defined for qudits~\cite{cirac2017matrix} will not be adequate to capture the Majorana-shift fQCA. Furthermore, a natural question is whether the fermionic index can be extracted from the local tensors of fermionic MPOs defining the associated unitary operators, in analogy to what happens for qudit MPUs~\cite{cirac2017matrix}. These issues will be addressed in the rest of this work.

\section{Basics of Matrix Product States}
\label{sec:bosonic_tn}

In this section we recall some basic definitions and results about MPSs, following the treatment in Ref.~\cite{cirac2017matrixRenormalization}.

We consider a Hilbert space $\mathcal{H}_d$ of dimension $d$, with an orthonormal basis denoted by $\{\ket{i}\}_{i=0}^{d-1}$. We introduce the tensor $\mathcal{A}$, which is defined by its elements $A^{n}_{\alpha,\beta}$, with $n=0, \ldots d-1$, and $\alpha,\beta=0,\ldots, D-1$. We call $n$ and $\alpha$, $\beta$ the physical and bond (auxiliary) indices respectively. The tensor $\mathcal{A}$ defines a family of translation-invariant states $\ket{V^{(N)}}\in \mathcal{H}_d^{\otimes N}$
\be
\ket{V^{(N)}}=\sum_{n_1,\ldots , n_N=0}^{d-1}c_{n_1,\ldots, n_N}\ket{n_1}\otimes \cdots \otimes\ket{n_N}\,,
\label{eq:MPS}
\ee
where
\be
c_{n_1,\ldots, n_N}={\rm tr}\left[A^{n_1}\cdots A^{n_N}\right]\,,
\label{eq:trace}
\ee
and where $A^n$ denotes the $D\times D$ matrix with elements $A^{n}_{\alpha,\beta}$. We call $\ket{V^{(N)}}$ the MPS generated by the tensor $\mathcal{A}$.

MPSs admit a convenient graphical representation~\cite{cirac2017matrixRenormalization}, as we now briefly recall. First, the individual tensors are denoted by
\be
\mathcal{A}=
\begin{tikzpicture}[baseline={([yshift=-2.5ex]current bounding box.center)}, scale=0.7]

\draw (0,0)  -- (0,0.7) node[above]{};
\draw (-0.7,0) node[left]{} -- (0.7,0) node[right]{};	
\filldraw[fill=white,draw=black] (0,0) circle [radius=0.25];
\end{tikzpicture}\,,
\ee
where the horizontal (vertical) lines represent the bond (physical) indices. The MPS $\ket{V^{(N)}}$ is then represented as
\be
\ket{V^{(N)}}=\quad 
\begin{tikzpicture}[baseline={([yshift=0.5ex]current bounding box.center)}, scale=0.7]

\draw (0,0)  -- (0,0.65) ;
\draw (-0.5,0) -- (1.25,0);	
\filldraw[fill=white,draw=black] (0,0) circle [radius=0.25];
\draw (0.75,0)  -- (0.75,0.65);
\filldraw[fill=white,draw=black] (0.75,0) circle [radius=0.25];
\draw(1.75,0) node{$\cdots$};
\draw (2.75,0)  -- (2.75,0.65);
\draw (2.25,0)  -- (3.15,0);	
\filldraw[fill=white,draw=black] (2.75,0) circle [radius=0.25];
\draw (-0.5,0) arc (90:270:0.15) node[right]{};
\draw (3.15,0) arc (90:-90:0.15) node[right]{};
\draw [decorate, decoration={brace,amplitude=4pt, raise=4pt},yshift=0pt]
(3.25,-0.5) -- (-0.6,-0.5) node [black,midway,yshift=-0.55cm] {$N$};
\end{tikzpicture}\,.
\ee
Here, the lines that join different tensors indicate that the corresponding indices are contracted, while the vertical lines represent the physical indices. The curvy lines at the end indicate that the last and first tensors are also contracted, mimicking the presence of the trace in Eq.~\eqref{eq:trace}.

An important object in the theory of MPSs is the so-called transfer matrix (TM), which can be defined as
\be
E=\sum_{n=0}^{d-1} A^{n} \otimes \bar{A}^{n}\,,
\label{eq:transfer_matrix_q}
\ee
and which admits the graphical representation
\bea
E=
\begin{tikzpicture}[baseline={([yshift=-0.5ex]current bounding box.center)}, scale=0.7]

\draw (0,-0.25) -- (0,1) node[left]{};
\draw (-0.5,0) node[left]{} -- (0.5,0) node[right]{};
\draw (-0.5,1) node[left]{} -- (0.5,1) node[right]{};

\filldraw[fill=white,draw=black] (0,0) circle [radius=0.25];
\fill[black] (0,1) circle [radius=0.25];

\end{tikzpicture}\,.
\label{eq:transfer_m_q_fermions}
\eea
In Eq.~\eqref{eq:transfer_matrix_q}, $\bar{A}^n$ is the complex-conjugated matrix of $A^n$, and corresponds to a black circle in Eq.~\eqref{eq:transfer_m_q_fermions}. In the following, we will denote by $\lambda_E$ the spectral radius of $E$, i.e. its eigenvalue with largest absolute value.

A fundamental result is that MPSs of the form~\eqref{eq:trace} can be brought into a canonical form (CF)~\cite{perez2006matrix,cirac2017matrixRenormalization}, which is particularly important when comparing the MPSs generated by different tensors. We recall here the precise definition, which we will refer to later on.
\begin{defn}\label{def:CF}
	We say that a tensor $\mathcal{A}$ generating an MPS is in CF if: $(i)$ the matrices are of the form $A^n = \oplus_{k=1}^r \mu_k A^n_k$, where $\mu_k\in \mathbb{C}$ and the spectral radius of the transfer matrix, $E_k$, associated with $A^n_k$ is equal to one; $(ii)$ for all $k$, there exists no projector, $P_k$, such that $A^n_k P_k = P_k A^n_k P_k$ for all $n$.
\end{defn}
Loosely speaking, this means that the matrices $A^n$ are written in a block-diagonal form, and the blocks can not be decomposed into smaller ones. It is also useful to recall the following
\begin{defn}
	We say that a tensor $\mathcal{A}$ generating an MPS is {irreducible} if there exists no projector, $P$, such that $A^n P = P A^n P$ for all $n$. Furthermore, we say that $\mathcal{A}$  is {normal} if it irreducible and its associated transfer matrix has a unique eigenvalue of magnitude (and value) equal to its spectral radius, which is equal to one.
\end{defn}
Given two tensors $\mathcal{A}$ and $\mathcal{B}$, it is straightforward to show that they generate the same MPS if they are related to one another by a \emph{gauge} transformation, namely $B^n = XA^n X^{-1}$, for some invertible $D\times D$ matrix $X$. One can also see that for a normal tensor it is always possible to find a gauge transformation defining a new normal tensor that is in Canonical Form II (CFII).
\begin{defn}
	Let $\mathcal{A}$ be a normal tensor, and $\Phi$ and $\rho$ the left and right eigenvectors of $E$ corresponding to the eigenvalue $1$. We say that $\mathcal{A}$ is in Canonical Form II (CFII) if
\begin{subequations}
	\begin{align}
		(\Phi |&=\sum_{n=0}^{D-1}(n, n |\,,\\
		| \rho) &=\sum_{n=0}^{D-1} \rho_{n} | n, n)\,,
	\end{align}
\label{eq:cfII}
\end{subequations}
where $\rho_n>0$ and $(\Phi|\rho)=1$.
\end{defn}
Note that in Eq.~\eqref{eq:cfII} we have used round brackets, to indicate that the bra and ket states correspond to the auxiliary space. Note also that $\Phi$ and $\rho$ can be considered as $D\times D$ matrices or as vectors in $\mathcal{H}_D\otimes \mathcal{H}_D$, and by definition we have the graphical equations
\begin{align}
\begin{tikzpicture}[baseline={([yshift=-0.5ex]current bounding box.center)}, scale=0.7]
\draw (-0.5,1) arc (90:270:0.5) node[right]{};
\draw (0,-0.25) -- (0,1) node[left]{};
\draw (-0.5,0) node[left]{} -- (0.5,0) node[right]{};
\draw (-0.5,1) node[left]{} -- (0.5,1) node[right]{};
\filldraw[fill=white,draw=black] (0,0) circle [radius=0.25];
\fill[black] (0,1) circle [radius=0.25];
\draw (1.25,0.5) node[left]{$=$};
\draw (1.8,1) arc (90:270:0.5) node[right]{};
\end{tikzpicture}\,,\\
\begin{tikzpicture}[baseline={([yshift=-0.5ex]current bounding box.center)}, scale=0.7]
\draw (6.,0) arc (-90:90:0.5) node[right]{};
\draw (6,0.5) node[left]{$=$};
\draw (4.5,0) arc (-90:90:0.5) node[right]{};
\draw (4,-0.25) -- (4,1) node[left]{};
\draw (3.5,0) node[left]{} -- (4.5,0) node[right]{};
\draw (3.5,1) node[left]{} -- (4.5,1) node[right]{};
\filldraw[fill=white,draw=black] (4,0) circle [radius=0.25];
\fill[black] (4,1) circle [radius=0.25];
\fill [black] (4.75,0.3) rectangle (5.15,0.7);
\fill [black] (6.25,0.3) rectangle (6.65,0.7);
\end{tikzpicture}\,,
\end{align}
where, we denoted by a rectangle the tensor corresponding to $\rho$ (while $\Phi$ corresponds to the identity operator, simply denoted by a continuous line).

When discussing notions of locality and renormalization procedures, a natural concept is that of \emph{blocking}. In essence, this consists in grouping $k$ neighboring sites to form a single one, with which we can associate a blocked tensor $\mathcal{A}_k$. This is formalized in the following.
\begin{defn}
Given the tensor $\mathcal{A}$ generating an MPS, we denote by $\mathcal{A}_k$ the blocked tensor, which is defined by the graphical representation
\be
\mathcal{A}_k=\quad 
\begin{tikzpicture}[baseline={([yshift=1.5ex]current bounding box.center)}, scale=0.7]

\draw (0,0)  -- (0,0.65) ;
\draw (-0.5,0) -- (1.25,0);	
\filldraw[fill=white,draw=black] (0,0) circle [radius=0.25];
\draw (0.75,0)  -- (0.75,0.65);
\filldraw[fill=white,draw=black] (0.75,0) circle [radius=0.25];
\draw(1.75,0) node{$\cdots$};
\draw (2.75,0)  -- (2.75,0.65);
\draw (2.25,0)  -- (3.15,0);	
\filldraw[fill=white,draw=black] (2.75,0) circle [radius=0.25];
\draw [decorate, decoration={brace,amplitude=4pt, raise=4pt},yshift=0pt]
(3.25,-0.5) -- (-0.6,-0.5) node [black,midway,yshift=-0.55cm] {$k$};
\end{tikzpicture}\,.
\ee
\end{defn}
Note that the physical and auxiliary dimensions for the blocked tensor $\mathcal{A}_k$ are $d_k = d^k$ and $D$, respectively. We recall that the blocking procedure is important because, after blocking, for any tensor $\mathcal{B}$ it is always possible to obtain another one, $\mathcal{A}$, in CF and generating the same MPS~\cite{cirac2017matrixRenormalization}.

While the above definitions have been given for MPSs, they can be straightforwardly extended to MPOs. To this end, we recall that an MPO $M^{(N)}$ admits the graphical representation
\be
M^{(N)}=\quad 
\begin{tikzpicture}[baseline={([yshift=0.8ex]current bounding box.center)}, scale=0.7]

\draw (0,-0.5)  -- (0,0.5) ;
\draw (-0.5,0) -- (1.25,0);	
\filldraw[fill=white,draw=black] (0,0) circle [radius=0.25];
\draw (0.75,-0.5)  -- (0.75,0.5);
\filldraw[fill=white,draw=black] (0.75,0) circle [radius=0.25];
\draw(1.75,0) node{$\cdots$};
\draw (2.75,-0.5)  -- (2.75,0.5);
\draw (2.25,0)  -- (3.15,0);	
\filldraw[fill=white,draw=black] (2.75,0) circle [radius=0.25];
\draw (-0.5,0) arc (90:270:0.15) node[right]{};
\draw (3.15,0) arc (90:-90:0.15) node[right]{};
\draw [decorate, decoration={brace,amplitude=4pt, raise=4pt},yshift=0pt]
(3.25,-0.5) -- (-0.6,-0.5) node [black,midway,yshift=-0.55cm] {$N$};
\end{tikzpicture}\,,
\label{eq:MPO_translation}
\ee
where the lower and upper vertical lines correspond to input and output qudits, respectively. Then, any MPO can be trivially mapped onto an MPS, by grouping both input and output lines to form a single physical index (corresponding to a local space with dimension $d^2$), with a graphical identification
\be
\begin{tikzpicture}[baseline={([yshift=-0.8ex]current bounding box.center)}, scale=0.7]
\draw (0,-0.6)  -- (0,0.6) ;
\draw (-0.6,0) node[left]{} -- (0.6,0) node[right]{};	
\filldraw[fill=white,draw=black] (0,0) circle [radius=0.25];
\end{tikzpicture}
\mapsto
\begin{tikzpicture}[baseline={([yshift=-2ex]current bounding box.center)}, scale=0.7]
\draw (0,0)  -- (0,0.8) ;
\draw[dashed] (0.15,0)  -- (0.15,0.8) ;
\draw (-0.6,0) node[left]{} -- (0.6,0) node[right]{};	
\filldraw[fill=white,draw=black] (0,0) circle [radius=0.25];
\end{tikzpicture}\,.
\label{eq:identification}
\ee
This way, the definitions of transfer matrix, CF, and blocking extend naturally also to MPOs.

\subsection{MPUs in qudit systems}
\label{sec:qudit_MPUs}

Let us consider a tensor $\mathcal{U}$ that generates a family of MPOs $U^{(N)}$ of the form in Eq.~\eqref{eq:MPO_translation}, such that $U^{(N)}$ is a unitary operator for all non-negative integers $N$.  The resulting MPOs $U^{(N)}$ are called Matrix Product Unitaries, and were investigated in Refs.~\cite{cirac2017matrix,sahinoglu2018matrix}. In preparation for the fermionic case, we now review some of their properties, and present the main results derived in Ref.~\cite{cirac2017matrix}. 

First, by viewing $\mathcal{U}$ as a tensor generating an MPS as in Eq.~\eqref{eq:identification}, we can define the normalized transfer matrix 
\be
E_{\mathcal{U}}=\frac{1}{d}
\begin{tikzpicture}[baseline={([yshift=-1ex]current bounding box.center)}, scale=0.7]
\draw (0,0)  -- (0,0.8) ;
\draw[dashed] (0.15,0)  -- (0.15,1) ;
\draw (-0.6,0) node[left]{} -- (0.6,0) node[right]{};	
\filldraw[fill=white,draw=black] (0,0) circle [radius=0.25];
\draw (-0.5,1) node[left]{} -- (0.5,1) node[right]{};
\fill[black] (0,1) circle [radius=0.25];
\end{tikzpicture}\,,
\label{eq:transfer_qudit}
\ee
which plays an important role in the analysis of MPUs. In particular, the starting point of Ref.~\cite{cirac2017matrix} is the observation that $E_{\mathcal{U}}$ has just one nonzero eigenvalue, which is equal to one, and that $\mathcal{U}/\sqrt{d}$ is a normal tensor. This follows simply from the unitarity condition $U^{(N)\dagger}U^{(N)}=\openone$, and 
\be
\frac{1}{d^N}{\rm tr}\left[U^{(N)\dagger}U^{(N)}\right]={\rm tr}\left[E_{\mathcal{U}}^N\right]\,.
\label{eq:trace_transfer}
\ee

Using this observation, it was possible to prove that a for any tensor $\mathcal{U}$ generating an MPU, there exists some $k\leq D^4$ such that the blocked tensor $\mathcal{U}_k$ is \emph{simple}. In general, we define a tensor $\mathcal{U}$ to be simple, if there exists two tensors $a$, $b$ such that
\begin{subequations}
\begin{align}
\begin{tikzpicture}[baseline={([yshift=-0.8ex]current bounding box.center)}, scale=0.7]
\draw (0.5,0) arc (-90:90:0.5) node[right]{};
\draw (-0.5,1) arc (90:270:0.5) node[right]{};
\draw (0,-0.5)  -- (0,1.5) ;
\draw (-0.5,0) node[left]{} -- (0.5,0) node[right]{};
\draw (-0.5,1) node[left]{} -- (0.5,1) node[right]{};
\filldraw[fill=white,draw=black] (0,0) circle [radius=0.25];
\filldraw[fill=black,draw=black] (0,1) circle [radius=0.25];
\draw [fill=white] (-1.15,0.28) rectangle (-0.75,0.72);
\draw [fill=white](0.75,0.28) rectangle (1.15,0.72);
\draw (2,0.5) node[left]{$=$};
\draw (2.5,-0.5)  -- (2.5,1.5) ;
\draw (-0.95,0.5) node{$a$};
\draw (0.95,0.5) node{$b$};
\end{tikzpicture}\quad \,,\\
\begin{tikzpicture}[baseline={([yshift=-0.8ex]current bounding box.center)}, scale=0.7]
\draw (0,-0.5)  -- (0,1.5) ;
\draw (1,-0.5)  -- (1,1.5) ;
\draw (-0.5,0) -- (1.5,0);
\draw (-0.5,1) -- (1.5,1);
\draw (2.5,0.5) node[left]{$=$};
\filldraw[fill=white,draw=black] (0,0) circle [radius=0.25];
\filldraw[fill=black,draw=black] (0,1) circle [radius=0.25];
\filldraw[fill=white,draw=black] (1,0) circle [radius=0.25];
\filldraw[fill=black,draw=black] (1,1) circle [radius=0.25];
\draw (2.5,0.5) node[left]{$=$};
\draw (3.2,-0.5)  -- (3.2,1.5) ;
\draw (4.8+0.2,-0.5)  -- (4.8+0.2,1.5) ;
\draw (3.2,0) arc (-90:90:0.5) node[right]{};
\draw (4.8+0.2,1) arc (90:270:0.5) node[right]{};
\draw (2.6,0) -- (3.2,0);
\draw (2.6,1) -- (3.2,1);
\draw (5+0.2,0) -- (5.3+0.2,0);
\draw (5+0.2,1) -- (5.3+0.2,1);
\filldraw[fill=white,draw=black] (3.2,0) circle [radius=0.25];
\filldraw[fill=black,draw=black] (3.2,1) circle [radius=0.25];
\filldraw[fill=white,draw=black] (4.8+0.2,0) circle [radius=0.25];
\filldraw[fill=black,draw=black] (4.8+0.2,1) circle [radius=0.25];
\draw [fill=white] (3.5,0.28) rectangle (3.9,0.72);
\draw [fill=white](4.05+0.2,0.28) rectangle (4.45+0.2,0.72);
\draw (3.7,0.5) node{$b$};
\draw (4.25+0.2,0.48) node{$a$};
\end{tikzpicture}\quad\,.
\end{align}\label{eq:simpleness_condition}
\end{subequations}
It is also clear using a graphical proof that any simple tensor generates an MPU, making the characterization of Ref.~\cite{cirac2017matrix} complete.

The simpleness condition makes it also possible to derive a standard form for the tensor $\mathcal{U}$, by means of which a Fundamental Theorem of MPUs could be proven in Ref.~\cite{cirac2017matrix}. The latter states that two tensors $\mathcal{U}$ and $\mathcal{V}$ generate the same MPU for all non-negative integers $N$ if and only if they have the same standard form (up to single-site gauge transformations).

Importantly, a simple corollary of these results is that for qudit systems any MPU (with finite bond dimension) is a $1D$ QCA, and viceversa. This means that any MPU $U^{(N)}$ maps any operator $\mathcal{O}$ supported on a finite region into another one, $U^{(N)\dagger}\mathcal{O}U^{(N)}$, which is also supported on a finite spatial region. Such an identification between MPUs and QCA is based on the classification of Ref.~\cite{gross2012index}, according to which any given QCA can be represented by a finite number of layers of finite-depth circuits and translations.

The aim of the rest of this work is to explore if and how such a picture generalizes to fermionic $1D$ systems. As we have already anticipated, significant differences emerge, as we lay out in the following. We begin our study in the next section, by introducing fermionic $1D$ tensor networks.

\section{Fermionic Tensor Networks}
\label{sec:fermionic_MPSs}

We consider a chain of $N$ sites, and in each site we have $n_F$ fermionic modes with (physical) annihilation operators $a_{x,j}$, $x = 1,\ldots,N$, $j=1,\ldots , n_F$. Creation and annihilation operators satisfy canonical anticommutation relations
\begin{subequations}
\begin{align}
\{a_{x,j}, a^{\dagger}_{y,k}\}&=\delta_{x,y}\delta_{j,k}\,,\\
\{a_{x,j}, a_{y,k}\}&=0 \,.
\end{align}
\end{subequations}
In the following, we denote by $\ket{\Omega}$ the physical vacuum, with $a_{x,j}\ket{\Omega}=0$, while we introduce the short-hand notation 
\be
\left(a^\dagger_x\right)^{n} = \left(a^\dagger_{x,1}\right)^{n^{(1)}} \cdots \left(a^\dagger_{x,n_F}\right)^{n^{(n_F)}}\,,
\label{eq:short_hand}
\ee
where $(n^{(1)},\ldots ,n^{(n_F)})$ is the binary decomposition of $n$. We can define a fermionic parity operator $\mathcal{P}$ satisfying $\mathcal{P}\ket{\Omega}=\ket{\Omega}$, and
\begin{align}
\mathcal{P}\left(a^\dagger_{x,1}\right)^{n^{(1)}} \cdots \left(a^\dagger_{x,n_F}\right)^{n^{(n_F)}}=(-1)^{|n|}\nonumber\\
\times \left(a^\dagger_{x,1}\right)^{n^{(1)}} \cdots \left(a^\dagger_{x,n_F}\right)^{n^{(n_F)}}\mathcal{P}
\end{align}
where
\be
|n|=\sum_{j=1}^{n_F}  n^{(i)} \quad (\bmod\ 2)\,.
\label{eq:physical_parity}
\ee
Using the above notations, any state in the system can be represented as
\be
|\Psi\rangle=\sum_{n_{1}, \dots, n_{N}=0}^{d-1} c_{n_{1}, \dots, n_{N}} a_{1}^{\dagger n_{1}} \cdots a_{N}^{\dagger n_{N}}|\Omega\rangle\,,
\label{eq:psi_explicit}
\ee
where $d=2^{n_F}$. In the following, we will always work with states having well-defined parity, namely
\be
\mathcal{P}\ket{\Psi}=(-1)^{|\Psi|}\ket{\Psi}\,.
\ee
This implies that $c_{n_{1}, \dots, n_{N}}$ is zero unless $\sum_{j=1}^N|n_j|\equiv |\Psi|$ $({\rm mod}\ 2)$.

Before discussing fMPSs, it is useful to recall that, in the case of fermions, there are two natural types of boundary conditions to be considered: periodic and antiperiodic. Accordingly, one can define two types of translation operators. If periodic boundary conditions are assumed, we define $T_P$ by its action 
\begin{subequations}\label{eq:periodic_tr}
\begin{align}
T_P\ket{\Omega}&=\ket{\Omega}\,,\\
T_P a^{\dagger}_{x,j} T_P^{-1}&=a^{\dagger}_{x+1,j} \,,\quad 1\leq x\leq N-1\,,\\
T_P a^{\dagger}_{N,j} T_P^{-1}&=a^{\dagger}_{1,j}\,,
\end{align}
\end{subequations}
while for antiperiodic boundary conditions, we define $T_{AP}$ by
\begin{subequations}\label{eq:antiperiodic_tr}
\begin{align}
T_{AP}\ket{\Omega}&=\ket{\Omega}\,,\\
T_{AP} a^{\dagger}_{x,j} T_{AP}^{-1}&=a^{\dagger}_{x+1,j} \,,\quad 1\leq x\leq N-1\,,\\
T_{AP} a^{\dagger}_{N,j} T_{AP}^{-1}&=-a^{\dagger}_{1,j}\,.
\end{align}
\end{subequations}
Using Eqs.~\eqref{eq:periodic_tr}, \eqref{eq:antiperiodic_tr}, one can easily find the condition on the coefficients $c_{n_{1}, \dots, n_{N}} $ for the state~\eqref{eq:psi_explicit} to be invariant under $T_P$ or $T_{AP}$. Specifically
\begin{align}
T_P\ket{\Psi}&=\ket{\Psi} \Leftrightarrow  c_{n_{1}, \dots, n_{N}} = (-1)^{|n_1|(|\Psi|+1)}c_{n_{2}, \dots, n_N, n_{1}}\,, \nonumber\\
T_{AP}\ket{\Psi}&=\ket{\Psi} \Leftrightarrow c_{n_{1}, \dots, n_{N}} = (-1)^{|n_1||\Psi|}c_{n_{2}, \dots, n_N,  n_{1}}\,.
\label{eq:invariant_tr}
\end{align} 

We stress that both types of boundary conditions appear frequently when working with fermionic systems. For instance, using Eq.~\eqref{eq:invariant_tr}, one can easily see that a state obtained by occupying each fermionic mode (with $n_F=1$) is invariant under translations with periodic or antiperiodic boundary conditions, depending on whether $N$ is odd or even. Accordingly, both types of boundary conditions will be studied in this work.

\subsection{Fermionic MPSs}
\label{sec:fMPOs}

Up to now, several equivalent formulations of fermionic TNs have been developed~\cite{corboz_fermionic_2009,gu_grassmann_2010,corboz_simulation_2010-1,corboz_simulation_2010,kraus_fermionic_2010,wahl_symmetries_2014,bultinck2017fermionic,kapustin_spin_2018,turzillo_fermionic_2019}. Here, we will focus on the fiducial-state formalism introduced in Refs.~\cite{kraus_fermionic_2010,wahl_symmetries_2014}, whose appeal lies mainly in its physically motivated construction. In the second, more technical part of the paper,  however, we will also make use of the formalism of graded tensor networks recently introduced in Refs.~\cite{bultinck2017fermionic,kapustin_spin_2018} (see also \cite{bultinck2017fermionic_PEPS,Shukla2020Tensor}), which is reviewed in Appendix~\ref{sec:graded_TN}.

For qudit systems, one can think of MPSs (or, more generally of PEPS) as obtained from a sequence of local projections onto maximally entangled pairs of auxiliary qudits~\cite{Verstraete2004Density}. The idea of Refs.~\cite{kraus_fermionic_2010,wahl_symmetries_2014} is that the same construction can be carried out for fermionic systems, provided that the auxiliary d.o.f. are taken to be fermionic particles themselves. As a technical point, it is convenient to choose such auxiliary particles as Majorana fermions. Furthermore, one needs to enforce a given fermionic parity on the local projectors, in order to ensure that the fMPS itself has well-defined parity. 

A detailed discussion of this construction is provided in Appendix~\ref{sec:fiducial_state}, while here we only report the final result for the coefficients in Eq.~\eqref{eq:psi_explicit}. The explicit form of the latter depend on whether periodic or antiperiodic boundary conditions are assumed. In particular, we have, respectively,
\begin{align}
c_{n_{1}, \ldots, n_{N}}&={\rm tr}\left(Z A^{n_{1}} \cdots A^{n_{N}}\right)\,,\quad ({\rm PBC})\label{eq:fMPS_pbc}\\
c_{n_{1}, \ldots, n_{N}}&={\rm tr}\left(A^{n_{1}} \cdots A^{n_{N}}\right)\quad ({\rm APB})
\label{eq:fMPS_apbc}
\end{align} 
Here we have introduced the parity operator $Z$
\be
Z=
\begin{pmatrix}
	\openone_e & 0 \\
	0 & -\openone_o
\end{pmatrix}\,,
\label{eq:parity_operator}
\ee
where $\openone_e$, $\openone_o$ are identity operators acting on the even and odd subspaces of dimensions $D_e$ and $D_o$, with $D_e=D_o=D/2$, and $D=2^{N_F}$, for some positive integer $N_F$. Furthermore, in order to ensure that the fMPS has well-defined parity, we require that the matrices $A^n$ satisfy\footnote{We note that Eq.~\eqref{eq:commutation} implies that $\mathcal{A}$ is an even tensor. One could also consider fMPSs built out of odd local tensors, which would lead to states with well-defined parity for each non-negative integer $N$. However, choosing $\mathcal{A}$ to be even is not a restriction, since blocking twice an odd tensor yields an even one. }
\be
Z A^{n}=(-1)^{|n|}  A^{n}Z\,,
\label{eq:commutation}
\ee
where $|n|$ is defined in Eq.~\eqref{eq:physical_parity}. In the following, we will thus define a periodic (antiperiodic) fMPS to be a state of the form~\eqref{eq:psi_explicit}, where the coefficients can be cast as in Eq.~\eqref{eq:fMPS_pbc} (Eq.~\eqref{eq:fMPS_apbc}), and where the local tensors satisfy~\eqref{eq:commutation}. Note that it is straightforward to verify that the coefficients~\eqref{eq:fMPS_pbc}, \eqref{eq:fMPS_apbc} satisfy Eqs.~\eqref{eq:invariant_tr}.

The parity operator $Z$ allows us to assign a $\mathbb{Z}_2$-\emph{grading} structure to the auxiliary space $\mathbb{C}^{D}$, which simply means that $\mathcal{H}_D$ can be divided into two complementary subspaces of even and odd states. Here we define a state to be even or odd if it is a superposition of  $Z$-eigenstates with eigenvalues $+1$ or $-1$, respectively. If $|\alpha)$ is an eigenstate of $Z$, we will denote the corresponding eigenvalue by $(-1)^{|\alpha|}$, with $|\alpha|=0$ ($|\alpha|=1$) for even (odd) states, namely
\begin{subequations}
	\begin{align}
	Z|\alpha)=&(-1)^{|\alpha|}|\alpha)\,.
	\end{align}
\end{subequations}
The observations above also allow us to define in a natural way the parity of tensors acting on both the auxiliary and physical spaces. In particular, let $A^{n}_{\alpha,\beta}\neq 0$ be an element of the tensor $\mathcal{A}$. Then, we can define the parity of $\mathcal{A}$ as
\begin{align}
|\mathcal{A}|&=|n|+|\alpha|+|\beta|\qquad (\bmod\ 2)\,.
\end{align}
Clearly, this definition only makes sense if it is independent from the choice of $n$, $\alpha$ and $\beta$.
This is the case if $A^n$ satisfy Eq.~\eqref{eq:commutation}, which in fact implies that $\mathcal{A}$ is an even tensor

Given the parity operator~\eqref{eq:parity_operator}, it is possible to write down the general form of the matrices $A^{n}$ satisfying Eq.~\eqref{eq:commutation}. In particular, it is easy to show that, in the basis where $Z$ is written as in Eq.~\eqref{eq:parity_operator}, $A^n$ must have the following block-structure 
\be
A^n=
\begin{pmatrix}
	B^n & 0 \\
	0 & C^n
\end{pmatrix}\,, \qquad |n|=0\,,
\label{eq:even}
\ee
\be
A^n=
\begin{pmatrix}
	0 & B^n \\
	C^n &0
\end{pmatrix}\,, \qquad |n|=1\,,
\label{eq:odd}
\ee
where $B^{n}$, $C^{n}$ are arbitrary matrices.  

So far, we have considered $D=2^{N_F}$, $d=2^{n_F}$. However, this can be naturally relaxed as the tensor $\mathcal{A}$ can have many zeros and thus we can compress its dimensions. This happens when for some values of $n$, $A^{n}_{\alpha,\beta} = 0$ for all $\alpha,\beta$ or, alternatively, for some $\alpha$, $A^n_{\alpha,\beta}=A^n_{\beta,\alpha}=0$ for all $n$ and $\beta$. In this case we can restrict ourselves to subspaces of $\mathbb{C}^d$ and $\mathbb{C}^D$ with dimension $d^\prime$ and $D^\prime$, respectively. We can also call $Z^\prime$ and $\mathcal{P}^\prime$ the projection of $Z$ and $\mathcal{P}$ onto these subspaces: importantly, they remain diagonal with diagonal elements $\pm 1$, meaning that the reduced spaces maintain the $\mathbb{Z}_2$-grading structure. In the following, we will consider that this is the case and drop the primes in the notation, so that $d$ and $D$ can take arbitrary values. 

It is straightforward to employ the fiducial-state formalism to also treat fMPOs. In general, any fermionic operator $U^{(N)}$ can be written in the form
\be
U^{(N)}=\sum_{n_{1}, \dots, n_{N}=0\atop m_{1}, \dots, m_{N}=0}^{d-1} c^{n_{1}, \dots, n_{N}}_{m_{1}, \dots, m_{N}} f_1^{n_1,m_1}\cdots f_N^{n_N,m_N}\,.
\label{eq:fermi_operator}
\ee
Here, we introduced the operators $f_j^{n,m}$, defined by $f_j^{n,m}\ket{\Omega}=\delta_{m,0}(a^{n}_j)^\dagger\ket{\Omega}$, and
\begin{subequations}\label{eq:small_f_operators}
\begin{align}
f_{j}^{n,m} \left(a_{k}^\dagger\right)^{p}=& (-1)^{(|n|+|m|)|p|} \left(a_{k}^\dagger\right)^{p} f_{j}^{n,m}\,, \quad {j\neq k}\,,\label{eq:commutation_fermi_local}\\
f_{j}^{n,m} \left(a^\dagger_j\right)^{p}&=\delta_{m,p} \left(a_j^\dagger\right)^{n}\,,
\end{align}
\end{subequations}
where $(a_j^\dagger)^{n}$ was defined  in Eq.~\eqref{eq:short_hand}. Note that from these equations it also follows
\be
\left[f_{j}^{n,m}\right]^\dagger= f_{j}^{m,n}\,.
\label{eq:adjoint_f}
\ee
For example, in the case $n_F=1$, we have $f_j^{0,0}=a_j a_j^\dagger$, $f_j^{0,1}= a_j$, $f_j^{1,0}= a_j^\dagger$ and $f_j^{1,1}=a_j^\dagger a_j$.

Analogously to fMPSs, fMPOs are defined by a specific form for the coefficients $c^{n_{1}, \dots, n_{N}}_{m_{1}, \dots, m_{N}}$, which is obtained by following the construction outlined in Appendix.~\ref{sec:fiducial_state}. Once again, we have to distinguish between periodic and antiperiodic boundary conditions, for which we obtain, respectively,
\begin{align}
c^{n_{1}, \dots, n_{N}}_{m_{1}, \dots, m_{N}}&={\rm tr}\left(Z U^{n_1,m_1} \cdots U^{n_N,m_N}\right)\,,\quad ({\rm PBC})\label{eq:fMPO_pbc}\\
c^{n_{1}, \dots, n_{N}}_{m_{1}, \dots, m_{N}}&={\rm tr}\left(U^{n_1,m_1} \cdots U^{n_N,m_N}\right)\quad ({\rm ABC})
\label{eq:fMPO_apbc}
\end{align} 
Here $Z$ is defined as in Eq.~\eqref{eq:parity_operator}, while $U^{n,m}$ are $D\times D$ matrices which must satisfy 
\be
Z U^{n,m}=(-1)^{|n|+|m|} U^{n,m} Z\,.
\label{eq:op_commutation}
\ee

It is important to note that elementary operations with fMPSs and fMPOs, such as blocking or composition, in general result in additional signs for the corresponding tensors. Since these signs are at the root of some qualitative differences arising in the case of fermionic MPUs, we discuss them in some detail in Appendix~\ref{sec:elementary_operators}.

\section{Fermionic Matrix Product Unitaries}
\label{sec:fMPUs}

We now finally introduce the fMPUs. There are two main results in this section: the construction of a translation-invariant fMPO which is unitary but not a QCA (cf. Sec.~\ref{sec:nonlocal_fmpu}), and the derivation of the TN form of the Majorana-shift operator for open and periodic boundary conditions, cf. Eqs.~\eqref{eq:maj_anti}, and \eqref{eq:maj_per}.

Based on the formalism of fermionic TNs, we can formulate a very natural generalization of the definition of MPUs (presented in Sec.~\ref{sec:qudit_MPUs}) to fermionic systems. Namely, we call $U^{(N)}$ an fMPU if is it an fMPO and $\left[U^{(N)}\right]^\dagger U^{(N)}=\openone$ for all $N\geq 1$. Furthermore, differently from the case of qudits, it makes sense to define fMPUs both for periodic and antiperiodic boundary conditions, corresponding to the coefficients~\eqref{eq:fMPO_pbc} and \eqref{eq:fMPO_apbc}, respectively.

Arguably, one of the most important questions regarding fMPUs is whether they are all locality-preserving. We find that this is not the case for periodic fMPUs, in stark contrast with the case of qudits. We show this in the next subsection, by providing an explicit example of such a non-locality-preserving fMPU. Technically, this qualitative difference between fermions and qudits arises because of the presence of the operator $Z$ in the trace in Eq.~\eqref{eq:fMPO_pbc}, which makes it possible for the transfer matrix  of periodic fMPUs to display a nontrivial spectrum, cf. also Eq.~\eqref{eq:trace_transfer_periodic}.

\subsection{Non-locality-preserving fMPUs}
\label{sec:nonlocal_fmpu}

We now provide an explicit example of a periodic fMPU which is not locality-preserving. Once again, we use the fiducial-state formalism, but our  example can be straightforwardly translated in the language of graded TNs, cf. Appendix~\ref{sec:graded_TN}.

Let us consider a chain of $N$ sites, with two fermionic modes per site, $n_F=2$, corresponding to annihilation operators $a_{x,1}$, $a_{x,2}$, $x=1,\ldots , N$. Let $U^{(N)}$ be a periodic fMPO of the form~\eqref{eq:fermi_operator} with coefficients given in Eq.~\eqref{eq:fMPO_pbc}. In order to construct our example, we choose $D=3$, so that
\be
Z=
\begin{pmatrix}
	+ 1 & 0 & 0 \\
	0 & +1 & 0 \\
	0 & 0 & -1 \\
\end{pmatrix}\,.
\ee
We stress, once again, that the bond-dimension can be taken to be a power of two, by simply appending arbitrarily many lines and columns filled with zeros. We now define the local tensor $\mathcal{U}$, by specifying its non-zero matrix elements $U^{m,n}_{\alpha,\beta}$, which are
\begin{align}
U^{0,0}_{\alpha,\beta}&=\delta_{\alpha,\beta}\,, \quad \alpha,\beta=0,1,2\,,\label{eq:u_00}\\
U^{n,m}_{\alpha,\beta}&=\delta_{\alpha,n-1}\delta_{\beta,m-1}\,, \ \alpha,\beta=0,1,2\,,\ n,m=1,2,3\,.\label{eq:u_nm}
\end{align}
For completeness, we also tabulate the explicit matrices $U^{n,m}$ in Appendix~\ref{sec:tabulated_local_tensors}, from which it is immediate to verify that they satisfy Eq.~\eqref{eq:op_commutation}, and thus generate a legitimate periodic fMPO. In the following, we prove unitarity of $U^{(N)}$, and analyze its properties.

\paragraph{Spectral properties of the TM.} First of all, it is interesting to note that the spectrum of the transfer matrix [defined in Eq.~\eqref{eq:fermionic_transfer_matrix} for fMPOs], denoted by $\mathcal{S}_E$, is nontrivial. In particular, a direct calculation gives $\mathcal{S}_E=\{\lambda_j\}_{j=0}^8$, with $\lambda_0=1$ and $\lambda_j=1/4$, for $j=1,\ldots 8$.  This is already a point of departure with respect to the case of qudits, where the transfer matrix of an MPU has a single nonzero eigenvalue, which is equal to one. By further inspection, one can see that the eigenvector of $E$ associated with $\lambda_0=1$ is also an eigenstate of $Z\otimes Z$ with eigenvalue $+1$. Analogously, one can see that the rest of the eigenvectors associated with $\lambda_j$, $j>0$, are also eigenstates of $Z\otimes Z$: four of them with  eigenvalue $+1$, and the other four with eigenvalue $-1$. Thus, we obtain from Eq.~\eqref{eq:trace_transfer_periodic}, $(1/d^N){\rm tr}U^{(N)\dagger}U^{(N)}=1+(4/4^N)-(4/4^N)=1$, as it should for a unitary operator.

\paragraph{Proof of unitarity} Next, we show that $U^{(N)}$ is unitary. One could do this by checking that the fMPO generated by the tensor $\mathbb{U}^{(N)}$ defined in Eq.~\eqref{eq:unitary_stacked} is the identity. Here, we follow a different strategy, which is based on the action of $U^{(N)}$ on basis states. In particular, from the explicit form of the local tensors, we show in Appendix~\ref{sec:tabulated_local_tensors} that
\be
U^{(N)} \ket{\Omega}=\ket{\Omega}\,,
\label{eq:action_1}
\ee
and, $\forall P=1,\ldots, N$, $\{j_\ell\}_{\ell =1}^P\in \{1,2,3\}^{ P}$,
\be
U^{(N)} {a}^{j_1\dagger}_{i_1} {a}^{ j_2 \dagger}_{i_2}\cdots {a}^{j_P \dagger}_{i_P}\ket{\Omega} = (-1)^\gamma  {a}^{j_P\dagger}_{i_1} {a}^{j_1\dagger}_{i_2}\cdots {a}^{ j_{P-1}\dagger}_{i_P}\ket{\Omega}\,.
\label{eq:action_2}
\ee
Here $\{i_{k}\}_{k=1}^P$ is a strictly increasing sequence of integers, with $1\leq i_k \leq N$, which label the site at which the creation operators act on. Furthermore, $(-1)^\gamma$ is a sign which depends on the specific state. Note that in Eq.~\eqref{eq:action_2}, {${j_\ell}$} takes value in the set ${\{1,2,3\}}$, namely ${j_\ell\neq 0}$, and that ${U}^{(N)}$ does not act as a translation operator, but only shifts the creation operators at fixed positions. From Eqs.~\eqref{eq:action_1}, \eqref{eq:action_2}, it is clear that $U^{(N)}$ maps two orthonormal bases one onto another, and it is thus unitary.

\paragraph{Proof of non-locality}
Finally, we prove explicitly that the operator $U^{(N)}$ is not locality-preserving. Namely there exists a local operator $\mathcal{O}_j$, acting only on site $j$, such that the support of $U^{\dagger}\mathcal{O}_j U$ is not contained in any finite region. In order to see this, consider the state $\ket{\Psi_{j,k}}= a^{\dagger}_{j,1}  a^{\dagger}_{k,1} \ket{\Omega}$ and choose the operator
$\mathcal{O}_j=a^\dagger_{j,2} a_{j,1}$. Then, using the explicit action in Eq.~\eqref{eq:action_2}, we have 
\begin{align}
U^{\dagger} \mathcal{O}_j U \ket{\Psi_{j,k}}=(-1)^{\gamma}& U^{\dagger} \mathcal{O}_j\ket{\Psi_{j,k}}\nonumber\\
=(-1)^{\gamma} U^{\dagger}a^\dagger_{j,2} a^\dagger_{k,1}\ket{\Omega}&= (-1)^{\gamma^\prime} a^\dagger_{j,1} a^\dagger_{k,2}\ket{\Omega}\,,
\end{align}
where we chose $j\neq k$ and where $(-1)^{\gamma}$, $(-1)^{\gamma^\prime}$ are signs that are irrelevant for our discussion. We see that the operator $U^{\dagger} \mathcal{O}_j U $ induces a modification on the mode at site $k$, which can be taken arbitrarily far away from $j$, and thus it is not localized in a neighborhood of site $j$.

In conclusion, we exhibited an example of an fMPU with periodic boundary conditions which is not locality-preserving. The choice of periodic boundary conditions was important: indeed as we discuss in the next section, fMPUs with antiperiodic boundary conditions are necessarily QCA.

\subsection{The Majorana-shift operator}
\label{sec:majorana_shift_operator}

The results of the previous subsection show that, in general, fMPUs are not QCA.  On the other hand, it is also true that fMPUs with coefficients in the form of either Eq.~\eqref{eq:fMPO_pbc} or Eq.~\eqref{eq:fMPO_apbc} do not exhaust all the possible QCA (not even after blocking), once again in stark contrast with the case of qudits. This was already mentioned in Sec.~\ref{sec:QCA}, where we anticipated the special role played by the translation of Majorana modes~\cite{fidkowski2017Radical,fidkowski2019interacting}. We review it explicitly in this subsection, where we also derive the corresponding TN representation.

Let us consider a chain of $N$ sites, with one fermionic mode per site, $n_F=1$, associated with fermionic annihilation operators $a_n$. We can introduce the $2N$ Majorana modes $\gamma_n$ by
\begin{align}
a_{n}&=\frac{1}{2}\left(\gamma_{2 n-1}+i \gamma_{2 n}\right)\,, \\ a_{n}^{\dagger}&=\frac{1}{2}\left(\gamma_{2 n-1}-i \gamma_{2 n}\right)\,.
\label{eq:real_vs_majorana}
\end{align}
We consider now two automorphisms of the operator algebra $\alpha_{P}$, $\alpha_{AP}$, defined by the following action
\begin{align}
\alpha_{P}(\gamma_{j})&=\gamma_{j+1}\,,\quad j=1,\ldots 2N-1\,,\\
\alpha_{P}(\gamma_{2N})&=\gamma_{1}\,,
\end{align}
and
\begin{align}
\alpha_{AP}(\gamma_{j})&=\gamma_{j+1}\,,\quad j=1,\ldots 2N-1\,,\\
\alpha_{AP}(\gamma_{2N})&=-\gamma_{1}\,.
\end{align}
Clearly, $\alpha_{P}$ and $\alpha_{AP}$ implement a fermionic QCA, which we will refer to as Majorana shift (or translation).

Importantly, both  $\alpha_{P}$ and $\alpha_{AP}$ can be represented by a unitary operator. Namely there exist $M^{(N)}_{P}$, $M^{(N)}_{AP}$, such that $M_{P}^{(N)}M^{(N)\dagger}_{P}=M_{AP}^{(N)}M^{(N)\dagger}_{AP}=\openone$, and $\alpha_{P}(\mathcal{O})=M_P^{(N)}\mathcal{O}M^{(N)\dagger}_P$, $\alpha_{AP}(\mathcal{O})=M_{AP}^{(N)}\mathcal{O}M^{(N)\dagger}_{AP}$, for all operators $\mathcal{O}$. Let us focus, for instance, on the case of  antiperiodic boundary conditions. After a bit of guesswork, it is not difficult to arrive at the following explicit expression for the operator $M_{AP}$~\footnote{Interestingly, in Ref.~\cite{fidkowski2019interacting} it was shown that $M^{(N)}_{AP}$ can also be written as a Gaussian operator, namely as the exponential of an expression which is quadratic in the Majorana modes. This provides an alternative representation to the one given here in terms of fermionic TNs. We also note that an equation similar to~\eqref{eq:maj_product} have appeared recently in Ref.~\cite{Sannomiya2019Supersymmetry} for periodic boundary conditions.}
\be
M_{AP}^{(N)}=\frac{(1-\gamma_1 \gamma_2)}{\sqrt{2}} \frac{(1-\gamma_2 \gamma_3)}{\sqrt{2}}
\cdots \frac{(1-\gamma_{2N-1}\gamma_{2N})}{\sqrt{2}}\,.
\label{eq:maj_product}
\ee 
Note that the order of the factors here is important, since they do not all commute with one another. Starting from Eq.~\eqref{eq:maj_product}, one can rewrite $M_{AP}^{(N)}$ as in Eq.~\eqref{eq:fermi_operator}, where the coefficients read
\be
c^{n_{1}, \dots, n_{N}}_{m_{1}, \dots, m_{N}}=\frac{1}{\sqrt{2}}{\rm tr}\left(M^{n_1,m_1} \cdots M^{n_N,m_N}\right)\quad ({\rm ABC})\,.
\label{eq:maj_anti}
\ee
Here the trace is over a $2$ dimensional space, and we introduced
\begin{align}
M^{n,m}=\frac{1}{\sqrt{2}}i^m\sigma_x^{n+m}\,, \quad n,m=0,1\,,
\label{eq:m_matrices}
\end{align}
in terms of the Pauli matrix $\sigma^x$. A proof of Eq.~\eqref{eq:maj_anti} is reported in Appendix~\ref{sec:majorana_shifts}. Note that the  matrices $M^{n,m}$ satisfy~\eqref{eq:commutation}, with $Z={\rm diag}(1,-1)$, so that~\eqref{eq:maj_anti} admits a fiducial-state representation. Note also that the local tensor of the inverse shift can be obtained using Eq.~\eqref{eq:conjugate_tensor}.

In order to obtain $M_{P}$, one could be tempted to simply insert the operator $Z={\rm diag}(1,-1)$ into the trace in Eq.~\eqref{eq:maj_anti}. However, due to the specific form of the tensors  $M^{n,m}$, the resulting coefficients are all vanishing. In fact, as we show in Appendix~\ref{sec:majorana_shifts}, it turns out that a unitary operator with periodic boundary conditions is obtained if the matrix $X=\sigma^x$ is inserted instead. In particular, the unitary operator $M^{(N)}_P$ can be represented as in Eq.~\eqref{eq:fermi_operator}, where the coefficients are given by
\be
c^{n_{1}, \dots, n_{N}}_{m_{1}, \dots, m_{N}}=\frac{1}{\sqrt{2}}{\rm tr}\left(X \tilde{M}^{n_1,m_1} \cdots \tilde{M}^{n_N,m_N}\right)\quad ({\rm PBC})\,,
\label{eq:maj_per}
\ee 
with
\begin{align}
\tilde{M}^{n,m}&=\frac{1}{\sqrt{2}}(-i)^m\sigma_x^{n+m}\,, \quad n,m=0,1\,.
\label{eq:tilde_m_matrices}
\end{align}
Crucially, viewing the auxiliary space as a graded space with parity $Z={\rm diag}(1,-1)$, the operator $X$ is odd, namely $|X|=1$. Furthermore, $[X,M^{n,m}]=0$ for all $n$, $m$. Thus, using Eq.~\eqref{eq:invariant_tr}, we see immediately that $M_{P}$ is indeed invariant under translation with periodic boundary conditions. 

Although they appear different, a legitimate question is whether the coefficients~\eqref{eq:maj_anti}, \eqref{eq:maj_per} can be cast in the form~\eqref{eq:fMPO_pbc}, ~\eqref{eq:fMPO_apbc}. In fact, this is not the case, and for periodic boundary conditions this can be seen very easily. Indeed, the presence of $X$ in the trace~\eqref{eq:maj_per} implies that $M^{(N)}_{P}$ is an odd operator (namely, it maps even states into odd ones, and viceversa), while any fMPU defined by~\eqref{eq:fMPO_pbc} is necessarily even. The same conclusion holds for antiperiodic boundary conditions, as it can be seen by inspection of the corresponding canonical form, discussed in the next section. 

Finally, it is instructive to apply a JW transformation to $M^{(N)}_{AP}$. In fact, starting from Eq.~\eqref{eq:maj_product} and using standard techniques, it is possible to rewrite it as a qubit ($d=2$) MPO with open boundaries. A crucial point, however, is that the bulk tensors of this MPO do not generate an MPU when periodic boundary conditions are imposed. This example shows that fermionic QCA can not be simply understood in terms of periodic MPUs on qudit chains.

\section{Generalized fermionic MPUs}
\label{sec:generalized_fMPUs}
\noindent In this section, we present our second main result, namely we introduce a class of ``generalized'' fMPUs (Sec.~\ref{sec:gfMPUs_two_kinds}), and identify a condition on the corresponding local tensors such that any representative of this family is a fQCA and viceversa (Secs.~\ref{sec:antiperiodic_fMPUs}, and \ref{sec:periodic_fMPUs}).

The main motivation for generalized fMPUs is to have a class of fMPOs which also include the Majorana-shift operators~\eqref{eq:maj_anti}, \eqref{eq:maj_per}. A natural way to do this is to allow for an additional operator acting on the auxiliary space, such that it has well-defined parity and that it implements the correct boundary conditions. More precisely, we say that $U^{(N)}$ is a generalized fMPU if it is unitary for $N\geq 1$, and can be written as in~\eqref{eq:fermi_operator} with coefficients of the form
\be
c^{n_{1}, \dots, n_{N}}_{m_{1}, \dots, m_{N}}={\rm tr}\left(S U^{n_1,m_1} \cdots U^{n_N,m_N}\right)\,.
\label{eq:generalized_fMPU}
\ee
As usual, we require that all the local tensors have well-defined parity. Namely, the trace in Eq.~\eqref{eq:generalized_fMPU} is over a graded space with parity operator $Z$ defined in Eq.~\eqref{eq:parity_operator}, and the matrices $U^{n,m}$ satisfy Eq.~\eqref{eq:commutation}. Furthermore, the operator $S$ must also have well-defined parity, and satisfy particular commutation relations with $U^{n,m}$. Since these depend on the boundary conditions chosen, it is convenient to  separate the periodic and antiperiodic cases. For periodic boundary conditions, we allow for $S$ to be even or odd, but we require that 
\be
S U^{n,m}=(-1)^{(|S|+1)(|n|+|m|)} U^{n,m}S  \quad ({\rm PBC})
\ee
On the other hand, it can be shown, using Eq.~\eqref{eq:invariant_tr}, that there are no odd states or operators that are invariant under translations with antiperiodic boundary conditions. Therefore, in the latter case we require that $S$ is even, and satisfies
\be
[S,U^{m,n}]=0\,,\quad |S|= 0 \quad ({\rm a. b. c.})
\ee
The conditions imposed above constitute the minimal requirement in order to have an fMPO with well-defined parity, that is invariant under translations with periodic and antiperiodic boundary conditions, respectively.

It is immediate to see that the Majorana-shift operators are indeed generalized fMPUs. In particular, for periodic and antiperiodic boundary conditions,  we see from Eqs.~\eqref{eq:maj_anti}, \eqref{eq:maj_per} that $S=X/\sqrt{2}$ and $S=\openone_2/\sqrt{2}$, respectively. On the other hand, we know from Sec.~\ref{sec:nonlocal_fmpu} that not all generalized fMPUs are QCA, so that a natural question is what conditions on $S$ and $U^{n,m}$ guarantee that $U^{(N)}$ is locality-preserving.  

In order to address this problem, we need to introduce some technical definitions that generalize those presented in Sec.~\ref{sec:bosonic_tn} for irreducible and normal tensors. These are inspired by the work~\cite{bultinck2017fermionic}, where it was pointed out that the notion of irreducible tensor should be modified in the fermionic case, due to the requirement that all the tensors are even. We begin with the following
\begin{defn}\label{def:git}
	Let $V$ be a graded Hilbert space, with parity operator $Z$, and $W\subset V$ with associated orthogonal projector $P_{W}$. We say that $W$ is a graded subspace if $[P_W,Z]=0$. Accordingly, an even tensor $\mathcal{A}$ is said to be graded irreducible (GI) if there is no proper graded subspace which is left invariant by all the matrices $A^n$.
\end{defn}

Clearly, if $\mathcal{A}$ is GI, there are two possibilities: either there is no invariant subspace, or there is an invariant subspace that is non-graded. In the latter case, we prove in Appendix~\ref{sec:graded_canonical_form} (see also Ref.~\cite{bultinck2017fermionic}) that there exists a quite precise characterization of the matrices $A^{n}$, which, up to an even gauge transformation, take the form
\be
A^{n}=\left(\sigma^{x}\right)^{|n|} \otimes B^{n}\,,
\label{eq:GItensors}
\ee
where $B^{n}$ are $D/2\times D/2$ matrices, and where the parity operator is $Z=\sigma^z\otimes \openone$. Note that in this case the auxiliary space splits into even and odd subspaces with the same dimension. 

Next, we generalize the notion of normal tensors and canonical form, which will be of great importance for the classification of fMPUs.
\begin{defn}\label{def:graded_normal_tensor}
	We say that an even tensor $\mathcal{A}$ is a graded normal tensor (GNT) if $(i)$: there is no non-trivial graded invariant subspace; $(ii)$ the corresponding transfer matrix has either exactly one or exactly two eigenvalues of magnitude and value equal to its spectral radius which is equal to $1$. In the first case we say that the GNT is non-degenerate, while in the second case we say that it is degenerate.
\end{defn}

\begin{defn}\label{def:graded_canonical_form}
	We say that an even tensor $\mathcal{A}$ generating an fMPS is in Graded Canonical Form (GCF) if: $(i)$: the matrices are of the form $A^n=\oplus_{k=1}^r \mu_k A_k^n $, where $\mu_k\in \mathbb{C}$ and the spectral radius of the transfer matrix $E_k$ associated with $A_k^n$ is equal to one; $(ii)$: the parity operator $Z$ has the same block structure as $A^n$ and, for all $k$, $\mathcal{A}_k$ is a GNT.
\end{defn}
Here it is useful to mention that for antiperiodic boundary conditions, we can always change the sign of the parity operator $Z$ in the auxiliary space without modifying the state. This is because $Z$ does not enter into the trace defining the fMPS coefficients~\eqref{eq:fMPS_apbc} and, if $ZA^{i}=(-1)^{|i|}A^iZ$, we also have $\tilde{Z}A^{i}=(-1)^{|i|}A^i\tilde{Z}$ with $\tilde{Z}=-Z$. Furthermore, if $\mathcal{A}$ is in GCF, we have the freedom to changing the sign of the parity for each graded normal block independently, without modifying the state.

The GCF is discussed in detail in Appendix~\ref{sec:graded_canonical_form}, where it is shown that, in the case of antiperiodic boundary conditions, it is possible to derive a series of results that generalize in a nontrivial way those for the canonical form of MPSs. In particular, we prove the following fundamental theorems.
 \begin{thm}
	After blocking, for any even tensor, $\mathcal{A}$, it is always possible to obtain another even tensor, $\mathcal{A}_{GCF}$, in GCF and generating the same fMPS with antiperiodic boundary conditions. 
\end{thm}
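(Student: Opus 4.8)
The plan is to reproduce, step by step, the bosonic construction of the canonical form from Ref.~\cite{cirac2017matrixRenormalization}, while promoting every operation to its graded counterpart: all gauge transformations are taken to be even, and every invariant subspace used in the reduction is required to be graded in the sense of Definition~\ref{def:git}. The antiperiodic boundary condition plays an essential simplifying role throughout, since the parity operator $Z$ does not enter the trace in Eq.~\eqref{eq:fMPS_apbc}. As a consequence the generated fMPS is insensitive to the sign of $Z$ and, once a block-diagonal structure has been reached, to the sign of $Z$ restricted to each block independently; this is precisely the extra freedom that makes the graded normal blocks of Definition~\ref{def:graded_normal_tensor} achievable.

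First I would reduce $\mathcal{A}$ to a direct sum of graded irreducible tensors. If $\mathcal{A}$ is not GI there is a proper graded invariant subspace; choosing a basis in which both its projector and $Z$ are block-diagonal (possible since they commute, Definition~\ref{def:git}) brings all the $A^n$ into block-upper-triangular form by an even gauge transformation. Because the coefficients in Eq.~\eqref{eq:fMPS_apbc} are given by a plain trace, the off-diagonal (upper-triangular) part never contributes, and $\mathcal{A}$ generates the same fMPS as the block-diagonal tensor obtained by discarding it. Each diagonal block is again even with respect to the restricted $Z$, so iterating (which strictly decreases the auxiliary dimension) terminates in a finite decomposition $A^n\simeq\oplus_k \mu_k A_k^n$ with every $\mathcal{A}_k$ graded irreducible and $Z$ respecting the same block structure, the scalars $\mu_k$ being fixed so that each transfer matrix $E_k$ has spectral radius one.

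Next I would analyze the transfer matrix of each graded irreducible block and block further to turn it into a GNT. Here the genuinely new feature relative to the bosonic theory appears: a GI tensor may still possess a non-graded invariant subspace, in which case the characterization proven in the appendix (Eq.~\eqref{eq:GItensors}) forces it, up to an even gauge, into the form $A_k^n=(\sigma^x)^{|n|}\otimes B_k^n$ with $Z=\sigma^z\otimes\openone$. This is exactly the origin of the degenerate alternative in Definition~\ref{def:graded_normal_tensor}, as the corresponding transfer matrix then acquires two rather than one eigenvalues of maximal magnitude. In both the non-degenerate and degenerate cases I would invoke a graded Perron--Frobenius argument to show that, after grouping sufficiently many sites, any residual phases (roots of unity) among the maximal-magnitude eigenvalues are removed, leaving exactly one or exactly two eigenvalues equal to the spectral radius. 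This brings each block into the form required by Definition~\ref{def:graded_canonical_form}, and the freedom to flip the sign of $Z$ on each block, available under antiperiodic boundary conditions, fixes any remaining sign ambiguity in the parity structure.

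The main obstacle I anticipate is this transfer-matrix step for GI blocks, specifically establishing the dichotomy of Definition~\ref{def:graded_normal_tensor} and controlling the degenerate case. Unlike the bosonic setting, where irreducibility together with blocking yields a unique dominant eigenvalue by standard Perron--Frobenius theory, one must here track the grading through the spectral decomposition of $E_k$ and show that a non-graded invariant subspace is the \emph{only} mechanism producing a second maximal eigenvalue, tying it cleanly to Eq.~\eqref{eq:GItensors}. Verifying that blocking simultaneously cleans up the phases in both maximal eigenvalues while keeping the tensor even is the delicate point; the remaining bookkeeping (deleting upper-triangular blocks, normalizing the $\mu_k$, matching the block structure of $Z$) then follows the bosonic template of Ref.~\cite{cirac2017matrixRenormalization} with only notational changes.
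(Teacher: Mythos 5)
Your first stage (reduction to a direct sum of graded irreducible blocks by even gauge transformations, discarding the off-diagonal part of block-triangular tensors because the ABC coefficients are a plain trace) matches the paper's Prop.~\ref{prop:GIF} exactly. The gap is in your second stage. You assert that a ``graded Perron--Frobenius argument'' shows that blocking removes the residual root-of-unity phases, ``leaving exactly one or exactly two eigenvalues equal to the spectral radius.'' That is not what blocking does: if the transfer matrix of a GI block has peripheral spectrum $\{e^{2\pi i q k/p}\}_k$, blocking $p$ sites sends each eigenvalue $\lambda$ to $\lambda^p$, so the $p$ peripheral phases all become a $p$-fold eigenvalue $1$ --- the number of maximal-magnitude eigenvalues is unchanged, and for $p>2$ the blocked tensor cannot be a GNT. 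The paper exploits this degeneracy in the opposite direction: $p>1$ eigenvalues equal to $1$ are incompatible with irreducibility of the blocked CP map, hence force \emph{new} invariant subspaces for the blocked matrices, so one must decompose again --- into graded blocks where possible, or into the form of Eq.~\eqref{eq:GItensors} via Prop.~\ref{prop:degenerate_GIF} otherwise --- and the whole procedure terminates only because the bond dimension strictly decreases at each splitting. Your plan, as written, contains no such iteration of blocking and re-decomposition, and without it there is no mechanism that actually produces the dichotomy of Definition~\ref{def:graded_normal_tensor}.

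Moreover, the sub-case you flag as ``the delicate point'' genuinely lies outside the reach of any Perron--Frobenius-type argument, and your proposal offers no substitute for what the paper does there. When $A^n=(\sigma^x)^{|n|}\otimes B^n$ with $\mathcal{B}$ normal but $\mathcal{A}$ not yet a degenerate GNT, the extra peripheral eigenvalues of $E_{\mathcal{A}}$ are not caused by reducibility of $\mathcal{B}$; the paper handles them algebraically. Diagonalizing with the (non-even) similarity $u$ one gets two normal diagonal blocks $B^i$ and $C^i=(-1)^{|i|}B^i$, and Lemma A.2 of Ref.~\cite{cirac2017matrix} shows that additional peripheral eigenvalues exist only if $B^{i}=e^{i\phi}SC^{i}S^{-1}$ for some invertible $S$; an eigenvector-orbit argument (the eigenvalues $\pm e^{i\phi n}\mu$ of the finite matrix $S$ cannot be all distinct) forces $\phi=2\pi q/p$; and after blocking $p$ times, Burnside's theorem yields the contradiction --- if the even subalgebra generated by the $B^I$ were the full matrix algebra, then $S\propto\openone$ and hence $B^I=0$ for all odd $I$, contradicting Prop.~\ref{prop:degenerate_GIF} through Corollary~\ref{prop:other_implication} --- so the blocked tensor splits further. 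This chain (Lemma A.2, rationality of $\phi$, Burnside) is the actual technical content of the theorem, and supplying it, together with the termination argument above, is what your proposal is missing.
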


\begin{thm}\label{thm2_fundamental}
	Consider two (even) tensors $\mathcal{A}$ and $\mathcal{B}$ in GCF, with diagonal parity operators in the auxiliary space $Z_a$, $Z_b$. If they generate the same fMPS with antiperiodic boundary conditions for all $N$ then: (i) the dimensions of the matrices $A^i$ and $B^i$ coincide; (ii) there exists an invertible matrix $X$, and permutation matrix $\Pi$ such that $A^{i}=X\Pi B^{i} \Pi^{-1}X^{-1}$,  and $[X,Z_a]=0$.
\end{thm}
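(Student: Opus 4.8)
The plan is to reduce the graded uniqueness statement to the ordinary (ungraded) Fundamental Theorem for canonical forms, and then re-impose the $\mathbb{Z}_2$ grading. Since $\mathcal{A}$ and $\mathcal{B}$ are even and generate the same antiperiodic fMPS, equating the coefficients in Eq.~\eqref{eq:fMPS_apbc} gives ${\rm tr}(A^{i_1}\cdots A^{i_N})={\rm tr}(B^{i_1}\cdots B^{i_N})$ for all $N\geq 1$ and all index tuples. I would first record that evenness, together with the block structure \eqref{eq:even}--\eqref{eq:odd}, forces both traces to vanish whenever $\sum_j|i_j|$ is odd, since the trace of a parity-odd matrix is zero; hence the generated state is even and the above identity in fact holds for \emph{every} tuple, not merely on the parity-allowed sector. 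This is what will let me invoke the ungraded theorem, which requires full matching.

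Next I would refine the GCF of each tensor into an ordinary canonical form. A non-degenerate graded normal block is already (ungraded) normal, so it is left untouched. A degenerate block, by Eq.~\eqref{eq:GItensors}, has the form $(\sigma^x)^{|i|}\otimes\tilde B^i$ with $\tilde B$ normal and $Z=\sigma^z\otimes\openone$; diagonalizing $\sigma^x$ splits it into the two ungraded normal blocks $\tilde B^i$ and $(-1)^{|i|}\tilde B^i$, which are gauge-inequivalent precisely because the ``exactly two eigenvalues of modulus one'' condition in Definition~\ref{def:graded_normal_tensor} forbids the twisted transfer matrix $\sum_i(-1)^{|i|}\tilde B^i\otimes\bar{\tilde B}^i$ from having an eigenvalue of modulus one. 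Crucially, in the $\sigma^x$-eigenbasis the parity $Z$ becomes off-diagonal and \emph{exchanges} the two twin blocks, whereas for a non-degenerate block $Z$ acts within the block. Carrying this out for $\mathcal{A}$ and $\mathcal{B}$ produces refined tensors $\mathcal{A}'$, $\mathcal{B}'$ in ordinary CF generating identical coefficients, together with explicit fixed basis changes recording how $Z_a$, $Z_b$ sit in the refined bases.

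I would then apply the ungraded Fundamental Theorem (uniqueness of the canonical form up to permutation and gauge~\cite{cirac2017matrixRenormalization}): the refined normal blocks of $\mathcal{A}'$ and $\mathcal{B}'$ are matched by a weight-preserving bijection, implemented by a permutation $\Pi'$ and a block-diagonal invertible $Y$ with $(A')^i=Y\Pi'(B')^i(\Pi')^{-1}Y^{-1}$. Equality of total bond dimension, claim (i), is an immediate consequence. The final and most delicate step is to promote this ungraded intertwiner to the graded one. Here I would use the parity dictionary established above: I must show that $\Pi'$ is compatible with the two ways $Z$ can act (within a block versus swapping a twin pair), so that twin pairs of $\mathcal{A}'$ are matched to twin pairs of $\mathcal{B}'$ and single graded blocks to single graded blocks --- thereby descending $\Pi'$ to a permutation $\Pi$ of the original GCF blocks --- and that the residual gauge can be chosen \emph{even}, i.e.\ $[X,Z_a]=0$. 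The freedom to flip $Z\to-Z$ on each graded block, noted after Definition~\ref{def:graded_canonical_form}, is exactly what I would use to fix the relative signs between matched blocks; any odd component of $Y$ must then be shown to be either absorbable into $\Pi$ or incompatible with the evenness of both tensors, leaving $A^i=X\Pi B^i\Pi^{-1}X^{-1}$ with $[X,Z_a]=0$.

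The main obstacle is precisely this last regrading. The ungraded theorem is blind to parity, and the degenerate blocks are dangerous because their twin halves share the same weight $\mu_k$ and the \emph{same} transfer matrix, so a priori $\Pi'$ could pair them in a way that scrambles the $Z$-action and forces an odd gauge. Ruling this out --- by showing that a degenerate graded block on one side can only be matched by a degenerate graded block, and not, say, by two genuinely independent non-degenerate blocks, which would require $\tilde B$ to be separately even --- is where the antiperiodic boundary condition and the exactly-two-eigenvalue characterization of degeneracy do the essential work.
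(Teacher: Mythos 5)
Your proposal is correct in substance, but it takes a genuinely different architectural route from the paper's. The paper never refines the graded canonical form: it keeps degenerate and non-degenerate graded normal blocks as indivisible units, proves the graded overlap dichotomy of Lemma~\ref{equalMPS} (the asymptotic overlap of two graded-normal fMPSs is either $0$ or maximal, and in the latter case the tensors are related by $e^{i\phi}X\Pi(\cdot)\Pi^{-1}X^{-1}$ with $[X,Z_a]=0$ and $\Pi^{-1}Z_a\Pi=\pm Z_b$), and then reruns the bosonic fundamental-theorem proof of Ref.~\cite{cirac2017matrixRenormalization} with that lemma substituted, obtaining Theorem~\ref{thm1}, of which the statement is a corollary. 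You instead split each degenerate block $(\sigma^x)^{|i|}\otimes\tilde B^i$ into its ungraded twins $\tilde B^i$ and $(-1)^{|i|}\tilde B^i$, invoke the ungraded fundamental theorem as a black box, and regrade. Your two load-bearing claims check out: parity-odd tuples give vanishing traces on both sides, so the coefficient identity holds for \emph{all} tuples and the ungraded theorem applies; and twin-inequivalence holds because $E_A$ block-diagonalizes (after conjugating by the Hadamard-type $u$) into two copies of $E_{\tilde B\tilde B}$ plus the two twisted blocks $\sum_i(\pm 1)^{|i|}\tilde B^i\otimes\bar{\tilde B}^i$, so the exactly-two-peripheral-eigenvalues condition forces the twisted transfer matrices to have spectral radius strictly below one. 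For the descent step you rightly flag as delicate, the clean invariant is this: an ungraded normal block is gauge-equivalent to its parity twist $(-1)^{|i|}(\cdot)$ if and only if it sits inside a non-degenerate graded block (there, conjugation by $Z$ itself realizes the twist), and is twist-\emph{in}equivalent exactly when it is a twin; this forbids twin-to-nondegenerate matchings and lets your $\Pi'$ descend to a permutation of graded blocks, while the per-block freedom $Z\to-Z$ together with the scalar argument (for a matched non-degenerate pair, $Z_aYZ_bY^{-1}$ commutes with a normal family, hence equals $\pm\openone$) even-izes the gauge --- this is the same computation the paper performs inside Lemma~\ref{equalMPS}. One small correction: your parenthetical reason for excluding a degenerate-to-nondegenerate match (``would require $\tilde B$ to be separately even'') is not the right obstruction; the paper's Lemma~\ref{equalMPS} excludes it by showing such a match would force $Z_b=\pm\openone$ and hence, via Lemma~\ref{prop:degenerate_CGF}, kill either all even or all odd matrices --- equivalently, in your language, it would make a twin twist-equivalent to itself. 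In exchange for redoing the fundamental-theorem scaffolding in graded form, the paper's overlap lemma yields standalone information (asymptotic normalization and orthogonality of graded normal blocks) that is reused elsewhere, e.g.\ for the linear independence in the BGNT definition; your reduction is leaner, localizing all fermionic content in the grading/degrading dictionary.
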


\subsection{Type I and type II generalized fMPUs}
\label{sec:gfMPUs_two_kinds}

The definition of graded normal tensors gives us a hint of why fMPUs of the form~\eqref{eq:fMPO_pbc}, ~\eqref{eq:fMPO_apbc} are not enough to capture all fermionic QCA. In the case of periodic boundary conditions, for instance, one can see from Eq.~\eqref{eq:GItensors} that a degenerate normal tensor yields vanishing coefficients when plugged into Eq.~\eqref{eq:fMPO_pbc}. On the other hand, the additional operator $S$ in the definition~\eqref{eq:generalized_fMPU} allows us to ``close the trace'' in such a way that degenerate normal tensors give non-vanishing contributions. In fact, this is exactly what happens for the Majorana-shift operator~\eqref{eq:maj_per}.

Motivated by this discussion, we define two special classes of generalized fMPUs. These provide a ``minimal'' subset of the operators~\eqref{eq:generalized_fMPU} which allow for both degenerate and non-degenerate normal tensors. In particular, for periodic boundary conditions, we say that $U^{(N)}$ is a generalized fMPU of the first kind (or type I) if it can be cast in the form~\eqref{eq:generalized_fMPU} with  $S=e^{i\alpha}Z$, while it is of the second kind (or type II) if 
\begin{align}
S&=\frac{e^{i\alpha}}{\sqrt{2}}\sigma^x\otimes \openone \ {\rm and}\  U^{m,n}=\left(\sigma^{x}\right)^{|n|+|m|}\otimes N^{n,m}\,,
\end{align}
for $\alpha\in \mathbb{R}$ and $N^{n,m}$ arbitrary, and with parity operator $Z=\sigma^z\otimes \openone$. Analogously, for antiperiodic boundary conditions, we say that $U^{(N)}$ is a generalized fMPU of the first kind if it can be cast in the form~\eqref{eq:generalized_fMPU} with $S=e^{i\alpha}\openone$, while it is of the second kind if 
\begin{align}
S&=\frac{e^{i\alpha}}{\sqrt{2}} \openone \ {\rm and}\  U^{m,n}=\left(\sigma^{x}\right)^{|n|+|m|}\otimes N^{n,m}\,,
\label{eq:second_kind_apb}
\end{align}
for $\alpha\in \mathbb{R}$ and $N^{n,m}$ arbitrary, and with parity operator $Z=\sigma^z\otimes \openone$. We note that, for fMPUs  of the second kind, the boundary tensors bear a normalization constant $1/\sqrt{2}$, which takes into account that degenerate normal tensors have two eigenvalues equal to one. 

We are now in a position to address the relation between fMPUs and QCA. At this point, it is convenient to distinguish between periodic and antiperiodic boundary conditions, since the emerging picture is quite different.

\subsection{FMPUs and QCA: antiperiodic boundary conditions}
\label{sec:antiperiodic_fMPUs}

We begin our discussion with the case of antiperiodic boundary conditions, where a quite strong statement holds: namely, all type I and type II fMPUs introduced in Sec.~\ref{sec:gfMPUs_two_kinds} are $1D$ fermionic QCA and viceversa. This section is devoted to establishing this equivalence. For clarity, we will report here only the main statements, while we refer the reader to Appendix~\ref{sec:technical_results_antiperiodic} for all the technical proofs. As a first step, it is useful to characterize the GCF for tensors generating type I and type II fMPUs.
\begin{prop}\label{prop:normality_fmpus}
Let $\mathcal{U}$ be in GCF, and suppose $\mathcal{U}$ generates a type I (type II) fMPU $U^{(N)}$ with antiperiodic boundary conditions. Then, $\mathcal{U}/\sqrt{d}$ is graded normal non-degenerate (degenerate).
\end{prop}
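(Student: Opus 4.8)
The plan is to mirror the strategy used for qudit MPUs in Ref.~\cite{cirac2017matrix}: extract the spectrum of the transfer matrix of $\mathcal{U}/\sqrt{d}$ from unitarity, and then read off the graded-normal structure using the fact that $\mathcal{U}$ is already in GCF. I would start from $[U^{(N)}]^\dagger U^{(N)}=\openone$, which gives $\frac{1}{d^N}{\rm tr}[U^{(N)\dagger}U^{(N)}]=1$ for all $N\geq 1$, and rewrite the left-hand side in terms of the normalized transfer matrix $E$ of $\mathcal{U}/\sqrt{d}$. For antiperiodic boundary conditions the operator $S$ is even and proportional to the identity, $S=s\,\openone$, so closing the double-layer loop should yield the clean relation
\be
\frac{1}{d^N}{\rm tr}\left[U^{(N)\dagger}U^{(N)}\right]=|s|^2\,{\rm tr}\left[E^N\right]\,.
\ee
Since $|s|^2=1$ for type I and $|s|^2=1/2$ for type II (the $1/\sqrt{2}$ boundary normalization), unitarity forces ${\rm tr}[E^N]=1$ in the type I case and ${\rm tr}[E^N]=2$ in the type II case, for every $N$.

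The second step is a purely spectral argument. Writing ${\rm tr}[E^N]=\sum_i\lambda_i^N=c$ with $c\in\{1,2\}$ fixed for all $N\geq 1$, I would match the generating functions $\sum_{N\geq 1}{\rm tr}[E^N]x^N=cx/(1-x)$ and $\sum_i \lambda_i x/(1-\lambda_i x)$; equality of poles shows that every nonzero eigenvalue of $E$ equals $1$ and that there are exactly $c$ of them. Thus $E$ has spectral radius $1$, all remaining eigenvalues vanish, and the leading eigenvalue is simple for type I and doubly degenerate for type II.

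The third step translates this into the GCF block structure. For type I the conclusion is immediate: a single leading eigenvalue, combined with the fact that each GCF block is a GNT whose (normalized) transfer matrix contributes at least one eigenvalue of modulus one, forces a single non-degenerate block, so $\mathcal{U}/\sqrt{d}$ is graded normal non-degenerate. For type II the count is two, which a priori is compatible either with one degenerate block or with two distinct non-degenerate blocks (whose mixed transfer matrices would have to vanish identically for ${\rm tr}[E^N]$ to stay exactly constant in $N$). I would exclude the latter by exploiting the defining type II form $U^{m,n}=(\sigma^x)^{|n|+|m|}\otimes N^{n,m}$, which is precisely the graded-irreducible structure with a non-graded invariant subspace of Eq.~\eqref{eq:GItensors}: after reducing the $N^{n,m}$ part by the same trace argument it becomes a single normal block, so the representation collapses to one degenerate GNT, and the uniqueness of the GCF (Theorem~\ref{thm2_fundamental}) identifies $\mathcal{U}$ with it up to permutation and an even gauge. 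Hence $\mathcal{U}/\sqrt{d}$ is graded normal degenerate.

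I expect the main obstacle to lie in the first and third steps, both governed by the fermionic signs. In the first step one must verify that stacking $U^{(N)\dagger}$ with $U^{(N)}$ and closing the \emph{antiperiodic} loop produces no extra $Z$-twisted contribution---in contrast to the periodic case, where $Z$ appears explicitly, cf. Eq.~\eqref{eq:trace_transfer_periodic}---and that the factor $|s|^2$ is reproduced faithfully. In the third step, the delicate point is ruling out the decomposition of a type II fMPU into two inequivalent non-degenerate blocks; this requires the full unitarity condition together with the graded cross-term lemma and the GCF uniqueness theorem, all of which I would carry out within the graded tensor-network formalism of Appendix~\ref{sec:graded_TN}.
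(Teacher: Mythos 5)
Your first two steps and the type I branch coincide with the paper's own proof: unitarity plus the antiperiodic trace relation give ${\rm tr}\,E_{\mathcal U}^N=1$ (type I) or ${\rm tr}\,E_{\mathcal U}^N=2$ (type II) for all $N$, and the paper simply invokes Lemma~A.5 of Ref.~\cite{cirac2017matrixRenormalization} exactly where you run the generating-function argument --- the content is the same, and for type I the GCF hypothesis then immediately forces a single non-degenerate block, as you say.

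The gap is in your type II branch. First, being type II is a property of the \emph{operator} $U^{(N)}$ (it ``can be cast'' in the form with $S=e^{i\alpha}\openone/\sqrt{2}$ and $U^{m,n}=(\sigma^x)^{|m|+|n|}\otimes N^{n,m}$), not of the GCF tensor $\mathcal U$ in the hypothesis; so ``the defining type II form'' is a different representation, and to transfer its structure to $\mathcal U$ via the uniqueness Theorem~\ref{thm2_fundamental} you must first bring that representation to GCF --- which is precisely the problem you are trying to solve, so the argument is circular as sketched. Second, and more seriously, the implication you lean on --- that a tensor of the form $(\sigma^x)^{|m|+|n|}\otimes N^{n,m}$ ``collapses to one degenerate GNT'' after reducing the $N^{n,m}$ part --- is false. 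By Corollary~\ref{prop:other_implication}, such a tensor is graded irreducible only if the even subalgebra generated by the $N^{n,m}$ has no invariant subspace; when it does, the reduced graded blocks do \emph{not} retain the $\sigma^x\otimes(\cdot)$ structure and can perfectly well be non-degenerate: the proof of Prop.~\ref{prop:z2_structure} reduces a tensor of exactly this form to a type I (non-degenerate normal) tensor. Hence the alternative left open by your spectral count --- a GCF with \emph{two} graded normal non-degenerate blocks, each contributing one unit eigenvalue, with nilpotent cross sectors --- is not excluded by your structural appeal. This is exactly the hard case the paper's proof is built to eliminate: assuming $\mathcal U=\mathcal V\oplus\mathcal W$ with two non-degenerate blocks, so that $U^{(N)}=e^{i\alpha}\bigl(V^{(N)}+W^{(N)}\bigr)/\sqrt{2}$, it expands $\openone=U^{(N)\dagger}U^{(N)}$ in an orthonormal basis of traceless local operators, evaluates the coefficients on chains of length $kN$ via Lemma~A.5 of Ref.~\cite{cirac2017matrixRenormalization}, and concludes
\be
V^{(N)\dagger}V^{(N)}=W^{(N)\dagger}W^{(N)}=\openone\,,\qquad V^{(N)\dagger}W^{(N)}=W^{(N)\dagger}V^{(N)}=0\,,
\ee
which is contradictory, since $V^{(N)}$ invertible and $V^{(N)\dagger}W^{(N)}=0$ force $W^{(N)}=0$. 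Some argument of this kind --- exploiting the full unitarity condition on the hypothetical two-block decomposition, rather than the defining form plus GCF uniqueness --- is what your step 3 is missing.
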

We report the proof in Appendix~\ref{sec:technical_results_antiperiodic}. This proposition is a direct generalization of the one given in Ref.~\cite{cirac2017matrix} for MPUs in qudit systems. In fact, one could push the analogy further, and show that, after blocking, any tensor $\mathcal{U}$ generating a type I or type II fMPU is \emph{simple}, where one also needs to generalize the notion of simpleness as follow.
\begin{defn}\label{def:simpleness_fermionic}
	We say that an even tensor  $\mathcal{U}$ with transfer matrix $E_{\mathcal{U}}$ is simple if 
	\begin{subequations}\label{eq:simpleness_fermionic}
	\bea
	\begin{tikzpicture}[baseline={([yshift=-0.5ex]current bounding box.center)}, scale=0.7]
	
	\draw (-0.5,1) arc (-270:-90:0.5);
	\draw (-1.5,1) arc (90:-90:0.5);
	\draw (1.5,1) arc (-270:-90:0.5);
	\draw (0.5,1) arc (90:-90:0.5);
	\draw (0,-0.5) node[left]{} -- (0,1.5) node[left]{};
	\draw (-0.5,0) node[left]{} -- (0.5,0) node[right]{};
	\draw (-0.5,1) node[left]{} -- (0.5,1) node[right]{};
	
	\filldraw[fill=white,draw=black] (0,0) circle [radius=0.25];
	\fill[black] (0,1) circle [radius=0.25];
	\draw [fill=white] (-1.25,0.25) rectangle (-0.75,0.75);
	\draw [fill=white] (0.75,0.25) rectangle (1.25,0.75);
	\draw(-1,-0.2)node{$E_{\mathcal{U}}$};
	\draw(1,-0.2)node{$E_{\mathcal{U}}$};
	\end{tikzpicture}
	=
	\begin{tikzpicture}[baseline={([yshift=-0.5ex]current bounding box.center)}, scale=0.7]
	\draw (-0.5,1) arc (-270:-90:0.5);
	\draw (-1.5,1) arc (90:-90:0.5);
	\draw (0,-0.5) node[left]{} -- (0,1.5) node[left]{};
	\draw[dashed] (-0.5,0) node[left]{} -- (0.5,0) node[right]{};
	\draw[dashed](-0.5,1) node[left]{} -- (0.5,1) node[right]{};
	\draw [fill=white] (-1.25,0.25) rectangle (-0.75,0.75);
	\draw(-1,-0.2)node{$E_{\mathcal{U}}$};
	\end{tikzpicture}\\
	\begin{tikzpicture}[baseline={([yshift=-0.5ex]current bounding box.center)}, scale=0.7]	
	\draw (0,-0.5) node[left]{} -- (0,1.5) node[left]{};
	\draw (-0.5,0) node[left]{} -- (1.75,0) node[right]{};
	\draw (-0.5,1) node[left]{} -- (1.75,1) node[right]{};
	\draw (1.25,-0.5) node[left]{} -- (1.25,1.5) node[left]{};
	\filldraw[fill=white,draw=black] (0,0) circle [radius=0.25];
	\fill[black] (0,1) circle [radius=0.25];
	\filldraw[fill=white,draw=black] (1.25,0) circle [radius=0.25];
	\fill[black] (1.25,1) circle [radius=0.25];
	\end{tikzpicture}
	=
	\begin{tikzpicture}[baseline={([yshift=-0.5ex]current bounding box.center)}, scale=0.7]	
	\draw (1.25,1) arc (-270:-90:0.5);
	\draw (0.05,1) arc (90:-90:0.5);
	\draw (0,-0.5) node[left]{} -- (0,1.5) node[left]{};
	\draw (-0.5,0) node[left]{} -- (0,0) node[right]{};
	\draw (1.25,0) node[left]{} -- (1.75,0) node[right]{};
	\draw (-0.5,1) node[left]{} -- (0,1) node[right]{};
	\draw (1.25,1) node[left]{} -- (1.75,1) node[right]{};
	\draw (1.25,-0.5) node[left]{} -- (1.25,1.5) node[left]{};
	\filldraw[fill=white,draw=black] (0,0) circle [radius=0.25];
	\fill[black] (0,1) circle [radius=0.25];
	\filldraw[fill=white,draw=black] (1.25,0) circle [radius=0.25];
	\fill[black] (1.25,1) circle [radius=0.25];
	\draw [fill=white] (0.4,0.25) rectangle (0.9,0.75);
	\draw(0.7,-0.2)node{$E_{\mathcal{U}}$};
	\end{tikzpicture}
	\eea
	\end{subequations}
	and $E_{\mathcal{U}}^2=E_{\mathcal{U}}$, where we denoted by a square the transfer matrix $E_{\mathcal{U}}$, and where we have used the graphical notation explained in Appendix.~\ref{sec:elementary_operators} for fermionic tensor networks. 
\end{defn}
Note that for type I fMPUs, this condition coincides with Eq.~\eqref{eq:simpleness_condition}, because, after blocking the transfer matrix reads $E=\ket{r}\bra{l}$, where $\ket{l}$, $\ket{r}$ are the left and right eigenvectors associated with the eigenvalue $1$. However, for type II fMPUs the simpleness condition is different, because the transfer matrix has two eigenvectors associated with $1$. In Appendix~\ref{sec:technical_results_antiperiodic}, we prove the following.

\begin{prop}
Suppose that the tensor $\mathcal{U}$ generates a type I (type II) fMPU $U^{(N)}$ with antiperiodic boundary conditions. Then, there exists $k\leq D^4$ such that $\mathcal{U}_k$ is simple (according to the definition~\ref{def:simpleness_fermionic}).
\end{prop}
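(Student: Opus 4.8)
The plan is to mirror the qudit argument of Ref.~\cite{cirac2017matrix}, carried out in the graded setting and keeping careful track of the fermionic signs. The starting point is to reduce to a single graded normal block. First I would block $\mathcal{U}$ enough times so that, by the first fundamental theorem on the GCF, the blocked tensor is in Graded Canonical Form. Because $U^{(N)}$ is by hypothesis a type I (type II) fMPU with antiperiodic boundary conditions, Proposition~\ref{prop:normality_fmpus} then guarantees that $\mathcal{U}/\sqrt{d}$ is a single graded normal tensor, non-degenerate (degenerate). Thus, after this first round of blocking, the GCF decomposition is trivial and I may assume $\mathcal{U}/\sqrt{d}$ is a GNT whose transfer matrix $E_{\mathcal{U}}$ has spectral radius one, with the eigenvalue $1$ occurring among the eigenvalues of maximal magnitude with multiplicity exactly one (type I) or two (type II). One also checks that the type I/II form of the pair $(S,\mathcal{U})$ is preserved under blocking, so no structure is lost in this step.

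Second I would pin down the rest of the spectrum of $E_{\mathcal{U}}$ using unitarity. Writing $\tfrac{1}{d^N}\,{\rm tr}[U^{(N)\dagger}U^{(N)}]$ as a trace over the doubled auxiliary space, the two boundary operators $S$ and $S^\dagger$ contribute a factor $|S|^2$ (equal to $1$ for type I and $1/2$ for type II), so that unitarity forces ${\rm tr}[E_{\mathcal{U}}^N]=1$ (type I) or ${\rm tr}[E_{\mathcal{U}}^N]=2$ (type II) for every $N\ge 1$. Matching the generating function $\sum_{N\ge 1}{\rm tr}[E_{\mathcal{U}}^N]\,t^N$ against $\sum_i \lambda_i t/(1-\lambda_i t)$ shows that the only admissible spectrum has the eigenvalue $1$ with algebraic multiplicity one (type I) or two (type II) and all remaining eigenvalues equal to $0$; boundedness of $\|E_{\mathcal{U}}^N\|$, again a consequence of unitarity, further ensures that the eigenvalue $1$ is semisimple. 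Here I would need to be careful about the signs produced by reordering the auxiliary legs when opening the trace, which is exactly where the graded case departs from the qudit one.

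Third, with $E_{\mathcal{U}}$ having eigenvalues only in $\{0,1\}$ and the eigenvalue $1$ semisimple, I can write $E_{\mathcal{U}}=P+\mathcal{N}$, where $P$ is the spectral projector onto the eigenvalue-one eigenspace and $\mathcal{N}$ is a nilpotent acting on the $\lambda=0$ generalized eigenspace, with $P\mathcal{N}=\mathcal{N}P=0$. Blocking $k$ sites replaces $E_{\mathcal{U}}$ by $E_{\mathcal{U}}^k=P+\mathcal{N}^k$, which annihilates the nilpotent part as soon as $k$ exceeds its index; since $E_{\mathcal{U}}$ acts on the $D^2$-dimensional doubled auxiliary space this index is at most $D^2$, and combined with the blocking needed to reach the GCF this reproduces the bound $k\le D^4$ familiar from the qudit case. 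After this blocking $E_{\mathcal{U}}^2=E_{\mathcal{U}}$, which is the second requirement of Definition~\ref{def:simpleness_fermionic}; for type I it is the rank-one projector $|r)(l|$ and for type II the rank-two projector dictated by the $(\sigma^x)^{|n|}\otimes B^n$ structure of Eq.~\eqref{eq:GItensors}.

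Finally I would establish the first (graphical) identity of Eq.~\eqref{eq:simpleness_fermionic}. With $E_{\mathcal{U}}$ now an explicit projector, I would translate the two unitarity conditions $U^{(N)\dagger}U^{(N)}=U^{(N)}U^{(N)\dagger}=\openone$ into identities for the stacked tensor $\mathbb{U}^{(N)}$ of Eq.~\eqref{eq:unitary_stacked}, and contract them against $E_{\mathcal{U}}$ on either side; the projector property then collapses the contracted physical legs to the dashed (identity) lines on the right-hand side. The main obstacle is precisely this last step: unlike the qudit derivation, every cap, cup and crossing here carries a $\mathbb{Z}_2$ sign governed by the rules of Appendix~\ref{sec:elementary_operators}, and one must verify that these signs cancel consistently — in particular for the degenerate type II case, where the rank-two projector and the odd sector of the auxiliary space interact with the $\sigma^x$ factors. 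Once this sign bookkeeping is carried through, both lines of Eq.~\eqref{eq:simpleness_fermionic} hold and $\mathcal{U}_k$ is simple.
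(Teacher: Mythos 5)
Your steps 1--3 are sound, but they only reprove what the paper already establishes in Prop.~\ref{prop:normality_fmpus} (via Lemma A.5 of Ref.~\cite{cirac2017matrixRenormalization}): that in GCF the transfer matrix has eigenvalue $1$ with multiplicity one (type I) or two (type II), no other nonzero eigenvalues, and trivial peripheral Jordan blocks, so that blocking past the nilpotent index makes $E_{\mathcal{U}}$ a projector. The genuine content of the simpleness proposition lies beyond the transfer matrix, and your final step is where the proposal breaks: $E_{\mathcal{U}}^2=E_{\mathcal{U}}$ does \emph{not} imply the graphical identities~\eqref{eq:simpleness_fermionic}. Decomposing the doubled tensor as $E\gtimes\openone+\sum_\alpha \mathbb{S}_\alpha\gtimes\sigma_\alpha$ with $\sigma_\alpha$ traceless, the transfer matrix is only the component along the physical identity; simpleness requires killing the $\mathbb{S}_\alpha$ components, i.e.\ $E\,S_{\alpha_1}\cdots S_{\alpha_m}E=0$ and the identical vanishing of all sufficiently long products $S_{\alpha_1}\cdots S_{\alpha_{J'}}=0$. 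A transfer matrix can be an exact projector while these components survive, and contracting the unitarity conditions against $E_{\mathcal{U}}$ yields only \emph{trace} identities ${\rm tr}(E\,S_{\alpha_1}\cdots S_{\alpha_m})=0$, not operator identities. The paper's proof (following Prop.~III.3 of Ref.~\cite{cirac2017matrix}) upgrades the trace condition to $E S_{\alpha_1}\cdots S_{\alpha_m}E=0$, deduces that every element $S$ of the algebra generated by the $S_\alpha$ is nilpotent (from ${\rm tr}(S^{2N})=0$ for all $N$, using that $S^2$ is even whether $S$ is even or odd), and then invokes the Nagata--Higman theorem with Razmyslov's improvement~\cite{nagata1952,higman1952ordering,razmyslov1974trace} to obtain a \emph{uniform} length $J'$ beyond which all products vanish --- this, not the Jordan index of $E$, is what produces the $D^4$ bound. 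None of this machinery appears in your outline, so your closing claim that ``the projector property then collapses the contracted physical legs'' is a non sequitur.

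There is also a type-II-specific gap hiding in your ``sign bookkeeping'' remark. For a degenerate graded normal tensor the trace condition involves \emph{two} fixed points on each side, and the decisive step is showing that the cross-terms vanish and the diagonal terms coincide, $(\Phi^{(a)}|S'_{\alpha_1}\cdots S'_{\alpha_m}|\rho^{(b)})=0$ for $a\neq b$ together with $(\Phi^{(1)}|\cdots|\rho^{(1)})=(\Phi^{(2)}|\cdots|\rho^{(2)})$; the paper proves this by an explicit computation exploiting the structure $U^{j,k}=(\sigma^x)^{|j|+|k|}\otimes B^{j,k}$ and the fact that $\Phi^{(2)},\rho^{(2)}$ are odd, so that all the relevant traces reduce to ${\rm tr}(\sigma^x)$ or ${\rm tr}(\sigma^z)$ and vanish. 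Moreover these identities are needed also for \emph{odd} products, $\sum_j|\alpha_j|\equiv 1 \pmod 2$, where unitarity gives no trace constraint at all and one must argue separately from the parity of the fixed points. Verifying that caps, cups and crossings carry consistent signs addresses none of this; without the nilpotency-plus-Nagata--Higman argument and the two-fixed-point analysis, the proof does not go through.
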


Given a tensor $\mathcal{U}$, the simpleness condition~\ref{def:simpleness_fermionic} clearly implies that $U^{(N)}$ is locality-preserving. In order to establish the equivalence between fMPUs and fQCA, it remains to show that the latter can always be represented as a type I or type II fMPU. This is proven in Appendix~\ref{sec:technical_results_antiperiodic}, and we arrive at the main result of this section.

\begin{prop}\label{prop:fQCA_vs_fMPUs}
Up to appending inert ancillary fermionic degrees of freedom, any type I or type II fMPU with antiperiodic boundary conditions is a $1D$ fermionic QCA and viceversa.
\end{prop}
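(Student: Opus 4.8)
The plan is to prove both directions of the equivalence separately, treating the ``fMPU $\Rightarrow$ fQCA'' direction as a direct consequence of the machinery already assembled, and reserving most of the effort for the converse. For the forward direction, I would invoke Proposition~\ref{prop:normality_fmpus} and the subsequent simpleness proposition: given a type I or type II fMPU $U^{(N)}$, there exists $k \leq D^4$ such that the blocked tensor $\mathcal{U}_k$ is simple in the sense of Definition~\ref{def:simpleness_fermionic}. As noted in the text immediately following that definition, the simpleness relations in Eq.~\eqref{eq:simpleness_fermionic} together with $E_{\mathcal{U}}^2 = E_{\mathcal{U}}$ graphically force a strict causal cone: contracting a local operator supported on a bounded set of sites against the fMPU and its adjoint, the simpleness identities collapse the transfer matrices on either side of the support into projectors that terminate the tensor network at a finite distance. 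I would make this explicit by the standard ``pulling through'' argument for the factorized tensors $a$, $b$, showing that $U^{(N)\dagger}\mathcal{O}_j U^{(N)}$ is supported within a neighborhood of fixed radius determined by $k$, hence $U^{(N)}$ is a fermionic QCA.

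For the converse, I would start from the structural result of Ref.~\cite{gross2012index,fidkowski2019interacting} that any $1D$ fermionic QCA, up to appending inert ancillary fermionic modes and up to stable equivalence, decomposes as a finite-depth fermionic circuit composed with (possibly) a Majorana-shift operator. The strategy is then compositional: I would show first that finite-depth circuits of even local gates admit a type I generalized fMPU representation with antiperiodic boundary conditions, by blocking the circuit layers into a single tensor and verifying the commutation relation with $S = e^{i\alpha}\openone$ and the parity constraint Eq.~\eqref{eq:commutation}; this is the fermionic analogue of the standard qudit construction. Second, I would observe that the Majorana-shift operator $M^{(N)}_{AP}$ already has the explicit type II form exhibited in Eq.~\eqref{eq:maj_anti}, with the degenerate normal tensor $M^{n,m}$ of Eq.~\eqref{eq:m_matrices} and $S = \openone_2/\sqrt{2}$. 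The remaining work is to prove closure: that the product of a type I (or type II) fMPU with another such fMPU is again of type I or type II after blocking. Here I would use that the tensor product of two generalized fMPU tensors is again a generalized fMPU tensor, and that the parity-grading structure of the auxiliary space is preserved under the graded composition rules discussed in Appendix~\ref{sec:elementary_operators}; composing a degenerate (type II) block with a non-degenerate (type I) block yields a degenerate block, matching the multiplicativity of the index.

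The main obstacle, and where I expect the bulk of the technical care to lie, is the bookkeeping of graded signs under composition and blocking, together with ensuring that the auxiliary operator $S$ transforms correctly. Specifically, when one composes two fMPUs the auxiliary spaces tensor together, and the combined boundary operator is $S_1 \otimes S_2$; I must verify that this lands in the allowed set $\{e^{i\alpha}\openone,\ \tfrac{1}{\sqrt 2}e^{i\alpha}\openone\}$ (for antiperiodic b.c.) only after an appropriate even gauge transformation bringing the combined tensor into the normal form of Eq.~\eqref{eq:GItensors}, and that the degeneracy counting $\{\text{non-degenerate}, \text{degenerate}\} \cong \{\pm 1\}$ under $\mathbb{Z}_2$ is respected. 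A second delicate point is the ``up to appending inert ancillary fermionic degrees of freedom'' clause: I would need to show that the stable-equivalence moves of Ref.~\cite{fidkowski2019interacting}—adjoining decoupled modes acted on trivially—correspond at the tensor level to tensoring the local tensor with an inert $1\times 1$ (or trivially graded) block, so that the GCF and the simpleness condition of Definition~\ref{def:simpleness_fermionic} are preserved. Establishing that these ancillary moves suffice to bridge any genuine fQCA to one in the explicit type I/type II standard form, while controlling the accumulated Majorana parity, is the crux of the argument and relies essentially on Theorem~\ref{thm2_fundamental} for the uniqueness of the GCF.
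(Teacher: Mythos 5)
Your proposal takes essentially the same route as the paper: the forward direction follows from the already-established simpleness of the (blocked) tensor, and the converse invokes the structure theorem of Ref.~\cite{fidkowski2019interacting} decomposing any fQCA (after blocking and appending ancillas) into shifts, Majorana shifts, and depth-two circuits of even gates, each of which is realized as a type I or type II fMPU, together with closure of the two types under composition, which is precisely the paper's Prop.~\ref{prop:z2_structure}. The only discrepancy is one of emphasis: the paper inherits the ancilla clause directly from the decomposition theorem rather than via Theorem~\ref{thm2_fundamental} (GCF uniqueness plays no direct role in this proof), and the graded-sign bookkeeping you flag as the crux is exactly the content of Prop.~\ref{prop:z2_structure}, so your outline is correct as stated.
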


\subsection{FMPUs and QCA: periodic boundary conditions}
\label{sec:periodic_fMPUs}

As we have seen in Sec.~\ref{sec:nonlocal_fmpu}, fMPUs with periodic boundary conditions are not necessarily locality-preserving and in this case we need to impose further conditions on the tensors. An important point is that any tensor $\mathcal{U}$ (in GCF) generating a type I or type II fMPU with antiperiodic boundary conditions, also generates one in the periodic case. Indeed, let $\mathcal{U}$ be a non-degenerate normal tensor generating a type I fMPU $U^{(N)}_A$ with antiperiodic boundary conditions. Then, by blocking a finite number of times, we can assume that it is simple. Let $U^{(N)}_P$ be the fMPU with periodic boundary conditions, obtained by inserting the operator $Z$ into the trace. In order to show that $U^{(N)}_P$ is unitary, we can apply the simpleness condition to $U^{(N)\dagger}_PU^{(N)}_P$, and obtain
\be
U^{(N)\dagger}_PU^{(N)}_P=
\begin{tikzpicture}[baseline={([yshift=-0.5ex]current bounding box.center)}, scale=0.7]
\draw [fill=black] (-2.15,-0.15) rectangle (-1.85,0.15);
\draw [fill=black] (-2.15,1-0.15) rectangle (-1.85,1+0.15);
\draw (-0.5,1) arc (-270:-90:0.5);
\draw (-1.5,1) arc (90:-90:0.5);
\draw (-1.5,-0.5) node[left]{} -- (-1.5,1.5) node[left]{};\\
\draw (-2.4,0) node[left]{} -- (-1.5,0) node[right]{};
\draw (-2.4,1) node[left]{} -- (-1.5,1) node[right]{};
\draw (0,-0.5) node[left]{} -- (0,1.5) node[left]{};
\draw[dashed] (-0.5,0) node[left]{} -- (2.,0) node[right]{};
\draw[dashed](-0.5,1) node[left]{} -- (2,1) node[right]{};
\draw [fill=white] (-1.25,0.25) rectangle (-0.75,0.75);
\draw(-0.95,-0.2)node{$E_{\mathcal{U}}$};
\filldraw[fill=white,draw=black] (-1.5,0) circle [radius=0.25];
\fill[black] (-1.5,1) circle [radius=0.25];
\draw (1.4,-0.5) node[left]{} -- (1.4,1.5) node[left]{};\\
\draw (2,-0.5) node[left]{} -- (2,1.5) node[left]{};\\
\filldraw[fill=white,draw=black] (-1.5,0) circle [radius=0.25];
\fill[black] (-1.5,1) circle [radius=0.25];
\draw (2,0) node[left]{} -- (2.5,0) node[right]{};
\draw (2,1) node[left]{} -- (2.5,1) node[right]{};
\filldraw[fill=white,draw=black] (2.,0) circle [radius=0.25];
\fill[black] (2.,1) circle [radius=0.25];
\end{tikzpicture}\,,
\ee
where we denoted by a small black box the parity operator $Z$. Now, using that each tensor is even, and $Z^2=\openone$, we have
\be
\begin{tikzpicture}[baseline={([yshift=-0.5ex]current bounding box.center)}, scale=0.7]
\draw [fill=black] (-2.15,-0.15) rectangle (-1.85,0.15);
\draw [fill=black] (-2.15,1-0.15) rectangle (-1.85,1+0.15);
\draw (-1.5,-0.5) node[left]{} -- (-1.5,1.5) node[left]{};\\
\draw (-2.4,0) node[left]{} -- (-1.,0) node[right]{};
\draw (-2.4,1) node[left]{} -- (-1.,1) node[right]{};
\filldraw[fill=white,draw=black] (-1.5,0) circle [radius=0.25];
\fill[black] (-1.5,1) circle [radius=0.25];
\end{tikzpicture}
=
\begin{tikzpicture}[baseline={([yshift=-0.5ex]current bounding box.center)}, scale=0.7]
\draw [fill=black] (1+-2.15,-0.15) rectangle (1.+-1.85,0.15);
\draw [fill=black] (1.+-2.15,1-0.15) rectangle (1.+-1.85,1+0.15);
\draw [fill=black] (-1.5-0.15,1.5-0.15) rectangle (-1.5+0.15,1.5+0.15);
\draw [fill=black] (-1.5-0.15,-0.5-0.15) rectangle (-1.5+0.15,-0.5+0.15);
\draw (-1.5,-0.8) node[left]{} -- (-1.5,1.8) node[left]{};\\
\draw (-2.,0) node[left]{} -- (-0.6,0) node[right]{};
\draw (-2.,1) node[left]{} -- (-0.6,1) node[right]{};
\filldraw[fill=white,draw=black] (-1.5,0) circle [radius=0.25];
\fill[black] (-1.5,1) circle [radius=0.25];
\end{tikzpicture}\,.
\ee
Finally, as it is shown in Appendix~\ref{sec:technical_results_antiperiodic}, for a normal non-degenerate tensor, the right eigenvector associated with the eigenvalue $1$ of the transfer matrix $E$ is even, which implies $(Z\otimes Z )E_{\mathcal{U}}=E_{\mathcal{U}}$. Thus,  using that $Z^2=\openone$, we have $U^{(N)\dagger}_PU_P^{(N)}=\openone$. In a similar way, one can show that if $\mathcal{U}$ is a normal degenerate tensor generating a type II fMPU with antiperiodic boundary conditions, then it also generates a type II fMPU in the periodic case. To see this, one has to use that $X=\sigma^x\otimes \openone$ commutes with $U^{i,j}=(\sigma^{x})^{|i|+|j|}\otimes N^{i,j}$ for all $i,j$, and that $(X\otimes X )E_{\mathcal{U}}= E_{\mathcal{U}}$, which follows from the properties of the transfer matrix $E_{\mathcal{U}}$ associated with a graded normal degenerate tensor, cf. Appendix~\ref{sec:technical_results_antiperiodic}.

The above discussion, together with the results of the previous subsection, tells us that any fQCA can be represented as an fMPU with periodic boundary conditions. It also allows us to identify which properties we need to require in order for periodic fMPUs to be locality-preserving. In particular, if $U^{(N)}$ is a type I fMPU, then we require that the corresponding tensor $\mathcal{U}$ is simple, and that $(Z\otimes Z )E_{\mathcal{U}}=E_{\mathcal{U}}$, while if  $U^{(N)}$  is a type II fMPU we require that $(X\otimes X)E_{\mathcal{U}}= E_{\mathcal{U}}$ (in addition to the simpleness condition). With these additional constraints, we have finally an identification between fMPUs and fermionic QCA in the case of periodic boundary conditions. We can summarize the results of this section in the following theorem.

\begin{thm}
Up to appending inert ancillary fermionic degrees of freedom, any fermionic QCA can be represented as an fMPU of type I or type II with either periodic or antiperiodic boundary conditions. Furthermore\\[-0.5cm]
\begin{enumerate}
	\item any  type I or type II fMPU with ABC is necessarily an fQCA;\\[-0.5cm]
	\item any type I fMPU with PBC is an fQCA if it is generated by a simple tensor  $\mathcal{U}$,  and the transfer matrix $E_{\mathcal{U}}$  satisfies $(Z\otimes Z )E_{\mathcal{U}}=E_{\mathcal{U}}$, ${\rm tr}(E_{\mathcal{U}})=1$;\\[-0.5cm]
	\item any type II fMPU with PBC is an fQCA if it is generated by a simple tensor  $\mathcal{U}$,  and the transfer matrix $E_{\mathcal{U}}$  satisfies $(X\otimes X)E_{\mathcal{U}}=E_{\mathcal{U}}$, ${\rm tr}(E_{\mathcal{U}})=2$.
\end{enumerate}
\end{thm}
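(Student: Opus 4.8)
The plan is to assemble the statement from the two equivalences already in hand—Proposition~\ref{prop:fQCA_vs_fMPUs} for antiperiodic boundary conditions, together with the bridge constructed in the discussion preceding the theorem that connects the antiperiodic and periodic closures of a single tensor. I would organize the argument into a \emph{representation} direction (every fQCA arises as a type I or type II fMPU) and a \emph{locality} direction (fMPUs of the relevant form are locality-preserving), treating periodic and antiperiodic boundary conditions separately since only the periodic case carries the extra transfer-matrix hypotheses.

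For the representation direction, I would first invoke Proposition~\ref{prop:fQCA_vs_fMPUs}: up to appending inert ancillas, any fQCA is realized as a type I or type II fMPU with antiperiodic boundary conditions, generated by a tensor $\mathcal{U}$ which, after blocking, may be taken simple and in GCF. I would then feed this $\mathcal{U}$ into the bridge established above—inserting $Z$ (type I) or $X$ (type II) into the trace—to produce a periodic fMPU from the very same local data, thereby exhibiting both a periodic and an antiperiodic representation. Item~1 is then nothing but the ``viceversa'' half of Proposition~\ref{prop:fQCA_vs_fMPUs}, so no further work is needed there.

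For items~2 and~3, I would argue in two steps. Locality-preservation follows at once from simpleness of $\mathcal{U}$: as noted right after Definition~\ref{def:simpleness_fermionic}, the simpleness relations confine the support of $U^{(N)\dagger}\mathcal{O}U^{(N)}$ to a bounded neighborhood. Unitarity of the periodic closure is the content of the transfer-matrix hypotheses: running the graphical computation of $U^{(N)\dagger}_P U^{(N)}_P$ through the simpleness identities reduces the problem to a single closed loop carrying $Z^2=\openone$ (type I), or a loop in which $X$ slides freely through each even tensor $U^{i,j}=(\sigma^x)^{|i|+|j|}\otimes N^{i,j}$ (type II); the conditions $(Z\otimes Z)E_{\mathcal{U}}=E_{\mathcal{U}}$, respectively $(X\otimes X)E_{\mathcal{U}}=E_{\mathcal{U}}$, are exactly what is needed to collapse this loop to the identity. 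The trace normalizations ${\rm tr}(E_{\mathcal{U}})=1$ and ${\rm tr}(E_{\mathcal{U}})=2$ then distinguish the non-degenerate case $E_{\mathcal{U}}=\ket{r}\!\bra{l}$ from the degenerate case with two eigenvalue-one eigenvectors, matching type I with type II. Combining unitarity with locality-preservation yields the fQCA property.

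The main obstacle, and the step I would spend most care on, is the periodic unitarity argument: one must verify that closing the trace with $Z$ or the odd operator $X$ genuinely produces a unitary rather than merely a locality-preserving operator, and this hinges delicately on the parity of the relevant fixed point of $E_{\mathcal{U}}$—evenness of the right eigenvector in the non-degenerate case, and $(X\otimes X)$-invariance of the two-dimensional fixed-point space in the degenerate case. A secondary subtlety is the bookkeeping of stable equivalence: one must check that the ancillas appended in Proposition~\ref{prop:fQCA_vs_fMPUs} do not interfere with the $Z$- or $X$-insertion when passing to the periodic closure, so that the representation stays within the type I and type II families.
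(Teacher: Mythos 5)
Your proposal is correct and follows essentially the same route as the paper: Proposition~\ref{prop:fQCA_vs_fMPUs} handles the antiperiodic equivalence, and the periodic case is settled exactly as in the discussion of Sec.~\ref{sec:periodic_fMPUs}, by closing the trace with $Z$ (type I) or $X$ (type II), running the simpleness identities through $U^{(N)\dagger}_P U^{(N)}_P$, and collapsing the remaining loop via $(Z\otimes Z)E_{\mathcal{U}}=E_{\mathcal{U}}$, respectively $(X\otimes X)E_{\mathcal{U}}=E_{\mathcal{U}}$, together with the parity of the fixed points established in Appendix~\ref{sec:technical_results_antiperiodic}. The two subtleties you flag—the parity of the leading eigenvectors and the compatibility of ancilla-appending with the boundary insertion—are precisely the points the paper relies on, so no gap remains.
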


Note that the fMPU discussed in Sec.~\ref{sec:nonlocal_fmpu} is not simple, and, accordingly, is not locality-preserving.

\section{Index theory for generalized fMPUs}
\label{sec:index_theory}

In this section, we finally discuss how to extract an index to classify the locality-preserving fMPUs defined in Sec.~\ref{sec:gfMPUs_two_kinds}, thus recovering the results  recently derived in Ref.~\cite{fidkowski2019interacting}, where the GNVW index~\cite{gross2012index} was generalized to the case of fermionic systems. Our logic is very similar to the one employed in Ref.~\cite{cirac2017matrix}, but there are non-trivial practical differences, due once again to the fermionic nature of the elementary degrees of freedom.

First of all, we recall the definition of the index for qudits. Let $\mathcal{U}$ be a tensor generating an MPU, and denote by $U^{n,m}_{\alpha,\beta}$ the corresponding elements. One can define two maps $\mathcal{M}_{1,2}:\mathbb{C}^{d}\otimes \mathbb{C}^{D}\to \mathbb{C}^{d}\otimes \mathbb{C}^{D}$, by
\begin{subequations}\label{eq:bosonic_maps}
\begin{align}
\mathcal{M}_{1}&: \ket{m}\otimes \ket{\alpha}\mapsto U^{n,m}_{\alpha,\beta}\ket{n}\otimes \ket{\beta}\,,\\
\mathcal{M}_{2}&: \ket{m}\otimes \ket{\beta}\mapsto U^{n,m}_{\alpha,\beta}\ket{n}\otimes \ket{\alpha}\,.
\end{align}
\end{subequations}
Denoting the rank of these maps by $r$, and $\ell $, respectively, the MPU index was defined in Ref.~\cite{cirac2017matrix} as ${\rm ind}=(1/2)[\log_2(r)-\log_2(\ell)]$, where it was also shown to coincide with the one first introduced in Ref.~\cite{gross2012index}.

Let us  now consider the fermionic case. First of all, we need to restrict to the class of locality-preserving fMPUs, namely we consider tensors $\mathcal{U}$ that become simple after blocking sufficiently many times. Now, if $\mathcal{U}$ is an even tensor, with elements denoted again by $U^{n,m}_{\alpha,\beta}$, generating a type I or type II fMPU, one could be tempted to define the index as for qudits, in terms of the maps $\mathcal{M}_{1,2}$ introduced above. However, this turns out \emph{not} to be a valid definition for the fermionic index. In order to see it, consider a depth-two quantum circuit, with elementary two-site gate defined by
\be
U_{j,j+1}=\frac{1}{\sqrt{2}}\left(\openone-Y_jX_{j+1}\right)\,,
\ee
with $Y_j=-i(a_j-a_j^{\dagger})$, $X_j=(a_j+a_j^{\dagger})$. It is straightforward to construct the corresponding fMPU, which is characterized by local and physical dimension $d=D=2$. The associated tensor $\mathcal{U}$ is normal non-degenerate and simple. However, one can see that a definition based on Eq.~\eqref{eq:bosonic_maps} would yield a non-vanishing index, which is the wrong result for a quantum circuit (cf. Ref.~\cite{fidkowski2019interacting}). Furthermore, we note that blocking does not remedy to this problem~\footnote{Importantly, we recall that in order to obtain the blocked tensors of fMPOs, one needs to introduce additional signs, as specified by Eq.~\eqref{eq:signs_blocking}. These signs are crucial, since in general they modify the ranks of the operators defined when discussing the fermionic index.}.

 In fact, it turns out that the construction for qudits must be modified by introducing additional signs in the definition of the maps $\mathcal{M}_{1,2}$. In particular, in the fermionic case, we define
\begin{subequations}\label{eq:f_m_maps}
	\begin{align}
		\mathcal{M}^f_{1}&: \ket{m}\otimes \ket{\alpha}\mapsto (-1)^{|n||m|} U^{n,m}_{\alpha,\beta}\ket{n}\otimes \ket{\beta}\,,\\
		\mathcal{M}^f_{2}&: \ket{m}\otimes \ket{\beta}\mapsto U^{n,m}_{\alpha,\beta}\ket{n}\otimes \ket{\alpha}\,.
	\end{align}
\end{subequations}
where the label $f$, specifying that we are dealing with the fermionic case, will be omitted when it does not generate confusion. We are now in a position to define the fermionic index.
\begin{definition}
	Let $\mathcal{U}$ be a tensor in GCF generating a type I or type II locality-preserving fMPU. Take $k$ such that the blocked tensor $\mathcal{U}_k$ is simple, and define $r$, $\ell$ as the ranks of the maps $\mathcal{M}^f_{1}$, $\mathcal{M}^f_{2}$ in Eq.~\eqref{eq:f_m_maps}, respectively. The (fermionic) index of $\mathcal{U}$ is defined as ${\rm ind}_f=(1/2)(\log_2(r)-\log_2(\ell))$, while we denote the exponentiated index by $\mathcal{I}_f=\sqrt{r/\ell}$.
\end{definition}
For clarity, we first state our main result on the fermionic index, which will be proven in the rest of this technical section.

\begin{thm}[Fermionic Index Theorem]\label{th:femionic_index} Let $\mathcal{U}$ be an even tensor in GCF generating a locality-preserving fMPUs. Then\\[-0.5cm]
	\begin{enumerate}
		\item the exponentiated index $\mathcal{I}_f$ is a rational number for type I fMPUs, while $\mathcal{I}_f=\sqrt{2}(p/q)$ for type II fMPUs, with $p,q\in \mathbb{N}$ and coprime;\\[-0.5cm]
		\item the index does not change by blocking;\\[-0.5cm]
		\item the index is additive by tensoring and composition; \\[-0.5cm]
		\item the index is robust, i.e. by changing continuously $\mathcal{U}$ (and remaining in the class of locality-preserving fMPUs) one cannot change it;\\[-0.5cm]
		\item two tensors have the same index iff they are equivalent.
	\end{enumerate}
\end{thm}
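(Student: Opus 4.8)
The plan is to follow the strategy of Ref.~\cite{cirac2017matrix} for qudit MPUs, adapting every step to the graded setting and carefully tracking the fermionic signs that distinguish the maps $\mathcal{M}^f_{1,2}$ in Eq.~\eqref{eq:f_m_maps} from their bosonic counterparts. The backbone of the argument is a standard form for simple tensors. Combining the simpleness condition (Definition~\ref{def:simpleness_fermionic}), Proposition~\ref{prop:normality_fmpus}, and the fundamental theorem for the GCF (Theorem~\ref{thm2_fundamental}), I would first show that, after blocking, any tensor $\mathcal{U}$ generating a type I (respectively type II) fMPU can be gauged into a factorized form in which the bond space splits as a graded tensor product and the corresponding normal block is non-degenerate (respectively degenerate). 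For type II the degenerate block carries the $\sigma^x$ factor of Eq.~\eqref{eq:GItensors}; since this factor is the origin of the $\sqrt2$, I would isolate it at the outset.

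From the standard form I would compute the ranks $r$ and $\ell$ of $\mathcal{M}^f_1$ and $\mathcal{M}^f_2$ explicitly. This yields item~1: for type I the ratio $r/\ell$ is a perfect-square ratio, so $\mathcal{I}_f=\sqrt{r/\ell}$ is rational, whereas for type II the degenerate block contributes a single unpaired factor of $2$, giving $\mathcal{I}_f=\sqrt2\,(p/q)$. Item~2 (invariance under blocking) follows by showing that blocking multiplies $r$ and $\ell$ by the same integer, leaving $r/\ell$ fixed; here the signs of Eq.~\eqref{eq:signs_blocking} together with the $(-1)^{|n||m|}$ in $\mathcal{M}^f_1$ must be shown to cancel consistently, which is exactly what fails for the naive bosonic definition in the depth-two-circuit example. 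Item~3 (additivity) follows because tensoring and composing fMPOs produce graded-tensor-product local tensors, so the maps $\mathcal{M}^f$ factorize and the ranks multiply; again the graded signs require care, so that the two $\sqrt2$ factors of a pair of type II operators combine into an integer. Item~4 (robustness) is then immediate: $r$ and $\ell$ are ranks of matrices depending continuously on $\mathcal{U}$, and integers cannot vary continuously, so within the locality-preserving class the index is locally constant and hence invariant along any continuous path.

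The last statement, item~5, is where I expect the genuine difficulty. The forward implication (equivalent tensors share the index) is contained in item~4. For the converse I would again use the standard form to reduce each fMPU, up to blocking, gauge transformations, and the appending of inert ancillary modes (i.e.\ \emph{stable} equivalence), to a canonical representative fixed by the pair $(r,\ell)$ alone---essentially a composition of elementary shifts and, for type II, a single Majorana-shift factor---and then to construct an explicit continuous, unitarity-preserving interpolation between two representatives of the same shape. The main obstacle is to carry out this deformation entirely within the graded category: one must arrange the ancillary modes so that the two representatives acquire matching bond and physical dimensions, and guarantee that the interpolating path never leaves the locality-preserving class. The $\sqrt2$ sector makes this particularly delicate, since the Majorana-shift factor cannot be eliminated by any on-site deformation, and it is precisely this feature that forces one to work with stable rather than strict equivalence in order to obtain a complete classification, in agreement with Ref.~\cite{fidkowski2019interacting}.
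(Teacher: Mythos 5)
Your proposal follows essentially the same route as the paper: a standard form for simple tensors obtained from graded singular value decompositions, explicit rank computations for the maps $\mathcal{M}^f_{1,2}$, factoring the Majorana/$\sigma^x$ structure out of type II fMPUs as the source of the $\sqrt{2}$, and stable equivalence with graded ancillas --- including your correct observation that graded spaces of equal total dimension need not be isomorphic, which is exactly why the paper appends a $\mathbb{C}^{1|1}$ ancilla in the discussion following Prop.~\ref{prop:type_I_Index}.

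Two steps, however, are too quick as written. First, robustness (item~4) does not follow from ``ranks depend continuously on $\mathcal{U}$'': rank is only lower semicontinuous, so along a continuous path $r$ and $\ell$ could a priori jump upward. The argument inherited from Ref.~\cite{cirac2017matrix} instead uses Proposition~\ref{ThmFund1}: for a simple (blocked) tensor $r\ell=d^2$, so an upward jump of one rank would force a downward jump of the other, which semicontinuity forbids; this pins both ranks along the path. Relatedly, for composition it is not true that the ranks of $\mathcal{M}^f_{1,2}$ simply multiply; additivity is proved through the standard form. Second, ``isolating the $\sigma^x$ factor at the outset'' for type II is implemented in the paper as the operator factorization $U^{(N)}=\tilde U^{(N)}M^{(N)}_A$ of Eq.~\eqref{eq:standard_form_type_II}, and this rests on inputs your sketch omits: Corollary~\ref{cor:local_dimension_degenerate} ($d_e=d_o$, so a $\mathbb{C}^{1|1}$ factor of the physical space can be split off), Prop.~\ref{prop:z2_structure} (the product of two type II fMPUs is type I, so $\tilde U^{(N)}=U^{(N)}M^{(N)\dagger}_A$ is generated by a non-degenerate graded normal tensor), the rank computation for the Majorana shift itself (Lemma~\ref{lem:majorana_rank}, giving $r_k=2d^k$, $\ell_k=d^k$, where the left/right asymmetry comes precisely from the fermionic reordering signs), and Lemma~\ref{lemma:index_type_II}, which combines these to get $\mathcal{I}_f=\sqrt{2}\,\tilde{\mathcal{I}}_f$. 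Finally, the composed tensor obtained from $\tilde{\mathcal{U}}$ and $\mathcal{M}$ is generally \emph{not} in GCF, while the index is defined for tensors in GCF; the paper closes this gap in Appendix~\ref{sec:index_GCF} by showing the two indices agree after blocking $\tilde q\geq 4k$ times. These are repairable omissions rather than wrong turns, but each is a place where your plan, executed literally, would stall.
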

Here we have introduced the notion of equivalence, which will be defined in a precise way later on. Loosely speaking, two tensors are equivalent if they can be transformed continuously into one another, by blocking and attaching ancillas, respectively. The proof of Theorem~\ref{th:femionic_index} is carried out in the rest of this section, which represents the most technical part of our work. While one could carry out the discussion using the fiducial-state formalism employed so far, the notation becomes significantly simpler exploiting the language of graded tensor networks, recently developed in Refs.~\cite{bultinck2017fermionic, kapustin_spin_2018}. The latter is reviewed in Appendix~\ref{sec:graded_TN}, where it is shown to be in $1$-to-$1$ correspondence with the fiducial-state formalism. 

Finally, from now on we will focus on the case of antiperiodic boundary conditions, where any generalized fMPU is guaranteed to be locality-preserving. On the technical level, this allows us to exploit the uniqueness of the graded canonical form (cf. Appendix~\ref{sec:graded_canonical_form}), which is needed in order to carry out some proofs, and to exclude a priori the possibility of non-locality-preserving fMPUs. We stress, however, that this is not a restriction, because we have shown in the previous section that tensors generating locality-preserving fMPUs in the periodic case, are also fMPUs with antiperiodic boundary conditions. 

\subsection{Standard form for type I fMPUs}
\label{sec:standard_form_type_I}

We begin by showing that any type I fMPU admits a standard form, analogous to the one of qudit MPUs defined in Ref.~\cite{cirac2017matrix}. Unless specified otherwise, in the rest of this section we will assume that $\mathcal{U}/{\sqrt{d}}$ is graded normal non-degenerate, where the left and right eigenvectors of the transfer matrix $E_{\mathcal{U}}$ corresponding to the eigenvalue $1$, $(\Phi|$ and $|\rho)$, are of the form
\begin{subequations}
\begin{align}
\langle \Phi|&=\sum_{n=1}^D(n,n|\,,\\
|\rho)&=\sum_{n=1}^D\rho_n|n,n)\,.\label{eq:leading_right_eigen}
\end{align}
\end{subequations}
This can be done without loss of generality, as it follows from the results in Appendices~\ref{sec:graded_canonical_form} and \ref{sec:technical_results_antiperiodic}.

 We recall that, in the graded tensor network formalism (cf. Appendix~\ref{sec:graded_TN}), a given local tensor $\mathcal{U}$ is represented as
\be
\mathcal{U}=\sum_{n,m, \alpha, \beta} U_{\alpha, \beta}^{n,m} | \alpha ) |n\rangle \langle m|(\beta | =
\begin{tikzpicture}[baseline={([yshift=-1.5ex]current bounding box.center)}, scale=0.7]

\draw (0,-0.7) node[left]{$m$} -- (0,0.7) node[left]{$n$};
\draw (-0.7,0) node[left]{$\alpha$} -- (0.7,0) node[right]{$\beta$};	
\filldraw[fill=white,draw=black] (0,0) circle [radius=0.25];

\end{tikzpicture}\,,
\label{eq:graded_u_tensor}
\ee
where $| \alpha ) |n\rangle \langle m|(\beta |$ is a short-hand notation for $| \alpha )\gtimes |n\rangle\gtimes \langle m|\gtimes(\beta |$, and $\gtimes$ is the graded tensor product. Analogously, the tensor $\bar{\mathcal{U}}$ generating the conjugate transposed operator, reads 
\begin{align}
\bar{\mathcal{U}}=\sum_{n,m, \alpha, \beta}(-1)^{|\beta|+|\alpha||\beta|}(\bar{U}^{n,m}_{\alpha,\beta})|\alpha)|n\rangle\langle m|(\beta|\,.
\end{align}
By simply using the contraction rules for graded tensor networks explained in  Appendix~\ref{sec:graded_TN}, one could immediately write down the transfer operator
\bea
\mathbb{E}_{\mathcal{U}}=\frac{1}{d}\sum_{j,k} \bar{U}^{ j,k}_{\gamma,\delta}U^{j,k}_{\alpha,\beta} |\alpha)|\gamma)(\delta|(\beta|\,,
\label{eq:fTO}
\eea
where repeated indices are summed over. Note that the order of the bra and ket vectors in the rhs is important. 

Following Ref.~\cite{cirac2017matrix}, we now consider two different singular value decompositions of $\mathcal{U}$. Since this requires us to rearrange its indices, this procedure introduces additional non-trivial signs wrt the qudit case, as we now explain. Our goal is to rewrite
\be\label{eq:svd_graded}
\begin{tikzpicture}[baseline={([yshift=-1.5ex]current bounding box.center)}, scale=0.7]

\draw (0,-0.7) node[left]{$m$} -- (0,0.7) node[left]{$n$};
\draw (-0.7,0) node[left]{$\alpha$} -- (0.7,0) node[right]{$\beta$};	
\filldraw[fill=white,draw=black] (0,0) circle [radius=0.25];

\end{tikzpicture}
=
\begin{tikzpicture}[baseline={([yshift=-1.ex]current bounding box.center)}, scale=0.6]

\draw (0,-1) node[left]{$m$} -- (0,0.) ;
\draw (-0.-1,0) node[left]{$\alpha$} -- (0.,0) ;
\draw (-0.5,0)  -- (0,-0.5) ;
\draw[dotted, line width=0.5mm] (0,0)  -- (0,0.75) ;
\draw (0,1.25)  -- (1,1.25) node[right]{$\beta$};
\draw (0.0,0.75)  -- (0.,1.75) node[left]{$n$} ;
\draw (0,0.75)  -- (0.5,1.25) ;
\draw (0,0.35)node[left]{$r$};
\end{tikzpicture}
=
\begin{tikzpicture}[baseline={([yshift=-1.ex]current bounding box.center)}, scale=0.6]

\draw (0,-1)node[left]{$m$}  -- (0,0.);
\draw (0,0)  -- (1.,0) node[right]{$\beta$};
\draw[fill=gray] (0.5,0) -- (0,-0.5) -- (0,0) -- cycle;
\draw[line width=0.5mm] (0,0)  -- (0,0.75) ;
\draw (0,1.25)  -- (-1,1.25) node[left]{$\alpha$};
\draw (0.0,0.75)  -- (0.,1.75) node[right]{$n$};
\draw[fill=gray] (0.,0.75) -- (-0.5,1.25) -- (0,1.25) -- cycle;
\draw (0,0.35)node[right]{$\ell$};
\end{tikzpicture}\,.
\ee
Importantly, every diagram here corresponds to an \emph{even} graded tensor. Explicitly, we define
\begin{subequations}
\be\label{eq:x1}
\mathcal{X}_1=
\begin{tikzpicture}[baseline={([yshift=-1.ex]current bounding box.center)}, scale=0.6]
\draw (0,1.25)  -- (1,1.25) node[right]{$\beta$};
\draw (0.0,0.75)  -- (0.,1.75) node[left]{$n$} ;
\draw (0,0.75)  -- (0.5,1.25) ;
\draw[dotted, line width=0.5mm] (0,0)node[left]{$i$}   -- (0,0.75) ;
 \end{tikzpicture}=X_1^{n,\beta,i}|n\rangle (\beta|\langle i|\,,
\ee
\be\label{eq:y1}
\mathcal{Y}_1=
\begin{tikzpicture}[baseline={([yshift=-0.6ex]current bounding box.center)}, scale=0.6]
\draw[dotted, line width=0.5mm] (0,0)  -- (0,0.75) node[left]{$i$} ;
\draw (0,-1) node[left]{$m$} -- (0,0.) ;
\draw (-0.-1,0) node[left]{$\alpha$} -- (0.,0) ;
\draw (-0.5,0)  -- (0,-0.5) ;
\end{tikzpicture}=Y^{i,\alpha,m}_{1}|i\rangle|\alpha)\langle m|\,,
\ee
\be
\mathcal{X}_2=
\begin{tikzpicture}[baseline={([yshift=-1.ex]current bounding box.center)}, scale=0.6]
\draw (0,1.25)  -- (-1,1.25) node[left]{$\alpha$};
\draw (0.0,0.75)node[right]{$i$}  -- (0.,1.75) node[right]{$n$};
\draw[line width=0.5mm]  (0.0,0.25)  -- (0.,0.75);
\draw[fill=gray] (0.,0.75) -- (-0.5,1.25) -- (0,1.25) -- cycle;
\end{tikzpicture}=X^{\alpha,n,i}_{2}|\alpha)|n\rangle\langle i|\,,
\ee
\be
\mathcal{Y}_2=
\begin{tikzpicture}[baseline={([yshift=-1.ex]current bounding box.center)}, scale=0.6]
\draw (0,-1)node[left]{$m$}  -- (0,0.);
\draw (0,0)  -- (1.,0) node[right]{$\beta$};
\draw[fill=gray] (0.5,0) -- (0,-0.5) -- (0,0) -- cycle;
\draw[line width=0.5mm]  (0.0,0.)  -- (0.,0.75)node[left]{$i$} ;
\end{tikzpicture}=Y^{i,m,\beta}_{2}|i)\langle m|(\beta|\,.
\ee
\end{subequations}
We show that we can always decompose $\mathcal{U}$ as in Eq.~\eqref{eq:svd_graded}, with $\mathcal{X}_{i}$, $\mathcal{Y}_i$ even tensors. First, we note that $\mathcal{U}$ defines the linear function
\be
|\beta)\gtimes \ket{m}\mapsto \mathcal{C}(\mathcal{U}|\beta)\gtimes \ket{m})=U^{n,m}_{\alpha,\beta}|\alpha)\gtimes\ket{n}\,,
\ee
where $\mathcal{C}$ is the contraction map introduced in Appendix~\ref{sec:graded_TN}.  This map can be represented as a matrix, $\mathcal{M}^f_2$, with elements $U^{n,m}_{\alpha,\beta}$. Using a singular value decomposition, we can write
\be
\mathcal{M}^f_2=V^\dagger_{2} D_2 W_2\,,
\ee
where $V_2$, $W_2$ are isometries, fulfilling $V_2V^\dagger_2=W_2W^\dagger_2=\openone$, while $D_2$ is a diagonal positive matrix of dimension $\ell$. Crucially, we can choose these matrices to be even, when seen as operators acting on graded spaces. Indeed, let us consider the matrix $\mathcal{M}^f_2$ in the basis associated with the elements $U^{n,m}_{\alpha,\beta}$. We reorder the basis vectors, in such a way that even vectors come first. Since $\mathcal{U}$ is even, $\mathcal{M}^f_2$ is now block diagonal, with two blocks corresponding to the even and odd subspaces, respectively. We can then apply a singular value decomposition to each block individually, yielding isometries $V_2^{e,o}$, $W_2^{e,o}$ and diagonal matrices $D^{e,o}_2$. The desired decomposition in terms of even matrices is simply obtained choosing $V_2=V_2^{e}\oplus V_2^{o}$, $W_2=W_2^{e}\oplus W_2^{o}$ and $D_2=D_2^e\oplus D_2^{o}$.  Next, defining the matrices $X_2=V^\dagger_2$ and $Y_2=D_2W_2$, we have 
\be
\mathcal{U}=\mathcal{X}_2 \mathcal{Y}_2\,,
\ee
where appropriate contractions are implied. Since the matrices $X_2$, $Y_2$ are even, $\mathcal{X}_2$, $\mathcal{Y}_2$ are even graded tensors, and the second decomposition in Eq.~\eqref{eq:svd_graded} is established. 

The first decomposition in Eq.~\eqref{eq:svd_graded} can be derived in a similar way, but we need to be careful with the signs arising from reordering the graded tensors. Rewriting
\be\label{eq:map_f2}
\mathcal{U}=(-1)^{|n||m|+|\alpha|+|m|}U^{n,m}_{\alpha,\beta}\ket{n}( \beta|\gtimes |\alpha)\bra{m}\,,
\ee
we see that, in order to arrive at a decomposition of the form $\mathcal{U}=\mathcal{X}_1\mathcal{Y}_1$, with $\mathcal{X}_1$, $\mathcal{Y}_1$ as in Eqs.~\eqref{eq:x1}, \eqref{eq:y1}, we need to apply a singular value decomposition to the matrix $\mathcal{M}^f_1$ with elements $(-1)^{|n||m|+|\alpha|+|m|}U^{n,m}_{\alpha,\beta}$. Once again, this can be done by choosing even matrices, so that $\mathcal{M}^f_1=V^{\dagger}_1D_1W_1$, where $V_1$, $W_1$ are isometries, and $D_1$ is a diagonal positive matrix of dimension $r$. Now we chose $X_1=V_1^\dagger\left[V_1\left(\openone\otimes \rho\right)V_1^\dagger\right]^{-1/2}$, so that
\be
X_1^\dagger\left(\openone\otimes \rho \right)X_1=\openone\,,
\label{eq:x_1_condition}
\ee
where $\rho$ is the positive matrix corresponding to the right eigenstate \eqref{eq:leading_right_eigen}. Since all the matrices involved are even, $X_1$ is also even, and similarly for $Y_1=\left[V_1\left(\openone\otimes \rho\right)V_1^\dagger\right]^{1/2}D_1W_1$. Thus, we have established the first decomposition in Eq.~\eqref{eq:svd_graded}.

Importantly, the dimensions of the diagonal matrices $D_{1,2}$ introduced above, $r$ and $\ell$, coincide, by construction, with the ranks of the maps $\mathcal{M}^{f}_{1,2}$ in Eq.~\eqref{eq:f_m_maps}. For the map in Eq.~\eqref{eq:map_f2}, this follows from the fact that the factor $(-1)^{|\alpha|+|m|}$ clearly does not change the rank $r$. Next, in terms of the previous decomposition, we introduce
\begin{subequations}\label{eq:u_v}
\begin{align}
u=
\begin{tikzpicture}[baseline={([yshift=-1.ex]current bounding box.center)}, scale=0.6]
\draw (0,-1)  -- (0,0.);
\draw (0,0)  -- (1.,0) ;
\draw[fill=gray] (0.5,0) -- (0,-0.5) -- (0,0) -- cycle;
\draw[line width=0.5mm]  (0.0,0.)  -- (0.,0.75);
\draw[dotted, line width=0.5mm] (+1.5+0,0)  -- (+1.5+0,0.75);
\draw (+1.5+0,-1)  -- (+1.5+0,0.) ;
\draw (+1.5+-0.-1,0)  -- (+1.5,0) ;
\draw (+1.5+-0.5,0)  -- (+1.5+0,-0.5) ;
\end{tikzpicture}\,,\\
v=
\begin{tikzpicture}[baseline={([yshift=-1.ex]current bounding box.center)}, scale=0.6]
\draw (0,1.25)  -- (1,1.25);
\draw (0.0,0.75)  -- (0.,1.75);
\draw (0,0.75)  -- (0.5,1.25) ;
\draw[dotted, line width=0.5mm] (0,0.2)  -- (0,0.75) ;
\draw (+1.5+0,1.25)  -- (+1.5+-1,1.25);
\draw (+1.5+0.0,0.75)  -- (+1.5+0.,1.75);
\draw[line width=0.5mm]  (+1.5+0.0,0.2)  -- (+1.5+0.,0.75);
\draw[fill=gray] (+1.5+0.,0.75) -- (+1.5+-0.5,1.25) -- (+1.5+0,1.25) -- cycle;
\end{tikzpicture}\,,
\end{align}
\end{subequations}
where as usual joined legs denote contraction of the corresponding graded spaces. Following Refs.~\cite{cirac2017matrix}, we can now prove two important statements

\begin{lem}
	\label{lemuisometry}
	For any tensor $\mathcal{U}$, $u^\dagger u=\openone$ and hence $r\ell \ge d^2$
\end{lem}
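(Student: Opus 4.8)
The plan is to show that $u$, read as a linear map from the two open physical input legs $\mathbb{C}^d\gtimes\mathbb{C}^d$ (the legs labelled $m$) into the two internal legs $\mathbb{C}^\ell\gtimes\mathbb{C}^r$ carried respectively by $\mathcal{Y}_2$ and $\mathcal{Y}_1$, is a graded isometry, and then to read off the dimension inequality. First I would form $u^\dagger u$ as the graded contraction of $u$ with its adjoint, closing the two internal legs of dimensions $\ell$ and $r$. Since $u$ is the gluing of $\mathcal{Y}_2$ and $\mathcal{Y}_1$ along the shared auxiliary bond, this contraction splits into a factor in which the $\ell$-dimensional internal leg is closed and a factor in which the $r$-dimensional one is closed, joined along the contracted bond of the ket copy and, independently, along that of the bra copy.

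The two factors are simplified using the way the even singular value decompositions were arranged. Because $X_2=V_2^\dagger$ with $V_2$ an even isometry, $\mathcal{X}_2$ obeys $\mathcal{X}_2^\dagger\mathcal{X}_2=\openone$, so closing the $\ell$-leg gives $\mathcal{Y}_2^\dagger\mathcal{Y}_2=\mathcal{U}^\dagger\mathcal{U}$ on the physical-and-bond space. For the other factor, the choice $Y_1=[V_1(\openone\otimes\rho)V_1^\dagger]^{1/2}D_1W_1$ is precisely what enforces Eq.~\eqref{eq:x_1_condition}, $X_1^\dagger(\openone\otimes\rho)X_1=\openone$; closing the $r$-leg then gives $\mathcal{Y}_1^\dagger\mathcal{Y}_1=\mathcal{U}^\dagger(\openone\otimes\rho)\mathcal{U}$, i.e.\ a second copy of $\mathcal{U}^\dagger\mathcal{U}$ with the positive bond operator $\rho$ of the right eigenvector~\eqref{eq:leading_right_eigen} inserted on its outgoing bond. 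After these substitutions $u^\dagger u$ becomes a contraction of two copies of $\mathcal{U}$ and two of $\bar{\mathcal{U}}$: the outgoing bond of the $\mathcal{Y}_2$ block is glued to the incoming bond of the $\mathcal{Y}_1$ block, the far-left bond is capped by the identity, the far-right bond is capped by $\rho$, the two outgoing physical legs are traced, and the two incoming physical legs $m,m'$ stay open.

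I would then recognize the resulting bond structure as that of the transfer matrix $E_{\mathcal{U}}$ sandwiched between its leading eigenvectors: the identity cap on the left is the leading left eigenvector $(\Phi|=\sum_n(n,n|$, while the $\rho$ cap on the right is the leading right eigenvector $|\rho)$. Using the eigenvalue equations $(\Phi|E_{\mathcal{U}}=(\Phi|$ and $E_{\mathcal{U}}|\rho)=|\rho)$ together with the normalization $(\Phi|\rho)=1$, the bond contractions collapse and the remaining action on the two open physical legs reduces to the identity, yielding $u^\dagger u=\openone_{d^2}$; the normalization built into Eq.~\eqref{eq:x_1_condition} is what fixes the overall constant to be exactly one. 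This skeleton mirrors the qudit argument of Ref.~\cite{cirac2017matrix}.

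The step I expect to be the main obstacle is the bookkeeping of the graded signs, which has no analogue in the qudit case. Reordering the legs to bring $u^\dagger u$ into the factorized form produces the signs appearing in the map~\eqref{eq:map_f2} and in the definition of $\bar{\mathcal{U}}$, and one must check that the extra factor $(-1)^{|\alpha|+|m|}$ attached to $\mathcal{M}^f_1$ neither alters its rank nor obstructs the identification $\mathcal{Y}_1^\dagger\mathcal{Y}_1=\mathcal{U}^\dagger(\openone\otimes\rho)\mathcal{U}$, and that all residual signs cancel in the final collapse by virtue of every tensor involved being even. Once $u^\dagger u=\openone_{d^2}$ is established, $u$ embeds a $d^2$-dimensional space isometrically into the $r\ell$-dimensional internal space, so $d^2=\mathrm{rank}(u)\le r\ell$, which is the claimed bound.
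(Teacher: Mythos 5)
Your reduction of $u^\dagger u$ to the capped two-column diagram is sound, and it is essentially the paper's computation run in reverse: using $X_2^\dagger X_2=\openone$ you get $\mathcal{Y}_2^\dagger\mathcal{Y}_2=\mathcal{M}_2^{f\,\dagger}\mathcal{M}_2^{f}$, using Eq.~\eqref{eq:x_1_condition} you get $\mathcal{Y}_1^\dagger\mathcal{Y}_1=\mathcal{M}_1^{f\,\dagger}(\openone\otimes\rho)\mathcal{M}_1^{f}$, and the resulting $\alpha$-contractions automatically produce the identity cap on the left and the $\rho$ cap on the right, with output physical legs traced and input legs open. The paper proves the same lemma in the opposite direction: it starts from that capped two-site diagram, asserts it equals $\openone_{d^2}$, and then shows by substituting the SVDs that the diagram is $u^\dagger u$. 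The direction is immaterial; what matters is how the capped identity is justified, and that is where your proposal fails.

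You justify the collapse ``using the eigenvalue equations $(\Phi|E_{\mathcal{U}}=(\Phi|$ and $E_{\mathcal{U}}|\rho)=|\rho)$ together with $(\Phi|\rho)=1$,'' but these relations only constrain the diagram \emph{after} the physical legs $m,m'$ are closed; with those legs open they do not force the identity, and nowhere in your argument do you use that $U^{(N)}$ is unitary --- which is the actual hypothesis of the lemma (``any tensor $\mathcal{U}$'' means any tensor generating an fMPU). Concretely, take $D=1$ and $U^{n,m}=\sqrt{d}\,\delta_{n,0}\delta_{m,0}$: then $E_{\mathcal{U}}=1$, $\Phi=\rho=1$, all your spectral and normalization relations hold, yet the capped two-site diagram is proportional to a rank-one projector rather than $\openone_{d^2}$, and indeed $r=\ell=1$, so $r\ell<d^2$ --- the lemma's conclusion is simply false without unitarity. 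The correct justification, as in the paper (following Lemma III.7 of Ref.~\cite{cirac2017matrix}), is: since $\mathcal{U}/\sqrt{d}$ is graded normal non-degenerate (the standing assumption of Sec.~\ref{sec:standard_form_type_I}), $1$ is the only nonzero eigenvalue of $E_{\mathcal{U}}$, so $E_{\mathcal{U}}^{k}=|\rho)(\Phi|$ for $k$ large enough; tracing $U^{(N)\dagger}U^{(N)}=\openone$ over all but two adjacent sites and dividing by $d^{N-2}$ then yields exactly the capped two-site identity. With that step substituted for your eigenvalue-equation appeal, your argument goes through; the graded-sign bookkeeping you defer is the same ``direct calculation'' the paper performs, and is unproblematic precisely because every tensor involved ($\mathcal{X}_i$, $\mathcal{Y}_i$, the caps) is chosen even.
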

\begin{proof}
The proof follows the same steps as in Lemma III.7 in Ref.~\cite{cirac2017matrix}. We present it for completeness here, to show how to deal with the additional fermionic signs.  First, we have
\be
\begin{tikzpicture}[baseline={([yshift=-0.5ex]current bounding box.center)}, scale=0.7]	
\draw (0,-0.5) node[left]{} -- (0,1.5) node[left]{};
\draw (-0.5,0) node[left]{} -- (1.75,0) node[right]{};
\draw (-0.5,1) node[left]{} -- (1.75,1) node[right]{};
\draw (1.25,-0.5) node[left]{} -- (1.25,1.5) node[left]{};
\filldraw[fill=white,draw=black] (0,0) circle [radius=0.25];
\fill[black] (0,1) circle [radius=0.25];
\filldraw[fill=white,draw=black] (1.25,0) circle [radius=0.25];
\fill[black] (1.25,1) circle [radius=0.25];
\draw (3.2-1.5,0) arc (-90:90:0.5) node[right]{};
\draw (4.8+0.2-5.5,1) arc (90:270:0.5) node[right]{};
\draw [fill=white] (3.5-1.5,0.28) rectangle (3.9-1.5,0.72);
\draw (3.7-1.5,0.5) node{$\rho$};
\end{tikzpicture}
=
\begin{tikzpicture}[baseline={([yshift=-0.5ex]current bounding box.center)}, scale=0.7]	
\draw (0,-0.5) node[left]{} -- (0,1.5) node[left]{};
\draw (1.,-0.5) node[left]{} -- (1.,1.5) node[left]{};
\end{tikzpicture}\,,
\ee
which follows from the fact that $U^{(N)}$ is unitary for all $N$. Given the order of bra and ket states in Eq.~\eqref{eq:fTO}, the left eigenstate is
\be
\begin{tikzpicture}[baseline={([yshift=-0.5ex]current bounding box.center)}, scale=0.7]	
\draw (4.8+0.2-5.5,1) arc (90:270:0.5) node[right]{};
\draw (0,1) node{$\alpha$};
\draw (0,0) node{$\beta$};
\end{tikzpicture}
=\sum_{\alpha,\beta}\delta_{\alpha,\beta}  \bra{\alpha}\gtimes\bra{\beta}\,,
\ee
while the right eigenstate is
\be
\begin{tikzpicture}[baseline={([yshift=-0.5ex]current bounding box.center)}, scale=0.7]	
\draw (3.2-1.5,0) arc (-90:90:0.5) node[right]{};
\draw [fill=white] (3.5-1.5,0.28) rectangle (3.9-1.5,0.72);
\draw (3.7-1.5,0.5) node{$\rho$};
\draw (1.25,1) node{$\alpha$};
\draw (1.25,0) node{$\beta$};
\end{tikzpicture}
=\sum_{\alpha,\beta}\delta_{\alpha,\beta}  \rho_{\alpha}\ket{\beta}\gtimes\ket{\alpha}\,,
\ee
where, once again, the order of the bra and ket states is important. With these definitions, we can perform explicitly the contractions in the lhs, keeping track of the signs as we move ket and bra states close together. Note that this procedure is well-defined because  all individual tensors have well-defined parity. Using $X_2^\dagger X_2=\openone$ and Eq.~\eqref{eq:x_1_condition}, a direct calculation shows that the lhs is simply $u^\dagger u$, where $u$ is the graded operator defined in Eq.~\eqref{eq:u_v}. Thus, $u^\dagger u=\openone$,  and the statement is proved. 
\end{proof}

Note that this lemma implies that $u$ is an isometry. Next, we can also prove the following proposition.
\begin{prop}
	\label{ThmFund1}
	The following are equivalent for an even tensor ${\cal U}$ generating an fMPU.
	\begin{enumerate}
		\item ${\cal U}$ is simple.\\[-0.5cm]
		\item $r \ell =d^2$.\\[-0.5cm]
		\item $u$ is unitary.\\[-0.5cm]
		\item $v$ is unitary.\\[-0.5cm]
	\end{enumerate}
\end{prop}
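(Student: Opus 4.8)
The plan is to prove the cycle by first disposing of the easy equivalences $(2)\Leftrightarrow(3)\Leftrightarrow(4)$ through dimension counting, and then establishing the substantive equivalence $(1)\Leftrightarrow(3)$, which contains all the real content. Throughout I read $u$ and $v$ of Eq.~\eqref{eq:u_v} as even graded maps: from the decomposition~\eqref{eq:svd_graded} one sees that $u$ sends the doubled physical space $\mathbb{C}^d\gtimes\mathbb{C}^d$ (dimension $d^2$) into the product of link spaces $\mathbb{C}^\ell\gtimes\mathbb{C}^r$ (dimension $r\ell$), while $v$ runs in the opposite direction, $\mathbb{C}^r\gtimes\mathbb{C}^\ell\to\mathbb{C}^d\gtimes\mathbb{C}^d$. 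Here $r$ and $\ell$ are the ranks of $\mathcal{M}^f_{1,2}$ of Eq.~\eqref{eq:f_m_maps}, and we work under the running assumption that $\mathcal{U}/\sqrt d$ is graded normal non-degenerate, cf.\ Proposition~\ref{prop:normality_fmpus}, with leading eigenvectors in the form~\eqref{eq:leading_right_eigen}.

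First I would settle $(2)\Leftrightarrow(3)$. By Lemma~\ref{lemuisometry} we already have $u^\dagger u=\openone$, so $u$ is an isometry of a $d^2$-dimensional space into an $r\ell$-dimensional one; an isometry is unitary precisely when domain and codomain have equal dimension, giving $u$ unitary $\iff r\ell=d^2$. For $(2)\Leftrightarrow(4)$ I would run the mirror image of the proof of Lemma~\ref{lemuisometry}, now invoking the complementary unitarity relation $U^{(N)}U^{(N)\dagger}=\openone$ and the left leading eigenvector $\langle\Phi|$ (which by~\eqref{eq:leading_right_eigen} is the identity pattern) in place of $|\rho)$, to obtain $vv^\dagger=\openone$. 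Thus $v$ is a co-isometry between the same two spaces and is unitary exactly when $r\ell=d^2$, closing $(2)\Leftrightarrow(4)$.

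The core is $(1)\Leftrightarrow(3)$. For $(1)\Rightarrow(3)$ I would feed the simpleness relations~\eqref{eq:simpleness_fermionic} together with $E_{\mathcal{U}}^2=E_{\mathcal{U}}$ into the composition $uu^\dagger$: the second graphical identity splits the two-site object $\mathcal{U}\bar{\mathcal{U}}$ across the bond through a single insertion of $E_{\mathcal{U}}$, while the first identity combined with idempotency of $E_{\mathcal{U}}$ collapses the remaining network to the identity on $\mathbb{C}^d\gtimes\mathbb{C}^d$. This yields $uu^\dagger=\openone$, which together with $u^\dagger u=\openone$ forces $u$ to be unitary. For the converse $(3)\Rightarrow(1)$ I would use unitarity of $u$ (hence also of $v$, by the already established equivalences) to reconstruct $\mathcal{U}$ in the standard form $\mathcal{U}=\mathcal{X}_2\mathcal{Y}_2=\mathcal{X}_1\mathcal{Y}_1$ of Eq.~\eqref{eq:svd_graded}. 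Sliding the unitaries $u,u^\dagger$ and $v,v^\dagger$ through the contracted network then produces the factorized boundary tensors and lets one verify the two relations~\eqref{eq:simpleness_fermionic} directly, while $E_{\mathcal{U}}$ reduces to the idempotent built from $\langle\Phi|$, $|\rho)$, giving $E_{\mathcal{U}}^2=E_{\mathcal{U}}$. The whole argument parallels the analogous statement in Ref.~\cite{cirac2017matrix}.

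I expect the main obstacle to be the bookkeeping of graded signs rather than the structural logic. Every reordering of graded bra and ket vectors in the contractions defining $uu^\dagger$ and $vv^\dagger$, and in the verification of~\eqref{eq:simpleness_fermionic}, generates factors $(-1)^{|\cdot||\cdot|}$, and one must check that they conspire to reproduce the sign-free qudit identities. In particular, the asymmetric sign $(-1)^{|n||m|}$ built into $\mathcal{M}^f_1$ in Eq.~\eqref{eq:f_m_maps} is exactly what makes the two decompositions in~\eqref{eq:svd_graded} mutually compatible, and maintaining it consistently, together with the normalization~\eqref{eq:x_1_condition} fixing $\mathcal{X}_1$ and the evenness of all the tensors $\mathcal{X}_i,\mathcal{Y}_i$, is the delicate point on which the sign cancellations hinge.
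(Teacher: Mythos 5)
Your handling of $(2)\Leftrightarrow(3)$ via Lemma~\ref{lemuisometry} is fine, and your sketch of $(1)\Rightarrow(3)$ is workable (with the caveat that $uu^\dagger$ acts on $\mathbb{C}^\ell\gtimes\mathbb{C}^r$, so what simpleness must deliver is $uu^\dagger=\openone_{r\ell}$, not an identity on the physical space; combined with $u^\dagger u=\openone_{d^2}$ this then also yields $r\ell=d^2$). The genuine gap is your ``mirrored lemma'': it is \emph{false} that $vv^\dagger=\openone$ for every tensor generating an fMPU, so $(2)\Leftrightarrow(4)$ cannot be settled by a dimension count on a co-isometry. Concretely, take the trivially graded (hence even) tensor with $d=D=2$, $U^{n,m}=(-1)^{nm}|m\rangle\langle n|$, which generates $U^{(N)}=T\prod_i \mathrm{CZ}_{i,i+1}$, a unitary for all $N$; here $\mathcal{U}/\sqrt{2}$ is normal non-degenerate and already in CFII with $\Phi=\openone$, $\rho=\openone/2$. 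One finds $r=2$, $\ell=4$ (the tensor is not simple, $r\ell=8>d^2=4$), and carrying out the paper's construction ($X_2=V_2^\dagger$ since $\mathcal{M}_2$ is unitary, and $X_1$ fixed by Eq.~\eqref{eq:x_1_condition}) gives $v_{(n_1n_2),(i,(m,b))}=(-1)^{in_1+mb}\delta_{mn_1}\delta_{bn_2}$, whence $vv^\dagger=2\,\openone\neq\openone$ while $u^\dagger u=\openone$ as the Lemma guarantees. The structural reason the mirror fails: the proof of Lemma~\ref{lemuisometry} eliminates the \emph{upper} SVD factors using $X_2^\dagger X_2=\openone$ and Eq.~\eqref{eq:x_1_condition}, but in the mirrored network dictated by $U^{(N)}U^{(N)\dagger}=\openone$ (closed with the fixed points of the transfer matrix of the adjoint tensor) one instead meets $Y_2Y_2^\dagger=D_2W_2W_2^\dagger D_2=D_2^2$ and its $Y_1$ analogue; all the singular values were deliberately placed in the $Y$'s, so nothing collapses. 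What does hold unconditionally is only surjectivity of $v$, which is strictly weaker than $vv^\dagger=\openone$.

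Since your $(3)\Rightarrow(1)$ invokes unitarity of $v$ ``by the already established equivalences'', the error propagates there as well. The repair --- and the route of the proof the paper actually defers to (it omits the argument as identical, up to graded signs, to Ref.~\cite{cirac2017matrix}) --- is to obtain $(4)$ \emph{downstream} of simpleness rather than from an unconditional lemma: once $u$ is unitary, blocking twice yields the standard form~\eqref{eq:standard_form}, so $U^{(2N)}$ is a two-layer brickwork of $u$'s and $v$'s; unitarity of the whole MPU together with unitarity of the $u$-layer forces $(v^\dagger v)^{\otimes N}=\openone$, and positivity gives $v^\dagger v=\openone$, which with $r\ell=d^2$ makes $v$ unitary. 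Conversely $(4)\Rightarrow(2)$ is the honest dimension count (a unitary needs equal domain and codomain dimensions), and $(1)$ is then verified graphically from the brickwork form. Your closing remarks on the graded bookkeeping --- the $(-1)^{|n||m|}$ in $\mathcal{M}^f_1$ of Eq.~\eqref{eq:f_m_maps}, the evenness of all SVD factors --- are on target and are indeed the only fermionic modification needed; but the logical skeleton must be rerouted as above, because the leg $(2)\Leftrightarrow(4)$ as you propose it rests on a statement with an explicit counterexample.
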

We omit the proof of this statement, because it follows without modifications the one presented in Ref.~\cite{cirac2017matrix}. Note that this result allows us introduce a standard form for type I fMPUs. In particular, blocking a simple tensor twice, we have
\be\label{eq:standard_form}
\begin{tikzpicture}[baseline={([yshift=-1.5ex]current bounding box.center)}, scale=0.7]

\draw (0,-0.7)  -- (0,0.7);
\draw (-0.7,0)  -- (0.7,0);	
\filldraw[fill=white,draw=black] (0,0) circle [radius=0.25];

\end{tikzpicture}
=
\begin{tikzpicture}[baseline={([yshift=-1.ex]current bounding box.center)}, scale=0.6]
\draw (+1+0,1.25)  -- (+1+1,1.25);
\draw (+1+0.0,0.75)  -- (+1+0.,1.75);
\draw (+1+0,0.75)  -- (+1+0.5,1.25) ;
\draw[dotted, line width=0.5mm] (+1+0,0.2)  -- (+1+0,0.75) ;
\draw (0,1.25)  -- (-1,1.25);
\draw (+0.0,0.75)  -- (+0.,1.75);
\draw[line width=0.5mm]  (+0.0,0.2)  -- (+0.,0.75);
\draw[fill=gray] (+0.,0.75) -- (-0.5,1.25) -- (+0,1.25) -- cycle;
\draw (0,0)  -- (0,-1);
\draw (1,0)  -- (1,-1);
\draw[fill=white] (-0.25,-0.6) rectangle (1.25,0.4);
\draw (0.5,-0.1) node{$u$};
\end{tikzpicture}\,,
\ee
where $u$ is unitary, and we used the notation introduced in Eq.~\eqref{eq:u_v} for the unitary $v$. As in the case of qudits~\cite{cirac2017matrix}, the standard form is essentially unique, up to single-site unitary invertible matrices acting on the legs defining $u$ or $v$.

\begin{prop}\label{prop:uniqueness_type_II}
Given two simple tensors, ${\cal U}$ and ${\cal W}$ with standard forms
\be
\mathcal{U}=\begin{tikzpicture}[baseline={([yshift=-1.ex]current bounding box.center)}, scale=0.6]
\draw (+1+0,1.25)  -- (+1+1,1.25);
\draw (+1+0.0,0.75)  -- (+1+0.,1.75);
\draw (+1+0,0.75)  -- (+1+0.5,1.25) ;
\draw[dotted, line width=0.5mm] (+1+0,0.2)  -- (+1+0,0.75) ;
\draw (0,1.25)  -- (-1,1.25);
\draw (+0.0,0.75)  -- (+0.,1.75);
\draw[line width=0.5mm]  (+0.0,0.2)  -- (+0.,0.75);
\draw[fill=gray] (+0.,0.75) -- (-0.5,1.25) -- (+0,1.25) -- cycle;
\draw (0,0)  -- (0,-1);
\draw (1,0)  -- (1,-1);
\draw[fill=white] (-0.25,-0.6) rectangle (1.25,0.4);
\draw (0.5,-0.1) node{$u$};
\end{tikzpicture}\,,
\qquad 
\mathcal{W}=
\begin{tikzpicture}[baseline={([yshift=-1.ex]current bounding box.center)}, scale=0.6]
\draw (+1+0,1.25)  -- (+1+1,1.25);
\draw (+1+0.0,0.75)  -- (+1+0.,1.75);
\draw (+1,0.75)  arc (-90:0:0.5);
\draw[dotted, line width=0.5mm] (+1+0,0.2)  -- (+1+0,0.75) ;
\draw (0,1.25)  -- (-1,1.25);
\draw (+0.0,0.75)  -- (+0.,1.75);
\draw[line width=0.5mm]  (+0.0,0.2)  -- (+0.,0.75);
\draw[fill=gray] (+0.,0.75)  arc (-90:-180:0.5) -- (+0,1.25) -- cycle;
\draw (0,0)  -- (0,-1);
\draw (1,0)  -- (1,-1);
\draw[fill=white] (-0.25,-0.6) rectangle (1.25,0.4);
\draw (0.5,-0.1) node{$w$};
\end{tikzpicture}\,,
\ee
they generate the same type I fMPU, i.e.\
$U^{(N)}=V^{(N)}$ for all $N$, iff there exist even
unitaries $x,y$ and $z$, such that 
\begin{subequations}
\be
\begin{tikzpicture}[baseline={([yshift=-1.ex]current bounding box.center)}, scale=0.6]
\draw[line width=0.5mm]  (+0.0,0.2)  -- (+0.,1);
\draw[dotted, line width=0.5mm] (+1+0,0.2)  -- (+1+0,1) ;
\draw (0,0)  -- (0,-1);
\draw (1,0)  -- (1,-1);
\draw[fill=white] (-0.25,-0.6) rectangle (1.25,0.4);
\draw (0.5,-0.1) node{$w$};
\end{tikzpicture}
=
\begin{tikzpicture}[baseline={([yshift=-2.ex]current bounding box.center)}, scale=0.6]
\draw[line width=0.5mm]  (+0.0,0.2)  -- (+0.,1.5);
\draw[dotted, line width=0.5mm] (+1+0,0.2)  -- (+1+0,1.5) ;
\draw (0,0)  -- (0,-1);
\draw (1,0)  -- (1,-1);
\draw[fill=white] (-0.25,-0.6) rectangle (1.25,0.4);
\draw (0.5,-0.1) node{$u$};
\draw[fill=white] (-0.25,0.75) rectangle (0.25,1.25);
\draw[fill=white] (0.75,0.75) rectangle (1.25,1.25);
\draw (0.,1.) node{$x$};
\draw (1.,1.) node{$y$};
\end{tikzpicture}\,,
\ee
\be
\begin{tikzpicture}[baseline={([yshift=-1.ex]current bounding box.center)}, scale=0.6]
\draw (0,1.25)  -- (-1,1.25);
\draw (+0.0,0.75)  -- (+0.,1.75);
\draw[line width=0.5mm]  (+0.0,0.2)  -- (+0.,0.75);
\draw[fill=gray] (+0.,0.75)  arc (-90:-180:0.5) -- (+0,1.25) -- cycle;
\end{tikzpicture}
=
\begin{tikzpicture}[baseline={([yshift=-0.ex]current bounding box.center)}, scale=0.6]
\draw (0,1.25)  -- (-1.5,1.25);
\draw (+0.0,0.75)  -- (+0.,1.75);
\draw[line width=0.5mm]  (+0.0,-0.5)  -- (+0.,0.75);
\draw[fill=gray] (+0.,0.75) -- (-0.5,1.25) -- (+0,1.25) -- cycle;
\draw[fill=white] (-1.2,+1) rectangle (-0.7,+1.5);
\draw[fill=white] (-0.35,+0.2-0.35) rectangle (0.35,+0.2+0.35);
\draw (0.,0.2) node{$x^\dagger$};
\draw (-0.95,1.25) node{$z$};
\end{tikzpicture}\,,
\qquad \qquad 
\begin{tikzpicture}[baseline={([yshift=-1.ex]current bounding box.center)}, scale=0.6]
\draw (+1+0,1.25)  -- (+1+1,1.25);
\draw (+1+0.0,0.75)  -- (+1+0.,1.75);
\draw (+1,0.75)  arc (-90:0:0.5);
\draw[dotted, line width=0.5mm] (+1+0,0.2)  -- (+1+0,0.75) ;
\end{tikzpicture}
=
\begin{tikzpicture}[baseline={([yshift=-0.ex]current bounding box.center)}, scale=0.6]
\draw (+1+0,1.25)  -- (+2.5,1.25);
\draw (+1+0.0,0.75)  -- (+1+0.,1.75);
\draw (+1+0,0.75)  -- (+1+0.5,1.25) ;
\draw[dotted, line width=0.5mm] (+1+0,-0.5)  -- (+1+0,0.75) ;
\draw[fill=white] (1.7,0.9) rectangle (2.3,1.6);
\draw[fill=white] (0.65,-0.25) rectangle (1.35,0.45);
\draw (2,1.25) node{$z^\dagger$};
\draw (1.1,0.1) node{$y^\dagger$};
\end{tikzpicture}\,.
\ee
\end{subequations}
\end{prop}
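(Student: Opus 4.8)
The plan is to prove the two implications separately, obtaining the ``if'' direction by a direct graphical cancellation and the ``only if'' direction from the graded Fundamental Theorem of fMPSs (Theorem~\ref{thm2_fundamental}).

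For the ``if'' direction, I would substitute the three gauge relations into the tensor network computing $V^{(N)}$ from $\mathcal{W}$. The relations say that $w$ is obtained from $u$ by inserting $x$ on the thick ($\ell$) leg and $y$ on the dotted ($r$) leg, that the left cap of $\mathcal{W}$ carries an extra $z$ on its auxiliary leg together with $x^\dagger$ on the thick leg, and that the right cap carries $z^\dagger$ on its auxiliary leg together with $y^\dagger$ on the dotted leg. Assembling these into the ring, the $x$ produced by $w$ meets the $x^\dagger$ of the left cap within the same blocked site, the $y$ meets $y^\dagger$ within the same site, and the $z$ on the left cap of site $j$ meets the $z^\dagger$ on the right cap of its neighbour across the shared auxiliary bond. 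Because $x$, $y$, $z$ are \emph{even}, they slide freely along the graded legs without generating signs, so every insertion cancels pairwise and $V^{(N)}=U^{(N)}$ for all $N$.

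For the ``only if'' direction, I would first use the identification \eqref{eq:identification} to view $U^{(N)}=V^{(N)}$ as the statement that $\mathcal{U}$ and $\mathcal{W}$ generate the same fMPS with antiperiodic boundary conditions. By Proposition~\ref{prop:normality_fmpus}, both $\mathcal{U}/\sqrt{d}$ and $\mathcal{W}/\sqrt{d}$ are graded normal non-degenerate, hence each is in GCF with a single block. Theorem~\ref{thm2_fundamental} then applies with a trivial permutation, yielding an even invertible $X$ on the auxiliary bond with $W^{n,m}=X\,U^{n,m}\,X^{-1}$ and $[X,Z]=0$; normalizing both tensors to the form \eqref{eq:leading_right_eigen} fixes the leading eigenvectors of the transfer matrices and forces $X$ to be unitary. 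Since the auxiliary bond of the blocked standard-form tensor is exactly the leg on which $z$ acts in the gauge relations, I would set $z=X$.

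It then remains to extract $x$ and $y$. Substituting the standard forms \eqref{eq:standard_form} of $\mathcal{U}$ and $\mathcal{W}$ into $W=X U X^{-1}$ and inserting $z=X$ on the auxiliary legs, the two cap tensors of $\mathcal{W}$ and the ($z$-dressed) cap tensors of $\mathcal{U}$ are isometries with the same source and target spaces (this is where I use Lemma~\ref{lemuisometry} and the unitarity of $u$, $w$ from Proposition~\ref{ThmFund1}); hence they differ by unitaries $x$, $y$ acting on the internal thick and dotted legs, which is precisely the SVD gauge freedom of the standard form. Feeding these back into the defining relation $W=XUX^{-1}$ forces the bulk unitaries to obey $w=(x\otimes y)\,u$, the first of the three relations, and the cap identities give the remaining two. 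The main obstacle I anticipate is this last step: one must verify that the cap isometries of $\mathcal{U}$ and $\mathcal{W}$ really have coinciding images --- equivalently, that the internal dimensions $r$ and $\ell$ are gauge-invariant, which follows because conjugation by the even $X$ does not alter the ranks of the maps $\mathcal{M}^f_{1,2}$ --- and that the relating unitaries $x$, $y$ can be chosen \emph{even}. Establishing evenness of $x$, $y$, $z$ at every stage, so that the graded contraction signs never obstruct the cancellations, is the genuinely fermionic subtlety that distinguishes this argument from the qudit proof of Ref.~\cite{cirac2017matrix}.
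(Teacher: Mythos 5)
Your proposal is correct and is essentially the argument the paper has in mind: the paper states this proposition without an explicit proof, appealing to the qudit fundamental theorem of Ref.~\cite{cirac2017matrix}, and your reconstruction assigns exactly the intended roles to the fermionic ingredients --- graphical cancellation of the even gauges for sufficiency, and Proposition~\ref{prop:normality_fmpus} plus Theorem~\ref{thm2_fundamental} (uniqueness of the GCF), GCFII normalization to unitarize the bond gauge, and the isometry/rank argument to extract even $x,y$ for necessity. One small point to tighten: Theorem~\ref{thm2_fundamental} delivers a gauge of the form $X\Pi$ with a possibly nontrivial permutation $\Pi$ satisfying $\Pi^{-1}Z_a\Pi=\pm Z_b$, so rather than asserting the permutation is trivial you should absorb $\Pi$ into the gauge (both tensors being single graded normal blocks with matching even/odd bond dimensions) and use that the overall sign of $Z$ is immaterial for antiperiodic boundary conditions, after which the rest of your argument goes through as you describe.
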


\subsection{Standard form for type II fMPUs}
\label{sec:standard_form_type_II}

Let us now consider a type II fMPU $U^{(N)}$, generated by a graded degenerate normal tensor $\mathcal{U}$. It follows from Corollary~\ref{cor:local_dimension_degenerate} in Appendix \ref{sec:technical_results_antiperiodic} that the local graded space $\mathcal{H}_j$ has even dimension $d$, with even and odd subspaces of the same dimension, $d_e=d_o=d/2$. Hence, we can assume without loss of generality that  $\mathcal{H}_j\simeq \mathcal{H}^{1}_j\gtimes \mathcal{H}^{2}_{j}$, with ${\rm dim}(\mathcal{H}^{2}_{j})=d/2$ while $\mathcal{H}^1_j\simeq \mathbb{C}^{1|1}$, namely $\mathcal{H}^1_j$ is isomorphic to the two-dimensional complex coordinate space with parity operator $\mathcal{P}={\rm diag}(1,-1)$.  We can then introduce a type II fMPU $M^{(N)}_A$ implementing a Majorana translation [defined in Eq.~\eqref{eq:maj_anti}] on the local spaces $\mathcal{H}^{1}_{j}$, and acting as the identity on $\mathcal{H}^{2}_{j}$. Finally, we define 
\be
\tilde{U}^{(N)}=U^{(N)}M^{(N)\dagger}_A\,,
\ee
which is clearly a unitary operator for all $N$. Now, $\tilde{U}^{(N)}$ is the product of two type II fMPUs, so that it follows from Prop.~\ref{prop:z2_structure} that it can be represented as a type I fMPU, generated by a graded normal non-degenerate tensor $\tilde{\mathcal{U}}$ . Inverting the above relation, we get the following form for type II fMPUs
\be
U^{(N)}=\tilde{U}^{(N)} M^{(N)}_A\,.
\label{eq:standard_form_type_II}
\ee
Now by blocking sufficiently many times, the tensor $\tilde{\mathcal{U}}$ becomes simple and we can cast it in standard form. This gives us a standard form for the tensor $\mathcal{U}$ itself and, due to Prop.~\ref{prop:uniqueness_type_II}, this is also unique up to single-site (even) invertible matrices. Note, however, that  the tensor $\mathcal{U}$ obtained by composing $\tilde{\mathcal{U}}$ and $\mathcal{M}$, corresponding to the Majorana shift, is not necessarily in GCF.

\subsection{Index}

Based on the constructions carried out so far, we are now in a position to prove Theorem~\ref{th:femionic_index}, which is the main result of this section. Before this, however, we need to state precisely the definition of equivalent graded tensors. Let us consider two even tensors, ${\cal U}$ and ${\cal V}$ generating fMPUs, of physical dimensions $d_{a,b}$ (with even/odd subspaces $d^{e/o}_{a,b}$), and let us denote by $p_{a,b}$ two coprimes such that $p_a d_a=p_b d_b$. We also denote by $\openone_x$ the identity operator acting on a (graded Hilbert) space of dimension $x$, and by ${\cal U}^{(x)}={\cal U}\gtimes \openone_x$, i.e. the tensor generating the fMPU $U^{(N)}\gtimes \openone_x^{\otimes N}$.

\begin{defn}
\label{def:strictly-equivalent-tensors}
Two even tensors ${\cal U}$ and ${\cal V}$, with auxiliary parity operators $Z_U$, $Z_V$ and in GCF are strictly equivalent if $d^{e/o}_a=d^{e/o}_b$ and there exist a continuous path ${\cal W}(p)$ of even tensors with respect to a parity $Z_W(p)$, not
necessarily in GCF, with $p\in[0,1]$ such that ${\cal W}(0)={\cal U}$,
${\cal W}(1)={\cal V}$, and $Z_W(0)=Z_{U}$, $Z_W(1)=Z_{V}$.
\end{defn}

\begin{defn}
\label{def:equivalent-tensors}
Two even tensors $\mathcal{U}$ and $\mathcal{ V}$ are {equivalent} if there exists
some $k\in \mathbb{N}$ and $p_a$, $p_b$ such that $\mathcal{ U}^{(p_a)}_{k}$
and $\mathcal{V}^{(p_b)}_{k}$ are strictly equivalent.
\end{defn}
Here, it is important to stress that in the above definitions we allow for the ancilla to be a graded space where the dimensions of the even and odd subspaces are arbitrary.

As we have seen in Sec.~\ref{sec:standard_form_type_I}, type I fMPUs admit the same kind of standard form as qudit MPUs. For this reason, their index can be analyzed in the exact same way as done in Ref.~\cite{cirac2017matrix}. In particular, following the same steps therein, the properties of stability of the index for type I fMPUs can be established without additional difficulties. Then, one arrives at following result, for which we omit the proof (since, once again, it is completely analogous to the one for qudit MPUs).
\begin{prop}\label{prop:type_I_Index}
	For type I fMPUs  the exponentiated index $\mathcal{I}_f$ is a rational number, and
	\begin{enumerate}
		\item the index does not change by blocking; furthermore if $k$ is such that $\mathcal{U}_k$  is simple and $q>k$, then $r_{q}=r_k d^{q-k}$, $\ell_{q}=\ell_k d^{q-k}$, where $r_{k,q}$, $\ell_{k,q}$ are the right and left ranks of  $\mathcal{U}_{k,q}$, while $d$ is the local physical dimension;\label{point:1_typeI}\\[-0.5cm]
		\item the index is additive by tensoring and composition; \\[-0.5cm]
		\item the index is robust, i.e. by changing continuously $\mathcal{U}$ one cannot change it;\\[-0.5cm]
		\item two tensors have the same index iff they are equivalent.\label{point_3}\\[-0.5cm]
	\end{enumerate}
\end{prop}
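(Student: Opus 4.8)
The plan is to mirror the qudit argument of Ref.~\cite{cirac2017matrix}, using the standard form of Eq.~\eqref{eq:standard_form} together with Proposition~\ref{ThmFund1} as the main engine, while carefully tracking the graded (even/odd) structure of the auxiliary and physical spaces. Throughout I would assume that $\mathcal{U}$ has been blocked enough times to be simple, so that $r\ell=d^2$ and both $u$ and $v$ of Eq.~\eqref{eq:u_v} are unitary by Proposition~\ref{ThmFund1}.

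For point 1 (blocking invariance), I would compute the ranks of the maps $\mathcal{M}^f_1$, $\mathcal{M}^f_2$ directly from the standard form. Once $\mathcal{U}_k$ is simple, blocking one further site amounts to contracting a copy of the standard form onto the existing one; because $u$ is unitary and the extra physical leg carries dimension $d$, the left and right ranks each acquire exactly one factor of $d$, giving $r_{k+1}=r_k d$ and $\ell_{k+1}=\ell_k d$, hence by induction $r_q=r_k d^{q-k}$, $\ell_q=\ell_k d^{q-k}$. The only fermionic novelty is the sign $(-1)^{|n||m|}$ in the definition of $\mathcal{M}^f_1$; since the standard-form tensors are even, this sign factors out of the rank computation and does not alter $r$, so $\mathrm{ind}_f$ is unchanged.

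For points 2 and 3, additivity under the graded tensor product is immediate once one checks that $\mathcal{M}^f_{1,2}$ factorize over $\gtimes$ (the parity signs multiply consistently for even tensors), so that $(r,\ell)\mapsto(r_U r_V,\ell_U\ell_V)$ and the logarithms add; additivity under composition follows by writing the composite tensor in standard form, where $u$ becomes a product of the constituent unitaries and the ranks again multiply, the composition signs of Appendix~\ref{sec:elementary_operators} being immaterial to the ranks of even maps. Robustness (point 3) I would argue from semicontinuity: along a continuous path $\mathcal{U}(p)$ of simple tensors the product $r(p)\ell(p)=d^2$ is fixed, while each of $r,\ell$ is lower semicontinuous in $p$; a downward jump in $r$ at some $p_0$ would force an upward jump in $\ell=d^2/r$, contradicting lower semicontinuity of $\ell$. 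Hence $r$ and $\ell$ are constant integers along the path and the index cannot change.

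For point 4, the forward implication (equivalent $\Rightarrow$ same index) is a corollary of points 1--3: Definitions~\ref{def:strictly-equivalent-tensors} and~\ref{def:equivalent-tensors} build equivalence out of blocking (index-invariant), tensoring with inert ancillas (index $0$, additive), and continuous even deformations (robust). The converse is the substantive step and the main obstacle. Given two simple tensors of equal index, I would first use the coprime factors $p_{a,b}$ to pad with graded ancillas so that the physical even and odd dimensions match, then block (point 1) until the left and right ranks $r,\ell$ coincide for both tensors. At that point each tensor is determined, via Eq.~\eqref{eq:standard_form} and the uniqueness Proposition~\ref{prop:uniqueness_type_II}, by an even unitary $u$ acting on spaces of identical graded dimension. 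Because the even unitary group $U(d_e)\times U(d_o)$ is path-connected, I can interpolate $u(p)$ between the two, and, crucially, every even unitary yields a simple standard-form tensor, so the path remains inside the locality-preserving class with a continuously varying parity $Z_W(p)$, realizing a strict equivalence. The delicate point, where the uniqueness of the GCF (Theorem~\ref{thm2_fundamental}) and the freedom to flip individual block parities are needed, is ensuring that after padding and blocking the two tensors can be brought to standard forms on the \emph{same} graded space with matching $Z$ boundary data, so that the unitary interpolation is legitimate.
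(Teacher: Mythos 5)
Your overall route is the paper's own: the paper establishes the graded standard form \eqref{eq:standard_form} and Proposition~\ref{ThmFund1}, and then imports the qudit arguments of Ref.~\cite{cirac2017matrix} essentially verbatim, which is what you do for points 1--3 (rank factorization from the standard form, multiplicativity of ranks under $\gtimes$ and composition, and lower semicontinuity of ranks combined with the fixed product $r\ell=d^2$). However, one step in your point 1 is genuinely wrong as stated: the claim that the sign $(-1)^{|n||m|}$ in $\mathcal{M}^f_1$ ``factors out of the rank computation and does not alter $r$.'' That sign is not of product form $(-1)^{f(n)}(-1)^{g(m)}$; even on the support of an even tensor a cross-coupling between input and output legs survives, so it cannot be absorbed into invertible diagonal maps acting separately on the input and output spaces. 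Indeed, the paper's motivating example for the fermionic definition --- the depth-two circuit built from $U_{j,j+1}=(\openone-Y_jX_{j+1})/\sqrt{2}$ --- is precisely a case where the signed maps \eqref{eq:f_m_maps} and the unsigned maps \eqref{eq:bosonic_maps} have different ranks, and the footnote in Sec.~\ref{sec:index_theory} stresses that the blocking signs \eqref{eq:signs_blocking}, which you do not track at all, ``in general modify the ranks.'' The correct argument keeps all signs: they are exactly what makes the graded singular value decompositions in \eqref{eq:svd_graded} consist of even tensors with ranks $r$ and $\ell$, and the factorization $r_q=r_k d^{q-k}$, $\ell_q=\ell_k d^{q-k}$ is then obtained by contracting standard forms within the graded calculus, where the rearrangement signs are accounted for systematically rather than discarded. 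Your conclusion is right, but your justification, taken literally, would prove that the fermionic and bosonic indices always coincide, which is false.

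In point 4 you correctly isolate the one genuinely fermionic obstruction --- bringing the two standard forms onto the \emph{same} graded space --- but you flag it as ``delicate'' without closing it, and your proposed fix (padding until the \emph{physical} even and odd dimensions match) does not suffice. As the paper notes, even for two simple tensors with the same index and identical physical dimension, the legs attached to $u_A$ and $u_B$ may carry inequivalent gradings $\mathbb{C}^{p_A|q_A}$ and $\mathbb{C}^{p_B|q_B}$ with only $p_A+q_A=p_B+q_B$ guaranteed, so interpolation inside the group of \emph{even} unitaries is not yet legitimate; graded spaces of equal total dimension need not be isomorphic. The paper's resolution is to append an ancilla $\tilde{\mathcal{H}}_j\simeq\mathbb{C}^{1|1}$: since $\mathbb{C}^{p|q}\gtimes\mathbb{C}^{1|1}\simeq\mathbb{C}^{p+q|p+q}$, this balances the grading of every leg simultaneously, after which path-connectedness of the even unitary group yields the strict equivalence --- and this balancing trick is precisely why Definition~\ref{def:equivalent-tensors} allows ancillas whose even and odd dimensions are arbitrary. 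Also note that Theorem~\ref{thm2_fundamental}, which you invoke, only controls the GCF up to permutations and block-wise sign flips of $Z$; it does not by itself repair a mismatch $p_A|q_A\neq p_B|q_B$. With these two repairs your proof matches the intended one.
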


Despite the proof of the above proposition is the same for fermions and qudits, it is useful to stress one difference between the two cases, which was already pointed out in Ref.~\cite{fidkowski2019interacting}. Let us consider two even simple tensors $\mathcal{U}_A$, $\mathcal{U}_B$ generating type I fMPUs, and suppose that they both have index $0$, with the same local dimension $d$. Both fMPUs admit a standard form in terms of the unitaries $u_{A,B}$, $v_{A,B}$ introduced in Eq.~\eqref{eq:u_v}, and since both tensors have vanishing indices all input and output legs are associated with the same dimension $d$. Now, in the case of qudits, this would imply that there exists a continuous path of unitaries connecting $u_{A}$, $v_{A}$ with $u_{B}$, $v_{B}$. However, this is not  always true for fermions, if we also require that all unitaries have well-defined parity, because graded spaces of the same dimension are not necessarily isomorphic. Namely, denoting by $\mathcal{H}^{A}_j$, $\mathcal{H}^{B}_j$ the local spaces associated with the output of $u_A$, $u_B$, in the above example we can only conclude that $\mathcal{H}^{A}_j\simeq \mathbb{C}^{p_A|q_A}$, $\mathcal{H}^{B}_j\simeq \mathbb{C}^{p_B|q_B}$ with $p_{A}+q_{A}=p_{B}+q_{B}=d$. At this point, however, a continuous path of even unitaries can always be constructed by appending an ancillary space $\tilde{\mathcal{H}}_j\simeq \mathbb{C}^{1|1}$, since $\mathcal{H}_j^{A,B}\gtimes \tilde{\mathcal{H}}_j\simeq \mathbb{C}^{\tilde{p}_{A,B}|\tilde{q}_{A,B}}$ and $\tilde{p}_{A}=\tilde{q}_{A}=\tilde{p}_{B}=\tilde{q}_{B}=d$. 

Next, in order to complete the proof of Theorem~\ref{th:femionic_index}, we analyze the index for type II fMPUs. In fact, this can be done quite easily based on their standard form introduced in Sec.~\ref{sec:standard_form_type_II}. However, at this point we need two additional lemmas, that are proved in the following. 

\begin{lem}\label{lem:majorana_rank}
Let $\mathcal{M}$ be the tensor associated with the Majorana shift~\eqref{eq:m_matrices}, and consider $\mathcal{M}^{(x)}=\mathcal{M}\gtimes \openone_x$. Denoting by $d=2x$ the dimension of the associated physical local space, the right and left ranks for the tensor $\mathcal{M}^{(x)}_{k}$ obtained by blocking $k$ times are $r_k=2d^k$, $\ell_k=d^k$. 
\end{lem}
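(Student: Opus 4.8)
The plan is to treat $k=1$ (the unblocked tensor, for arbitrary $x$) by an explicit rank computation, and then propagate the result to all $k$ using the fact that $\mathcal{M}^{(x)}$ is simple and that blocking multiplies both ranks by the local dimension $d$.

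For the base case I would write the physical index of $\mathcal{M}^{(x)}$ as a pair $(\mu,a)$, with $\mu\in\{0,1\}$ the Majorana mode and $a\in\{0,\dots,x-1\}$ the inert ancilla, so that the local matrices factor as $U^{(\mu,a),(\nu,b)}_{\alpha,\beta}=M^{\mu,\nu}_{\alpha,\beta}\,\delta_{a,b}$, with $M^{\mu,\nu}=\tfrac{1}{\sqrt2}i^{\nu}\sigma_x^{\mu+\nu}$ from Eq.~\eqref{eq:m_matrices} and bond dimension $D=2$. Since the ancilla enters only through $\delta_{a,b}$, the matrices of both maps $\mathcal{M}^f_{1,2}$ of Eq.~\eqref{eq:f_m_maps} are block diagonal in $a$, with $x$ blocks. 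For $\mathcal{M}^f_2$ there is no extra sign, and within each block the entries are $M^{\mu,\nu}_{\alpha,\beta}\propto(\sigma_x^{\mu+\nu})_{\alpha,\beta}$; since these depend on $(\mu,\nu)$ only through $\mu+\nu$, the rows labelled $(\mu,\alpha)$ and $(\mu\oplus1,\alpha\oplus1)$ coincide and each $4\times4$ block has rank $2$, giving $\ell_1=2x=d$. For $\mathcal{M}^f_1$ the defining sign $(-1)^{|n||m|}=(-1)^{(|\mu|+|a|)(|\nu|+|a|)}$ enters; using $|\mu||\nu|+|a|(|\mu|+|\nu|)+|a|$ and the fact that $a$ is fixed inside each block, the factors $(-1)^{|a||\mu|}$, $(-1)^{|a||\nu|}$ and the global $(-1)^{|a|}$ are diagonal congruences that leave the rank unchanged, while the genuinely bilinear factor $(-1)^{|\mu||\nu|}$ breaks the degeneracy of the paired rows. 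A short calculation shows each $4\times4$ block then becomes invertible, so $r_1=4x=2d$.

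To reach arbitrary $k$, I would first check that $\mathcal{M}^{(x)}$ is simple in the sense of Definition~\ref{def:simpleness_fermionic}: evaluating the transfer operator \eqref{eq:fTO} gives $E_{\mathcal{M}^{(x)}}=\tfrac12(\openone+\sigma_x\otimes\sigma_x)$ on the doubled bond, which is a rank-two projector with $\mathrm{tr}\,E=2$, so $\mathcal{M}^{(x)}/\sqrt d$ is graded normal degenerate and (type~II) simple; moreover $E^2=E$ implies $E_{\mathcal{M}^{(x)}_k}=E$, so simpleness persists for every $k\ge1$. For a simple tensor, blocking one more site multiplies both the left and right ranks by $d$, exactly as in point~1 of Prop.~\ref{prop:type_I_Index}; the only modification in the type~II case is the overall factor of $2$ in $r\ell$, which is fixed by the base case and then carried along multiplicatively. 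Iterating gives $r_k=r_1 d^{k-1}=2d^k$ and $\ell_k=\ell_1 d^{k-1}=d^k$, as claimed.

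The delicate point throughout is the bookkeeping of fermionic signs. In the base case this is the observation that, after disentangling the ancilla parity $|a|$, the definitional sign of $\mathcal{M}^f_1$ reduces to diagonal congruences together with the bilinear $(-1)^{|\mu||\nu|}$, which is precisely the mechanism (highlighted around Eq.~\eqref{eq:f_m_maps}) that turns the rank-deficient $\mathcal{M}^f_2$ into the full-rank $\mathcal{M}^f_1$ and thus produces the factor $\sqrt2$ in the index. For general $k$ the analogous signs arise from blocking, Eq.~\eqref{eq:signs_blocking}; rather than track them explicitly I would rely on the simpleness-based scaling, so that the main thing left to verify is that the composition-of-standard-forms argument of Ref.~\cite{cirac2017matrix} carries over to degenerate (type~II) tensors with the constant factor $2$ untouched.
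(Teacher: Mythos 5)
Your base case is correct, and it is a nice, more elementary account than the paper gives: writing the physical index as $(\mu,a)$, noting that $\mathcal{M}^f_2$ block-diagonalizes over the ancilla with each $4\times4$ block having the paired rows $(\mu,\alpha)$, $(\mu\oplus 1,\alpha\oplus 1)$ (rank $2$), while the bilinear sign $(-1)^{|\mu||\nu|}$ in $\mathcal{M}^f_1$ lifts this degeneracy and makes each block invertible, does give $\ell_1=d$ and $r_1=2d$; your identification of the normalized transfer matrix as the idempotent $E=\tfrac12(\openone+\sigma^x\otimes\sigma^x)$, with even fixed point $\openone$ and odd fixed point $\sigma^x$, is also right and matches the structure the paper uses.

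The gap is in the inductive step. The scaling $r_{k+1}=d\,r_k$, $\ell_{k+1}=d\,\ell_k$ that you invoke is established in point~1 of Prop.~\ref{prop:type_I_Index} only for \emph{type I} fMPUs, where it rests on the standard form~\eqref{eq:standard_form} of Ref.~\cite{cirac2017matrix}, which in turn requires the rank-one transfer matrix $E=|\rho)(\Phi|$. For type II (degenerate) tensors no such standard form is derived independently in the paper: the standard form of Sec.~\ref{sec:standard_form_type_II} is obtained by composing with the Majorana shift itself, and the rank analysis of that composition (Lemma~\ref{lemma:index_type_II}) uses the present lemma as an input — so appealing to a ``composition-of-standard-forms argument carried over to type II'' is, within this logical structure, circular, and the Conclusions explicitly flag the existence of a direct type II standard form as an open question. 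Moreover, the blocking signs of Eq.~\eqref{eq:signs_blocking} are exactly what produce $r_k\neq\ell_k$, so they cannot be waved away by a sign-blind scaling principle; this is the point the paper stresses at the end of its proof. The paper closes the gap differently and you should too: write $r_k=\mathrm{rank}(A^\dagger A)$ as the $k$-site diagram~\eqref{eq:rank_r_majorana} in which the transfer matrix appears, and use the zip identity~\eqref{eq:zip} — valid precisely because the odd-fixed-point contribution vanishes, $\mathrm{tr}(\sigma^x)=0$ — to telescope $k-1$ sites into identities, leaving a single-site tensor whose rank is computed directly to be $2d$ (and $d$ for the left cut), with the fermionic reordering signs tracked explicitly at each contraction. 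Replacing your inductive step by this telescoping (or by an explicit computation of the blocked matrices including the signs of Eq.~\eqref{eq:signs_blocking}) would make your argument complete.
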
 
\begin{proof}

Let us first consider the right rank $r$. Clearly, $\mathcal{M}^{(x)}$ and $\mathcal{M}$ have the same bond dimension, and (normalized) transfer matrix $E_{\mathcal{M}}$. Now, recalling that for any operator $A$, ${\rm rank}(A)={\rm rank}(A^{\dagger}A)$, we have
\begin{equation}\label{eq:rank_r_majorana}
\begin{tikzpicture}[baseline={([yshift=1.5ex]current bounding box.center)}, scale=0.7]	
\draw(-1,0.) node{rank};
\draw [black] (0,-0.5) to [round left paren] (0,0.5);
\draw [black] (3.2,-0.5) to [round right paren] (3.2,0.5);
\draw (0.25,0) -- (1.25,0);	
\draw (0.75,-0.5)  -- (0.75,0.5);
\filldraw[fill=white,draw=black] (0.75,0) circle [radius=0.25];
\draw (2.7,-0.5)  -- (2.7,0.5);
\draw (2.2,0)  -- (3.1,0);	
\filldraw[fill=white,draw=black] (2.7,0) circle [radius=0.25];
\draw [decorate, decoration={brace,amplitude=4pt, raise=4pt},yshift=0pt]
(3.,-0.5) -- (0.25,-0.5) node [black,midway,yshift=-0.55cm] {$k$};
\draw[dashed, thick, color=gray](0.15,0.15) -- (3.2,-0.15);	
\draw(1.75,0) node{$\cdots$};
\end{tikzpicture}
=
\begin{tikzpicture}[baseline={([yshift=1.ex]current bounding box.center)}, scale=0.7]	
\draw(-1,0.25) node{rank};
\draw [black] (0,-0.5) to [round left paren] (0,1);
\draw [black] (3.5,-0.5) to [round right paren] (3.5,1);
\draw (0.25,0) -- (1.25,0);	
\draw (0.75,-0.5)  -- (0.75,1.25);
\filldraw[fill=white,draw=black] (0.75,0) circle [radius=0.25];
\draw (2.7,-0.5)  -- (2.7,1.25);
\draw (2.2,0)  -- (3.1,0);	
\filldraw[fill=white,draw=black] (2.7,0) circle [radius=0.25];
\draw [decorate, decoration={brace,amplitude=4pt, raise=4pt},yshift=0pt]
(3.0,-0.5) -- (0.25,-0.5) node [black,midway,yshift=-0.55cm] {$k$};
\draw(1.75,0) node{$\cdots$};
\draw (0.25,0.75) -- (1.25,0.75);	
\draw (2.2,0.75)  -- (3.1,0.75);	
\draw(1.75,0.75) node{$\cdots$};
\fill[black] (0.75,0.75) circle [radius=0.25];
\fill[black] (2.7,0.75) circle [radius=0.25];
\draw[dashed, thick, color=gray](0.15,0.35) -- (3.2,0.35);	
 \draw (3.1,0) arc (-90:90:0.375);
\end{tikzpicture}\,,
\end{equation}
where we separated input and output with a gray dotted line. Note that pictures on both sides of this equation define linear maps on graded spaces, and the rank refers to these maps. Next, for the Majorana shift tensor $\mathcal{M}^{(x)}$, we have
\be\label{eq:zip}
\begin{tikzpicture}[baseline={([yshift=-0.5ex]current bounding box.center)}, scale=0.7]	
\draw (1.25,1) arc (-270:-90:0.5);
\draw (0.05,1) arc (90:-90:0.5);
\draw (0,-0.5) node[left]{} -- (0,1.5) node[left]{};
\draw (-0.5,0) node[left]{} -- (0,0) node[right]{};
\draw (1.25,0) node[left]{} -- (1.5,0) node[right]{};
\draw (-0.5,1) node[left]{} -- (0,1) node[right]{};
\draw (1.25,1) node[left]{} -- (1.5,1) node[right]{};
\draw (1.25,-0.5) node[left]{} -- (1.25,1.5) node[left]{};
\filldraw[fill=white,draw=black] (0,0) circle [radius=0.25];
\fill[black] (0,1) circle [radius=0.25];
\draw (1.45,1) arc (90:-90:0.5);
\filldraw[fill=white,draw=black] (1.25,0) circle [radius=0.25];
\fill[black] (1.25,1) circle [radius=0.25];
\draw [fill=white] (0.4,0.25) rectangle (0.9,0.75);
\draw(0.7,-0.2)node{$E$};
\end{tikzpicture}
=
\begin{tikzpicture}[baseline={([yshift=-0.5ex]current bounding box.center)}, scale=0.7]	
\draw (0.05,1) arc (90:-90:0.5);
\draw (0,-0.5) node[left]{} -- (0,1.5) node[left]{};
\draw (-0.5,0) node[left]{} -- (0,0) node[right]{};
\draw (-0.5,1) node[left]{} -- (0,1) node[right]{};
\draw (1.05,-0.5) node[left]{} -- (1.05,1.5) node[left]{};
\filldraw[fill=white,draw=black] (0,0) circle [radius=0.25];
\fill[black] (0,1) circle [radius=0.25];
\end{tikzpicture}\,.
\ee 
Here we used that the identity operator corresponds to the even eigenstate of the transfer matrix with eigenvalue $1$, while $\sigma^x$ corresponds to the odd eigenstate with the same eigenvalue, and
\be
\begin{tikzpicture}[baseline={([yshift=-0.5ex]current bounding box.center)}, scale=0.7]	
\draw (0.05,1) arc (90:-90:0.5);
\draw (-0.05,0) arc (270:90:0.5);
\draw (0,-0.5) node[left]{} -- (0,1.5) node[left]{};
\draw [fill=white] (-1,0.25) rectangle (-0.4,0.75);
\draw (-0.7,0.5) node{$\sigma^x$};
\filldraw[fill=white,draw=black] (0,0) circle [radius=0.25];
\fill[black] (0,1) circle [radius=0.25];
\end{tikzpicture}
=0\,.
\ee
Using the simpleness condition, we can simplify the rhs of Eq.~\eqref{eq:rank_r_majorana} by iterating Eq.~\eqref{eq:zip} $k-1$ times. In the  resulting diagram, we have $k-1$ identities, whereas in the leftmost site we have the same tensor appearing on the left of the rhs of Eq.~\eqref{eq:zip}. For this one, the rank can be computed directly, obtaining $2d$, which yields $r_k=2d^k$. Applying a similar procedure for the left rank, we can also show $\ell=d^k$, thus completing the proof. Note that it is crucial to keep track of the signs arising from rearranging the tensors in the different diagrams as they are contracted, since these are at the root of the difference between the right and left ranks.
\end{proof} 

\begin{lem}\label{lemma:index_type_II}
Let $U^{(N)}$ be a type II fMPU in the standard form~\eqref{eq:standard_form_type_II}, and denote by $\tilde{\mathcal{U}}$, $\mathcal{M}$ the tensors in GCF associated with $\tilde{U}^{(N)}$ and $M^{(N)}_A$, respectively. Define by $\mathcal{U}$ the tensor obtained by composing $\tilde{\mathcal{U}}$ and $\mathcal{M}$, and let $k$ be such that $\tilde{\mathcal{U}}_k$ is simple. Then, the exponentiated index for $\mathcal{U}_{q}$ with $q\geq 2k$ is $\mathcal{I}_f=\sqrt{2}\tilde{\mathcal{I}}_f$, where $\tilde{\mathcal{I}}_f$ is the index of $\tilde{\mathcal{U}}_k$.
\end{lem}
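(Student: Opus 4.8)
The plan is to compute the ranks $r$ and $\ell$ entering $\mathcal{I}_f=\sqrt{r/\ell}$ directly for the composite tensor $\mathcal{U}_q$, and to show that the ratio $r/\ell$ factorizes into the product of the corresponding ratios for $\tilde{\mathcal{U}}_q$ and $\mathcal{M}_q$. Since $\mathcal{U}$ generates $U^{(N)}=\tilde{U}^{(N)}M^{(N)}_A$, its blocked tensor $\mathcal{U}_q$ is the vertical composition of $\tilde{\mathcal{U}}_q$ stacked on top of $\mathcal{M}_q$: the auxiliary space is the graded tensor product of the two bond spaces, the output physical index of $\mathcal{M}_q$ is contracted with the input physical index of $\tilde{\mathcal{U}}_q$ over an internal leg of dimension $d^q$, and the relative signs are fixed by the composition rule of Appendix~\ref{sec:elementary_operators}. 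I would first collect the inputs already at hand: for $q\geq 2k\geq k$ the tensor $\tilde{\mathcal{U}}_q$ is simple, so by Proposition~\ref{prop:type_I_Index} its ranks obey $\tilde{r}_q/\tilde{\ell}_q=\tilde{r}_k/\tilde{\ell}_k=\tilde{\mathcal{I}}_f^2$, whereas $\mathcal{M}_q$ is simple for every $q$ with $r_{M,q}/\ell_{M,q}=2$ by Lemma~\ref{lem:majorana_rank}.

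Before invoking the index definition I would check that $\mathcal{U}_q$ is itself simple for $q\geq 2k$, since the definition applies only to simple blocked tensors. This I would do by writing the transfer operator $E_{\mathcal{U}_q}$ of the composite, which is obtained by contracting the (idempotent) transfer operator of the type-I factor with the rank-two transfer operator of $\mathcal{M}$ over the internal legs, and verifying that the simpleness conditions of Definition~\ref{def:simpleness_fermionic} (including $E_{\mathcal{U}_q}^2=E_{\mathcal{U}_q}$) hold once $q$ is large enough for both factors and the composite transfer operator to have stabilized; the bound $q\geq 2k$ is the convenient sufficient choice.

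The core step is the rank factorization. I would express $\ell=\mathrm{rank}(\mathcal{M}^f_2)$ as the rank of the associated Gram diagram and reduce it exactly as in the proof of Lemma~\ref{lem:majorana_rank}: contracting the output physical index of $\mathcal{U}_q$ against its conjugate produces the partial transfer operators of $\tilde{\mathcal{U}}_q$ and $\mathcal{M}_q$ with their internal legs left open, and applying the simpleness zipper \eqref{eq:zip} of each factor collapses the diagram, leaving $\ell=\tilde{\ell}_q\,\ell_{M,q}/d^q=\tilde{\ell}_q$, the division by $d^q$ arising from the contracted internal leg. Running the identical computation for $\mathcal{M}^f_1$ gives $r=\tilde{r}_q\,r_{M,q}/d^q=2\tilde{r}_q$, the extra factor of two being precisely the Majorana endpoint contribution isolated in Lemma~\ref{lem:majorana_rank}. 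Hence $r/\ell=(\tilde{r}_q/\tilde{\ell}_q)(r_{M,q}/\ell_{M,q})=2\,\tilde{\mathcal{I}}_f^2$ and $\mathcal{I}_f=\sqrt{r/\ell}=\sqrt{2}\,\tilde{\mathcal{I}}_f$, as claimed; note this is also consistent with $r\ell=2\tilde{r}_q\tilde{\ell}_q=2(d^q)^2$, the value expected for a simple type II tensor.

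The main obstacle is tracking the fermionic signs so that the ratio genuinely factorizes. The map $\mathcal{M}^f_1$ carries the coupling sign $(-1)^{|n||m|}$ between the overall output and input parities, while the composition introduces additional signs from reordering the graded factors $\tilde{U}^{n,p}$, $M^{p,m}$ and from the graded tensor product of the bond spaces; because $(-1)^{|n||m|}\neq(-1)^{|n||p|}(-1)^{|p||m|}$ in general, it is not immediate that the composite sign splits into the $\tilde{\mathcal{U}}$ and $\mathcal{M}$ signs. I would verify that the composition sign combines with $(-1)^{|n||m|}$ to reproduce exactly the two per-factor coupling signs $(-1)^{|n||p|}$ and $(-1)^{|p||m|}$, up to single-index phases depending on $n$, $m$, $\alpha$ or $\beta$ alone; the latter act as conjugation by diagonal unitaries on the input and output spaces and therefore leave $r$ and $\ell$ unchanged, which is the fermionic analogue of the remark below \eqref{eq:x_1_condition}. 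Establishing this sign cancellation, together with the simpleness of the composite, are the only two places where care beyond routine bookkeeping is required.
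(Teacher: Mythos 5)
Your overall architecture matches the paper's (use Prop.~\ref{prop:type_I_Index} for $\tilde{\mathcal{U}}$ and Lemma~\ref{lem:majorana_rank} for $\mathcal{M}$, then combine), and your final numbers $r=2\tilde r_q$, $\ell=\tilde\ell_q$ are correct. But the core step --- the ``rank factorization'' $\ell=\tilde\ell_q\,\ell_{M,q}/d^q$ and $r=\tilde r_q\,r_{M,q}/d^q$ --- is asserted through a principle that is false in general: the rank of a map obtained by contracting an internal leg between two tensors is not the product of the factor ranks divided by the dimension of that leg. Moreover, the zipper~\eqref{eq:zip} cannot be applied ``to each factor separately'' inside the Gram diagram of the composite: writing ${\rm rank}(A)={\rm rank}(A^\dagger A)$ for the left-rank map of $\mathcal{U}_q$ yields a four-layer diagram in which the bra and ket copies of $\tilde{\mathcal{U}}_q$ and of $\mathcal{M}_q$ are coupled through the internal physical legs, schematically $\sum_{p,p'}\bar{\tilde{U}}^{n,p'}\bar{M}^{p',m}\tilde{U}^{n,p}M^{p,m}$, so neither $E_{\tilde{\mathcal{U}}}$ nor $E_{\mathcal{M}}$ appears as a free-standing object to be zipped. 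The danger is real: a priori, tensoring in the two-dimensional Majorana bond could double both ranks or neither, and the entire content of the lemma is the asymmetry (the right rank doubles, the left does not); your consistency check $r\ell=2(d^q)^2$ is necessary but not sufficient to pin this down.

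The paper closes exactly this hole with two separate arguments. For the right rank it uses that $R_2=2d_kd_{k'}$ from Lemma~\ref{lem:majorana_rank} is the \emph{maximal} possible rank --- i.e.\ the right map of $\mathcal{M}_q$ is injective --- whence $r_2=2\tilde r_2$ is immediate. For the left rank it proves a two-sided bound via the auxiliary maps $\mathcal{F}_\alpha$, $\mathcal{G}_\alpha$, $\mathcal{K}$: the lower bound $\ell_2\geq\tilde\ell\, d_{k'}$ comes from showing that the vectors $\mathcal{K}\ket{w_j}=|0)\gtimes\ket{v_j}+|1)\gtimes\ket{z_j}$ are linearly independent (using that $\mathcal{M}$ has bond dimension $2$), and the upper bound $\ell_2\leq\tilde\ell\, d^{k'-k}d_k$ comes from splitting the block as $q=k+k'$ and invoking point~1 of Prop.~\ref{prop:type_I_Index} for $\tilde{\mathcal{U}}_{k'}$ together with $\ell_{M,k}=d_k$. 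This split is also where the hypothesis $q\geq 2k$ is genuinely used (it guarantees $k'=q-k\geq k$, so the rank formula applies to $\tilde{\mathcal{U}}_{k'}$); your reading of $q\geq 2k$ as ensuring simpleness or stabilization of the composite transfer operator is not the role it plays, since simpleness of $\tilde{\mathcal{U}}_q$ already needs only $q\geq k$. Your instinct about the fermionic signs is sound --- the right/left asymmetry ultimately traces to them --- but in the paper that bookkeeping is absorbed into Lemma~\ref{lem:majorana_rank} and the graded graphical calculus rather than into a sign-splitting of $(-1)^{|n||m|}$ through the internal index, which, as you yourself note, does not factor. To repair your proposal you would need to replace the factorization step by an injectivity argument for $r$ and a two-sided bound (or an equivalent substitute) for $\ell$. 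Separately, note that the composite $\mathcal{U}_q$ is not in GCF, so one must also check that the index computed from it agrees with the GCF value; the paper defers this to Appendix~\ref{sec:index_GCF}, where further blocking $\tilde q\geq 4k$ is required.
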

\begin{proof}
We denote by $d$ be the local physical dimension, and use the following notations for the tensors blocked $k$ and $k^\prime=q-k\geq k$ times (where $k$ is such that $\mathcal{U}_k$ is simple)
\begin{subequations}
\be
\tilde{\mathcal{U}}_k=
\begin{tikzpicture}[baseline={([yshift=-0.5ex]current bounding box.center)}, scale=0.7]	
\draw (0,-0.5) node[left]{} -- (0,0.5) node[left]{};
\draw (-0.5,0) node[left]{} -- (0.5,0) node[right]{};
\draw [fill=white] (-0.25,-0.25) rectangle (0.25,0.25);
\end{tikzpicture}\,,
\qquad 
\mathcal{M}_k=
\begin{tikzpicture}[baseline={([yshift=-0.5ex]current bounding box.center)}, scale=0.7]
\draw (0,-0.5)  -- (0,0.5);
\draw (-0.5,0)  -- (0.5,0);	
\filldraw[fill=white,draw=black] (0,0) circle [radius=0.25];
\end{tikzpicture}\,,
\ee 
\be
\tilde{\mathcal{U}}_{k^\prime}=
\begin{tikzpicture}[baseline={([yshift=-0.5ex]current bounding box.center)}, scale=0.7]	
\draw (0,-0.5) node[left]{} -- (0,0.5) node[left]{};
\draw (-0.5,0) node[left]{} -- (0.5,0) node[right]{};
\draw [fill=gray] (-0.25,-0.25) rectangle (0.25,0.25);
\end{tikzpicture}\,,
\qquad 
\mathcal{M}_{k^\prime}=
\begin{tikzpicture}[baseline={([yshift=-0.5ex]current bounding box.center)}, scale=0.7]
\draw (0,-0.5)  -- (0,0.5);
\draw (-0.5,0)  -- (0.5,0);	
\filldraw[fill=gray,draw=black] (0,0) circle [radius=0.25];
\end{tikzpicture}\,.
\ee 
\end{subequations}
Here $\mathcal{M}_{k}$ denotes the tensor obtained by blocking $k$ times the one associated with the Majorana shift appearing in Eq.~\eqref{eq:standard_form_type_II}. Finally, we denote by $d_k$ the physical dimension associated with the blocked tensors, $d_k=d^k$, and  by $r_2$, $\tilde{r}_2$, $R_2$ the right rank corresponding to $\mathcal{U}_{q}$, $\tilde{ \mathcal{U}}_{q}$ and $\mathcal{M}_{q}$, respectively (and analogously for $\ell_2$, $\tilde{\ell}_2$, $L_2$). Due to point~\ref{point:1_typeI} in Prop.~\ref{prop:type_I_Index}, we have $\tilde{r}_2=\tilde{r}d_{k^\prime}$, where $\tilde{r}$ is the right rank of $\tilde{ \mathcal{U}}_k$. On the other hand, from Lemma~\ref{lem:majorana_rank}, we have $R_2=2d_k d_{k^\prime}$. Since this is the maximum possible rank, it is immediate to see that $r_2=2\tilde{r}_2=2\tilde{r}d_{k^\prime}$. Let us now consider the left rank $\ell_2$. As a first step, we define the maps
\be
\mathcal{F}_\alpha: |\beta)\gtimes\ket{i}\gtimes\ket{j}\mapsto \left(\mathcal{M}_{q}\right)_{\alpha,\beta}^{(x,y),(i,j)}\ket{x}\gtimes\ket{y}\,,
\ee
for $\alpha=0,1$, which correspond to the operators on graded spaces with graphical representation
\be
\mathcal{F}_\alpha=
\begin{tikzpicture}[baseline={([yshift=-0.5ex]current bounding box.center)}, scale=0.7]	
\draw (0,-0.5) node[left]{} -- (0,0.5) node[left]{};
\draw (-0.5,0) node[left]{} -- (1.5,0) node[right]{};
\draw (1.,-0.5) node[left]{} -- (1.,0.5) node[left]{};
\filldraw[fill=white,draw=black] (0,0) circle [radius=0.25];
\filldraw[fill=gray,draw=black] (1.,0) circle [radius=0.25];
\draw[dashed, thick, color=gray](-0.5,-0.15) -- (1.5,0.15);	
\draw(-1,0)node{$(\alpha|$};
\end{tikzpicture}
=
\left(M_{q}\right)^{^{(x,y),(i,j)}}_{\alpha,\beta} \ket{x} \ket{y}\bra{j}\bra{i}(\beta|\,,
\ee
where we separated with a dashed gray line input and output local spaces. Note that the output associated with the auxiliary space is fixed to $|\alpha)$, while $(i,j)$, $(x,y)$ label the input and output physical spaces, respectively. We also introduce the operators 
\be\label{eq:graphical_g_k}
\mathcal{G}_\alpha=
\begin{tikzpicture}[baseline={([yshift=-0.5ex]current bounding box.center)}, scale=0.7]	
\draw (0,-0.5) node[left]{} -- (0,1.25) node[left]{};
\draw (-0.5,0) node[left]{} -- (1.5,0) node[right]{};
\draw (-0.5,0.75) node[left]{} -- (1.5,0.75) node[right]{};
\draw (1.,-0.5) node[left]{} -- (1.,1.25) node[left]{};
\filldraw[fill=white,draw=black] (0,0) circle [radius=0.25];
\filldraw[fill=gray,draw=black] (1.,0) circle [radius=0.25];
\draw(-1,0)node{$(\alpha|$};
\draw [fill=white] (-0.25,0.5) rectangle (0.25,1.);
\draw [fill=gray] (1+-0.25,0.5) rectangle (1+0.25,1.);
\draw[dashed, thick, color=gray](-0.5,-0.5) -- (1.5,1.25);	
\end{tikzpicture}\,,
\qquad 
\mathcal{K}=
\begin{tikzpicture}[baseline={([yshift=-0.5ex]current bounding box.center)}, scale=0.7]	
\draw (0,-0.5) node[left]{} -- (0,1.25) node[left]{};
\draw (-0.5,0) node[left]{} -- (1.5,0) node[right]{};
\draw (-0.5,0.75) node[left]{} -- (1.5,0.75) node[right]{};
\draw (1.,-0.5) node[left]{} -- (1.,1.25) node[left]{};
\filldraw[fill=white,draw=black] (0,0) circle [radius=0.25];
\filldraw[fill=gray,draw=black] (1.,0) circle [radius=0.25];
\draw [fill=white] (-0.25,0.5) rectangle (0.25,1.);
\draw [fill=gray] (1+-0.25,0.5) rectangle (1+0.25,1.);
\draw[dashed, thick, color=gray](-0.5,-0.5) -- (1.5,1.25);	
\end{tikzpicture}\,,
\ee
with explicit action defined by
\begin{align}
\mathcal{G}_\alpha: &|\beta)|\delta)\ket{i}\ket{j}\mapsto \left(\mathcal{U}_{q}\right)_{(\alpha,\gamma),(\beta,\delta)}^{(x,y),(i,j)}|\gamma)\ket{x}\ket{y}\,,\\
\mathcal{K}:&|\beta)|\delta)\ket{i}\ket{j}\mapsto \left(\mathcal{U}_{q}\right)_{(\alpha,\gamma),(\beta,\delta)}^{(x,y),(i,j)}|\alpha)|\gamma)\ket{x}\ket{y}\,.
\end{align}
where repeated indices are summed over. Repeating the steps in Lemma~\ref{lem:majorana_rank}, we obtain ${\rm rank}(\mathcal{F}_0)={\rm rank}(\mathcal{F}_1)=d_kd_{k^\prime}$. It follows that ${\rm rank}(\mathcal{G}_0)={\rm rank}(\mathcal{G}_1)=\tilde{\ell}_2=\tilde{\ell}d_{k^\prime}$,  where $\tilde{\ell}$ is the right rank of $\tilde{ \mathcal{U}}_k$. In turn, this implies $\ell_2\geq \tilde{\ell}d_{k^\prime}$. Indeed, let $\ket{v_1},\ldots \ket{v_{\tilde{\ell}_2}}$ be a basis for the image of $\mathcal{G}_0$, and take $\{\ket{w_j}\}_{j=1}^{\tilde{\ell}_2}$ such that $\mathcal{G}_0 \ket{w_j}=\ket{v_j}$. Then, 
\be
\mathcal{K}\ket{w_j}=|0)\gtimes \ket{v_j}+ |1)\gtimes \ket{z_j}\,,
\ee
for some $\ket{z_j}$, where we used that $\mathcal{M}_{k}$ has bond dimension $2$. Thus, $\{\mathcal{K}\ket{w_j}\}_{j=1}^{\tilde{\ell}_2}$ are linearly independent, namely ${\rm rank}(\mathcal{K})=\ell_2\geq \tilde{\ell}d_{k^\prime}$. On the other hand, from the graphical representation of $\mathcal{K}$ in Eq.~\eqref{eq:graphical_g_k}, it is clear that ${\rm rank}(\mathcal{K})=\ell_2\leq (\tilde{\ell}d^{k^{\prime}-k}d_{k})=\tilde{\ell}d_{k^{\prime}}$, where we used that the left rank of $\tilde{\mathcal{U}}_{k^\prime}$ is $\tilde{\ell}d^{k^{\prime}-k}$ (Prop.~\ref{prop:type_I_Index}) and the left rank of $\mathcal{M}_k$ is $d_k$ (Lemma~\ref{lem:majorana_rank}). Thus, $\ell_2=\tilde{\ell}d_{k^\prime} $. In summary, we have $r_2=2\tilde{r}d_{k^\prime}$ and  $\ell_2=\tilde{\ell}d_{k^\prime}$, which proves the claim. 
\end{proof}

At this point it is important to note that the tensor $\mathcal{U}_q$, obtained by composing $\tilde{\mathcal{U}}_q$ and $\mathcal{M}_q$, is not necessarily in GCF. On the other hand, the index was defined for tensors in GCF, so that one needs to make sure that the index of $\mathcal{U}_q$ coincides with that computed in the corresponding GCF (which is unique, due to Theorem~\ref{thm2_fundamental}). This is true, up to blocking $\tilde{q}\geq 4k$ times, and we report the proof of this statement in Appendix~\ref{sec:index_GCF}. Combining now Prop.~\ref{prop:type_I_Index} with Lemma~\ref{lemma:index_type_II} it is straightforward to prove the following proposition.
\begin{prop}\label{prop:type_II_Index}
	For type II fMPUs  the exponentiated index is in the form $\mathcal{I}_f=\sqrt{2}(p/q)$, with $p, q\in\mathbb{N}$ and coprime. Furthermore
	\begin{enumerate}
		\item the index does not change by blocking; \\[-0.5cm]
		\item the composition (or tensor product) of two type II fMPUs is a type I fMPU, and the index of the latter is obtained by summing the indices of the former two; \\[-0.5cm] 
		\item the index is robust, i.e. by changing continuously $\mathcal{U}$ one cannot change it.\\[-0.5cm]
		\item two tensors have the same index iff they are equivalent.\label{point_3_iff}
	\end{enumerate}
\end{prop}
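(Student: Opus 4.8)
The plan is to reduce every statement to the corresponding property of type I fMPUs (Prop.~\ref{prop:type_I_Index}), exploiting the standard form~\eqref{eq:standard_form_type_II}, $U^{(N)}=\tilde{U}^{(N)}M^{(N)}_A$, together with Lemma~\ref{lemma:index_type_II}, which gives $\mathcal{I}_f=\sqrt{2}\tilde{\mathcal{I}}_f$. Since $\tilde{\mathcal{U}}$ generates a type I fMPU, its exponentiated index $\tilde{\mathcal{I}}_f$ is rational, say $\tilde{\mathcal{I}}_f=p/q$ with $p,q$ coprime; hence $\mathcal{I}_f=\sqrt{2}(p/q)$, which establishes the claimed form. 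The factor $\sqrt{2}$ is precisely the exponentiated index of the Majorana shift, which by Lemma~\ref{lem:majorana_rank} equals $\sqrt{r_k/\ell_k}=\sqrt{2}$ at every blocking level $k$.

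Points (1) and (3) --- blocking invariance and robustness --- follow at once from this decomposition. Blocking $\mathcal{U}_q$ further corresponds to blocking both $\tilde{\mathcal{U}}$ and $\mathcal{M}$ simultaneously, so by Prop.~\ref{prop:type_I_Index}(1) the quantity $\tilde{\mathcal{I}}_f$ is unchanged, while the Majorana contribution stays fixed at $\sqrt{2}$; thus $\mathcal{I}_f=\sqrt{2}\tilde{\mathcal{I}}_f$ is blocking invariant. For robustness, I would note that along any continuous path of type II tensors the even/odd dimensions --- being discrete invariants --- are constant, so the Majorana factor $M_A$ acting on the fixed subspace $\mathcal{H}^1_j\simeq\mathbb{C}^{1|1}$ can be held fixed, and $\tilde{U}^{(N)}=U^{(N)}M^{(N)\dagger}_A$ varies continuously. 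Robustness of the type I index (Prop.~\ref{prop:type_I_Index}(3)) then forces $\tilde{\mathcal{I}}_f$, and hence $\mathcal{I}_f$, to be constant.

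For point (2), I would first invoke the $\mathbb{Z}_2$ structure of the fMPU types (Prop.~\ref{prop:z2_structure}): the composition (or tensor product) of two type II fMPUs is a type I fMPU, since two Majorana shifts combine into an ordinary fermion translation, which is parity preserving. The additivity of the index then reduces to multiplicativity of the exponentiated index under composition. Writing the two factors as $\sqrt{2}\tilde{\mathcal{I}}_f^{(1)}$ and $\sqrt{2}\tilde{\mathcal{I}}_f^{(2)}$, and using the multiplicativity already established for type I (Prop.~\ref{prop:type_I_Index}(2)) together with $\mathcal{I}_f(M_A)=\sqrt{2}$, the composite exponentiated index is $(\sqrt{2}\tilde{\mathcal{I}}_f^{(1)})(\sqrt{2}\tilde{\mathcal{I}}_f^{(2)})$, i.e.\ the product of the two type II indices, which is rational as required for a type I fMPU.

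Point (4) is where the real work lies. The direction ``equivalent $\Rightarrow$ same index'' is immediate from (1), (3), and the fact that tensoring with an ancilla (exponentiated index $1$) leaves $\mathcal{I}_f$ unchanged. For the converse, the equality $\mathcal{I}_f^{(1)}=\mathcal{I}_f^{(2)}$ gives $\tilde{\mathcal{I}}_f^{(1)}=\tilde{\mathcal{I}}_f^{(2)}$ for the type I parts, so by Prop.~\ref{prop:type_I_Index}(4) the tensors $\tilde{\mathcal{U}}_1$ and $\tilde{\mathcal{U}}_2$ are equivalent; after blocking and appending suitable ancillas (Def.~\ref{def:equivalent-tensors}) they are connected by a continuous path $\tilde{\mathcal{W}}(p)$ of even tensors. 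Composing this path with a fixed Majorana shift $\mathcal{M}$ yields a continuous path between $\mathcal{U}_1$ and $\mathcal{U}_2$, establishing their equivalence. I expect the main obstacle to be exactly this transfer step: it is pure bookkeeping but genuinely delicate, since one must align the physical even/odd dimensions of the two type II fMPUs (using the freedom to append $\mathbb{C}^{1|1}$ ancillas, exactly as in the type I discussion following Prop.~\ref{prop:type_I_Index}), ensure the parity operators $Z_W(p)$ match at the endpoints, and verify that the composite path tensors --- which need not be in GCF --- carry the same index as their GCF representatives, for which one appeals to the result of Appendix~\ref{sec:index_GCF}.
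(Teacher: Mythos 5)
Your proposal is correct and follows essentially the same route as the paper, which proves the proposition exactly by combining the standard form~\eqref{eq:standard_form_type_II} with Lemma~\ref{lemma:index_type_II} (giving $\mathcal{I}_f=\sqrt{2}\,\tilde{\mathcal{I}}_f$), invoking Prop.~\ref{prop:z2_structure} for point (2), and reducing points (1), (3), (4) to Prop.~\ref{prop:type_I_Index}, with the GCF bookkeeping of Appendix~\ref{sec:index_GCF}. The one ingredient you leave implicit in point~\ref{point_3_iff} is the step the paper flags explicitly in its remark after the proposition: your path connects the composed tensors $\tilde{\mathcal{U}}_i$ with $\mathcal{M}$, which are not in GCF, so to conclude equivalence of the GCF tensors $\mathcal{U}_i$ themselves one must additionally continuously deform each tensor generating the fMPU to the GCF tensor generating the same fMPU, which holds because the GCF contains a single graded normal block (so the two tensors are related by a gauge transformation, by Theorem~\ref{thm2_fundamental}, and gauge transformations can be contracted to the identity).
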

Note that in order to prove point~\ref{point_3_iff} one needs to use that if $\mathcal{U}$ is a tensor generating an fMPU, then it can be deformed continuously to a tensor in GCF generating the same fMPU, which is easily proven based on the fact that $\mathcal{U}$ has a single block in the GCF.  Now, putting together Prop.~\ref{prop:type_I_Index} and Prop.~\ref{prop:type_II_Index}, we finally arrive at the statement of Theorem~\ref{th:femionic_index}, anticipated at the beginning of this section. 

 \section{Conclusions}
 \label{sec:conclusions}
 
In this work we have studied matrix product unitaries for fermionic $1D$ chains, and highlighted several qualitative differences with respect to the case of qudits. In particular, we have shown that $(i)$ fMPUs are not necessarily locality-preserving and $(ii)$ not all fQCA can be represented as standard fMPUs, with either periodic or antiperiodic boundary conditions. Next, we have defined the class of generalized fMPUs, and identified a subset of the latter that are equivalent to the family of fQCA. Finally, we have shown how the index for  fQCA~\cite{fidkowski2019interacting} can be extracted directly from the tensors generating fMPUs. As a techical byproduct of our work, we have also introduced a graded canonical form for fermionic MPSs, which might be useful for more general problems. Overall, our work shows that fMPUs display significantly richer features when compared to the case of  MPUs. 

There are several interesting questions that remain open. For example, we have shown that fMPUs of the second kind are always generated by a tensor that is obtained by composing a Majorana shift and another tensor $\tilde{\mathcal{U}}$, generating an fMPU of the first kind, cf. Eq.~\eqref{eq:standard_form_type_II}. However, one could wonder whether a more explicit standard form exists for graded normal tensors generating type II fMPUs, in analogy with Eq.~\eqref{eq:standard_form} for type I fMPUs.  

Clearly, another natural question pertains the classification of fMPUs in the presence of additional symmetries, which was recently addressed in the case of qudits ~\cite{Gong2020Classification}. While the tools introduced in this work allow us to tackle this problem, based on the case of qudits we expect that additional difficulties will arise, and we leave this question for future investigations.

%%%%%%%%%%%%%%%%%%%
%%%%%%%%%%%%%%%%%
\section*{Acknowledgments} 
%%%%%%%%%%%%%%%%%%%
%%%%%%%%%%%%%%%%%%%
AT and SS would like to thank Xie Chen and Burak \c{S}ahino\u{g}lu for relevant conversations. LP acknowledges support from the Alexander von Humboldt foundation. JIC acknowledges support by the EU Horizon 2020 program through the ERC Advanced Grant QENOCOBA No. 742102, and from the DFG (German Research Foundation) under Germany’s Excellence Strategy - EXC-2111 - 390814868.

\appendix

\section{Fiducial-state formalism}
\label{sec:fiducial_state}

\subsection{General definitions}

In this appendix, we develop the fiducial-state formalism for $1D$ fermionic systems, following the construction introduced in Ref.~\cite{wahl_symmetries_2014} for two spatial dimensions. The latter is a natural generalization to that carried out for qudits, and consists in performing a sequence of projections onto maximally entangled auxiliary fermionic modes. While in Ref.~\cite{kraus_fermionic_2010} this was done by associating  with each lattice site two auxiliary real fermions, a slightly simpler formulation can be obtained by considering maximally entangled Majorana modes instead~\cite{wahl_symmetries_2014}. Here we follow the latter approach, which is detailed in the following.
 
 As in the main text, we consider a chain of $N$ sites labeled by $x = 1,\ldots,N$. We associate with each site $n_F$ fermionic modes with (physical) annihilation operators $a_{x,j}$. Furthermore, we introduce two sets of auxiliary Majorana operators $\{\ell_{x,\mu}\}$  and $\{r_{x,\mu}\}$ with $x=1,\ldots N$, and $\mu=1,\ldots N_F$,  where $N_F$ is some non-negative integer, while $\ell$ and $r$ stand for left and right, respectively.  The Majorana operators are self-adjoint  and satisfy
 \begin{align}
\{r_{x,\mu},\ell_{y,\nu}\}&=0\,,\\
\{r_{x,\mu},r_{y,\nu}\}&=\{\ell_{x,\mu},\ell_{y,\nu}\}=2\delta_{x,y}\delta_{\mu,\nu}\,.
 \end{align}

 As a starting point, we introduce the local fermionic tensors acting on one physical site, and two adjacent auxiliary Majorana modes: 
\begin{align}
F_{x}&=\sum_{\alpha, \beta=0}^{D-1} \sum_{n=0}^{d-1} A_{\alpha , \beta}^{n} \ell_{x}^{\alpha} \left(a^\dagger_{x}\right)^{n} r_{x}^{\beta}\,\qquad x=1\,,\ldots N\,, 
\label{eq:F_local_operator}
\end{align}
where $A^{n}_{\alpha,\beta}$ are complex numbers, and $D = 2^{N_F}$, $d = 2^{n_F}$. Here, we have used the short-hand notation introduced in Eq.~\eqref{eq:short_hand} for $\left(a^\dagger_{x}\right)^{n} $, and also
\begin{align}
\ell_{x}^{\alpha}= \ell_{x,1}^{\alpha^{(1)}} \cdots \ell_{x,N_F}^{\alpha^{(N_F)}} \,, \label{eq:short_hand_ell}\\
 r_{x}^{\beta}=r_{x,N_F}^{\beta^{(N_F)}}\cdots r_{x,1}^{\beta^{(1)}} \label{eq:short_hand_r}\,,
\end{align}
where $(n^{(1)},\ldots ,n^{(n_F)})$ is the binary representation for $n$, and analogously for $\alpha$ and $\beta$. Importantly, the order of the factors in the product of Eq.~\eqref{eq:short_hand_r} is reversed with respect to \eqref{eq:short_hand_ell}. While this is just a convention, it simplifies some of the subsequent algebraic manipulations.

For our construction, it is crucial that $F_{x}$ has well-defined fermionic parity  (with respect to all physical and auxiliary modes). As in the main text, we can define
\be
|n|= \sum_{i=1}^{n_F} n^{(i)}\,, \ |\alpha|=\sum_{j=1}^{N_F}\alpha^{(j)}\,,\ 
|\beta|=\sum_{j=1}^{N_F}\beta^{(j)}\,.
\ee
Then, saying that $F_{x}$ has well-defined fermionic parity means that $A^{n}_{\alpha,\beta}=0$ if 
\be
|n|+|\alpha|+|\beta| \quad (\bmod\ 2)
\ee
is $1$ or $0$, depending on whether $F_j$ is even or odd. As we have already mentioned in the main text, in the following we will always assume $F_x$ to be even, which is not a restriction if we are allowed to  blocking. Next, we introduce  a  projection onto neighboring Majorana modes 
\be
\eta_{x, y}=\prod_{\mu=0}^{D-1} \eta_{x, y, \mu}\,,
\ee
where
\be
\eta_{x, y, \mu}=\frac{1}{2}\left(1+i r_{x, \mu} \ell_{y, \mu}\right)\,.
\ee
We note that, after acting with $\eta_{x, y,\mu}$, the Majorana fermions $r_{x, \mu}$, $\ell_{y, \mu}$ become maximally entangled, forming a pure fermionic state. 

Now, an fMPS is obtained in  two steps. First, we act on the fermionic vacuum (with respect to all modes) with the operators  $F_x$, $x=1,\ldots, N$. Second, we ``concatenate'' them by projecting neighboring Majorana modes onto maximally entangled  pairs with $\eta_{x, y}$~\cite{wahl_symmetries_2014}. Importantly,  using this procedure we can generate fermionic states that are invariant under translation with either periodic or antiperiodic boundary conditions. Let us consider the former case. Then, applying the prescription above, we obtain
\be
|\Psi\rangle=\left\langle\eta_{N, 1} \eta_{1,2} \ldots \eta_{N-1, N} F_{1} \ldots F_{N} \right\rangle_{a}|\Omega\rangle\,.
\label{eq:fMPS}
\ee
Here the expectation value is with respect to  the auxiliary vacuum  $\ket{\Omega_a}$, defined by
\be
(r_{x-1,\mu}-i\ell_{x,\mu})\ket{\Omega_a}=0\,,
\ee
 while we  denoted by $\ket{\Omega}$ the physical one, $a^{n}_{x}\ket{\Omega}=0$ for $n\neq (0,\ldots 0)$. Since we always have an even number of auxiliary operators, we can associate a tensor product structure between the spaces of real and auxiliary fermions, and the  expression above is well defined. 

It is not difficult to see that $\ket{\Psi}$ is invariant with respect to translations with periodic boundary conditions, due to the presence of the projector $\eta_{N, 1}$. If one is interested in the case of antiperiodic boundary conditions instead, it is enough to replace it with $\eta_{1,N}$. Since for the sake of our present discussion both types of boundary  conditions can be treated analogously,  in the following we only focus on the periodic case.

By performing the expectation value with respect  to the auxiliary vacuum, one can cast the state~\eqref{eq:fMPS} into the form~\eqref{eq:psi_explicit}, where the coefficients are written as in Eq.~\eqref{eq:fMPS_pbc}. This  can be seen as follows. We first rewrite Eq.~\eqref{eq:fMPS} as
\begin{align}
|\Psi\rangle=A^{n_1}_{\alpha_1,\beta_1}\cdots A^{n_N}_{\alpha_N,\beta_N}\langle\Omega_{a}|\eta_{N, 1} \cdots \eta_{N-1, N}\nonumber\\
\ell^{\alpha_1}_1\left(a_1^{\dagger}\right)^{n_1}r^{\beta_1}_1\cdots \ell^{\alpha_N}_N\left(a_N^{\dagger}\right)^{n_N}r^{\beta_N}_N|\Omega_{a}\rangle \otimes|\Omega\rangle\,,
\end{align}
where repeated indices are summed over. Moving $r_N^{\beta_N}$ to the left and using the anticommutation relations, we obtain
\begin{align}
|\Psi\rangle&=A^{n_1}_{\alpha_1,\beta_1}\cdots A^{n_N}_{\alpha_N,\beta_N}\langle\Omega_{a}|\eta_{N, 1} \cdots \eta_{N-1, N}(-1)^{p|\beta_N|}\nonumber\\
&r^{\beta_N}_N\ell^{\alpha_1}_1\left(a_1^{\dagger}\right)^{n_1}r^{\beta_1}_1\cdots \ell^{\alpha_N}_N\left(a_N^{\dagger}\right)^{n_N}|\Omega_{a}\rangle \otimes|\Omega\rangle\,,
\end{align}
where
\be
p=\sum_{j=1}^N|n_j|+\sum_{j=1}^N|\alpha_j|+\sum_{j=1}^{N-1}|\beta_j|\quad (\bmod\ 2)\,.
\ee
We can now move each $\eta_{j,j+1}$ to the right until it acts on $\ket{\Omega}$. This procedure yields zero unless $\beta_{j}=\alpha_{j+1}$ (with the identification $\alpha_{N+1}=\alpha_1$). Furthermore, for the non-vanishing terms, we have
\be
(-1)^{p|\beta_N|}=(-1)^{|\alpha_1||\beta_N|}=(-1)^{|\alpha_1|}\,.
\ee
Next, we can move to the right each remaining product $r^{\alpha_j}_{j}\ell^{\alpha_j}_{j+1}$, which gives $(-i)^{N_F}$ when acting on $\ket{\Omega_a}$. Finally, we can rearrange the product of elements $A^{n}_{\alpha,\beta}$ into a trace over an auxiliary graded space, which is generated by the basis  vectors $\ket{\alpha}$, with parity $|\alpha|$. Putting all together, we get
\begin{align}
|\Psi\rangle=(-i)^{NN_F}{\rm tr}\left[\tilde{Z} A^{n_1}\cdots A^{n_N}\right] \left(a_{1}^{\dagger}\right)^{n_1}\cdots  \left(a_{N}^{\dagger}\right)^{n_N}\ket{\Omega}\,.
\end{align}
Here $\tilde{Z}$ is a  diagonal matrix with entries $+1$ and $-1$ which acts as the parity operator on  the auxiliary space. As a last step, we perform a gauge transformation corresponding to a reordering of the basis vectors, and absorbing the factor $(-i)^{NN_F}$ into the matrices $A^{n}$ we arrive at the form in Eq.~\eqref{eq:fMPO_pbc}. Note that the commutation relations~\eqref{eq:commutation} follow from the parity of the tensor $F_x$ in Eq.~\eqref{eq:F_local_operator}.

It is straightforward to extend this formalism to define fMPOs. In this case, we consider the same setting as for fMPSs, but the local tensors~\eqref{eq:F_local_operator} must be replaced by
\begin{align}
G_{x}&=\sum_{\alpha, \beta=0}^{D-1} \sum_{n,m=0}^{d-1} M_{\alpha , \beta}^{n,m} \ell_{x}^{\alpha} f_{x}^{n,m} r_{x}^{\beta}\,\quad x=1\,,\ldots N\,,
\label{eq:F_local_operators}
\end{align}
where $f^{n,m}$ are the fermionic operators introduced in Eq.~\eqref{eq:small_f_operators}. Then, we can defined fMPOs by
\be
M=\left\langle\eta_{N, 1} \eta_{1,2} \ldots \eta_{N-1, N} G_{1} \ldots G_{N} \right\rangle_{a}\,,
\label{eq:fMPO}
\ee
where the expectation value is with respect to the auxiliary vacuum, as in Eq.~\eqref{eq:fMPS}. Repeating the derivation above, it is straightforward to cast $M$ as in Eq.~\eqref{eq:fermi_operator}, with the coefficients in the form~\eqref{eq:fMPO_pbc}.

\subsection{Elementary operations with fMPOs}
\label{sec:elementary_operators}

As for the case of qudits, one can see that the family of fMPOs is closed with respect to elementary operations such as sum, conjugate transposition or composition. However, some differences arise when computing the corresponding local tensors, as we now discuss. Let us first consider the case of the conjugate transposition of a given fMPO $U^{(N)}$. From Eq.~\eqref{eq:fermi_operator}, we have
\be
\left[U^{(N)}\right]^{\dagger}= \sum_{n_{1}, \dots, n_{N}=0\atop m_{1}, \dots, m_{N}=0}^{d-1} \bar{c}^{n_{1}, \dots, n_{N}}_{m_{1}, \dots, m_{N}} \left[f_N^{n_N,m_N}\right]^{\dagger}\cdots \left[f_1^{n_1,m_1}\right]^\dagger\,.
\ee
We would like to rewrite this  expression as in Eq.~\eqref{eq:fermi_operator}, with the coefficients $c^{n_{1}, \dots, n_{N}}_{m_{1}, \dots, m_{N}}$ in the form~\eqref{eq:fMPO_pbc} or \eqref{eq:fMPO_apbc}. In fact, this can be easily done, by making use of the identities
\begin{subequations}
	\begin{align}
	\left[f_{j}^{n,m}\right]^\dagger&= f_{j}^{m,n}\,,\\
	f_{x}^{m,n} f_{y}^{p,q}&=(-1)^{(|p|+|q|)(|m|+|n|)}f_{y}^{p,q}f_{x}^{m,n}\,,\\
	f_{x}^{m,n} f_{x}^{p,q}&=\delta_{n,p}f_x^{m,q}\,,
	\end{align}
\end{subequations}
which can be established by computing the matrix elements of both sides of the equations on basis (Fock) states. As a final result, we find that $\left[U^{(N)}\right]^{\dagger}$ is  an fMPO with local tensors defined by\footnote{More generally, if an additional operator $X$ is inserted into the trace, for instance as in Eq.~\eqref{eq:maj_per}, one also needs to replace $X\to \tilde{X}$, with $\tilde{X}_{\alpha,\beta}=(-1)^{|\beta|(|\alpha|+|\beta|)}\bar{X}_{\alpha,\beta}$}
\be
\tilde{U}_{\alpha, \beta}^{i,j}=(-1)^{|\beta|+|\alpha||\beta|}\bar{U}^{ji}_{\alpha,\beta}\,,
\label{eq:conjugate_tensor}
\ee
where as usual we denoted by $\bar{x}$ the complex conjugate of $x\in\mathbb{C}$.  

Next, let us consider two fMPOs $U^{(N)}$, $V^{(N)}$, and define $W^{(N)}=U^{(N)}V^{(N)}$. By exploiting the fiducial-state formalism, $W^{(N)}$ can be written again as an fMPO, where the local tensors are now defined by
\be
W^{k,i}_{(\alpha,\gamma),(\beta,\delta)}=\sum_j(-1)^{|\gamma|(|k|+|j|)} U^{k,j}_{\alpha,\beta}V^{j,i}_{\gamma,\delta} \,.
\label{eq:fermionic_stack_tensor}
\ee
This formula is similar to the corresponding one for qudits, but additional signs appear. 
In order to prove Eq.~\eqref{eq:fermionic_stack_tensor}, it is convenient to exploit the fiducial-state representation~\eqref{eq:fMPO}. Define
\begin{align}
U^{(N)}=\left\langle\eta^{U}_{N, 1} \eta^{U}_{1,2} \ldots \eta^{U}_{N-1, N} G^{U}_{1} \ldots G^{U}_{N} \right\rangle_{a}\,,\\
V^{(N)}=\left\langle\eta^{V}_{N, 1} \eta^{V}_{1,2} \ldots \eta^{V}_{N-1, N} G^{V}_{1} \ldots G^{V}_{N} \right\rangle_{a}\,,
\end{align}
where
\begin{align}
G^{U}_{x}&=\sum_{\alpha, \beta=0}^{D-1} \sum_{n,m=0}^{d-1} U_{\alpha , \beta}^{n,m} \left(\ell_{x}^{U}\right)^{\alpha} f_{x}^{n,m} \left(r^U_{x}\right)^{\beta}\,,\\
\eta^{U}_{x, y}&=\prod_{\mu=0}^{D-1} \frac{1}{2}\left(1+i r^{U}_{x, \mu} \ell^{U}_{y, \mu}\right)\,,
\end{align}
and analogously for $G^{V}_x$, $\eta_{x,y}^{V}$. Note that here we considered two different sets of Majorana operators, labeled by $U$, $V$. Defining $W^{(N)}=U^{(N)}V^{(N)}$, and exploiting the parity of the local tensors, we have
\be
W^{(N)}=\left\langle\eta^{UV}_{N, 1} \eta^{UV}_{1,2} \ldots \eta^{U}_{N-1, N} G^{UV}_{1} \ldots G^{UV}_{N} \right\rangle_{\tilde{a}}\,.
\label{eq:product}
\ee
The expectation value is now taken with respect to the vacuum $\ket{\Omega_{\tilde{a}}}$ of all auxiliary Majorana fermions, satisfying
\be
(r^{U}_{x-1,\mu}-i\ell^{U}_{x,\mu})\ket{\Omega_{\tilde{a}}}=(r^{V}_{x-1,\mu}-i\ell^{V}_{x,\mu})\ket{\Omega_{\tilde{a}}}=0\,.
\ee
Furthermore, we introduced $\eta_{x,y}^{UV}=\eta_{x,y}^{U}\eta_{x,y}^{V}$, and 
\begin{align}
G^{UV}_{x}&=G^{U}_{x}G^{V}_{x}=U^{k,j}_{\alpha,\beta} V^{j^\prime,i}_{\gamma,\delta}\left(\ell_{x}^U\right)^\alpha f_x^{k,j}\left(r_{x}^{U}\right)^\beta\nonumber\\
&\times \left(\ell_{x}^V\right)^\gamma f_x^{j^\prime,i}\left(r_{x}^{V}\right)^\delta=(-1)^{|\gamma|(|k|+|j|)}U^{k,j}_{\alpha,\beta}\nonumber\\
&\times  V^{j,i}_{\gamma,\delta}\left(\ell_{x}^U\right)^\alpha \left(\ell_{x}^V\right)^\gamma  f_x^{k,i} \left(r^{V}_{x}\right)^\delta\left(r^{U}_{x}\right)^\beta\,,
\label{eq:gUV}
\end{align}
where repeated indices are summed over. Note that in the last line we have moved $\left(r^{U}_x\right)^{\beta}$ to the left, for consistency with the conventions~\eqref{eq:short_hand_ell}, \eqref{eq:short_hand_r}. 

From Eqs.~\eqref{eq:product}, \eqref{eq:gUV}, we see that $W^{(N)}$ is already written in the form~\eqref{eq:fMPO}. Thus, repeating the derivation in the last subsection, we obtain that the local tensors appearing in~\eqref{eq:fMPO_pbc} can be read off directly from Eq.~\eqref{eq:gUV}, thus proving Eq.~\eqref{eq:fermionic_stack_tensor}.

Using the results above, it is also straightforward to derive the local tensor associated with $\mathbb{U}^{(N)}=\left[U^{(N)}\right]^\dagger U^{(N)}$. In particular, up to an even gauge transformation of the form 
\be
G_{(\alpha,\gamma),(\beta,\delta)}=(-1)^{|\alpha||\gamma|}\delta_{\alpha,\beta}\delta_{\gamma,\delta}\,,
\ee
a direct computation yields
\be
\mathbb{U}^{k,i}_{(\alpha,\gamma),(\beta,\delta)}=\sum_j(-1)^{|\beta|(|k|+|i|)} \bar{U}^{j,k}_{\alpha,\beta}U^{j,i}_{\gamma,\delta} \,.
\label{eq:unitary_stacked}
\ee
Accordingly, by taking $k=i$ and summing over $i$, we can also define the fermionic transfer matrix $E$, whose elements simply read
\be\label{eq:fermionic_transfer_matrix}
E_{(\alpha,\gamma),(\beta,\delta)}=\frac{1}{d}\sum_{i,j} \bar{U}^{j,i}_{\alpha,\beta}U^{j,i}_{\gamma,\delta}\,,
\ee 
completely analogously to the case of qudits. This definition allows us to write down a relation similar to Eq.~\eqref{eq:trace_transfer} for fermionic MPOs, which is important when discussing fMPUs. In particular, in the case of periodic boundary conditions, we immediately derive
\be
\frac{1}{d^N}{\rm tr}\left[U^{(N)\dagger}U^{(N)}\right]={\rm tr}\left[\left(Z\otimes Z\right)E_{\mathcal{U}}^N\right]\,,
\label{eq:trace_transfer_periodic}
\ee
while for antiperiodic boundary conditions we simply have
\be
\frac{1}{d^N}{\rm tr}\left[U^{(N)\dagger}U^{(N)}\right]={\rm tr}\left[E_{\mathcal{U}}^N\right]\,.
\label{eq:trace_transfer_antiperiodic}
\ee

Finally, one can define the blocking procedure also for fermionic tensor networks. Consider the fMPO~\eqref{eq:fermi_operator}, and suppose that we are interested in blocking pairs of neighboring sites. The annihilation operators associated with the ``doubled site'' at position $j$  are
\be
\tilde{a}_{j}^{(n,m)}=a^{n}_{2j} a^{m}_{2j+1}\,.
\ee
Now, it is easy to verify that
\be
\tilde{f}_{j}^{(n,m), (q,p)}=(-1)^{|m|(|p|+|q|)}f^{m,n}_{2j}f_{2j+1}^{q,p}\,,
\ee 
satisfy Eqs.~\eqref{eq:small_f_operators}, with the replacement $a_j^n\to \tilde{a}_{j}^{(n,m)}$. Accordingly, blocking leads to an fMPO, where the local tensor $\mathcal{U}_B$ is defined by the 
\be
\left(U^{(n,m),(p,q)}_B\right)_{\alpha,\beta} =U^{n,m} _{\alpha,\gamma} U^{p,q}_{\gamma,\beta}(-1)^{|m|(|p|+|q|)}\,.
\label{eq:signs_blocking}
\ee 

Based on these formulas, one can extend the graphical notation introduced in the case of qudits to fermionic TNs, where it is always understood that one should multiply the elements of the local tensors by the correct signs, as specified by the above equations. We note that a similar discussion can be carried out using the formalism of graded TNs, which has the advantage of offering a more transparent way to translate the algebraic formulation into a graphical one (and viceversa), cf.~Appendix~\ref{sec:graded_TN}.

\section{Graded tensor networks }
\label{sec:graded_TN}

In this appendix we review the formalism of graded tensor networks, as introduced recently in Refs.~\cite{bultinck2017fermionic,kapustin_spin_2018}. Here we only sketch the main definitions, and the interested reader is referred to the latter works for a more systematic treatment.

Given a Hilbert space $V$, we say that $V$ is $\mathbb{Z}_2$-graded if there is a parity operator $\mathcal{P}$ and a decomposition
\be
V=V^e\oplus V^o\,,
\ee
such that $\mathcal{P}\ket{v}=\ket{v}$ for all $\ket{v}\in V^e$,  and $\mathcal{P}\ket{v}=-\ket{v}$ for all $\ket{v}\in V^o$. We say that $V^e$, $V^o$  are the even and odd sectors of the graded space $V$, respectively. Let us now introduce a set of local $\mathbb{Z}_2$-graded Hilbert spaces $\mathcal{H}_j$, with decomposition
\be
\mathcal{H}_{j}=\mathcal{H}_j^e\oplus \mathcal{H}_j^o\,.
\ee
We denote by $\{\ket{i}\}_{i=0}^{d-1}$ a local basis, and by $|i|\in \{0,1\}$ the parity of each basis vector, so that $\mathcal{P}\ket{i}=(-1)^{|i|}\ket{i}$. In the following we will use $d_e$ and $d_o$ for the dimensions of $\mathcal{H}_j^e$ and $ \mathcal{H}_j^o$, respectively. Next, we introduce the notion of graded tensor product, denoted by $\otimes_\mathfrak{g}$. This is a tensor product equipped with a canonical isomorphism $\mathcal{F}$ between different ordering of the local spaces. Specifically, given the graded spaces $V$, $W$, the canonical isomorphism $\mathcal{F}$ is defined by
\begin{align}
	 \mathcal{F}:  V \otimes_{\mathfrak{g}} W &\rightarrow W \otimes_{\mathfrak{g}} V \,,\\
	 |i\rangle \otimes_{\mathfrak{g}}|j\rangle & \rightarrow(-1)^{|i||j|}|j\rangle \otimes_{\mathfrak{g}}|i\rangle \,.
\end{align}
Making use of $\mathcal{F}$, we can identify states in different ordered tensor products of the same local spaces. 

Importantly, the $\mathbb{Z}_2$-grading structure is inherited by the dual space $V^\ast$, generated by the basis $\{\bra{i}\}_{i=0}^{d-1}$, and the isomorphism $\mathcal{F}$ can be naturally extended to tensor products containing also local dual spaces.  Furthermore, a state in $V^\ast\otimes V$ can be naturally mapped onto $\mathbb{C}$  via the linear map
\be
\mathcal{C}: V^{*} \otimes_{\mathfrak{g}} V \rightarrow \mathbb{C}:\left\langle\psi\left|\otimes_{\mathfrak{g}}\right| \phi\right\rangle \rightarrow\langle\psi | \phi\rangle\,.
\ee
Note that $\mathcal{C}$ acts on the ordered graded tensor product $V^\ast\otimes V$. However, one can extend the action of $\mathcal{C}$ to a different ordering, by first applying the canonical isomorphism $\mathcal{F}$. For instance, using this prescription, we can compute
\be
\mathcal{C}\left(|i\rangle \otimes_{\mathfrak{g}}\langle j|\right)=(-1)^{|i||j|} \mathcal{C}\left(\left\langle j\left|\otimes_{\mathfrak{g}}\right| i\right\rangle\right)=(-1)^{|i|} \delta_{i, j}\,.
\ee

The above definitions allow us to generalize the construction of MPSs to graded Hilbert spaces. In particular, we can define the local tensors $\mathcal{A}$ by
\bea
\mathcal{A}[j]=&\sum_{i, \alpha, \beta} A_{\alpha, \beta}^{i} | \alpha )_{j-1} \gtimes|i\rangle_{j}\nonumber \\ 
& \gtimes (\beta |_{j} \in V_{j} \gtimes \mathcal{H}_{j} \gtimes\left(V_{j+1}\right)^{*}\,,
\label{eq:tensor}
\eea
where round kets and bras correspond to the bases of the auxiliary space $V _j  \simeq \mathbb{C}^{D_j}$ and its dual. An fMPS is then constructed by concatenating local tensors, and gluing them together by applying the contraction map $\mathcal{C}$~\cite{bultinck2017fermionic,kapustin_spin_2018}. In the case of periodic boundary conditions, for instance, this leads to
\be
|\psi\rangle=\mathcal{C}( \mathcal{A}[1] \gtimes \mathcal{A}[2] \gtimes \cdots \gtimes \mathcal{A}[N])\,.
\label{eq:gtn_mps}
\ee
A crucial requirement is that the local tensors $\mathcal{A}$ have well-defined parity, which we can assume to be even without loss of generality. This ensures that the fMPS have well-defined parity and that no ambiguity arises in the definition of some useful construction to manipulate them~\cite{bultinck2017fermionic}.

As in the case of qudits, we can introduce a natural graphical representation for graded TNs. For instance, local tensors $\mathcal{A}$ are depicted by
\be
\mathcal{A}=\sum_{i, \alpha, \beta} A_{\alpha, \beta}^{i} | \alpha ) \gtimes|i\rangle\gtimes (\beta | =
\begin{tikzpicture}[baseline={([yshift=-1.5ex]current bounding box.center)}, scale=0.7]
\draw (0,0) node[left]{} -- (0,0.5) node[left]{$i$};
\draw (-0.5,0) node[left]{$\alpha$} -- (0.5,0) node[right]{$\beta$};	
\filldraw[fill=white,draw=black] (0,0) circle [radius=0.15];
\end{tikzpicture}\,.
\ee
When different tensors are joined together, it is always understood that the linear map $\mathcal{C}$ is applied to the corresponding spaces. As usual, before applying $\mathcal{C}$, the local spaces in the graded tensor product must be reordered using the canonical isomorphism $\mathcal{F}$. Clearly, the above formalism can be applied directly to also treat MPOs in graded spaces. To this end, we simply replace the local tensor~\eqref{eq:tensor} with
\begin{align}
\mathcal{M}[j]=\sum_{i, \alpha, \beta} M_{\alpha, \beta}^{m,n} | \alpha )_{j-1} \gtimes|m\rangle_{j}\gtimes  \bra{n}_j& \gtimes (\beta |_{j}\,.
\label{eq:tensor_mpo}
\end{align}
Finally, the contraction in Eq.~\eqref{eq:gtn_mps} can be performed easily using the parity of the local tensors, leading to the more explicit form
\begin{align}
|\psi\rangle=\sum_{i_1,\ldots i_N=0}^{d-1}&{\rm tr}\left(\mathcal{P}_DA^{i_1}\cdots A^{i_N}\right)
|i_1\rangle\gtimes\nonumber\\
& \cdots \gtimes |i_N\rangle\,.
\label{eq:coordinate_form}
\end{align}
Here $\mathcal{P}_D$ is the parity operator acting on the auxiliary graded space with dimension $D$. Note that, due to the parity of $\mathcal{A}$, we have $\mathcal{P}_DA^{n}=(-1)^{|n|}A^{n}\mathcal{P}_D$. 

Based on Eq.~\eqref{eq:coordinate_form}, we see now an explicit correspondence between the fiducial-state and graded TN formalisms for fermionic MPSs.  Indeed, the coefficients in Eq.~\eqref{eq:coordinate_form} are the same appearing in Eq.~\eqref{eq:fMPS_pbc} (up to an even gauge transformation, corresponding to a reordering of the basis vectors in the auxiliary space). Furthermore, using the above prescription for the contraction of graded tensors, it is straightforward to derive, e.g., Eqs.~\eqref{eq:conjugate_tensor} and \eqref{eq:fermionic_stack_tensor} for the adjoint and composition of fMPOs. In fact, the fermionic operators $f^{n,m}$ introduced in Eq.~\eqref{eq:small_f_operators} simply correspond to $\ket{n}\gtimes\bra{m}$.

\section{Examples of fMPUs}

\subsection{Non-locality-preserving fMPUs}
\label{sec:tabulated_local_tensors}

In this appendix we provide further details on the fMPU constructed in Sec.~\ref{sec:nonlocal_fmpu}. First, for completeness, we  tabulate all $16$ matrices $U^{n,m}$ corresponding to the elements $U^{n,m}_{\alpha,\beta}$ in~\eqref{eq:u_00}, \eqref{eq:u_nm}. They read
\begin{widetext}
\begin{subequations}
\begin{align}
U^{0,0} =\left(
\begin{array}{ccc}
1 & 0 & 0 \\
0 & 1 & 0 \\
0 & 0 & 1 \\
\end{array}
\right)\,,\quad 
\ U^{0,1} =\left(
	\begin{array}{ccc}
	0 & 0 & 0 \\
	0 & 0 & 0 \\
	0 & 0 & 0 \\
	\end{array}
	\right)\,,\quad 
\ U^{0,2} =\left(
	\begin{array}{ccc}
	0 & 0 & 0 \\
	0 & 0 & 0 \\
	0 & 0 & 0 \\
	\end{array}
	\right)\,,\quad 
U^{0,3} =\left(
\begin{array}{ccc}
0 & 0 & 0 \\
0 & 0 & 0 \\
0 & 0 & 0 \\
\end{array}
\right)\,,
\end{align}
\begin{align}
 U^{1,0} =\left(
	\begin{array}{ccc}
	0 & 0 & 0 \\
	0 & 0 & 0 \\
	0 & 0 & 0 \\
	\end{array}
	\right)\,,\quad U^{1,1} =\left(
\begin{array}{ccc}
1 & 0 & 0 \\
0 & 0 & 0 \\
0 & 0 & 0 \\
\end{array}
\right)\,,\quad U^{1,2} =\left(
\begin{array}{ccc}
0 & 1 & 0 \\
0 & 0 & 0 \\
0 & 0 & 0 \\
\end{array}
\right)\,,\quad 
\ U^{1,3}=\left(
	\begin{array}{ccc}
	0 & 0 & 1 \\
	0 & 0 & 0 \\
	0 & 0 & 0 \\
	\end{array}
	\right)\,,
\end{align}
\begin{align}
 U^{2,0} =\left(
	\begin{array}{ccc}
	0 & 0 & 0 \\
	0 & 0 & 0 \\
	0 & 0 & 0 \\
	\end{array}
	\right)\,,\quad 
U^{2,1} =\left(
\begin{array}{ccc}
0 & 0 & 0 \\
1 & 0 & 0 \\
0 & 0 & 0 \\
\end{array}
\right)\,,\quad 
U^{2,2} =\left(
\begin{array}{ccc}
0 & 0 & 0 \\
0 & 1 & 0 \\
0 & 0 & 0 \\
\end{array}
\right)\,,\quad 
\ U^{2,3} =\left(
	\begin{array}{ccc}
	0 & 0 & 0 \\
	0 & 0 & 1 \\
	0 & 0 & 0 \\
	\end{array}
	\right)\,,
\end{align}
\begin{align}
U^{3,0} =\left(
\begin{array}{ccc}
0 & 0 & 0 \\
0 & 0 & 0 \\
0 & 0 & 0 \\
\end{array}
\right)\,,\quad 
\ U^{3,1} =\left(
	\begin{array}{ccc}
	0 & 0 & 0 \\
	0 & 0 & 0 \\
	1 & 0 & 0 \\
	\end{array}
	\right)\,,\quad
\  U^{3,2} =\left(
	\begin{array}{ccc}
	0 & 0 & 0 \\
	0 & 0 & 0 \\
	0 & 1 & 0 \\
	\end{array}
	\right)\,,\quad 
U^{3,3} =\left(
\begin{array}{ccc}
0 & 0 & 0 \\
0 & 0 & 0 \\
0 & 0 & 1 \\
\end{array}
\right)\,.
\end{align}
\end{subequations}
Next, we prove Eq.~\eqref{eq:action_1}. We have
\be
U\ket{\Omega}=\left(\sum\operatorname{tr}\left[Z U^{n_{1}, m_{1}} \ldots U^{n_{N}, m_{N}}\right] f_{1}^{n_{1}, m_{1}} \ldots f_{N}^{n_{N}, m_{N}}\right) \ket{\Omega}\,.
\ee
The vacuum is annihilated by $f^{n_j,m_j}$, unless $m_j=0$. Since $U^{n_j,0}$ is non vanishing only if $n_j=0$, there is a single non-vanishing term in the sum, corresponding to $n_j=m_j=0$ for all $j$ and
\be
U\ket{\Omega}=\left({\rm tr} \left[Z\, \left(\openone_3\right)^N\right]\right)\ket{\Omega}= \left(1+1-1\right)\ket{\Omega}=\ket{\Omega}\,.
\ee
Now, let us prove Eq.~\eqref{eq:action_2}. We have
\be
U  {a}^{j_1\dagger}_{i_1} {a}^{ j_2 \dagger}_{i_2}\cdots {a}^{j_P \dagger}_{i_P}\ket{\Omega}=\left(\sum\operatorname{tr}\left[Z U^{n_{1}, m_{1}} \ldots U^{n_{N}, m_{N}}\right] f_{1}^{n_{1}, m_{1}} \ldots f_{N}^{n_{N}, m_{N}}\right)   {a}^{j_1\dagger}_{i_1} {a}^{ j_2 \dagger}_{i_2}\cdots {a}^{j_P \dagger}_{i_P}\ket{\Omega}\,.
\ee
Now, consider $r\neq i_k$, for $k=1,2,\ldots, P$. Arguing as before, we obtain that the only non-vanishing terms in the sum correspond to $m_r=n_r=0$, for which $U^{n_r,m_r}=\openone_3$. Thus, the above expression simplifies to
\be
U  {a}^{j_1\dagger}_{i_1} {a}^{ j_2 \dagger}_{i_2}\cdots {a}^{j_P \dagger}_{i_P}\ket{\Omega}=\left(\sum\operatorname{tr}\left[Z U^{n_{i_1}, m_{i_1}} \ldots U^{n_{i_P}, m_{i_P}}\right] f_{i_1}^{n_{i_1}, m_{i_1}} \ldots f_{i_P}^{n_{i_P}, m_{i_P}}\right)   {a}^{j_1\dagger}_{i_1} {a}^{ j_2 \dagger}_{i_2}\cdots {a}^{j_P \dagger}_{i_P}\ket{\Omega}\,,
\ee
\end{widetext}
where the sum is now restricted over all the sequences $\{n_{i_\ell}\}_{\ell=1}^{P}$, $\{ m_{i_\ell}\}_{\ell=1}^{P}$ with $n_{i_\ell}, m_{i_\ell} =1,2,3$. From the explicit form of the local tensors, we see that this expression is, up to a sign, a translation of fermionic modes on an effective chain of $P$  sites, where to each site we associate three possible fermionic modes $a^{1}=a_2$, $a^{2}=a_1$, $a^{3}=a_1a_2$. This proves Eq.~\eqref{eq:action_2} is proven. We stress that we do not need to specify the sign $(-1)^\gamma$ appearing in Eq.~\eqref{eq:action_2}, because this does not affect unitarity.

\subsection{Majorana-shift operator}
\label{sec:majorana_shifts}

In this appendix we show that the fMPO $M^{(N)}_{AP}$ defined by Eq.~\eqref{eq:maj_anti} provides a valid representation for the translation of Majorana modes.

First of all, it is straightforward to verify that the fMPO corresponding to~\eqref{eq:maj_anti} is unitary. This can be seen, for instance, by constructing the matrices $\mathbb{U}^{k,i}$ in ~\eqref{eq:unitary_stacked}, and observing that they generate the identity operator. Next, we show
\begin{align}
M_{AP}^{(N)}\gamma_{2n}&=\gamma_{2n+1}M_{AP}^{(N)}\,,\quad n\neq N\,,\label{eq:to_prove_1}\\
M_{AP}^{(N)}\gamma_{2n-1}&=\gamma_{2n}M_{AP}^{(N)}\,, \label{eq:to_prove_2}\\
M_{AP}^{(N)}\gamma_{2N}& =-\gamma_{1}M_{AP}^{(N)}\,,\label{eq:to_prove_3}
\end{align}
where
\begin{align}
\gamma_{2n-1}&=a^{\dagger}_n+a_{n}=\tilde{\delta}_{i,j+1}f_n^{i,j}\,,\\
\gamma_{2n}&=i\left(a^{\dagger}_n-a_{n}\right)=i(-1)^j\tilde{\delta}_{i,j+1}f_n^{i,j}
\end{align}
are the Majorana modes introduced in Eq.~\eqref{eq:real_vs_majorana}, and where $f_{n}^{i,j}$ are given in Eq.~\eqref{eq:small_f_operators}. Here, we introduced the function
\be
\tilde{\delta}_{a,b}=
\begin{cases}
	1 & a\equiv  b \quad (\bmod \ 2)\,,\\
	0 & {\rm otherwise}\,.
\end{cases}
\ee
In order to prove Eqs.~\eqref{eq:to_prove_1}-\eqref{eq:to_prove_3} we can use the explicit form
\be
c^{i_{1}, \dots, i_{N}}_{j_{1}, \dots, j_{N}}=\frac{2}{\sqrt{2}}\frac{1}{2^{N/2}}i^{J}\tilde{\delta}_{I,J}\,,
\ee
where $|I|=|i_1|+\cdots |i_N|$, $|J|=|j_1|+\cdots |j_N|$. For instance, using the latter, the l.h.s. of~\eqref{eq:to_prove_1} can be written as (repeated indices are summed over)
\begin{widetext}
\begin{align}
M^{(N)}_{AP}\gamma_{2n}&=\frac{2}{\sqrt{2}}\frac{1}{2^{N/2}}i^{|j_1|+\cdots +|\ell_n|+\cdots +|j_N|}\tilde{\delta}_{I,|j_1|\cdots+|\ell_n|+\cdots +|j_N|} f^{i_1,j_1}\cdots f^{i_n,\ell_n}\cdots  f^{i_N,j_N}\tilde{\delta}_{\ell_n,j_n+1}i(-1)^{j_n}f^{\ell_n,j_n}\nonumber\\
&=\frac{2}{\sqrt{2}}\frac{1}{2^{N/2}}i^{J}(-i)^{|\ell_n|-|j_n|}\tilde{\delta}_{\ell_n,j_n+1}i(-1)^{j_n}(-1)^{\sum_{m>n}(|i_m|+|j_m|)}\tilde{\delta}_{I,J+1} f^{i_1,j_1}\cdots  f^{i_N,j_N}\nonumber\\
&=-\frac{2}{\sqrt{2}}\frac{1}{2^{N/2}}\tilde{\delta}_{I,J+1}i^{J} (-1)^{\sum_{m>n}(|i_m|+|j_m|)} f^{i_1,j_1}\cdots  f^{i_N,j_N}\,,
\label{eq:lhs}
\end{align}
while, in the same way, one can compute
\begin{align}
\gamma_{2n+1}M^{(N)}_{AP}&=-\frac{2}{\sqrt{2}}\frac{1}{2^{N/2}}\tilde{\delta}_{I,J+1}i^J(-1)^{\sum_{m>n}(|i_m|+|i_n|)}f^{i_1,j_1}\cdots  f^{i_N,j_N}\,,
\label{eq:rhs}
\end{align}
\end{widetext}
yielding~\eqref{eq:to_prove_1}. Eqs.~\eqref{eq:to_prove_2} and \eqref{eq:to_prove_3} can be proven in the same way. Thus, using unitarity, we obtain $M^{(N)}_{AP} \gamma_{n} M^{(N)\dagger}_{AP}=\gamma_{n+1}$ for $n\neq 2N$ and $M^{(N)}_{AP} \gamma_{2N} M^{(N)\dagger}_{AP}=-\gamma_{1}$.

Similarly, we can analyze the operator $M^{(N)}_P$ corresponding to the coefficients~\eqref{eq:maj_per}. First, constructing the fMPO representation for $M_P^{(N)}M_P^{(N)\dagger}$, and paying attention to the boundary operator $X$, it is easy to show that $M^{(N)}_P$ is unitary. Next, rewriting the coefficients~\eqref{eq:maj_per} as
\be
c^{i_{1}, \dots, i_{N}}_{j_{1}, \dots, j_{N}}=\frac{2}{\sqrt{2}}\frac{1}{2^{N/2}}(-i)^{J}\tilde{\delta}_{I,J+1}\,,
\ee
and by means of calculations similar to those reported in Eqs.~\eqref{eq:lhs}, \eqref{eq:rhs}, we can also show
\begin{align}
M_{P}^{(N)}\gamma_{n}&=\gamma_{n+1}M_{P}^{(N)}\,,
\end{align}
with the identification $\gamma_{2N+1}=\gamma_1$. Putting all together, we obtain that the fMPO corresponding to the coefficients \eqref{eq:maj_per} provides a valid representation for the Majorana shift with periodic boundary conditions.

\section{Graded canonical form for antiperiodic boundary conditions}
\label{sec:graded_canonical_form}

In this section we provide all the proofs that are needed in order to establish the existence and uniqueness of the graded canonical form for antiperiodic boundary conditions. Throughout this section, we will always work with even tensors, unless specified otherwise, and denote by $\ket{V^{(N)}(\mathcal{A})}$ the fMPS with antiperiodic boundary conditions generated by $\mathcal{A}$. Furthermore, we will denote by $Z$ the parity operator  acting on the space of the matrices~$A^i$.

First, recalling definition~\ref{def:git} for graded irreducible tensors, we introduce the Graded Irreducible Form (GIF) as follows.

\begin{defn}
	We say that an even tensor $\mathcal{A}$ generating an fMPS is in Graded Irreducible Form if: $(i)$: the matrices are of the form $A^n=\oplus_{k=1}^r \mu_k A_k^n $, where $\mu_k\in \mathbb{C}$ and the spectral radius of the transfer matrix $E_k$ associated with $A_k^n$ is equal to one; ($ii$) the parity operator $Z$ has the same block structure as $A^n$ and, for all $k$, $\mathcal{A}_k$ is a graded irreducible (even) tensor. 
\end{defn}

\begin{prop}
	\label{prop:GIF}
	Given any even tensor $\mathcal{A}$, one can always find another even tensor $\mathcal{B}$ in GIF generating the same fMPS with antiperiodic boundary conditions.
\end{prop}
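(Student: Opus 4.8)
The plan is to follow the iterative block-decomposition underlying the bosonic canonical form of Ref.~\cite{cirac2017matrixRenormalization}, restricting every step to \emph{graded} invariant subspaces and \emph{even} gauge transformations, so that the $\mathbb{Z}_2$-grading descends to each sub-block. I would argue by induction on the bond dimension. If $\mathcal{A}$ is graded irreducible there is nothing to decompose and I proceed directly to the normalization step. Otherwise, by Definition~\ref{def:git} there is a proper graded subspace $W$ left invariant by all $A^n$. Since $[P_W,Z]=0$ we also have $[\openone-P_W,Z]=0$, so the orthogonal complement $W^\perp$ is graded as well; choosing an orthonormal basis of parity eigenvectors adapted to $V=W\oplus W^\perp$ gives an even gauge $X$ in which $Z=Z_W\oplus Z_{W^\perp}$ and each matrix is block upper triangular,
\[
A^n=\begin{pmatrix} B^n & C^n\\ 0 & D^n\end{pmatrix}.
\]
Restricting the evenness relation~\eqref{eq:commutation}, $ZA^n=(-1)^{|n|}A^nZ$, to the two diagonal blocks shows that $B^n$ and $D^n$ are again even tensors, now with respect to $Z_W$ and $Z_{W^\perp}$. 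Iterating on the diagonal blocks—each of strictly smaller dimension—terminates with a global block-upper-triangular form whose diagonal blocks $A_1^n,\dots,A_r^n$ are even and graded irreducible.

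Next I would remove the off-diagonal couplings. Here it is essential that for antiperiodic boundary conditions the coefficients are the ordinary matrix trace $c_{n_1,\dots,n_N}=\mathrm{tr}(A^{n_1}\cdots A^{n_N})$ of Eq.~\eqref{eq:fMPS_apbc}, with no insertion of $Z$. Because a product of block-upper-triangular matrices is again block upper triangular, with diagonal blocks equal to the products of the diagonal blocks, the trace collapses to $\sum_{k=1}^r\mathrm{tr}(A_k^{n_1}\cdots A_k^{n_N})$, independently of the entries above the diagonal. Setting all off-diagonal blocks to zero therefore leaves every coefficient—and hence the fMPS—unchanged, and the resulting block-diagonal tensor $\oplus_k A_k^n$ is still even with respect to $Z=\oplus_k Z_k$.

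It then remains to enforce condition $(i)$ of the GIF. For each block let $\lambda_k\ge 0$ be the spectral radius of $E_k=\sum_n A_k^n\otimes\bar A_k^n$, and recall the identity $\mathrm{tr}(E_k^N)=\sum_{\{n_i\}}|\mathrm{tr}(A_k^{n_1}\cdots A_k^{n_N})|^2$. If $\lambda_k=0$ then $E_k$ is nilpotent, so this sum of non-negative terms vanishes for all sufficiently large $N$; the block then generates the zero state and can be discarded. If $\lambda_k>0$ I rescale $A_k^n\mapsto A_k^n/\sqrt{\lambda_k}$, which fixes the spectral radius to one while absorbing the scalar $\mu_k=\sqrt{\lambda_k}$ into the block prefactor; this operation is manifestly even. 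The outcome $B^n=\oplus_k\mu_k A_k^n$, with $Z$ block diagonal in the same pattern, is then in GIF and generates the same fMPS.

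I expect the genuine work to lie not in the (bosonic) skeleton of the argument but in the parity bookkeeping: checking at each step that a graded invariant subspace carries a graded orthogonal complement and an even adapting gauge, so that evenness and the grading pass to every sub-block. The conceptual difference from the qudit case is that the recursion must stop at \emph{graded} irreducibility rather than full irreducibility—a surviving, necessarily non-graded, invariant subspace is exactly what forces the special structure of Eq.~\eqref{eq:GItensors} and must not be split—and that the off-diagonal-dropping step relies on the $Z$-free antiperiodic trace of Eq.~\eqref{eq:fMPS_apbc}, which is why the statement is phrased for antiperiodic boundary conditions.
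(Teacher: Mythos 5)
Your proof is correct and takes essentially the same route as the paper's: an iterative decomposition along graded invariant subspaces (using that $[P_W,Z]=0$ makes the complement graded and the adapted gauge even), with the off-diagonal couplings dropped because the antiperiodic trace of Eq.~\eqref{eq:fMPS_apbc} only sees the diagonal blocks---precisely the paper's observation that $A^i$ and $P A^i P + Q A^i Q$ generate the same state. You additionally spell out the spectral-radius normalization, including the discarding of blocks with nilpotent transfer matrix, a step the paper leaves implicit in condition $(i)$ of the GIF.
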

\begin{proof}
Let $Z$ be the parity operator on the auxiliary space. First, if there are no non-trivial invariant graded subspaces, $\mathcal{A}$ is a graded irreducible tensor, and thus already in GIF. Otherwise, take $P$ to be the orthogonal projector onto an invariant graded subspace that does not contain any other non-trivial invariant graded subspace, and define $Q=\openone -P$. Since $[P,Z]=0$, setting $Z_P=PZP$ and $Z_Q=QZQ$ we have $Z_P (PA^{i}P)=(-1)^{|i|}(PA^{i}P)Z_P$, $Z_Q (QA^{i}Q)=(-1)^{|i|}(QA^{i}Q)Z_Q$. Furthermore, $Z_P^2=P$, $Z_Q^2=Q$. Thus, it is immediate to see that the tensors $A^{i}$ and $PA^{i}P+ QA^{i}Q$ generate the same state, and both $PA^{i}P$ and $QA^{i}Q$ generate valid fMPSs, with parity operators on the auxiliary spaces given by $Z_P$ and $Z_Q$, respectively. We can now consider $QA^{i}Q$ and decompose it into smaller blocks using the same steps. This procedure can be iterated until we end up with a tensor in  GIF.	
\end{proof}

The following statement already appeared in Ref.~\cite{bultinck2017fermionic}, but we give here a detailed proof.

\begin{prop}\label{prop:degenerate_GIF}
	Let $\mathcal{A}$ be a graded irreducible tensor, such that there exists a non-graded invariant subspace for all $A^i$. Denoting by $D$ the dimension of the matrices $A^{i}$, and by $D_e$, $D_o$ the dimensions of  the even and odd subspaces, we have $D=2D_e$, and there exists an even gauge transformation such that 
	\begin{subequations}\label{eq:GIF_explicit}
		\begin{align}
		A^i & = \left(\begin{matrix} B^i & 0 \\ 0 & B^i \end{matrix}\right) = \openone \otimes B^i\qquad \text{ if } |i| = 0 \\
		A^i & = \left(\begin{matrix} 0 & B^i \\ B^i & 0 \end{matrix}\right) = \sigma^x \otimes B^i \qquad \text{ if } |i| = 1\,,
		\end{align}
		\label{eq:graded_canonical_form_deg}
	\end{subequations}
	and  $Z=\sigma^z\otimes \openone$. Furthermore, the following are true:
	\begin{enumerate}
		\item there exists an index $i$, such  that $|i|\equiv 1$ and $B^i \neq 0$; \label{condition:1}
		\item there is no projector $P$ such that
		\bea
		B^{i_1}\cdots B^{i_p}P&=& PB^{i_1}\cdots B^{i_p}P \qquad	\forall \{i_k\}_{k=1}^p\,,
		\label{eq:projector_bi}
		\eea
		namely there is no invariant subspace for the algebra generated by the matrices $B^i$; \label{condition:2}
		\item there is no projector $P_e$ such that
		\bea
		B^{i_1}\cdots B^{i_p}P_e&=& P_eB^{i_1}\cdots B^{i_p}P_e\nonumber\\
		\forall \{i_k\}_{k=1}^p&:& \sum_{r=1}^p |i_r|\equiv 0\,,
		\label{eq:projector_even_bi}
		\eea
		namely there is no invariant subspace for the even  subalgebra generated by the matrices $B^i$. \label{condition:3}
	\end{enumerate} 
\end{prop}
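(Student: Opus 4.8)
The plan is to exploit the hypothesis — the existence of a non-graded invariant subspace — to manufacture a \emph{second} invariant subspace via the parity operator, and then to repeatedly invoke graded irreducibility (Definition~\ref{def:git}) to pin down the structure. First I would fix a basis in which $Z=\mathrm{diag}(\openone_{D_e},-\openone_{D_o})$ and let $W$ be any proper non-graded invariant subspace, so that $A^iW\subseteq W$ for all $i$ while $[P_W,Z]\neq 0$. The key observation is that, because $\mathcal{A}$ is even ($ZA^i=(-1)^{|i|}A^iZ$), the image $ZW$ is again invariant: for $w\in W$, $A^i(Zw)=(-1)^{|i|}Z(A^iw)\in ZW$. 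Now $W\cap ZW$ and $W+ZW$ are both $Z$-invariant, hence graded, and both are invariant under all $A^i$; graded irreducibility forces each to be $\{0\}$ or $V$. Since $W$ is non-graded we have $ZW\neq W$, which rules out $W\cap ZW=V$ and $W+ZW=\{0\}$, leaving $W\cap ZW=\{0\}$ and $W+ZW=V$. Therefore $V=W\oplus ZW$ with $\dim W=\dim ZW=D/2$.

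Next I would extract the explicit form. Choosing a basis $\{e_j\}_{j=1}^{D/2}$ of $W$ and setting $f_j^{\pm}=\tfrac12(\openone\pm Z)e_j$, a short argument shows $W\cap V^{o}\subseteq W\cap ZW=\{0\}$ (an odd vector $v\in W$ obeys $Zv=-v\in ZW$, so $v\in W\cap ZW$), and symmetrically $W\cap V^{e}=\{0\}$; hence the $f_j^{+}$ are $D/2$ independent vectors of $V^{e}$ and the $f_j^{-}$ are $D/2$ independent vectors of $V^{o}$. Since $\dim V^e+\dim V^o=D$, this forces $D_e=D_o=D/2$ and shows each set is a basis of its sector. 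In the basis $\{f_j^{+},f_j^{-}\}$ one has $Z=\sigma^z\otimes\openone$, and writing $A^ie_j=\sum_k (B^i)_{kj}e_k$, a direct computation of $A^if_j^{\pm}$ using $A^iZ=(-1)^{|i|}ZA^i$ gives $A^i=\openone\otimes B^i$ for $|i|=0$ and $A^i=\sigma^x\otimes B^i$ for $|i|=1$, which is precisely~\eqref{eq:graded_canonical_form_deg}. The change of basis from $\{e_j\}$ to $\{f_j^\pm\}$ has columns lying in $V^e$ or $V^o$, hence is block diagonal with respect to $Z$, i.e.\ an even gauge transformation, as claimed.

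Finally, each of the three properties follows by building a graded invariant subspace and contradicting irreducibility. For property~\ref{condition:1}, if $B^i=0$ for every odd $i$ then each $A^i=\openone\otimes B^i$ preserves $V^{e}$, a proper nonzero graded invariant subspace — impossible. For property~\ref{condition:2}, a proper nonzero $U\subseteq\mathbb{C}^{D/2}$ invariant under the algebra generated by the $B^i$ yields the graded invariant subspace $\mathbb{C}^2\otimes U$, whose projector $\openone\otimes P_U$ commutes with $Z$. Property~\ref{condition:3} is the most delicate: given $U_e$ invariant under the \emph{even} subalgebra, I would let $Q$ be the smallest even-subalgebra-invariant subspace containing all $B^iU_e$ with $|i|=1$, and form $S=(|+\rangle\otimes U_e)\oplus(|-\rangle\otimes Q)$. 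Invariance under even $A^i$ is immediate; under odd $A^i$ one checks $B^iU_e\subseteq Q$ by construction and $B^iQ\subseteq U_e$ because a generic element of $Q$ is $MB^ju$ with $M$ even and $|j|=1$, so $B^iMB^j$ has parity $1+0+1\equiv 0$ and $B^iMB^ju\in U_e$. Since $S$ is graded, proper and nonzero, this again contradicts irreducibility.

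The main obstacle I expect is keeping the fermionic parity accounting straight in property~\ref{condition:3}: the invariance check closes only because odd generators recombine in pairs into the even subalgebra, and one must separately verify that both transitions $U_e\to Q$ and $Q\to U_e$ are respected. The preceding structural steps, by contrast, are all driven by the single mechanism of pairing an invariant subspace with its $Z$-image and appealing to graded irreducibility, so I expect them to go through routinely.
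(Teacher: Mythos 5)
Your proof is correct, and its structural half takes a genuinely different route from the paper's. The paper works with orthogonal projectors: it takes $P$ onto a minimal invariant subspace, sets $Q=ZPZ$, and uses a Jordan's-lemma decomposition into jointly invariant one- and two-dimensional graded blocks to construct an even gauge $Y$ with $P_YQ_Y=0$; graded irreducibility then forces $P+Q=\openone$, whose solution $P=\tfrac12\begin{pmatrix}\openone & U\\ U^\dagger & \openone\end{pmatrix}$ with $U$ unitary yields $D_e=D_o$, and a further even gauge $\openone\oplus U$ produces Eq.~\eqref{eq:graded_canonical_form_deg}. You instead apply graded irreducibility directly to the invariant subspaces $W\cap ZW$ and $W+ZW$, obtaining $V=W\oplus ZW$ immediately --- no minimality of $W$, no Jordan's lemma, no orthogonalization step --- and you read off the canonical form from the parity components $f_j^\pm=\tfrac12(\openone\pm Z)e_j$ of a basis of $W$; the resulting gauge is merely invertible rather than unitary, which suffices since gauge transformations in this setting need only be even and invertible. (One expository slip: $W\cap ZW=V$ and $W+ZW=\{0\}$ are excluded by $W$ being proper and nonzero, not by non-gradedness; what non-gradedness, or indeed graded irreducibility alone, excludes is $ZW=W$.) On the ``furthermore'' items the two proofs converge: your argument for item 1 fills in a point the paper leaves implicit; the paper derives item 2 as a corollary of item 3 while you prove it directly via $\mathbb{C}^2\otimes U$, both trivially; and your item-3 construction is the paper's in disguise, since your $Q$ coincides with its $\mathcal{S}_o$ (any odd product factors as an even product times a single odd $B^j$ applied after an even product), so the parity bookkeeping you flagged as delicate is exactly the paper's invariance check in Eqs.~\eqref{invariantP}, merely phrased with subspaces instead of orthogonal projectors.
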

\begin{proof}
	Let $P$ be the orthogonal projector onto a proper non-graded invariant subspace of $A^{j}$, $\mathcal{S}_1$, containing no smaller non-trivial invariant subspace. Defining $Q = Z P Z$ we have $[P,Z] \neq 0$, and thus, if $Q$ projects onto $\mathcal{S}_2=Z\mathcal{S}_1$, we have $\mathcal{S}_1\neq \mathcal{S}_2$. If $Y$ is an even invertible matrix, then $A_Y^j=YA^jY^{-1}$ leave invariant the subspaces $YS_1$, $YS_2$ with corresponding orthogonal projector $P_Y$, $Q_Y$. Note that $P_Y\neq Y P Y^{-1}$, since $Y P Y^{-1}$ is not Hermitian and that, since $Y$ is even, $Q_Y=ZP_YZ$. We claim that it is always possible to find an even $Y$ such that $P_Y Q_Y=P_Y ZP_YZ=0$.
	
	 To show this, it is useful to construct a Jordan decomposition of the Hilbert space for $P$, $Q$ into one one- and two-dimensional orthogonal graded subspaces that are invariant under both $P$ and $Q$. To this end, we note that $\Pi=P+Q$ is Hermitian and commutes with $Z$. Hence, there is a common basis of eigenstates. Let $\ket{\phi}$ be an element of this basis, so that $\Pi\ket{\phi}=\lambda \ket{\phi}$ and $Z\ket{\phi}=\pm \ket{\phi}$. If $P\ket{\phi}\in {\rm span}(\ket{\phi})$, then ${\rm span}(\ket{\phi})$ is a one-dimensional invariant graded subspace for both $P$, $Q$. Otherwise, it is easy to see that $\ket{\phi}$, $P\ket{\phi}$ generate a two-dimensional space left invariant by $P$ and $Q$. If $\lambda\neq 0$, this space is generated by $P\ket{\phi}$, $ZP\ket{\phi}$, and taking the even and odd combination of them we see that the subspace is graded. Otherwise $P\ket{\phi}=\pm ZP\ket{\phi}$ so that $P\ket{\phi}$ has also well-defined parity, and so in any case the subspace is graded. This procedure gives us a basis of the Hilbert space $\mathcal{B}=\{\ket{v_1},\ket{w_1}, \ldots , \ket{v_r}, \ket{w_r}, \ket{u_1}, \ldots \ket{u_k}\}$, where $(\ket{v_j},\ket{w_j})$ and $\ket{u_j}$ generate the two- and one-dimensional invariant graded subspaces, respectively. Note that this basis is not orthogonal since $\braket{v_j|w_j}\neq 0$.
	
	Now, let $R_i$ be the orthogonal projector onto a graded invariant subspace for $P$, $Q$ of dimension $2$. Since $[R_i,Z]=0$, we have that $R_iPR_i=\ket{v_i}\bra{v_i}$ and $R_iQR_i=\ket{w_i}\bra{w_i}$,  with $\ket{w_i}=Z\ket{v_i}$, and $\ket{v_i}=\alpha_i\ket{a_i}+\beta_i\ket{b_i}$, where $\ket{a_i}$ ($\ket{b_i}$) is even (odd). It must  be $\alpha_i,\beta_i \neq 0$, otherwise $\ket{v_i}$, $\ket{w_i}$ are proportional. Then, we can define the two-dimensional matrix $Y_i={\rm diag}(\beta_i,\alpha_i)$ which is invertible. Next, for all one-dimensional blocks we define $Y_i=\openone$. We claim that $Y=\bigoplus_i Y_i$ is the desired gauge matrix. Indeed, consider the basis $\mathcal{B}_Y=\{Y \ket{v_1}, Y\ket{ w_1}, \ldots ,Y\ket{ v_r}, Y\ket{ w_r},\ket{ u_1}, \ldots,\ket{u_k}\}$. By construction, this is an orthogonal (although not normalized) basis: indeed, $\braket{v_j|Y^{\dagger} Y|w_j}= |\alpha_i\beta_i|^2-|\alpha_i\beta_i|^2=0$, while other orthogonality relations are immediate. Furthermore, in this basis $P_Y$, and $Q_Y$ are diagonal matrices, with elements that are only $0$ or $1$. Thus, since $P_Y\neq P_Q$, necessarily $P_YQ_Y=0$: if this is not the case, since $P_Y$, and $P_Q$ have the same rank, then $P_YQ_Y$ would project onto an invariant subspace of $\mathcal{S}_1$ with strictly smaller dimension, which is a contradiction. 
	
	Thus, up to an even gauge transformation, we can assume that $PQ=0$. Now, since $P+Q$ is an invariant subspace projector that commute with $Z$, it must be equal to the identity, because $\mathcal{A}$ is graded irreducible. In the basis where $Z$ has the form~\eqref{eq:parity_operator}, this imposes the following structure on $P$ and $Q=\openone-P=Z P Z$
\begin{align}
P&=\frac{1}{2} \begin{bmatrix}
\openone & U\\
U^\dagger & \openone
\end{bmatrix},&Q&=\frac{1}{2} \begin{bmatrix}
\openone & -U\\
-U^\dagger & \openone
\end{bmatrix}\,,
\end{align}
where idempotence requires $U U^\dagger = U^\dagger U = \openone$. Hence, this is only possible if $D_e=D_o$ and $U$ is a unitary matrix. Now, the condition $A^iP=PA^iP$ implies that the matrices $A^i$ are of the following form
\begin{equation}\label{eq:oddalgebrageneric}
\begin{split}
A^i &= \left(\begin{matrix} C^i & 0 \\ 0 & U^\dagger C^i U \end{matrix}\right) \;\;\;\;\text{ if } |i| = 0 \\
A^i &= \left(\begin{matrix} 0 & C^i \\ U^\dagger C^i U^\dagger & 0 \end{matrix}\right) \;\;\;\;\text{ if } |i| = 1\, .
\end{split}
\end{equation}
We can then map this to the form~\eqref{eq:graded_canonical_form_deg} using an even-parity gauge $\openone \oplus U$. 
	
	Next, condition~\ref{condition:2} follows from \ref{condition:3}, so we only need to prove the former. We do this by contradiction. We denote by $S^{(e)}(B^{i})$ the even subalgebras generated by $B^{i}$, while we define $S^{(o)}(B^{i})$ to be the linear space generated by the odd products  $B^{i_1}\ldots B^{i_k}$ with $\sum_j|i_j|\equiv 1$ ($\bmod$ $2$) (note that $S^{(o)}(B^{i})$  is not an algebra, since it is not closed under product). Let $P_e$ be the projector onto an invariant subspace for $S^{(e)}(B^{i})$, denoted by $\mathcal{S}_e$, containing no smaller invariant subspace, and define the linear space 
	\be
	\mathcal{S}_o=\{\ket{w}= B^{(o)}\ket{v_e}:\ket{v_e}\in \mathcal{S}_e, B^{(o)}\in S^{(o)}(B^{i})\}\,.
	\ee
	Namely $\mathcal{S}_o$ is the space generated by applying all possible odd products to $\mathcal{S}_e$. We denote by $P_o$ the orthogonal projector onto $\mathcal{S}_o$. By definition
	\be\label{eq:step_1}
	B^{i}P_e=P_oB^iP_e\,, \quad \forall |i|=1\,,
	\ee
	and also
	\be\label{eq:step_3}
	B^{i}P_o=P_eB^iP_e\,, \quad \forall |i|=1\,.
	\ee
	To see this, we note that $\ket{w}\in\mathcal{S}_o\Rightarrow \ket{w}= B^{(o)}\ket{v_e}$, with $\ket{v_e}\in \mathcal{S}_e$, and $B^{(o)}\in S^{(o)}(B^{i})$, and thus, $B^{i}\ket{w}=B^{i} B^{(o)}\ket{v_e}$. Since $|i|=1$, $ B^{i} B^{(o)}$ is in the even algebra, and thus $ B^{i} B^{(o)}\ket{v_e}\in \mathcal{S}_e$ due to Eq.\eqref{eq:projector_even_bi}. In the same way, we also have
	\be\label{eq:step_2}
	B^{i}P_o=	P_o B^{i}P_o\,, \quad \forall |i|=0\,.
	\ee
	Indeed, if $\ket{w}\in\mathcal{S}_o$, then $\ket{w}= B^{(o)}\ket{v_e}$, with $\ket{v_e}\in \mathcal{S}_e$, and $B^{(o)}\in S^{(o)}(B^{i})$, and thus $B^{i}\ket{w}=B^{i}B^{(o)}\ket{v_e}$.  Since $|i|=0$, $ B^{i} B^{(o)}$ is odd, and thus by definition of $\mathcal{S}_o$, $B^{i}\ket{w}\in \mathcal{S}_o$. Now, consider the projector
	\be
	\tilde{P}=
	\left(\begin{matrix} P_e & 0 \\ 0 & P_o \end{matrix}\right)\,. 
	\ee
	Clearly, $[\tilde{P},Z]=0$. Furthermore,  $A^{i}\tilde{P}=\tilde{P}A^{i}\tilde{P}$, because this condition is equivalent to
	\begin{align}\label{invariantP}
	B^i P_e &= P_e B^i P_e, \forall |i|=0,& B^i P_o &= P_o B^i P_o, \forall |i|=0,\nonumber\\
	B^i P_e &= P_o B^i P_e, \forall |i|=1,& B^i P_o &= P_e B^i P_o, \forall |i|=1.
	\end{align}
	The first equation is verified by definition, while the others correspond to ~\eqref{eq:step_1}, \eqref{eq:step_3}, \eqref{eq:step_2}. The existence of $\tilde{P}$ contradicts the assumptions, and we thus conclude that also property~\ref{condition:3} is true.

\end{proof}

\begin{cor}\label{prop:other_implication}
Let $\mathcal{A}$ be an even tensor in the form~\eqref{eq:GIF_explicit}. Then, $\mathcal{A}$ is graded irreducible if and only if there is no projector $P_e$ such that
\bea
B^{i_1}\cdots B^{i_p}P_e&=& P_eB^{i_1}\cdots B^{i_p}P_e\nonumber\\
\forall \{i_k\}_{k=1}^p&:& \sum_{r=1}^p |i_r|\equiv 0\,.
\eea
\end{cor}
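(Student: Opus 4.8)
The plan is to prove the two implications separately, since the forward direction is essentially already contained in Proposition~\ref{prop:degenerate_GIF} and the genuinely new content is the converse. I would first record that any tensor of the form~\eqref{eq:GIF_explicit} automatically admits the non-graded invariant subspace $W_{+}=\mathrm{span}\{\ket{+}\otimes\ket{v}:\ket{v}\in\mathbb{C}^{D_e}\}$, where $\ket{+}$ is the $+1$ eigenvector of $\sigma^{x}$: since $\openone\ket{+}=\sigma^{x}\ket{+}=\ket{+}$, both the even blocks $\openone\otimes B^{i}$ and the odd blocks $\sigma^{x}\otimes B^{i}$ leave $W_{+}$ invariant, while $ZW_{+}\neq W_{+}$ shows it is not graded. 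Hence, whenever $\mathcal{A}$ is graded irreducible the hypotheses of Proposition~\ref{prop:degenerate_GIF} are met, and property~\ref{condition:3} yields directly the non-existence of a projector $P_{e}$ obeying~\eqref{eq:projector_even_bi}. Equivalently, the construction of the graded projector $\tilde{P}$ carried out there shows that any such $P_{e}$ produces a proper graded invariant subspace, contradicting graded irreducibility; this settles the implication ``graded irreducible $\Rightarrow$ no $P_{e}$''.

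The real work is the converse, which I would prove in contrapositive form: assuming $\mathcal{A}$ is not graded irreducible, I must produce a projector $P_{e}$ satisfying~\eqref{eq:projector_even_bi}. Let $P_{S}$ be the orthogonal projector onto a proper graded invariant subspace, so that $[P_{S},Z]=0$. Writing the auxiliary space as $\mathbb{C}^{2}\otimes\mathbb{C}^{D_{e}}$ with $Z=\sigma^{z}\otimes\openone$, this commutation forces the block-diagonal form $P_{S}=\mathrm{diag}(P_{e},P_{o})$ with $P_{e},P_{o}$ projectors on $\mathbb{C}^{D_{e}}$. Imposing $A^{i}P_{S}=P_{S}A^{i}P_{S}$ and reading off the $2\times 2$ blocks reproduces exactly the four relations~\eqref{invariantP}: the even generators ($|i|=0$) leave both $\mathrm{Im}\,P_{e}$ and $\mathrm{Im}\,P_{o}$ invariant, whereas the odd generators ($|i|=1$) exchange $\mathrm{Im}\,P_{e}$ and $\mathrm{Im}\,P_{o}$. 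A clean way to package the iteration is a one-line induction on the word length $p$ built on the unified relation $B^{i}P_{t}=P_{t+|i|}B^{i}P_{t}$, with $P_{0}=P_{e}$, $P_{1}=P_{o}$ and indices mod $2$; this propagates to $B^{i_{1}}\cdots B^{i_{p}}P_{t}=P_{t+\pi}B^{i_{1}}\cdots B^{i_{p}}P_{t}$ for a word of total parity $\pi$. Specializing to $t=0$ and an even word gives $B^{i_{1}}\cdots B^{i_{p}}P_{e}=P_{e}B^{i_{1}}\cdots B^{i_{p}}P_{e}$ for all $\sum_{r}|i_{r}|\equiv 0$, and likewise for $P_{o}$, so both $P_{e}$ and $P_{o}$ are invariant for the even subalgebra generated by the $B^{i}$.

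The step I expect to be the main obstacle is not this algebra but rather ensuring that the projector produced is \emph{proper} (neither $0$ nor $\openone$), since only a nontrivial projector witnesses reducibility of the even subalgebra. It suffices that at least one of $P_{e},P_{o}$ be nontrivial. If both were trivial, then $P_{S}$ being proper and nonzero would force $(P_{e},P_{o})\in\{(\openone,0),(0,\openone)\}$, i.e.\ $P_{S}$ projecting onto the pure even or pure odd sector; but then the odd-parity relations in~\eqref{invariantP}, e.g.\ $B^{i}P_{e}=P_{o}B^{i}P_{e}$ for $|i|=1$, collapse to $B^{i}=0$ for every odd $i$. This is precisely the degenerate situation excluded by property~\ref{condition:1}, which characterizes the genuine form~\eqref{eq:GIF_explicit}: a tensor with all odd $B^{i}$ vanishing splits as a direct sum over the two $Z$-sectors and is therefore not of the irreducible degenerate type. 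Consequently at least one of $P_{e},P_{o}$ is a nontrivial projector invariant under the even subalgebra, which completes the converse and hence the equivalence.
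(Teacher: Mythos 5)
Your proposal follows essentially the same route as the paper's own proof: the forward implication is delegated to property~\ref{condition:3} of Prop.~\ref{prop:degenerate_GIF}, and the converse is proven in contrapositive form by writing the graded invariant projector as $P=\mathrm{diag}(P_e,P_o)$ (forced by $[P,Z]=0$) and reading off the block relations~\eqref{invariantP}, from which both $P_e$ and $P_o$ are invariant under the even subalgebra generated by the $B^{i}$. Your two additions are correct and make explicit what the paper leaves tacit: the observation that $W_{+}$ is always a proper non-graded invariant subspace for a tensor of the form~\eqref{eq:GIF_explicit}, so that Prop.~\ref{prop:degenerate_GIF} genuinely applies in the forward direction, and the parity-tracking induction $B^{i_1}\cdots B^{i_p}P_t=P_{t+\pi}B^{i_1}\cdots B^{i_p}P_t$, which the paper compresses into ``it is immediate to see''.

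The one place where your argument is not airtight is exactly the step you flagged as the main obstacle. Your computation that the both-trivial case $(P_e,P_o)\in\{(\openone,0),(0,\openone)\}$ forces $B^{i}=0$ for all odd $i$ is correct, and you are right that the paper's one-line conclusion silently assumes at least one of $P_e,P_o$ is nontrivial. But your way of excluding this case --- invoking property~\ref{condition:1} as if it ``characterizes the genuine form~\eqref{eq:GIF_explicit}'' --- is not licensed by the corollary's hypothesis: Eq.~\eqref{eq:GIF_explicit} asserts only the block structure, whereas conditions~\ref{condition:1}--\ref{condition:3} are \emph{conclusions} of Prop.~\ref{prop:degenerate_GIF} valid under graded irreducibility, which is precisely what the contrapositive branch denies, so the appeal is circular as written. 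In fact the edge case is real: if all odd $B^{i}$ vanish while the even-index $B^{i}$ act irreducibly on $\mathbb{C}^{D_e}$, then no projector $P_e$ satisfying~\eqref{eq:projector_even_bi} exists, yet the even sector is a proper graded invariant subspace, so the corollary read literally fails; it implicitly assumes some odd $B^{i}\neq 0$. To make your proof complete you should either add that assumption explicitly to the hypotheses (equivalently, restrict to tensors produced by Prop.~\ref{prop:degenerate_GIF}, for which condition~\ref{condition:1} is guaranteed), or state frankly that the both-trivial case is a counterexample excluded by convention. With that caveat your write-up is actually more careful than the paper's own proof, which never addresses properness at all.
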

\begin{proof}
Due to Prop.~\ref{prop:degenerate_GIF}, we only need to prove that if there is no such $P_e$ then $\mathcal{A}$ is graded irreducible. Suppose this is not true,  and take a projector $P$ with $A^i P= PA^i P$, and $[P,Z]=0$. From the latter condition, we can write $P$ in the form
\be
P=
\begin{pmatrix}
	P_e & 0 \\
	0 & P_o
\end{pmatrix}\,.
\ee
Using the explicit matrix representation for $A^{i}$, it is immediate to see that $P_e$ is an invariant projector for the even matrix subalgebra generated by $B_i$, which is the desired contradiction. 
\end{proof}

We can now proceed to characterize graded normal tensors, introduced in the definition~\ref{def:graded_normal_tensor}. We begin with a simple observation.

\begin{lem}\label{prop:degenerate_CGF}
	Let $\mathcal{A}$ be a graded normal degenerate tensor. Then, up to an even gauge transformation, $A^i$ are in the form~\eqref{eq:graded_canonical_form_deg}, and there exist two indices $i$, $j$ with $|i|=1$, $|j|=0$, such that $B^{i}\neq 0$, $B^{j}\neq 0$.
\end{lem}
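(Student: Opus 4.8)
The plan is to combine Proposition~\ref{prop:degenerate_GIF}, which already carries out most of the structural work, with a short spectral argument for the transfer matrix. First I would observe that, by Definition~\ref{def:graded_normal_tensor}, a graded normal degenerate tensor $\mathcal{A}$ is graded irreducible and its transfer matrix has \emph{exactly two} eigenvalues of magnitude (and value) equal to its spectral radius $1$. I claim $\mathcal{A}$ must admit a non-graded invariant subspace: if it did not, it would be irreducible in the ordinary (ungraded) sense, for which the leading eigenvalue $1$ of the transfer matrix is necessarily simple, contradicting the assumed two-fold degeneracy. Hence $\mathcal{A}$ satisfies the hypotheses of Proposition~\ref{prop:degenerate_GIF}, which immediately yields, up to an even gauge transformation, the form~\eqref{eq:graded_canonical_form_deg} with $Z=\sigma^z\otimes\openone$, and---via its property~\ref{condition:1}---an index $i$ with $|i|=1$ and $B^i\neq 0$. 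This already establishes the existence of the odd index.

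It then remains to produce an even index $j$ with $B^j\neq 0$, and here I would argue by contradiction. Suppose $B^j=0$ for every $j$ with $|j|=0$. Writing $A^i=(\sigma^x)^{|i|}\otimes B^i$ and computing the transfer matrix of Eq.~\eqref{eq:fermionic_transfer_matrix} in this basis, the $\sigma^x$ factors act only on the two-dimensional auxiliary tensor factor, so $E$ decomposes along the $\pm1$ eigenspaces of $\sigma^x\otimes\sigma^x$ into two blocks governed by the maps $E_B^{\pm}=\sum_i (\pm1)^{|i|}\,B^i\otimes\bar{B}^i$, each appearing with multiplicity two. Under the assumption $E_B^{(e)}:=\sum_{|i|=0}B^i\otimes\bar B^i=0$ one has $E_B^{-}=-E_B^{+}$. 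Since $E_B^{+}=\sum_i B^i(\,\cdot\,)(B^i)^\dagger$ is completely positive, its spectral radius is attained at a nonnegative eigenvalue, which the normalization fixes to be $1$; consequently $E_B^{-}=-E_B^{+}$ has $-1$ in its spectrum, so $E$ possesses an eigenvalue of magnitude equal to its spectral radius but of value $-1\neq 1$. This contradicts $\mathcal{A}$ being graded normal, whose only peripheral eigenvalue is $1$. Therefore some even $B^j\neq 0$, completing the argument.

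The main obstacle is the careful bookkeeping of the fermionic signs entering the graded transfer matrix through the contraction rules of Appendix~\ref{sec:graded_TN}: these must be tracked to obtain the clean block decomposition into $E_B^{\pm}$. I would organize them by first splitting the physical index into even and odd sectors and then diagonalizing the $\sigma^x\otimes\sigma^x$ action on the auxiliary factor, which absorbs all the relevant signs into the overall factor $(\pm1)^{|i|}$. A secondary point, which I would make explicit, is that complete positivity together with the normalization pins the leading eigenvalue of $E_B^{+}$ to exactly $1$; this is precisely what turns the vanishing of the even part into a forbidden $-1$ eigenvalue and drives the contradiction.
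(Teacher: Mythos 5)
Your proof is correct and follows essentially the same route as the paper: Proposition~\ref{prop:degenerate_GIF} supplies the form~\eqref{eq:graded_canonical_form_deg} together with the odd index, and the even index follows by contradiction from the factorization $E_A=\sigma^x\otimes\sigma^x\otimes E_B$ of the transfer matrix, which is precisely your $\pm$-block decomposition $E_B^{\pm}$ after diagonalizing $\sigma^x\otimes\sigma^x$. The only (harmless) differences are that you spell out why a non-graded invariant subspace must exist (algebraic simplicity of the Perron eigenvalue for an ordinarily irreducible tensor), which the paper leaves implicit, and that you pin the forbidden peripheral eigenvalue to the value $-1$ via positivity of $E_B^{+}$, whereas the paper simply counts four eigenvalues of $E_A$ of unit magnitude against the two permitted by Definition~\ref{def:graded_normal_tensor}.
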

\begin{proof}
Since the transfer matrix associated with $\mathcal{A}$ has two eigenvalues equal to $1$, then necessarily there are invariant subspaces for $A^i$ which, by definition, can only be non-graded. Thus, due to Prop.~\eqref{prop:degenerate_GIF}, there is an even gauge transformation mapping $A^i$ in the form~\eqref{eq:graded_canonical_form_deg}, and there is $i$, with $|i|=1$, such that $B^{i}\neq 0$. . Hence, we only need to prove that there exists one index $i$, with $|i|=0$, such that $B^{i}\neq 0$. Suppose that there is no such $i$, and take $A^i$ in the form~\eqref{eq:graded_canonical_form_deg}. Then, $A^j=\sigma^x\otimes  B^j$ for all $j$. Denoting by $E_A$, $E_B$, the transfer matrices associated with $A^j$ and $B^j$, respectively, this implies $E_A=\sigma^x\otimes \sigma^x\otimes E_B$, in a suitable basis. It follows that the spectral radius of $E_A$ and $E_B$ coincide. Thus, there exists an eigenstate of $E_B$ associated with an eigenvalue, $\lambda$, with $|\lambda|=1$, and hence there are four eigenstates of $E_A$ associated with an eigenvalues $\mu_j$ with $|\mu_j|=1$. This contradicts the fact that $\mathcal{A}$ is a graded normal degenerate tensor.
\end{proof}

Next, we introduce the completely positive map associated with a tensor $\mathcal{A}$, which is of great importance for the study of normal tensors. 
\begin{defn}
	For any tensor $\mathcal{A}$ we define the CPM
	\be
	\mathcal{E}_\mathcal{A}(X)=\sum_{j}A^{j}XA^{j\dagger}\,.
	\label{eq:quantum_channel}
	\ee
\end{defn}
Analogously to the case of qudits, it is possible to characterize the fixed points of the CPM associated with graded normal tensors. 
\begin{lem}\label{lemma:fixed_points}
	Let $\mathcal{A}$ be a graded normal tensor. If $\mathcal{A}$ is non-degenerate, then the unique fixed point of $\mathcal{E}_\mathcal{A}$ is an even operator, which is strictly positive. If $\mathcal{A}$ is degenerate, then one can choose the two fixed points to be a pair consisting of  an even and an odd operator. Furthermore, in the standard graded basis where $Z=\openone\oplus (-\openone)$, they can be chosen of the form
	\be
	\rho^e=
	\begin{pmatrix}
		\rho& 0\\
		0 & \rho
	\end{pmatrix}\,,
	\quad
	\rho^o=
	\begin{pmatrix}
		0& \rho\\
		\rho& 0
	\end{pmatrix}\,,
	\label{eq:fixed_points}
	\ee
	where $\rho$ is  a  strictly positive operator.
\end{lem}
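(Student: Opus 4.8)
The plan is to build everything on a single structural observation about the superoperator $\mathcal{Z}:X\mapsto ZXZ$. Since $Z$ is the diagonal parity operator with $Z^2=\openone$ and $\mathcal{A}$ is even, Eq.~\eqref{eq:commutation} gives $A^jZ=(-1)^{|j|}ZA^j$ and $ZA^{j\dagger}=(-1)^{|j|}A^{j\dagger}Z$. Plugging these into the CPM of Eq.~\eqref{eq:quantum_channel} and moving the two $Z$'s through each Kraus pair, the two signs $(-1)^{|j|}$ cancel term by term, so $\mathcal{E}_{\mathcal{A}}(ZXZ)=Z\mathcal{E}_{\mathcal{A}}(X)Z$, i.e.\ $\mathcal{E}_{\mathcal{A}}$ commutes with $\mathcal{Z}$. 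Because $\mathcal{Z}^2=\mathrm{id}$, the fixed-point space of $\mathcal{E}_{\mathcal{A}}$ is $\mathcal{Z}$-invariant and decomposes into an even part ($ZXZ=X$, block diagonal in the grading) and an odd part ($ZXZ=-X$, block off-diagonal). This parity splitting is the organizing principle of the whole argument.

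For the non-degenerate case I would first upgrade graded irreducibility to ordinary irreducibility. If $\mathcal{A}$ admitted a non-graded invariant subspace, Prop.~\ref{prop:degenerate_GIF} would force the form $A^n=(\sigma^x)^{|n|}\otimes B^n$ of Eq.~\eqref{eq:graded_canonical_form_deg}; the associated transfer matrix then reduces over the $\pm1$ eigenspaces of $\sigma^x\otimes\sigma^x$, and the unit eigenvalue of the irreducible block $\mathcal{B}$ appears with multiplicity at least two, making $\mathcal{A}$ degenerate — a contradiction. Hence $\mathcal{A}$ has no invariant subspace at all, the Perron--Frobenius theorem for irreducible CPMs yields a unique, algebraically simple, strictly positive fixed point $\rho_{\mathcal{A}}$, and $\mathcal{Z}$ acts on this one-dimensional space as $\pm1$. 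The sign $-1$ is impossible, since then $\rho_{\mathcal{A}}=-Z\rho_{\mathcal{A}}Z<0$ would contradict $\rho_{\mathcal{A}}>0$; therefore $\rho_{\mathcal{A}}$ is even, as claimed.

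For the degenerate case I would start from the normal form $A^n=(\sigma^x)^{|n|}\otimes B^n$, $Z=\sigma^z\otimes\openone$ provided by Lemma~\ref{prop:degenerate_CGF}, and write a general $X$ as a $2\times2$ array of blocks $X_{ab}$ acting on the $B$-space. Substituting into $\mathcal{E}_{\mathcal{A}}$ shows that the diagonal pair $(X_{00},X_{11})$ and the off-diagonal pair $(X_{01},X_{10})$ evolve independently under one and the same map built from $\mathcal{E}_B^{(0)}(Y)=\sum_{|i|=0}B^iYB^{i\dagger}$ and $\mathcal{E}_B^{(1)}(Y)=\sum_{|i|=1}B^iYB^{i\dagger}$; passing to symmetric and antisymmetric combinations diagonalizes it into $\mathcal{E}_B=\mathcal{E}_B^{(0)}+\mathcal{E}_B^{(1)}$ and $\mathcal{E}_B^{(0)}-\mathcal{E}_B^{(1)}$. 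Using irreducibility of $\mathcal{B}$ (the condition in Prop.~\ref{prop:degenerate_GIF}), the only surviving solutions are the symmetric ones, each proportional to the unique strictly positive fixed point $\rho$ of $\mathcal{E}_B$. This produces a two-dimensional fixed-point space spanned by $\rho^e=\openone\otimes\rho$ and $\rho^o=\sigma^x\otimes\rho$, which are exactly the even and odd operators displayed in Eq.~\eqref{eq:fixed_points}.

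The step I expect to be the main obstacle is the spectral bookkeeping in the degenerate case: one must show the fixed-point space is \emph{exactly} two-dimensional, equivalently that $\mathcal{E}_B^{(0)}-\mathcal{E}_B^{(1)}$ has no eigenvalue of modulus one. This is where the hypothesis that the transfer matrix carries precisely two unit eigenvalues is indispensable, and it must be combined with the comparison that the spectral radius of $\mathcal{E}_B^{(0)}-\mathcal{E}_B^{(1)}$ is at most that of $\mathcal{E}_B$, which equals one, to rule out the boundary case. Strict positivity of $\rho$ likewise rests on applying Perron--Frobenius to the irreducible map $\mathcal{E}_B$, so some care is needed to confirm that the no-invariant-subspace condition on the $B^i$ is the right irreducibility hypothesis; once the intertwining relation $\mathcal{E}_{\mathcal{A}}\mathcal{Z}=\mathcal{Z}\mathcal{E}_{\mathcal{A}}$ and the normal form are in place, the remaining manipulations are routine.
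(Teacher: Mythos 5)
Your proposal is correct and takes essentially the same route as the paper: your symmetric/antisymmetric diagonalization of the blockwise action in the degenerate case is exactly the paper's conjugation by the unitary $u=\frac{1}{\sqrt{2}}\left(\begin{smallmatrix}\openone & \openone\\ -\openone & \openone\end{smallmatrix}\right)$, which splits $\mathcal{E}_{\mathcal{A}}$ into two copies of $\mathcal{E}_{B}$ plus the twisted map $\mathcal{E}_B^{(0)}-\mathcal{E}_B^{(1)}$, and the exact two-dimensionality of the fixed space is settled, as you anticipate, directly by the multiplicity-two requirement in the definition of a degenerate graded normal tensor (the spectral-radius comparison you worry about is not needed: the two exhibited fixed points already exhaust the allowed multiplicity). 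The only difference is cosmetic and concerns the non-degenerate case: the paper deduces evenness by noting that the unique fixed point is strictly positive, hence has nonzero trace, while odd operators are traceless, whereas you first upgrade graded irreducibility to ordinary irreducibility (via Prop.~\ref{prop:degenerate_GIF} and a contradiction with non-degeneracy) and then rule out the odd sign of the $\mathcal{Z}$-action on the one-dimensional fixed space by positivity; both arguments work, and your explicit irreducibility upgrade usefully spells out the hypothesis under which the cited Perron--Frobenius result delivers strict positivity, a point the paper leaves implicit.
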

\begin{proof}
	First, we note that  one can always choose the fixed points of the map \eqref{eq:quantum_channel} to  have a well defined parity. Now, if $\mathcal{A}$ is non-degenerate, then its unique fixed point $X$ is a strictly positive operator~\cite{evans_spectral_1978}. This implies ${\rm tr}[X]>0$. On the other hand, the trace of  any odd operator is vanishing. Hence,  $X$ is even.
	
	Suppose now $\mathcal{A}$ is a degenerate graded normal tensor. Then, we can assume wlog that $A^{j}$ is in the form in~\eqref{eq:GItensors}. Define the unitary matrix
	\be
	u=\frac{1}{\sqrt{2}}
	\begin{pmatrix}
		\openone & \openone\\
		-\openone & \openone
	\end{pmatrix}\,.
	\label{eq:similarity_u}
	\ee
	Note that  $u$ does not commute with the parity $Z$. We have
	\be
	\widetilde{A}^j=u A^ju^{\dagger}=
	\begin{pmatrix}
		B^{i} & 0\\
		0 & C^{i}
	\end{pmatrix}\,,
	\label{eq:diagonal_form_GIF}
	\ee
	where
	\be
	C^{i}=
	\begin{cases}
		B^{i} & |i|\equiv 0\,,\\
		-B^{i} & |i|\equiv 1\,.
		\label{eq:definition_c}
	\end{cases}
	\ee
	Defining the CPM
	\be
	\mathcal{E}_{\widetilde{\mathcal{A}}}(X)= \sum_{j}\widetilde{A}^j X \widetilde{A}^{j\dagger}\,,
	\ee
	we have $\mathcal{E}_{\widetilde{\mathcal{A}}}(X)=u \mathcal{E}_{\mathcal{A}}(u^{\dagger}Xu)u^{\dagger}$, so that $\mathcal{E}_{\mathcal{A}}$ and $\mathcal{E}_{\widetilde{\mathcal{A}}}$ have the same spectrum and the eigenstates are related by a similarity transformation. Next, due to Prop.~\ref{prop:degenerate_GIF}, the tensor $\mathcal{B}$ defined by $B^{i}$ is irreducible. Furthermore, the map $\mathcal{E}_{\mathcal{B}}$ can not have eigenvalues $\lambda$ with $\lambda\neq 1$ and $|\lambda|=1$, otherwise they would be also eigenvalues of $\mathcal{E}_\mathcal{A}$. Hence, $\mathcal{B}$ is normal, and the map $\mathcal{E}_{\mathcal{B}}$ has a  unique fixed point $\rho>0$.  Furthermore, the maps $\mathcal{E}_\mathcal{B}$, $\mathcal{E}_\mathcal{C}$ clearly coincide (where the tensor $\mathcal{C}$ is defined by the matrices $C^i$). Accordingly
	\be
	\rho_1=
	\begin{pmatrix}
		\rho & 0\\
		0 & 0
	\end{pmatrix}\,,
	\quad
	\rho_2=
	\begin{pmatrix}
		0& 0\\
		0 & \rho
	\end{pmatrix}\,,
	\ee
	are two eigenvectors of $\mathcal{E}_{\widetilde{\mathcal{A}}}(\cdot)$ with $\lambda=1$. Since $\mathcal{A}$ is normal, these must be the only fixed points of $\mathcal{E}_{\widetilde{\mathcal{A}}}(\cdot)$. Taking now the inverse similarity transformation with respect to $u$, we obtain that the eigenspace associated with $\lambda=1$ of the map $\mathcal{E}_\mathcal{A}(\cdot)$ is spanned by the matrices
	\bea
	\rho^{e}=\begin{pmatrix}
		\rho& 0\\
		0 & \rho
	\end{pmatrix}\,,\quad 
	\rho^{o}=\begin{pmatrix}
		0& \rho\\
		\rho& 0
	\end{pmatrix}\,.
	\eea
\end{proof}

\begin{defn}
	For graded normal non-degenerate tensors, we say that $\mathcal{A}$ is in Graded Canonical Form II (GCFII) if $\Phi=\openone$ is the only fixed point of 
	\be
	\mathcal{E}^{\prime}_\mathcal{A}(X)=\sum_{j}A^{\dagger j}XA^{j}\,,
	\label{eq:transposed_quantum_channel}
	\ee
 	while the fixed point of the CPM~\eqref{eq:quantum_channel} is a  diagonal positive and full-rank matrix $\rho$ . For graded normal degenerate tensors we say that $\mathcal{A}$ is in GCFII if 
	\bea\label{eq:phi_12}
	\Phi_1=\begin{pmatrix}
		\openone& 0\\
		0 & \openone
	\end{pmatrix}\,,\quad 
	\Phi_2=\begin{pmatrix}
		0& \openone\\
		\openone& 0
	\end{pmatrix}
	\eea
	are the only fixed points of  the CPM~\eqref{eq:transposed_quantum_channel}, while the fixed points of ~\eqref{eq:quantum_channel} are given in ~\eqref{eq:fixed_points}, where $\rho$ is a diagonal positive and full-rank matrix.
\end{defn}
From Lemma~\ref{lemma:fixed_points} it is immediate to show that for any normal tensor there is always an even gauge transformation mapping it into GCFII. 

\begin{prop}\label{prop:fixed_point_rank}
	Let $\mathcal{A}$ be an even tensor, and denote by $E_\mathcal{A}$, $\mathcal{E}_\mathcal{A}$ the corresponding transfer matrix and the associated completely positive map, respectively. Then
	\begin{enumerate}
		\item $\mathcal{A}$ is normal non-degenerate iff  $(i)$ $\mathcal{E}_\mathcal{A}$ has a unique eigenvalue $\lambda$ with $|\lambda|=1$; $(ii)$ the corresponding left and right eigenvectors $\Phi$, $\rho$ of the transfer matrix are positive definite operators $\rho$, $\Phi>0$.\label{point:1}
		\item $\mathcal{A}$ is normal degenerate iff  $(i)$ $\mathcal{E}_\mathcal{A}$ has exactly two eigenvectors associated with eigenvalue $\lambda=1$,  and no other eigenvalue $\mu$ with $|\mu|=1$; $(ii)$ the corresponding right (left) eigenvectors $\rho_1$, $\rho_2$ ($\Phi_1$, $\Phi_2$) of the transfer matrix are even and odd, respectively; $(iii)$ the even eigenvectors $\Phi_1$, $\rho_1$ are positive definite operators $\rho>0$. \label{point:2}
	\end{enumerate}
\end{prop}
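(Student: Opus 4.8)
The plan is to prove both equivalences by using the dictionary between the transfer matrix $E_\mathcal{A}$ and the completely positive map $\mathcal{E}_\mathcal{A}$ of Eq.~\eqref{eq:quantum_channel}, and then assembling the structural results already established. The first step is to record this dictionary precisely: since $\mathcal{E}_\mathcal{A}$ is the transfer matrix reinterpreted as a superoperator on $D\times D$ matrices, the two objects share the same spectrum, the right eigenvectors $\rho$ of $E_\mathcal{A}$ at an eigenvalue are exactly the eigen-operators of $\mathcal{E}_\mathcal{A}$ there, and the left eigenvectors $\Phi$ are the eigen-operators of the dual map $\mathcal{E}'_\mathcal{A}$ of Eq.~\eqref{eq:transposed_quantum_channel}. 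With this in hand the forward directions ($\Rightarrow$) of both points are a translation of Lemma~\ref{lemma:fixed_points}. If $\mathcal{A}$ is graded normal, the peripheral spectrum of $E_\mathcal{A}$ consists of exactly one eigenvalue equal to $1$ (non-degenerate) or exactly two (degenerate), with nothing else on the unit circle, giving condition (i) in each case; Lemma~\ref{lemma:fixed_points} then supplies the parity and positivity statements, namely the even strictly positive fixed point $\rho$ in the non-degenerate case and the pair~\eqref{eq:fixed_points} in the degenerate case. Applying the same lemma to the dual tensor $\{A^{j\dagger}\}$—which is graded normal of the same type, since a graded invariant subspace for $A^{j\dagger}$ is the orthocomplement of one for $A^j$ and the two transfer matrices are adjoint—yields the corresponding statements for $\Phi$, $\Phi_1$, $\Phi_2$, establishing (ii) and (iii).

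For the converse ($\Leftarrow$) of point~\ref{point:1} I would argue as follows. Since $\rho>0$ is a strictly positive eigen-operator of the positive map $\mathcal{E}_\mathcal{A}$ with $|\lambda|=1$, positivity forces $\lambda$ to be real and nonnegative, hence $\lambda=1$; moreover a strictly positive eigenvector can only occur at the spectral radius, so the spectral radius is $1$. The existence of a faithful fixed point together with $1$ being the unique eigenvalue on the unit circle then implies, by the Perron--Frobenius/Evans--H\o egh-Krohn theory of positive maps~\cite{evans_spectral_1978}, that $\mathcal{E}_\mathcal{A}$ is irreducible and that $1$ is simple. Irreducibility means $\mathcal{A}$ has no invariant subspace at all, in particular no graded one, so $\mathcal{A}$ is graded irreducible; combined with the single peripheral eigenvalue this is precisely the definition of a graded normal non-degenerate tensor.

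For the converse of point~\ref{point:2}, condition (i) already supplies the two peripheral eigenvalues demanded by Definition~\ref{def:graded_normal_tensor}, so the remaining task is to exclude a proper graded invariant subspace. The key input is again Evans--H\o egh-Krohn~\cite{evans_spectral_1978}: because (iii) provides a faithful even fixed point $\rho_1>0$, every invariant subspace of the Kraus operators is \emph{reducing}, so $\mathcal{A}$ splits as a direct sum $\mathcal{A}_P\oplus\mathcal{A}_Q$ across any such subspace. Suppose, for contradiction, that a proper graded invariant subspace existed; the resulting splitting would be into two nontrivial graded blocks, and restricting $\rho_1$ to each block would produce a strictly positive even fixed point of each sub-map. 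Extended by zero, these are two linearly independent even fixed points of $\mathcal{E}_\mathcal{A}$, contradicting the fact that, by (ii), the even part of the two-dimensional peripheral eigenspace is one-dimensional and spanned by $\rho_1$ alone. Hence no proper graded invariant subspace exists; $\mathcal{A}$ is graded irreducible, and with (i) it is graded normal degenerate.

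The main obstacle, and the part I would treat most carefully, is twofold. First, one must make the $E_\mathcal{A}\leftrightarrow\mathcal{E}_\mathcal{A}$ dictionary rigorous in the graded setting, since the fermionic transfer matrix carries sign factors (cf.\ Eq.~\eqref{eq:fermionic_transfer_matrix}) that are absent from the sign-free map~\eqref{eq:quantum_channel}; one has to verify that these signs only reorder bras and kets and therefore leave the eigenvalue counting and the parity bookkeeping of the eigenvectors intact. Second, the decomposability statement invoked in the converse of point~\ref{point:2}—that a faithful fixed point forces invariant subspaces to be reducing rather than merely invariant—is where the positivity hypotheses are genuinely used, and it is the step most sensitive to a correct formulation. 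Once these two facts are secured, the remaining manipulations are routine.
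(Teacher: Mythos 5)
Your proposal is correct in substance, and for the only nontrivial direction---the converse of point~\ref{point:2}---it takes a genuinely different route from the paper. The paper outsources point~\ref{point:1} to Prop.~3 of Ref.~\cite{sanz2010quantum}, gets the forward implications from Lemma~\ref{lemma:fixed_points} exactly as you do, and then proves the converse of point~\ref{point:2} by a \emph{quantitative spanning} argument: gauging to GCFII so that $\mathcal{E}_\mathcal{A}$ is trace preserving (hence has trivial peripheral Jordan blocks and $\mathcal{E}^n_\mathcal{A}\to\mathcal{E}^{\infty}_\mathcal{A}=\rho_1\,{\rm tr}(\cdot)+\rho_2\,{\rm tr}(\Phi_2\,\cdot)$), it uses a Choi-matrix estimate together with full-rankness of $\rho_1$ to show that the even products $P_{e}S^{e}_n(A)P_{e}$ and $P_{o}S^{e}_n(A)P_{o}$ eventually exhaust the full matrix algebras on each parity sector---the oddness of $\rho_2,\Phi_2$ enters precisely to kill the cross term ${\rm tr}(\rho_2F_n^\dagger\Phi_2F_n)$---after which a graded invariant subspace is excluded by a direct vector argument. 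You instead use the qualitative decomposition theory of positive maps: faithful fixed points force invariant subspaces to be reducing, so a proper graded invariant subspace would split $\mathcal{A}$ into two graded blocks, each carrying an even fixed point, overcounting a $\lambda=1$ eigenspace whose even part is one-dimensional by (i)--(ii). Your route is shorter and more conceptual; the paper's is more self-contained and establishes something strictly stronger than graded irreducibility (sector-wise irreducibility of the even subalgebra, paralleling Corollary~\ref{prop:other_implication}), which is reused elsewhere. Two glosses in your write-up deserve the care you promise them: (a) ``invariant $\Rightarrow$ reducing'' genuinely needs \emph{both} $\rho_1>0$ and $\Phi_1>0$---equivalently trace preservation after the gauge $A^j\mapsto\Phi_1^{1/2}A^j\Phi_1^{-1/2}$, which is admissible in the graded setting only because $\Phi_1$ even makes $\Phi_1^{1/2}$ even; your text credits (iii) alone with a single ``faithful fixed point,'' but the dual eigenvector is equally essential (it is also what pins the spectral radius at $1$, since condition (i) by itself does not exclude eigenvalues of modulus larger than one); (b) the complementary invariant subspace produced by this argument is a priori non-orthogonal, and one must check it can be orthogonalized by an \emph{even} gauge so that both blocks remain graded and the restricted fixed points remain even---exactly the issue the paper treats carefully in Prop.~\ref{prop:degenerate_GIF}. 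With those two points made explicit, your proof stands as a valid alternative to the paper's.
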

\begin{proof}
Point~\ref{point:1} follows directly from Prop. $3$ in Ref.~\cite{sanz2010quantum}. Let us prove point~\ref{point:2}, along similar lines. We only need to prove that conditions $(i)$, $(ii)$, $(iii)$ imply that $\mathcal{A}$ is normal degenerate, because the inverse statement follows directly from Lemma~\ref{lemma:fixed_points}. First, we assume wlog that $\mathcal{A}$ is in the form~\eqref{eq:graded_canonical_form_deg} (with the parity operator $Z=\sigma^z\otimes \openone$) and in GCFII. This means that the channel~\eqref{eq:quantum_channel} is trace preserving. We denote by $D_e$, $D_o$ the dimensions of the even and odd subspaces, and $D=D_e+D_o$. We denote by $\{\ket{j}\}_{j=1}^D$ a basis of vectors with well-defined parity.  Following~\cite{sanz2010quantum}, we define $S_n(A)\subseteq \mathcal{M}_{D\times D}$ as the linear space spanned by all possible products of exactly $n$ matrices $A^j$, where $\mathcal{M}_{D\times D}$ is the space of complex $D\times D$ matrices. Similarly, we define $S^{e}_n(A)\subseteq M_{D\times D}$ as the linear space spanned all possible even products of $A^j$,  $A^{j_1}\ldots A^{j_n}$, with $\sum_{k}|j_k|\equiv 0$ ($\bmod$ 2). Finally, given the map $\mathcal{E}$, we introduce the Choi matrix $\omega(\mathcal{E}):=\left( \mathcal{E}\otimes \openone\right)(\Omega)$, where $\Omega=\sum_{i,j=1}^D|ii\rangle\langle jj|$. Since the channel is trace-preserving, the eigenvalues of the transfer matrix $E_{\mathcal{A}}$ with $|\lambda|=1$ have trivial Jordan blocks (see Prop. 6.2 of Ref.~\cite{wolf2012quantum}). Then, from $(i)$ we have the 
\be
\lim_{n\to\infty} \mathcal{E}^n_A= \mathcal{E}^{\infty}_A\,,
\ee
and
\be\label{eq:e_decomposition}
\mathcal{E}^{\infty}_A(X)=\rho_1 {\rm tr}\left(X\right)+ \rho_2 {\rm tr}\left(\Phi_2 X\right)\,,
\ee
where $\Phi_2$ is given in Eq.~\eqref{eq:phi_12}.

We need to show that there exists no proper graded subspace which is left invariant by all $A^j$. First, we note that each element, $A^e\in S^{e}_n(A)$ is a block diagonal matrix
\be
A^e=
\begin{pmatrix}
	A^e_{1} & 0\\
	0 & A^e_{2}
	\end{pmatrix}\,,
\ee
where $A^e_{1}\in \mathcal{M}_{D_{e}\times D_e}$, $A^e_{2}\in \mathcal{M}_{D_{o}\times D_o}$. We claim that there exists some $n$ such that $P_e S^{e}_n(A) P_e= \mathcal{M}_{D_e\times D_e}$, where $P_e$ is the projector onto the even subspace of dimension $D_e$. This amounts to show that the matrices $A^e_1$ span $\mathcal{M}_{D_e\times D_e}$. Suppose that this is not the case. Then, there exists
\be\label{eq:block_diagonal_representation}
F_n=
\begin{pmatrix}
	G_n & 0\\
	0 & 0
\end{pmatrix}\,,
\ee
such that ${\rm tr}(A^{(n)}_k F_n)=0$, for all $A^{(n)}_k \in S^{(e)}_n(A)$ (and so for all $A^{(n)}_k \in S_n(A)$, since $F_n$ is even). Thus
\begin{align}\label{eq:intermediate_step}
\left|{\rm tr}\left(\rho_1 F_n^\dagger F_n\right)\right|=\left|\sum_{k_1,\ldots , k_n}|{\rm tr}\left(A_{k_1}\cdots A_{k_n} F_n\right)|^2\right.\nonumber\\
\left.-{\rm tr}\left(\rho_1 F_n^\dagger  F_n\right)\right|\nonumber\\
=\left|{\rm tr}\left[\Omega \left(\mathcal{E}^n_A\otimes {\openone}\right)\left(\tilde{F}_n\Omega \tilde{F}_n^\dagger\right)\right]-{\rm tr}\left(\rho_1 F_n^\dagger  F_n\right)\right|\,,
\end{align}
where $\tilde{F}_n=F_n\otimes \openone$. We claim
\be
{\rm tr}\left(\rho_1 F_n^\dagger  F_n\right)={\rm tr}\left[\Omega \left(\mathcal{E}^{\infty}_A\otimes {\openone}\right)\left(\tilde{F}_n\Omega \tilde{F}_n^\dagger\right)\right]\,.
\ee
Indeed, using Eq.~\eqref{eq:e_decomposition}, we have
\begin{align}
{\rm tr}\left[\Omega \left(\mathcal{E}^{\infty}_A\otimes {\openone}\right)\left(\tilde{F}_n\Omega \tilde{F}_n^\dagger\right)\right]&= {\rm tr}\left(\rho_1 F_n^\dagger F_n\right)\nonumber\\
&+{\rm tr}\left(\rho_2 F_n^\dagger \Phi_2 F_n\right)\,.
\end{align}
Now, using the fact that $\rho_2$, $\Phi_2$ are odd operators, it is straightforward to show ${\rm tr}\left(\rho_2 F_n^\dagger \Phi_2 F_n\right)=0$, which follows  from~\eqref{eq:block_diagonal_representation}, and the fact that odd matrices are block off-diagonal in this basis. Then, from Eq.~\eqref{eq:intermediate_step}, we have
\begin{align}
\left|{\rm tr}\left(\rho_1 F_n^\dagger  F_n\right)\right| \leq & c_n || \Omega ||_{\infty} {\rm tr}\left(\tilde{F}_n\Omega \tilde{F}^\dagger_n\right)
=Dc_n {\rm tr}\left( F_n F^\dagger_n\right) \,,
\end{align}
where $\lim_n c_n=0$. On the other hand, using that if $\rho$ is full rank, then ${\rm tr} (\rho X)\geq \frac{1}{||\rho^{-1}||}{\rm tr}(X)$  for all $X\geq 0$, we have
\be
\left|{\rm tr}\left(\rho_1 F_n^\dagger  F_n\right)\right|\geq \frac{1}{||\rho_1^{-1}||} \left|{\rm tr}\left(F_n^\dagger  F_n\right)\right|\,,
\ee
and we obtain a contradiction. In the same way, we can prove that there exists some $n$ such that $P_o S^{e}_n(A) P_o= \mathcal{M}_{D_o\times D_o}$, where $P_o$ is the projector onto the odd subspace of dimension $D_o$. 

Now, suppose there exists a graded subspace $V\simeq \mathbb{C}^{D'}\subset \mathbb{C}^{D}$, with $D^\prime<D$, which is left invariant by all matrices $A^i$. Since $V$ is graded, we can take a basis for $V$ of the form
$\{\ket{v_1},\ldots, \ket{v_r}, \ket{w_1}, \ldots , \ket{w_s} \}$, where $\ket{v_j}$ are even, $\ket{w_j}$ are odd. Furthermore, since $D^\prime<D$, either $r<D_e$, or $s<D_o$. Suppose wlog that the former is true, and choose $\ket{u}$ even with $\braket{u|v_j}=0$ for all $j$. Taking now $n$ such that $P_e S^{e}_n(A) P_e= \mathcal{M}_{D_e\times D_e}$, there exists $A^{(n)}_k\in S^{(e)}_n(A)$ such that $\braket{u| A^{(n)}_k|v_1}\neq 0$. This is a contradiction, since we assumed that $V$ is left invariant by all matrices $A^j$. 

\end{proof}

\begin{cor}\label{cor:stability}
	Let $\mathcal{A}$ be a degenerate (non-degenerate) graded normal tensor. Then the blocked tensor $\mathcal{A}_k$ is still a degenerate (non-degenerate) graded normal tensor.
\end{cor}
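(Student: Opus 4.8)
The plan is to sidestep the invariant-subspace clause of Definition~\ref{def:graded_normal_tensor} and work instead through the spectral characterization of graded normal tensors furnished by Proposition~\ref{prop:fixed_point_rank}, which recasts both normality and the degeneracy type purely in terms of the peripheral spectrum and the fixed points of the completely positive map $\mathcal{E}_\mathcal{A}$ of Eq.~\eqref{eq:quantum_channel} (together with its adjoint~\eqref{eq:transposed_quantum_channel}). Since that proposition is an ``if and only if'', it suffices to show that blocking preserves the relevant spectral data, and then to re-apply the converse direction to $\mathcal{A}_k$.

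The central observation I would establish first is that the CPM of the blocked tensor is just the $k$-fold iterate of the original one, $\mathcal{E}_{\mathcal{A}_k}=\mathcal{E}_{\mathcal{A}}^{k}$. Writing the blocked matrices as $A_k^{(j_1,\dots,j_k)}=s(j_1,\dots,j_k)\,A^{j_1}\cdots A^{j_k}$, where $s=\pm 1$ is the fermionic sign prescribed by Eq.~\eqref{eq:signs_blocking}, one notes that this sign depends only on the physical indices and is real, so it enters $A_k^{j}\,X\,(A_k^{j})^\dagger$ squared and hence cancels. Summing over all physical indices then gives $\mathcal{E}_{\mathcal{A}_k}(X)=\sum A^{j_1}\cdots A^{j_k}\,X\,A^{j_k\dagger}\cdots A^{j_1\dagger}=\mathcal{E}_{\mathcal{A}}^{k}(X)$, and the same cancellation yields $E_{\mathcal{A}_k}=E_{\mathcal{A}}^{k}$ for the transfer matrix; the identical argument applies verbatim to the adjoint map, so $\mathcal{E}'_{\mathcal{A}_k}=(\mathcal{E}'_{\mathcal{A}})^{k}$.

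With this identity, the spectral mapping theorem gives that the eigenvalues of $\mathcal{E}_{\mathcal{A}_k}$ are the $k$-th powers of those of $\mathcal{E}_{\mathcal{A}}$. By the hypothesis and Proposition~\ref{prop:fixed_point_rank}, the only peripheral eigenvalue of $\mathcal{E}_{\mathcal{A}}$ is $\lambda=1$---with a one-dimensional fixed-point space in the non-degenerate case and a two-dimensional one in the degenerate case---while every other eigenvalue $\mu$ satisfies $|\mu|<1$. Hence $|\mu^k|<1$, so no new peripheral eigenvalue is produced; and since the peripheral spectrum of a CPM is semisimple (the fact already invoked in the proof of Proposition~\ref{prop:fixed_point_rank} via Ref.~\cite{wolf2012quantum}), the eigenspace for eigenvalue $1$ is preserved exactly upon taking the $k$-th power. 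Thus $\mathcal{E}_{\mathcal{A}_k}$ again has a unique peripheral eigenvalue, equal to $1$, with the same fixed points as $\mathcal{E}_{\mathcal{A}}$; in particular the parities (even in the non-degenerate case; one even and one odd in the degenerate case) and the positive-definiteness of the relevant fixed points of $\mathcal{E}_{\mathcal{A}}$ and $\mathcal{E}'_{\mathcal{A}}$ carry over unchanged. Feeding these facts into the converse direction of Proposition~\ref{prop:fixed_point_rank} then shows that $\mathcal{A}_k$ is graded normal of the same degeneracy type.

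The genuinely delicate points are confined to the last two steps: verifying that the signs of Eq.~\eqref{eq:signs_blocking} really cancel in $\mathcal{E}_{\mathcal{A}_k}$ (so that the clean iterate identity survives the grading), and checking that the eigenvalue-$1$ eigenspace cannot grow under $\mu\mapsto\mu^k$---i.e.\ that no non-peripheral eigenvalue of $\mathcal{E}_{\mathcal{A}}$ is a nontrivial $k$-th root of unity and that $1$ is semisimple. Both are immediate consequences of the graded normal hypothesis, which forces the entire non-trivial part of the spectrum strictly inside the unit disk, so I expect no new estimate to be required and the corollary to follow essentially formally from Proposition~\ref{prop:fixed_point_rank}.
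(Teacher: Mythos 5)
Your proof is correct and follows exactly the route the paper intends: the corollary is stated without an explicit proof as an immediate consequence of Proposition~\ref{prop:fixed_point_rank}, and your argument---that the blocking signs of Eq.~\eqref{eq:signs_blocking} square away so that $\mathcal{E}_{\mathcal{A}_k}=\mathcal{E}_{\mathcal{A}}^{k}$ and $E_{\mathcal{A}_k}=E_{\mathcal{A}}^{k}$, after which the peripheral spectrum, fixed points, their parities, and positivity are manifestly unchanged---is precisely the implicit one, spelled out with the right care at the only two delicate points (sign cancellation and the non-growth of the eigenvalue-$1$ eigenspace under $\mu\mapsto\mu^{k}$).
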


We are finally in a position to prove the existence and uniqueness of the Graded Canonical Form introduced in the definition~\ref{def:graded_canonical_form}. We begin with the following Theorem, which provides a procedure to cast any even tensor $\mathcal{A}$ into GCF.

\begin{thm}\label{th:GCF}
	After blocking, for any even tensor, $\mathcal{A}$, it is always possible to obtain another even tensor, $\mathcal{A}_{GCF}$, in GCF and generating the same fMPS with antiperiodic boundary conditions. 
\end{thm}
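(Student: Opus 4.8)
The plan is to combine the graded irreducible decomposition of Prop.~\ref{prop:GIF} with a blocking argument that turns each graded irreducible block into a direct sum of graded normal tensors, paralleling the qudit canonical form of Ref.~\cite{cirac2017matrixRenormalization} while keeping track of the grading and of the fermionic signs. First I would apply Prop.~\ref{prop:GIF} to write $\mathcal{A}=\bigoplus_k \mu_k \mathcal{A}_k$ in GIF, with each $\mathcal{A}_k$ graded irreducible, its transfer matrix of spectral radius one, and $Z$ block diagonal with the same structure. Since the blocking rule, cf.\ Eq.~\eqref{eq:signs_blocking}, modifies the product $A^{n_1}\cdots A^{n_L}$ only by a scalar depending on the physical indices, blocking preserves this auxiliary block decomposition, so that $(\mathcal{A}_k)_L$ are the blocks of $\mathcal{A}_L$. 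Hence it suffices to show that a single graded irreducible tensor becomes a direct sum of GNTs after blocking enough times, and then to recombine using a common blocking length.

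For a single graded irreducible block $\mathcal{B}$ I would distinguish the two possibilities allowed by Definition~\ref{def:git}. If $\mathcal{B}$ admits no invariant subspace at all, its CPM $\mathcal{E}_{\mathcal{B}}$ (Eq.~\eqref{eq:quantum_channel}) is irreducible in the usual sense, so by the Perron--Frobenius theory for completely positive maps~\cite{evans_spectral_1978,wolf2012quantum} its peripheral spectrum consists of the simple eigenvalues $\{e^{2\pi i j/p}\}_{j=0}^{p-1}$ for some period $p$, with a strictly positive fixed point. Blocking $p$ times sends these eigenvalues to $1$ and splits the space along the associated $\mathbb{Z}_p$-grading into $p$ invariant subspaces on each of which the tensor is normal; by Prop.~\ref{prop:fixed_point_rank}(1) each piece is a non-degenerate GNT. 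To keep the result graded I would use that conjugation by $Z$ commutes with $\mathcal{E}_{\mathcal{B}}$ (which follows from $Z A^n=(-1)^{|n|}A^n Z$), so $Z$ permutes the peripheral eigenprojectors and can be arranged block diagonal with respect to the finer decomposition.

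If instead $\mathcal{B}$ has a non-graded invariant subspace, Prop.~\ref{prop:degenerate_GIF} (together with Cor.~\ref{prop:other_implication}) puts it, up to an even gauge, in the form $A^n=(\sigma^x)^{|n|}\otimes B^n$ with $Z=\sigma^z\otimes\openone$, where the even subalgebra generated by $\{B^n\}$ is irreducible. I would diagonalize with the similarity $u$ of Eq.~\eqref{eq:similarity_u} to obtain $\widetilde{A}^n=\mathrm{diag}(B^n,C^n)$, $C^n=(-1)^{|n|}B^n$, and observe that $\mathcal{E}_{\widetilde{\mathcal{A}}}$ decouples into two copies of $\mathcal{E}_{\mathcal{B}}$ on the diagonal blocks and a ``twisted'' map $X\mapsto\sum_n(-1)^{|n|}B^n X B^{n\dagger}$ on the off-diagonal blocks. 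Blocking makes $\mathcal{B}$ normal, so each diagonal copy contributes one peripheral eigenvalue $1$ with positive fixed point $\rho$; irreducibility of the \emph{even} subalgebra is precisely what forces the twisted map to have spectral radius strictly below one, so that there are exactly two peripheral eigenvalues. By Lemma~\ref{lemma:fixed_points} and Prop.~\ref{prop:fixed_point_rank}(2) the corresponding even/odd fixed points $\rho^e,\rho^o$ of Eq.~\eqref{eq:fixed_points} certify that $\mathcal{B}$ has become a degenerate GNT.

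Finally I would take $k$ to be a common multiple of the periods appearing in all blocks, invoke Cor.~\ref{cor:stability} to guarantee that blocks already normal stay normal under the extra blocking, and recombine the pieces into $\mathcal{A}_{GCF}=\bigoplus_j \mu_j \mathcal{A}_j$, a direct sum of GNTs with $Z$ block diagonal, which is exactly the GCF of Definition~\ref{def:graded_canonical_form}. The main obstacle I anticipate is the careful bookkeeping of the fermionic signs in passing from $E_{\mathcal{A}}$ to the transfer matrix of the blocked tensor, and establishing rigorously the ``exactly two peripheral eigenvalues'' claim in the degenerate case, i.e.\ that irreducibility of the even subalgebra rules out any magnitude-one eigenvalue of the twisted map; this is the step where the graded structure genuinely departs from the qudit argument.
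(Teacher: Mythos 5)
Your overall scaffolding (reduce to a single graded irreducible block via Prop.~\ref{prop:GIF}, recombine after blocking using Cor.~\ref{cor:stability} and a common blocking length) matches the paper, but both of your per-block branches contain genuine gaps. In the branch with no invariant subspace at all, you assert that $Z$ ``can be arranged block diagonal with respect to the finer decomposition'' into the $p$ cyclic Perron--Frobenius components. This is false in general: since $ZA^n=(-1)^{|n|}A^nZ$, conjugation by $Z$ maps the cyclic decomposition $\{H_j\}$ to another cyclic decomposition, so $ZH_j=H_{j+s}$ for a fixed shift $s$ with $2s\equiv 0 \pmod p$; when $p$ is even and $s=p/2$, $Z$ pairs components, and no relabeling makes the individual $H_j$ graded. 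In that situation the blocked tensor has only \emph{non-graded} invariant subspaces and one must fall back into the degenerate analysis---this is exactly what the paper's proof does (``if there are invariant graded subspaces\dots otherwise we fall into case $(ii)$''), and it is how degenerate GNTs can emerge even from this branch. Your claim that every piece becomes a non-degenerate GNT is therefore incorrect.

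The second branch contains the more serious gap, which you yourself flag: the implication ``irreducibility of the even subalgebra forces the twisted map $X\mapsto\sum_n(-1)^{|n|}B^nXB^{n\dagger}$ to have spectral radius strictly below one'' is not merely unproven---it is false at a fixed blocking level. By Prop.~\ref{prop:degenerate_GIF}, even-subalgebra irreducibility (condition~3) holds \emph{automatically} for every graded irreducible tensor with a non-graded invariant subspace, including those which are not GNTs; these are precisely the tensors still to be handled, so your claimed implication would make the hard part of the theorem trivial. The paper's route is genuinely different: if the twisted map has a peripheral eigenvalue, Lemma~A.2 of Ref.~\cite{cirac2017matrix} produces an invertible $S$ and a phase $\phi$ with $B^i=\pm e^{i\phi}SB^iS^{-1}$; an eigenvalue-counting argument on $S$ forces $\phi=2\pi q/p$; blocking $p$ times yields $B^I=(-1)^{|I|}SB^IS^{-1}$, and Burnside's theorem then shows the blocked even subalgebra must be reducible (otherwise $S=\alpha\openone$ kills all odd $B^I$, contradicting Cor.~\ref{prop:other_implication}). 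The conclusion is \emph{not} that the original block was already a degenerate GNT, but that after blocking it decomposes into strictly smaller graded blocks, so the whole procedure must be restarted; termination follows only because the bond dimension strictly decreases. Your proposal lacks this iteration-and-termination structure entirely---you treat the decomposition as one-shot---and without it (or a genuinely new proof of your spectral-radius claim) the argument does not close.
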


\begin{proof}
	
By Prop.~\ref{prop:GIF} we can assume wlog that $\mathcal{A}$ is in GIF. Furthermore, from Corollary ~\ref{cor:stability}, any graded normal block remains such after blocking, so that we can restrict ourselves to study the case where the tensor $\mathcal{A}$ has a single block in its GIF. We can also assume wlog that the spectral radius of $\mathcal{E}_\mathcal{A}$ is one. There are only two possibilities for the tensor $\mathcal{A}$: $(i)$ there is no invariant subspace for the matrices $A^{j}$ or $(ii)$ there are non-graded invariant subspaces for the matrices $A^{j}$.

Consider case $(i)$. If there is a single eigenvalue with $|\lambda|=1$, then $\lambda=1$ and we are done. Otherwise, there are $p$ eigenvalues $e^{i 2\pi q/p }$, with ${\rm gcd}(q,p)=1$, where $p$ divides the bond dimension $D$~\cite{evans_spectral_1978}. We can then block $p$ times and consider the blocked tensor $\mathcal{A}_p$. Now, the CPM associated with $\mathcal{A}_p$ has exactly $p>1$ eigenvalues equal to $1$. Accordingly, there is necessarily an invariant subspace for $A_p^i$. If there are invariant graded subspaces, we can decompose the block further with a projector $P$, s.t. $[P,Z]=0$ and restart the procedure for each block. Otherwise we fall into case $(ii)$, detailed below.
	
Consider then case $(ii)$, namely suppose that $\mathcal{A}$ has a single block in the GIF, with non-graded invariant subspaces for the matrices $A^{j}$. If $A^{j}$ is a degenerate graded normal tensor we are finished. If this is not the case, we show below that the matrices $A^{j}$ can be decomposed into strictly smaller blocks after blocking. Since the bond dimension $D$ is finite, this is enough to conclude the proof.
	
	From Prop.~\ref{prop:degenerate_GIF}, we can assume without loss of generality that $A^{j}$ has the form~\eqref{eq:graded_canonical_form_deg}. Since there is no invariant subspace for the matrices $B^i$, cf. Prop.~\ref{prop:degenerate_GIF}, the CPM
	\be
	\mathcal{E}_\mathcal{B}(X)=\sum_j B^jX B^{j\dagger}
	\label{eq:b_map}
	\ee 
	can either have a single eigenvalue $\lambda$ with $|\lambda|=1$, [we call this case $(j)$] or exactly $p$ eigenvalues $e^{i 2\pi q/p}$, with ${\rm gcd}(q,p)=1$, where $p$ divides the bond dimension  $D/2$ [we call this case $(jj)$]. Consider case $(jj)$ and take the tensor $\mathcal{A}_p$ obtained by blocking $p$ times. The matrices corresponding to $\mathcal{A}_p$ are
	\begin{subequations}
		\begin{align}
		A^I_p & = \left(\begin{matrix} B_p^I & 0 \\ 0 & B_p^I \end{matrix}\right) = \openone \otimes B_p^I\qquad \text{ if } |I| = 0\,, \\
		A^I_p & = \left(\begin{matrix} 0 & B_p^I \\ B_p^I & 0 \end{matrix}\right) = \sigma^x \otimes B_p^I \qquad \text{ if } |I| = 1\,.
		\end{align}
		\label{eq:graded_canonical_form_blocked}
	\end{subequations}
	where $I=(i_{1},\ldots, i_p )$, $|I|=|i_1|+\cdots + |i_p|$ and
	\be
	B_{p}^{I}=B^{i_1}\cdots B^{i_p}\,.
	\ee
	Now, since $\mathcal{E}_{\mathcal{B}_p}$ has $p>1$ eigenvalues equal to one, there must be an invariant subspace for all $B_{p}^{I}$. Thus, from Prop.~\ref{prop:degenerate_GIF} the matrices $A^I_p$ can be decomposed further into smaller blocks with projectors commuting with  $Z$. Consider now $(j)$. In this case, the tensor $\mathcal{B}$ is normal. Repeating the construction of Lemma~\ref{lemma:fixed_points}, we consider the tensor $\widetilde{\mathcal{A}}$ in Eq.~\eqref{eq:diagonal_form_GIF}, which is related to $\mathcal{A}$ via the similarity transformation $u$ in Eq.~\eqref{eq:similarity_u}. It is  clear from the diagonal structure of  \eqref{eq:diagonal_form_GIF} that $\mathcal{E}_{\widetilde{\mathcal{A}}}$ (and thus $\mathcal{E}_{\mathcal{A}}$) have at least two eigenvalues equal to one. Since we are assuming that $\mathcal{A}$ is not normal, there must be other eigenvalues with absolute value one. It follows from Lemma A.2 in Ref.~\cite{cirac2017matrix} that this is only possible if there is a non-singular matrix $S$ and a phase  $\phi$ such that
	\be
	B^{i}=e^{i\phi} S C^{i} S^{-1}\,,
	\ee
	namely, using Eq.~\eqref{eq:definition_c},
	\begin{subequations}
		\bea
		B^{i}&=&e^{i\phi} S B^{i} S^{-1}\,,\quad |i|=0\\
		B^{i}&=&-e^{i\phi} S B^{i} S^{-1}\,,\quad |i|=1\,.
		\eea
		\label{eq:commutations}
	\end{subequations}
	Since $S$ is invertible, there is $\mu\neq 0$, $\ket{v}\neq 0$ such that
	\be
	S\ket{v}=\mu\ket{v}\,.
	\ee
	Since $\mathcal{B}$ is normal, for  any $n$ there is $B^{i_1}\cdots B^{i_n}$ such that $B^{i_1}\cdots B^{i_n}\ket{v}\neq 0$. By making repeated use of Eq.~\eqref{eq:commutations}, we get
	\be
	S B^{i_1}\cdots B^{i_n}\ket{v}=(-1)^{\sum_{j}|i_j|} e^{i\phi n}\mu\, B^{i_1}\cdots B^{i_n}\ket{v}\,
	\ee
namely, $S$ has an eigenvalue of the form $\pm e^{i\phi n}\mu$ for all $n$. Since $S$ is a finite-dimensional matrix, this is only possible if $\phi=2\pi q/p$ with $p,q \in \mathbb{N}$ and ${\rm g cd}(q,p)=1$. Blocking $p$ times, we now obtain a new tensor in the form~\eqref{eq:graded_canonical_form_blocked}, where now
\begin{subequations}
	\bea
	B^{I}&=& S B^{I} S^{-1}\,,\quad |I|=0\\
	B^{I}&=&-S B^{I} S^{-1}\,,\quad |I|=1\,.
	\eea
	\label{eq:commutations_blocked}
\end{subequations}

Now we prove by contradiction that the even subalgebra generated by the matrices $B^I$ has invariant subspaces, so that the blocked tensor $\mathcal{A}_p$ can be decomposed further by Prop.~\ref{prop:degenerate_GIF}. If this is not true, then by Burnside's theorem~\cite{Lomonosov2004the} the even subalgebra coincides with the full matrix algebra. On the other hand, from Eq.~\eqref{eq:commutations_blocked}, we have that $S$ must commute with any element of the even algebra, and thus $S= \alpha \openone$, where wlog $\alpha\neq 0$.  Thus
\bea
B^{I}&=&- B^{I}\; \forall  |I|=1\Rightarrow B^{I}=0\,, \forall |I|=1\,.
\eea
By Prop.~\ref{prop:degenerate_GIF}, this means that $\mathcal{A}_p$ has invariant graded subspaces, and using Corollary~\ref{prop:other_implication} we arrive at a contradiction.
\end{proof}

Having established the existence of the CGF for any even tensor $\mathcal{A}$, we now prove that the GCF is essentially unique. Our strategy follows closely the one for qudits~\cite{cirac2017matrixRenormalization}.

\begin{defn}
	The even tensors $\mathcal{A}_j$ ($j=1,\ldots,g$) form a basis of graded normal tensors (BGNT) of a tensor $\mathcal{A}$ if:
	(i) $\mathcal{A}_j$ are graded normal (degenerate or non-degenerate); (ii) for each $N$, $\ket{V^{(N)}(\mathcal{A})}$ can be written as a linear combination of $\ket{V^{(N)}(\mathcal{A}_j)}$; (iii) there exists some $N_0$ such that for all $N>N_0$, $\ket{ V^{(N)}(\mathcal{A}_j)}$ are linearly independent. 
\end{defn}

\begin{lem}
\label{equalMPS}
Let $\ket{V_{a,b}}$ be two fMPSs (with antiperiodic boundary conditions) generated by two graded NTs $\mathcal{A}_{a,b}$, with $D_{\alpha}\times D_{\alpha}$  matrices $A^{i}_{\alpha}$ and  parity $Z_{\alpha}$.  Then
	\begin{subequations}
		\bea
		\lim_{N\to\infty} \langle V_\alpha|V_\alpha\rangle&=&c_{\alpha},	\label{norm}\\
		\label{scprod}
		\lim_{N\to\infty} |\langle V_b|V_a\rangle|&=& \text{ 0 or } c_a,
		\eea
\end{subequations}
where $c_{a}=1$ if $\mathcal{A}_a$ is non-degenerate, and $c_{a}=2$ if it is degenerate. In the case where the limit~\eqref{scprod} is non-vanishing, $\mathcal{A}_{a,b}$ are either both non-degenerate or both degenerate. Furthermore, there exist an invertible matrix $X$, a permutation matrix $\Pi$ and a phase $\phi$ such that $A^i_a= e^{i\phi} X \Pi A^i_b \Pi^{-1} X^{-1}$, with $[X,Z_a]=0$, and $\Pi^{-1} Z_a\Pi=\pm Z_b$. 
\end{lem}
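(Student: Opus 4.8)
The plan is to reduce both the self-overlaps and the cross-overlap to traces of transfer matrices and then read off their long-$N$ behaviour from the graded normal structure. Since the Fock states $a_1^{\dagger n_1}\cdots a_N^{\dagger n_N}\ket{\Omega}$ are orthonormal, the antiperiodic coefficients $c_{n_1,\ldots,n_N}=\mathrm{tr}(A^{n_1}\cdots A^{n_N})$ give at once $\langle V_\alpha|V_\alpha\rangle=\mathrm{tr}(E_\alpha^{N})$ and $\langle V_b|V_a\rangle=\mathrm{tr}(E_{ba}^{N})$, where $E_\alpha=\sum_n \bar A_\alpha^{\,n}\otimes A_\alpha^{n}$ is the ordinary transfer matrix and $E_{ba}=\sum_n \bar A_b^{\,n}\otimes A_a^{n}$ is the mixed one. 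Crucially no extra fermionic signs survive in these contractions: the grading enters only through the evenness of the tensors and through the diagonal parity operators $Z_{a,b}$. This is exactly the observation that lets me import the qudit arguments of Ref.~\cite{cirac2017matrixRenormalization}.

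For the norms I would invoke Definition~\ref{def:graded_normal_tensor} together with Prop.~\ref{prop:fixed_point_rank}: a graded normal tensor has a transfer matrix whose only modulus-one eigenvalues equal $1$, with trivial Jordan blocks, since after normalization the associated completely positive map is trace preserving (cf.\ Prop.~6.2 of Ref.~\cite{wolf2012quantum}). There is one such eigenvalue when $\mathcal{A}_\alpha$ is non-degenerate and two when it is degenerate, so $\mathrm{tr}(E_\alpha^N)\to c_\alpha$, which is \eqref{norm}. The same spectral input bounds the spectral radius of $E_{ba}$ by $1$, as follows from $|\langle V_b|V_a\rangle|\le\|V_a\|\,\|V_b\|$ keeping $\mathrm{tr}(E_{ba}^N)$ bounded.

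The heart of the proof is to characterize when $E_{ba}$ has a modulus-one eigenvalue. For non-degenerate $\mathcal{A}_{a,b}$ this is the graded analogue of Lemma~A.2 of Ref.~\cite{cirac2017matrix}: such an eigenvalue exists iff there is an invertible intertwiner $M$ and a phase $\phi$ with $A_a^i=e^{i\phi}MA_b^iM^{-1}$ for all $i$, and then $e^{i\phi}$ is a \emph{simple} eigenvalue, whence $|\mathrm{tr}(E_{ba}^N)|\to 1=c_a$, giving the dichotomy \eqref{scprod}. To reach the stated normal form I would exploit parity: from $A_a^iM=e^{i\phi}MA_b^i$ and $Z_\alpha A_\alpha^i=(-1)^{|i|}A_\alpha^iZ_\alpha$ one verifies that $Z_aMZ_b$ is again an intertwiner, so by simplicity $Z_aMZ_b=\pm M$, i.e.\ $M^{-1}Z_aM=\pm Z_b$. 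Choosing a permutation $\Pi$ realizing the \emph{same} relation $\Pi^{-1}Z_a\Pi=\pm Z_b$ between the two diagonal parities, the matrix $X=M\Pi^{-1}$ then commutes with $Z_a$, and $M=X\Pi$ gives $A_a^i=e^{i\phi}X\Pi A_b^i\Pi^{-1}X^{-1}$ with $X$ even. Finally, the matching of degeneracy types in the non-vanishing case follows from $c_a=\lim|\langle V_b|V_a\rangle|\le\sqrt{c_ac_b}$ and its symmetric counterpart, which force $c_a=c_b$.

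The main obstacle will be the degenerate case, where the modulus-one eigenspace of $E_{ba}$ is no longer one-dimensional and the simple-eigenvalue argument fails verbatim. Here I would first diagonalize each degenerate tensor by the parity-breaking similarity $u$ of Eq.~\eqref{eq:similarity_u}, splitting $\mathcal{A}_\alpha$ into the normal blocks $B_\alpha^i$ and $C_\alpha^i=(-1)^{|i|}B_\alpha^i$ of Lemma~\ref{lemma:fixed_points}, and apply the non-degenerate characterization to each pair of blocks. The delicate bookkeeping is then to track how the four resulting block overlaps combine---only configurations in which two of them lock onto the same phase can drive $|\mathrm{tr}(E_{ba}^N)|$ to $c_a=2$---and to check that transporting the block intertwiners back through $u^{-1}$ yields a single even $X$ and a permutation $\Pi$ with $\Pi^{-1}Z_a\Pi=\pm Z_b$, the sign recording whether the blocks pair up ``straight'' or ``crossed.'' Keeping all the fermionic signs consistent across this reassembly is the step demanding the most care.
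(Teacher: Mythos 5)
Your overall architecture matches the paper's proof closely: Eq.~\eqref{norm} from the spectral structure of $E_\alpha$, the ungraded Lemma~A.2 of Ref.~\cite{cirac2017matrixRenormalization} applied to the mixed transfer matrix in the non-degenerate case, extraction of $M^{-1}Z_aM=\pm Z_b$ from uniqueness of the intertwiner (the paper does the same thing in the equivalent form $S_a=Z_aYZ_bY^{-1}\propto\openone$), the factorization $M=X\Pi$ with $[X,Z_a]=0$, and the parity-breaking similarity $u$ of Eq.~\eqref{eq:similarity_u} splitting a degenerate tensor into the normal blocks $B^i$ and $C^i=(-1)^{|i|}B^i$. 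All of that is sound and is essentially the published argument.

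The genuine gap is in your treatment of the mixed case (one tensor degenerate, the other not), i.e.\ the claim that a non-vanishing limit forces equal degeneracy types. You derive this from ``$c_a=\lim|\langle V_b|V_a\rangle|\le\sqrt{c_ac_b}$ and its symmetric counterpart,'' but this presupposes that a non-vanishing limit already equals $c_a$ --- which is exactly what must be proven in the mixed case --- and the ``symmetric counterpart'' invokes the lemma itself with the roles swapped. Concretely, with $\mathcal{A}_a$ degenerate and $\mathcal{A}_b$ non-degenerate, splitting $\mathcal{A}_a$ by $u$ gives $\langle V_b|V_a\rangle=\mathrm{tr}\,E(B_a,b)^N+\mathrm{tr}\,E(C_a,b)^N$, and a priori exactly one block could match $\mathcal{A}_b$, yielding a limit of $1$; since $1\le\sqrt{2}$, Cauchy--Schwarz does not exclude this, so your argument as stated does not close. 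The missing idea is that a single-block match propagates to the twin block with the \emph{same} phase: if $B_a^i=e^{i\theta}MA_b^iM^{-1}$, then $C_a^i=(-1)^{|i|}B_a^i=e^{i\theta}(MZ_b)A_b^i(MZ_b)^{-1}$, so the limit would be $2$, and only then does Cauchy--Schwarz (bound $\sqrt{2}$) give a contradiction. With this supplement your route becomes valid and is genuinely different from the paper's, which instead derives a structural contradiction: irreducibility of the even algebra generated by the $B^i$ (Prop.~\ref{prop:degenerate_GIF}) forces $Z_b=\pm\openone$, hence $A_b^i=0$ on one whole parity sector, contradicting Lemma~\ref{prop:degenerate_CGF}. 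Note that the same one-match-implies-twin-match bookkeeping is also what you need --- and only gesture at --- in the degenerate--degenerate case, to rule out a limit of $1$ there as well.
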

\begin{proof}
Eq.~\eqref{norm} is obvious, so we only need to prove~\eqref{scprod}. Suppose that $\mathcal{A}_{a,b}$ are both non-degenerate, and that the limit~\eqref{scprod} is non-vanishing. Then, it follows from Lemma A.2 in Ref.~\cite{cirac2017matrixRenormalization} that $D_a=D_b$, and there exist a phase $\phi$ and a non-singular matrix $Y$, such that $A^i_a=e^{i\phi}YA^{i}_bY^{-1}$. Now, define $\zeta_a=YZ_bY^{-1}$, and $S_a=Z_a\zeta_a$. It is immediate to see that $S_a$ commutes with $A^{i}_a$ for all $i$. Hence, since $A^{i}_a$ is normal, $S_a=\alpha \openone$, namely 
\be\label{eq:z_a}
Z_a=\alpha Y Z_b Y^{-1}\,.
\ee 
Since $Z_a^2=Z_b^2=\openone$, we have $\alpha=\pm 1$.  Furthermore, since $Z_a$ and $\alpha Z_b$ are both diagonal and related by a similarity transformation, they have the same number of $1$'s and $0$'s on the diagonal and there exists a permutation matrix $\Pi$ s.t. $\alpha Z_b= \Pi^{-1} Z_a \Pi$. Plugging this into~\eqref{eq:z_a}, we get $[Z_a,Y\Pi^{-1}]=0$, and the statement follows.

Suppose now that $\mathcal{A}_a$ is a graded normal degenerate tensor, so that wlog $A^i_a$ are in the form~\eqref{eq:graded_canonical_form_deg}. First, we prove by contradiction that if $\mathcal{A}_b$ is non-degenerate, then $\lim_{N\to\infty} |\langle V_b|V_a\rangle|=0$. Suppose that this is not true, and that $\lim_{N\to\infty} |\langle V_b|V_a\rangle|=c\neq 0$. Using a similarity transformations that does not have well-defined parity, $\mathcal{A}_a$ can be brought into the block-diagonal form~\eqref{eq:diagonal_form_GIF}. Then, it follows from Lemma A.2 in Ref.~\cite{cirac2017matrixRenormalization} that there exists an invertible matrix $X$ and a phase $\phi$, such that 
\be
A^i_b=e^{i\phi} X B^i X^{-1}\ \ {\rm or}\ \ A^i_b=e^{i\phi} X (-1)^{|i|}B^i X^{-1}\,.
\label{eq:similarity}
\ee
Since $\mathcal{A}_a$ is a normal degenerate tensor,  the even algebra generated by $B^{i}$ does not have invariant subspaces, and so the same is true for $A^i_b$ due to Eq.~\eqref{eq:similarity}. On the other hand,  we have $[Z_b, A^{j_1}_b\cdots  A^{j_n}_b]=0$ for all products $A^{j_1}_b\cdots  A^{j_n}_b$ with $\sum_k|j_k|\equiv 0$ (mod $2$). Since there are no invariant subspaces for the even algebra, this implies $Z_b=\pm \openone$. By the parity of the tensor $\mathcal{A}_b $, it must $A_b^{|i|}\equiv 0$ for all $i$ with $|i|=0$ (mod $2$), or $A_b^{|i|}\equiv 0$ for all $i$ with $|i|=1$  (mod $2$). Using Eq.~\eqref{eq:similarity} and Prop.~\ref{prop:degenerate_CGF}, we arrive at a contradiction. 

Finally, suppose that both $\mathcal{A}_a$, $\mathcal{A}_b$ are degenerate normal tensors, and that the limit~\eqref{scprod} is non-vanishing. Applying a permutation and an even gauge transformation, we can cast both tensors in the form Eq.~\eqref{eq:graded_canonical_form_deg}, namely $A^i_{a,b}=(\sigma^{x})^{|i|}\otimes B^{i}_{a,b}$. Reasoning as before it follows that either $B^i_b=e^{i\phi} Y B^i_a Y^{-1}$, or $B^i_b=e^{i\phi} (-1)^{|i|} Y B^i_a Y^{-1}$ for some invertible matrix $Y$ and $\phi\in\mathbb{R}$. In both cases, the statement easily follows.
\end{proof}

\begin{rmk}
{\normalfont Note that permutations do not necessarily have a well-defined parity. However, they leave the parity operator $Z$ in diagonal form, which is a necessary condition in order to represent a state defined by~\eqref{eq:fMPS_apbc} as an fMPSs. In fact, it is immediate to show that, if $G$ is a gauge transformation such that $GZG^{-1}$ is diagonal, then $G=\Pi X$, where $[X,Z]=0$, and $\Pi$ is a permutation.}
\end{rmk}

We have now all the necessary ingredients to state the following fundamental theorem for fMPSs.

\begin{thm}
	\label{thm1}
	Let $\mathcal{A}$ and $\mathcal{B}$ be two tensors in GCF, with BGNT $A^i_{k_a}$, $B^i_{k_b}$ ($k_{a,b}=1,\dots,g_{a,b}$), and corresponding parity operators $Z^a_{k_a}$ and $Z^b_{k_b}$, respectively. If for all $N$, $\mathcal{A}$ and $\mathcal{B}$ generate fMPSs that are proportional to each other, then: (i) $g_a=g_b=:g$; (ii) for all $k$ there exist $j_k$, $\phi_{k}$,  an invertible matrix $X_k$, and  a permutation $\Pi_k$ such that $B^i_k= e^{i\phi_k} X_k \Pi_k A^i_{j_k} \Pi_k^{-1} X_k^{-1}$, with $[X_k, Z^b_k]=0$, $\Pi_k^{-1} Z^b_k\Pi_k=\pm Z^a_{j_k}$.
	\end{thm}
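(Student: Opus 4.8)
The plan is to mirror the uniqueness proof for the canonical form of qudit MPSs in Ref.~\cite{cirac2017matrixRenormalization}, with Lemma~\ref{equalMPS} serving as the graded analogue of the overlap lemma for normal tensors. Since $\mathcal{A}$ and $\mathcal{B}$ are in GCF, the block structure of Definition~\ref{def:graded_canonical_form} yields
\[|V^{(N)}(\mathcal{A})\rangle=\sum_{k_a=1}^{g_a}(\mu^a_{k_a})^N\,|V^{(N)}(\mathcal{A}_{k_a})\rangle,\]
and likewise for $\mathcal{B}$, where each $\mathcal{A}_{k_a}$ is graded normal with transfer matrix of spectral radius one (equivalent GCF blocks being first grouped so that the BGNT representatives are pairwise inequivalent, as forced by the linear independence condition (iii) in the definition of a BGNT). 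The first step is to record two consequences of Lemma~\ref{equalMPS}: after dividing by the finite limits $c_{k}\in\{1,2\}$, the normalized normal fMPSs $\{|V^{(N)}(\mathcal{A}_{k_a})\rangle\}$ become asymptotically orthonormal as $N\to\infty$, and the cross-overlap $\langle V^{(N)}(\mathcal{A}_{k_a})|V^{(N)}(\mathcal{B}_{k_b})\rangle$ tends to zero unless $\mathcal{A}_{k_a}$ and $\mathcal{B}_{k_b}$ are equivalent, in which case it tends to a nonzero constant (unimodular up to the factor $c_k$) and the two tensors are both degenerate or both non-degenerate.

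The core step is to convert the proportionality $|V^{(N)}(\mathcal{A})\rangle=\kappa_N|V^{(N)}(\mathcal{B})\rangle$ into a block-by-block matching. I would order the blocks by the magnitudes $|\mu_{k}|$ of their weights and run a dominant-balance/induction argument. Taking the overlap of the proportionality relation with a fixed $|V^{(N)}(\mathcal{A}_{k_a})\rangle$, the left-hand side is asymptotically $(\mu^a_{k_a})^N c_{k_a}$, since by asymptotic orthonormality the diagonal term dominates, while the right-hand side receives contributions only from those $\mathcal{B}_{k_b}$ equivalent to $\mathcal{A}_{k_a}$. As the left-hand side does not vanish, at least one such $\mathcal{B}_{k_b}$ must exist, and pairwise inequivalence of the $\mathcal{B}$-blocks makes it unique; this defines an injection $k_a\mapsto k_b$. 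The symmetric argument (exchanging $\mathcal{A}$ and $\mathcal{B}$) gives the inverse, hence a bijection, establishing $g_a=g_b=:g$ and the index map $j_k$. The matched weight magnitudes are then forced to agree up to the common factor $|\kappa_N|^{1/N}$, consistent with, though not required by, the conclusion.

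Finally, for each matched pair I would invoke the equivalence part of Lemma~\ref{equalMPS} applied with ``$a$''$=\mathcal{B}_k$ and ``$b$''$=\mathcal{A}_{j_k}$: it directly furnishes a phase $\phi_k$, an invertible matrix $X_k$, and a permutation $\Pi_k$ with $B^i_k=e^{i\phi_k}X_k\Pi_k A^i_{j_k}\Pi_k^{-1}X_k^{-1}$, $[X_k,Z^b_k]=0$, and $\Pi_k^{-1}Z^b_k\Pi_k=\pm Z^a_{j_k}$, which is precisely the claimed relation; the $\pm$ reflects the freedom, noted after Definition~\ref{def:graded_canonical_form}, to flip the parity operator on each graded normal block independently. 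I expect the main obstacle to be the asymptotic bookkeeping in the core step, specifically the presence of degenerate graded normal blocks: unlike the qudit case these carry $c_k=2$ and two leading eigenvalues of the transfer matrix, so one must check that the dichotomy of Lemma~\ref{equalMPS} (both degenerate or both non-degenerate) cleanly separates the two types, that the Gram-matrix estimates survive the extra degeneracy, and that the parity-sign data is tracked correctly through the matching.
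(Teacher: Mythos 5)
Your proposal is correct and takes essentially the same route as the paper: the paper's entire proof is the remark that, equipped with Lemma~\ref{equalMPS}, the argument follows without modification the uniqueness proof for the qudit canonical form (Theorem~2.10 of Ref.~\cite{cirac2017matrixRenormalization}), which is exactly the block decomposition, asymptotic-orthonormality, and dominant-balance matching scheme you spell out, followed by the gauge/permutation/parity conclusion extracted from Lemma~\ref{equalMPS}. Your extra bookkeeping for degenerate blocks ($c_k=2$, two leading transfer-matrix eigenvalues) and the $\pm$ parity sign is precisely the content that Lemma~\ref{equalMPS} supplies, so nothing in your outline deviates from the intended argument.
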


Equipped with Lemma~\ref{equalMPS}, the proof of this statement follows without modifications the one presented in Ref.~\cite{cirac2017matrixRenormalization} for Theorem 2.10. Finally, Theorem~\ref{thm2_fundamental} follows as a simple corollary.  Note that, using the same notation as in the statement of   Theorem~\ref{thm2_fundamental}, we have that $Z_a$ and $\Pi Z_b \Pi^{-1}$ have the same block-diagonal structure as $\mathcal{A}$, and in each block they coincide up to a global minus sign.

\section{fMPUs with antiperiodic boundary conditions}
\label{sec:technical_results_antiperiodic}

\subsection{GCF of fMPUs with ABC}

In this appendix we provide the technical proofs of the statements presented in Sec.~\ref{sec:generalized_fMPUs}. We begin with the characterization of the GCF of tensors generating fMPUs with antiperiodic boundary conditions.

\begin{prop}\label{prop:GCF_fMPUs}
	Let $\mathcal{U}$ be in GCF, and suppose $\mathcal{U}$ generates a type I (type II) fMPU $U^{(N)}$ with antiperiodic boundary conditions. Then, $\mathcal{U}/\sqrt{d}$ is graded normal non-degenerate (degenerate).
\end{prop}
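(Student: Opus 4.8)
The plan is to reduce the statement to a power--sum constraint on the transfer matrix, exactly as in the qudit case of Ref.~\cite{cirac2017matrix}, and then to use the GCF hypothesis to collapse the tensor to a single block of the correct degeneracy. First I would exploit that, for antiperiodic boundary conditions, the boundary operator of both families is a scalar, $S=e^{i\alpha}\openone$ for type I and $S=e^{i\alpha}\openone/\sqrt{2}$ for type II (cf.\ Eq.~\eqref{eq:second_kind_apb}). Hence $U^{(N)}$ equals $e^{i\alpha}$ (type I), resp.\ $e^{i\alpha}/\sqrt{2}$ (type II), times a plain fMPO of the form~\eqref{eq:fMPO_apbc}, and unitarity $U^{(N)\dagger}U^{(N)}=\openone$ gives $\tfrac{1}{d^N}{\rm tr}[U^{(N)\dagger}U^{(N)}]=1$. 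Feeding this into the trace--transfer identity~\eqref{eq:trace_transfer_antiperiodic} (with the scalar prefactor $|S|^2$ carried along) yields ${\rm tr}[E_{\mathcal U}^N]=1$ for type I and ${\rm tr}[E_{\mathcal U}^N]=2$ for type II, for every $N\ge 1$, where $E_{\mathcal U}$ is the transfer matrix of $\mathcal U/\sqrt d$.

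Next I would extract the spectrum of $E_{\mathcal U}$ from these power sums. Writing $\{\lambda_j\}$ for its eigenvalues, ${\rm tr}[E_{\mathcal U}^N]=c$ for all $N$ (with $c=1$ or $2$) is equivalent, via the generating function $\sum_{N\ge 1}{\rm tr}[E_{\mathcal U}^N]x^N=\sum_j \lambda_j x/(1-\lambda_j x)=cx/(1-x)$, to the statement that the rational function on the left has a single pole, at $x=1$; matching poles and residues forces every nonzero eigenvalue to equal $1$ and fixes the multiplicity of the eigenvalue $1$ to be exactly $c$. Thus $E_{\mathcal U}$ has spectral radius $1$, carries exactly $c$ eigenvalues equal to $1$, and has all remaining eigenvalues equal to $0$. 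For $c=1$ this is precisely condition $(ii)$ of Definition~\ref{def:graded_normal_tensor} in the non-degenerate case, and for $c=2$ in the degenerate case.

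It then remains to verify condition $(i)$ --- the absence of a nontrivial graded invariant subspace --- which I would obtain by showing that the GCF $\mathcal U/\sqrt d=\bigoplus_k \mu_k \hat{\mathcal U}_k$ consists of a single block. Because every nonzero eigenvalue of $E_{\mathcal U}$ equals $1$, no block can have $0<|\mu_k|<1$ (such a block would contribute a nonzero eigenvalue of modulus $|\mu_k|^2<1$), so all surviving blocks have $|\mu_k|=1$; the diagonal sector $(k,k)$ of $E_{\mathcal U}$ then contributes exactly $c_k\in\{1,2\}$ eigenvalues equal to $1$ (one if $\hat{\mathcal U}_k$ is a non-degenerate GNT, two if degenerate), and any off-diagonal sector $(k,k')$ can only add further unit eigenvalues when the two blocks generate proportional fMPS, i.e.\ can only increase the count. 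Comparing with the total count $c$ from the previous step then pins down the block structure. For type I, $c=1$ leaves a single block with $c_k=1$, hence $\mathcal U/\sqrt d$ is a non-degenerate GNT; for type II the prescribed form $U^{n,m}=(\sigma^x)^{|n|+|m|}\otimes N^{n,m}$ with $Z=\sigma^z\otimes\openone$ makes each GCF block inherit the degenerate shape~\eqref{eq:GItensors}, so that every $c_k=2$, and $c=2$ again leaves a single (now degenerate) block.

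The main obstacle I anticipate is the type II bookkeeping in this last step: I must confirm, using the characterization of graded irreducible tensors with a non-graded invariant subspace (Prop.~\ref{prop:degenerate_GIF} and Corollary~\ref{prop:other_implication}), that the structural ansatz~\eqref{eq:second_kind_apb} indeed decomposes, in GCF, into degenerate graded normal blocks and not into pairs of non-degenerate ones, and that the generalized trace--transfer identity retains the correct fermionic signs once the boundary operator $S$ is inserted. The remaining ingredient --- that distinct GCF blocks generating non-proportional states have mixed transfer matrices of spectral radius strictly below one, so the naive diagonal count is not spoiled --- is supplied by Lemma~\ref{equalMPS}, exactly as in Ref.~\cite{cirac2017matrix}.
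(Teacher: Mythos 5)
Your opening reduction is sound and coincides with the paper's: for ABC the boundary operator is a scalar, so unitarity plus Eq.~\eqref{eq:trace_transfer_antiperiodic} yields ${\rm tr}[E_{\mathcal U}^N]=1$ (type I) resp.\ $2$ (type II) for all $N$, and your generating-function/Newton-identities step is a self-contained substitute for the paper's appeal to Lemma A.5 of Ref.~\cite{cirac2017matrixRenormalization}: the nonzero spectrum of $E_{\mathcal U}$ is exactly $\{1\}$ with multiplicity $c\in\{1,2\}$. Your block count then settles type I exactly as the paper does ($c=1$ forces a single block, necessarily a non-degenerate GNT), and your worry about extra signs in the trace--transfer identity is moot since $S$ is proportional to the identity.

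The gap is in type II, precisely at the point you flagged as your ``main obstacle''. For $c=2$ the eigenvalue count leaves two alternatives: one degenerate block, or two non-degenerate blocks $\mathcal V\oplus\mathcal W$ generating mutually orthogonal fMPSs (so the mixed sectors contribute no unit eigenvalues); the counting argument cannot distinguish them. Your proposed resolution --- that the ansatz $U^{n,m}=(\sigma^x)^{|n|+|m|}\otimes N^{n,m}$ of Eq.~\eqref{eq:second_kind_apb} forces each GCF block to inherit the degenerate shape --- fails on two counts. First, the hypothesis is that the \emph{operator} $U^{(N)}$ is of type II, i.e.\ admits \emph{some} representation of that form; the tensor $\mathcal U$ in GCF is related to any such representation by gauge transformations and permutations (cf.\ Theorem~\ref{thm2_fundamental}) that destroy the $\sigma^x\otimes$ structure, so you may not assume $\mathcal U$ itself displays it --- indeed the degenerate shape~\eqref{eq:GItensors} is what the proposition is supposed to \emph{conclude}, via Lemma~\ref{prop:degenerate_CGF}. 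Second, even a tensor literally of that form need not decompose into degenerate blocks: by Corollary~\ref{prop:other_implication}, graded irreducibility of the form~\eqref{eq:GItensors} is equivalent to irreducibility of the even subalgebra generated by the $N^{n,m}$, and when that subalgebra is reducible (e.g.\ $N^{n,m}=0$ whenever $|n|+|m|$ is odd) the tensor splits into graded, \emph{non-degenerate} blocks --- so Prop.~\ref{prop:degenerate_GIF} and Cor.~\ref{prop:other_implication} cannot deliver $c_k=2$. What actually excludes the two-block alternative is unitarity: the paper assumes $\mathcal U=\mathcal V\oplus\mathcal W$ with both blocks graded normal non-degenerate, writes $U^{(N)}=e^{i\alpha}\bigl(V^{(N)}+W^{(N)}\bigr)/\sqrt2$, expands $\openone=U^{(N)\dagger}U^{(N)}$ in an orthonormal basis of parity-homogeneous local operators, and, by evaluating traces on chains of length $kN$ for all $k$ (Lemma A.5 applied to the relevant mixed transfer matrices), shows all expansion coefficients vanish; this forces $V^{(N)\dagger}V^{(N)}=W^{(N)\dagger}W^{(N)}=\openone$ together with $V^{(N)\dagger}W^{(N)}=W^{(N)\dagger}V^{(N)}=0$, which is contradictory, since a unitary $V^{(N)}$ with $V^{(N)\dagger}W^{(N)}=0$ implies $W^{(N)}=0$. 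Without this (or an equivalent) argument, your proof establishes type I but, for type II, only that the GCF is either one degenerate block or two non-degenerate ones.
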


\begin{proof}
Let us first consider the case where $U^{(N)}$ is an fMPU of the first kind, namely as in Eq.~\eqref{eq:generalized_fMPU} with $S=e^{i\alpha}\openone$, $\alpha\in \mathbb{R}$. Wlog we can assume that the tensor $\mathcal{U}$ is in GCF. We show that $\mathcal{U}/\sqrt{d}$ is necessarily graded normal non-degenerate. Indeed, since $U^{(N)}$ is unitary for all $N$, $(1/d)^N {\rm tr}(U^{(N)\dagger}U^{(N)})=1$, namely, using Eq.~\eqref{eq:trace_transfer_antiperiodic}, ${\rm tr}E^N_{\mathcal{U}}=1$ for all $N>1$. Using Lemma A.5 in Ref.~\cite{cirac2017matrixRenormalization}, it follows that there is only one nonzero eigenvalue of $E_{\mathcal{U}}$, which is $1$, and thus the GCF of $\mathcal{U}/\sqrt{d}$ contains only one normal non-degenerate block. 

Next, suppose $U^{(N)}$ is an fMPU of the second kind. Then, repeating the above argument, we get ${\rm tr}E^N_{\mathcal{U}}=2$ for all $N>1$, where the factor $2$ comes from the normalization $1/\sqrt{2}$ in Eq.~\eqref{eq:second_kind_apb}. Using again Lemma A.5 in Ref.~\cite{cirac2017matrixRenormalization}, we have only two possibilities: the GCF of $\mathcal{U}/\sqrt{d}$ has two normal non-degenerate blocks or only a single normal degenerate block. Let us assume that the former is true, and arrive at a contradiction. In this case, we can decompose $\mathcal{U}=\mathcal{V}\oplus \mathcal{W}$, where both $\mathcal{V}$, $\mathcal{W}$ are even, and thus $U^{(N)}=(V^{(N)}+W^{(N)})e^{i\alpha}/\sqrt{2}$, where  $V^{(N)}$, $W^{(N)}$ are standard fMPOs with antiperiodic boundary  conditions. Thus
\begin{align}
\openone=U^{(N)\dagger} U^{(N)}&=\frac{1}{2}V^{(N)\dagger} V^{(N)}+\frac{1}{2}W^{(N)\dagger} W^{(N)}\nonumber\\
&+\frac{1}{2}V^{(N)\dagger} W^{(N)}+\frac{1}{2}W^{(N)\dagger} V^{(N)}\,.
\label{eq:partial_1}
\end{align}
Let us fix the value of $N$. We have the formal  expansion
\be
V^{(N)\dagger}V^{(N)}=c_{1}\openone+\sum  c^{\alpha_{1},\ldots \alpha_{j}}_{i_1, \ldots i_j}A^{\alpha_1}_{i_1}\cdots A^{\alpha_j}_{i_j}\,.
\label{eq:partial_2}
\ee
Here, $A^{\alpha}_{j}$ are traceless local operators (with well-defined parity) that, together with $\openone/\sqrt{d}$, form an orthonormal basis of local operators corresponding to site $j$, namely ${\rm tr}(A_j^{\alpha\dagger} A_k^{\beta})=\delta_{\alpha,\beta}\delta_{j,k}$, while $c_1$, $c^{\alpha_{1},\ldots \alpha_{j}}_{i_1, \ldots i_j}$ are some complex coefficients. Note that, since $V^{(N)\dagger}V^{(N)}$ is even, the sum in Eq.~\eqref{eq:partial_2} is over all possible sequences $\{\alpha_j\}$, $\{i_j\}$ such that $\sum_j|\alpha_j|\equiv 0$ ($\bmod$ 2), where $|\alpha_j|$ is the parity of $A^{\alpha_j}$. Similar expansions hold for $W^{(N)\dagger} W^{(N)}$, $V^{(N)\dagger} W^{(N)}$, $W^{(N)\dagger} V^{(N)}$. Now, we can multiply Eq.~\eqref{eq:partial_1} by $1/{d^N}$ and take the trace: using that $\mathcal{V}/{\sqrt{d}}$, $\mathcal{W}/{\sqrt{d}}$ are normal non-degenerate with spectral radius equal to $1$, and Lemma A.5 in Ref.~\cite{cirac2017matrixRenormalization}, we immediately have that $(1/d^N){\rm tr}\left[V^{(N)\dagger}V^{(N)}\right]=1$, $(1/d^N){\rm tr}\left[W^{(N)\dagger}W^{(N)}\right]=1$, $(1/d^N){\rm tr}\left[W^{(N)\dagger}V^{(N)}\right]=0$, $(1/d^N){\rm tr}\left[V^{(N)\dagger}W^{(N)}\right]=0$. In particular $c_1=1$ in Eq.~\eqref{eq:partial_2}. Next, we show that $c^{\alpha_{1},\ldots \alpha_{j}}_{i_1, \ldots i_j}$ in Eq.~\eqref{eq:partial_2} are all vanishing. To this end, we multiply both sides of Eq.~\eqref{eq:partial_1} by $A^{\alpha_j \dagger}_{i_j} \cdots A^{\alpha_1 \dagger}_{i_1}$, with $\sum_j|\alpha_j|\equiv 0$ ($\bmod$ 2), and take the trace, obtaining
\be
0={\rm tr}\left[M_{V,V}\right]+{\rm tr}\left[M_{W,W}\right]+{\rm tr}\left[M_{W,V}\right]+{\rm tr}\left[M_{V,W}\right]\,.
\label{eq:sum_trace}
\ee
Here $M_{V,V}$ is a product of matrices $\tilde{V}[i_j,\alpha_j]$ with elements $\tilde{V}_{\beta,\gamma}[i_j,\alpha_j]=\sum_{n,m} b^{n,m}_{\beta}[i_j,\alpha_j] \mathbb{V}_{\beta,\gamma}^{n,m}$, where $\mathbb{V}_{\beta,\gamma}^{n,m}$ corresponds to the tensor generating $V^{(N)\dagger}V^{(N)}$, as defined in Eq.~\eqref{eq:unitary_stacked}, while $b^{n,m}_{\beta}[i_j,\alpha_j]$ are complex numbers, determined by the choice of $A^{\alpha_j \dagger}_{i_j}$. The matrices $M_{W,W}$, $M_{W,V}$, $M_{V,W}$ are defined similarly. Note that there is no additional parity operator $Z$ in the trace, which is due to the fact that $A^{\alpha_j \dagger}_{i_j} \cdots A^{\alpha_1 \dagger}_{i_1}$ is  even. Note also that ${\rm tr}\left[M_{V,V}\right]= c^{\alpha_{1},\ldots \alpha_{j}}_{i_1, \ldots i_j}$. Now, we consider a chain of $kN$ sites, for $k\geq 1$. We repeat the steps above, but multiply by
\begin{align}
A^{\alpha_{j} \dagger}_{i_j+(k-1)M} \cdots A^{\alpha_1 \dagger}_{i_1+(k-1)M} A^{\alpha_j \dagger}_{i_j+(k-2)M} \nonumber\\
\cdots A^{\alpha_1 \dagger}_{i_1+(k-2)M} \cdots A^{\alpha_j \dagger}_{i_j} \cdots A^{\alpha_1 \dagger}_{i_1}\,.
\end{align}
We get
\be
0={\rm tr}\left[M_{V,V}^{k}\right]+{\rm tr}\left[M_{W,W}^{k}\right]+{\rm tr}\left[M_{W,V}^{k}\right]+{\rm tr}\left[M_{V,W}^{k}\right]\,,
\ee
for all $k\geq 1$. Using once again Lemma A.5 in Ref.~\cite{cirac2017matrixRenormalization}, we conclude that each trace in Eq.~\eqref{eq:sum_trace} is vanishing, and thus $c^{\alpha_{1},\ldots \alpha_{j}}_{i_1, \ldots i_j}$ in Eq.~\eqref{eq:partial_2} is zero as anticipated. In conclusion, the operators $V^{(N)}$, $W^{(N)}$ satisfy
\begin{align}
V^{(N)\dagger}V^{(N)}=W^{(N)\dagger}W^{(N)}&=\openone\,,\label{eq:contradiction_1}\\
V^{(N)\dagger}W^{(N)}=W^{(N)\dagger}V^{(N)}&=0\,,\label{eq:contradiction_2}
\end{align}
where the second equality follows from the argument above, and the fact that ${\rm tr}\left[V^{(N)\dagger}W^{(N)}\right]={\rm tr}\left[W^{(N)\dagger}V^{(N)}\right]=0$. However, Eqs.~\eqref{eq:contradiction_1}, \eqref{eq:contradiction_2} are inconsistent with each other, and we have arrived at the desired contradiction. 
\end{proof}

We are now in a position to prove that any tensor $\mathcal{U}$ generating one of the fMPUs introduced in Sec.~\ref{sec:gfMPUs_two_kinds} is necessarily simple, according to the definition~\ref{def:simpleness_fermionic}.

\begin{prop}
	Suppose that the tensor $\mathcal{U}$ generates a type I (type II) fMPU $U^{(N)}$ with antiperiodic boundary conditions. Then, there exists $k\leq D^4$ such that $\mathcal{U}_k$ is simple (according to the definition~\ref{def:simpleness_fermionic}).
\end{prop}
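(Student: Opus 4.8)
The plan is to mirror the qudit argument of Ref.~\cite{cirac2017matrix}, keeping careful track of the extra graded signs at every contraction. First I would invoke Prop.~\ref{prop:GCF_fMPUs}: since $\mathcal{U}$ is in GCF and generates a type I (type II) fMPU with antiperiodic boundary conditions, $\mathcal{U}/\sqrt{d}$ is graded normal non-degenerate (degenerate). Combining this with the trace identity~\eqref{eq:trace_transfer_antiperiodic}, unitarity forces ${\rm tr}(E_{\mathcal{U}}^N)=1$ for all $N$ in the type I case and ${\rm tr}(E_{\mathcal{U}}^N)=2$ in the type II case, the factor two originating from the $1/\sqrt{2}$ normalisation. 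By Lemma A.5 of Ref.~\cite{cirac2017matrixRenormalization}, this completely pins down the spectrum of the $D^2\times D^2$ transfer matrix: it has exactly one (type I) or two (type II) nonzero eigenvalues, all equal to $1$, and a purely nilpotent remainder.

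The first half of the proof is then spectral. Because $E_{\mathcal{U}}$ acts on the $D^2$-dimensional doubled auxiliary space, its nilpotent part is annihilated by raising to a power at most $D^2$. Since the transfer matrix of the blocked tensor $\mathcal{U}_k$ coincides, up to the graded signs specified in~\eqref{eq:signs_blocking}, with $E_{\mathcal{U}}^k$, blocking $k\le D^2$ times makes $E_{\mathcal{U}_k}$ equal to the spectral projector onto the eigenvalue-one eigenspace, whence $E_{\mathcal{U}_k}^2=E_{\mathcal{U}_k}$, the last requirement in Definition~\ref{def:simpleness_fermionic}. For type II this projector has rank two and inherits the $\sigma^x\otimes(\cdot)$ structure of~\eqref{eq:GItensors}, which is precisely the property $(X\otimes X)E_{\mathcal{U}}=E_{\mathcal{U}}$ expected of a degenerate graded normal tensor.

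It then remains to establish the two factorisation identities~\eqref{eq:simpleness_fermionic}. Here I would reuse the graded singular-value decomposition of Sec.~\ref{sec:standard_form_type_I}, which supplies the even tensors $u,v$ together with the isometry property $u^\dagger u=\openone$ of Lemma~\ref{lemuisometry}, and hence $r\ell\ge d^2$. By Prop.~\ref{ThmFund1}, simpleness is equivalent to $u$ being unitary, i.e.\ to $r\ell=d^2$, so the task is to prove the reverse inequality after further blocking. Following Ref.~\cite{cirac2017matrix}, once $E_{\mathcal{U}_k}$ is an exact projector the unitarity of $U^{(N)}$ for all $N$ forces the singular-value ranks of the two decompositions to saturate under a second round of at most $D^2$ blockings; since blocking composes multiplicatively (blocking $\mathcal{U}_{k_1}$ a further $k_2$ times yields $\mathcal{U}_{k_1 k_2}$), the two rounds combine to the stated bound $k\le D^4$. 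Substituting the saturated decomposition back into the graded contractions then yields both equations in~\eqref{eq:simpleness_fermionic}.

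The main obstacle I anticipate is the bookkeeping of the fermionic signs: each reordering of ket and bra legs in the diagrammatic manipulations produces a factor dictated by the graded contraction rules of Appendix~\ref{sec:graded_TN}, and these must be reinstated in the two factorisation diagrams exactly as was done in the proof of Lemma~\ref{lemuisometry}. The genuinely new case is type II, where $E_{\mathcal{U}_k}$ is rank two and the second diagram in~\eqref{eq:simpleness_fermionic} differs from its type I counterpart; there one must verify that the degenerate structure~\eqref{eq:GItensors}, with its odd transfer-matrix eigenvector associated with $\sigma^x$, is compatible with the factorisation, which is the step most likely to require a dedicated sign computation rather than a direct transcription of the qudit argument.
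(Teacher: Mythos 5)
Your spectral first step---invoking Prop.~\ref{prop:GCF_fMPUs}, the trace identity~\eqref{eq:trace_transfer_antiperiodic}, and blocking $J\le D^2$ times to kill the nilpotent part of $E_{\mathcal{U}}$ so that the blocked transfer matrix squares to itself---agrees with the paper. The gap is in everything after that. Neither the paper nor the qudit proof of Prop.~III.3 in Ref.~\cite{cirac2017matrix} that you cite obtains the factorisation identities~\eqref{eq:simpleness_fermionic} by showing that the ranks $r,\ell$ saturate: your assertion that ``unitarity forces the singular-value ranks to saturate under a second round of at most $D^2$ blockings'' is precisely the statement to be proven, and your proposal supplies no mechanism for it. Prop.~\ref{ThmFund1} only provides the \emph{equivalence} between simpleness and $r\ell=d^2$, while Lemma~\ref{lemuisometry} gives the inequality in the wrong direction, $r\ell\ge d^2$. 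The actual mechanism is algebraic: one decomposes the doubled tensor as $E\gtimes\openone+\sum_\alpha\mathbb{S}_\alpha\gtimes\sigma_\alpha$ [Eq.~\eqref{eq:decomposition_unitary}], derives from unitarity at all system sizes (by pairing $\openone=U^{(N)\dagger}U^{(N)}$ with products of traceless even operators) the trace identities ${\rm tr}(E'S'_{\alpha_1}\cdots S'_{\alpha_m})=0$, upgrades them to $E'S'_{\alpha_1}\cdots S'_{\alpha_m}E'=0$, deduces that every element $S$ of the algebra generated by the $S'_\alpha$ is nilpotent (whether $S$ is even or odd, $S^2$ is even and ${\rm tr}(S^{2N})=0$ for all $N$), and then invokes the Nagata--Higman--Razmyslov theorem to obtain a uniform $J'<D^2$ with $S'_{\alpha_1}\cdots S'_{\alpha_{J'}}=0$. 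It is this last theorem---entirely absent from your sketch---that both guarantees simpleness is eventually reached and produces the second factor of $D^2$ in the bound $k\le D^4$.

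The type II case exposes a second problem. Lemma~\ref{lemuisometry} and Prop.~\ref{ThmFund1} are established in Sec.~\ref{sec:standard_form_type_I} under the standing assumption that $\mathcal{U}/\sqrt{d}$ is graded normal \emph{non-degenerate}, so you cannot ``reuse'' that machinery when $E_{\mathcal{U}}$ has rank two. The dedicated sign computation you anticipate (but do not carry out) is exactly where the paper's new work lies: with $E'=|\rho^{(1)})(\Phi^{(1)}|+|\rho^{(2)})(\Phi^{(2)}|$, the trace identity only constrains the \emph{sum} of the two diagonal matrix elements, and one must separately prove $(\Phi^{(a)}|S'_{\alpha_1}\cdots S'_{\alpha_m}|\rho^{(b)})=0$ for $a\neq b$ [Eq.~\eqref{eq:condition}] together with the equality of the two diagonal elements [Eq.~\eqref{eq:condition_2}], using the structure $U^{n,m}=(\sigma^x)^{|n|+|m|}\otimes N^{n,m}$, which reduces the relevant traces to multiples of ${\rm tr}(\sigma^x)$ or ${\rm tr}(\sigma^z)$; odd products, which the unitarity identities do not cover, vanish because $(\Phi^{(j)}|$ and $|\rho^{(j)})$ have equal parity. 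In short, your proposal is sound only up to the projector property of the blocked transfer matrix in Definition~\ref{def:simpleness_fermionic}; the decisive part of the proof---the nilpotency argument with its Nagata--Higman bound, and the rank-two refinement for type II---is missing.
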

\begin{proof}
	The case of type I fMPUs can be treated by  following the same steps as in the proof of Prop. III.3 in Ref.~\cite{cirac2017matrix}, so here we will only consider type II fMPUs. Furthermore, in order to simplify the notation, we will make use of the formalism of graded tensor networks, explained in Appendix~\ref{sec:graded_TN}. 
	
	Let us consider a tensor $\mathcal{U}$ generating a type II fMPU with antiperiodic boundary conditions. Due to Prop.~\ref{prop:GCF_fMPUs}, we can assume wlog that $\mathcal{U}$ is a degenerate graded normal tensor. Using the graphical notation explained in Appendix~\ref{sec:graded_TN}, we can write
	\be
	\begin{tikzpicture}[baseline={([yshift=-0.5ex]current bounding box.center)}, scale=0.7]
	
	\draw (0,-0.5) node[left]{} -- (0,0.5) node[left]{};
	\draw (0,0.5) -- (0,1.5) node[left]{};
	\draw (-0.5,0) node[left]{} -- (0.5,0) node[right]{};
	\draw (-0.5,1) node[left]{} -- (0.5,1) node[right]{};
	
	\filldraw[fill=white,draw=black] (0,0) circle [radius=0.25];
	\fill[black] (0,1) circle [radius=0.25];
	
	\end{tikzpicture}
	=E\gtimes \openone+\sum_{\alpha}\mathbb{S}_{\alpha}\gtimes \sigma_{\alpha}\,.
	\label{eq:decomposition_unitary}
	\ee
	We stress that here and throughout this proof, pictures correspond to graded operators, and not to matrix elements. In particular, in the rhs the first part of the graded tensor products acts on the auxiliary indices, whereas the second part on the physical ones.  Furthermore $E$ is the transfer matrix associated with  $\mathcal{U}$. Finally, we choose $\sigma_{\alpha}$ and $\mathbb{S}_{\alpha}$ to be traceless operators with well-defined parity, and with $\sigma_{\alpha}$  mutually orthogonal. By Lemma~\ref{lemma:fixed_points}, the transfer matrix $E$ has two eigenvalues equal to $1$.  Furthermore, there are no Jordan blocks associated to the eigenvalues $1$, whereas there may be one or several Jordan blocks associated to the zero eigenvalues. Denoting by $J< D^2$ the largest size of all Jordan blocks of $E$, we can block $J$ sites and consider $\mathcal{U}_J$.  Since $\mathcal{U}_J$ also generates a type II fMPU, we can use for it the decomposition \eqref{eq:decomposition_unitary}, where we will denote by $E'$ and $\mathbb{S}'_\alpha$ the new operators.  
	
Next, we multiply both sides of  $\openone=U^{(N)\dagger}U^{(N)}$ by $(\sigma^{\alpha_1}_{1}\gtimes \ldots  \gtimes \sigma^{\alpha_m}_{m})^{\dagger}$, with $\sum_{j}|\alpha_j|\equiv 0$ ($\bmod$ 2), where $|\alpha_j|$ is the parity of the operator $\sigma^{\alpha_m}_{m}$. We obtain
	\be
	{\rm tr}(E'S'_{\alpha_1}\ldots S'_{\alpha_m})=0\,,
	\label{eq:ESE=0}
	\ee
	where $S^\prime_{\alpha}$ is  the matrix associated with $\mathbb{S}^\prime_{\alpha}$, namely
	\be
	\mathbb{S}^{\prime}_{\alpha}=\left(S_{\alpha}\right)_{x,y} |x)\gtimes (y|\,.
	\ee
	Note that there is no additional parity operator $Z$ in Eq.~\eqref{eq:ESE=0}, since $\sum_{j}|\alpha_j|\equiv 0$ ($\bmod$ 2).
	\begin{widetext}
Now, we can assume wlog that $\mathcal{U}_{J}$ is in GCFII, namely that transfer matrix associated with $\mathcal{U}_J$ is in the form
	\be
	E^{\prime}=|\rho^{(1)})(\Phi^{(1)}|+|\rho^{(2)})(\Phi^{(2)}|\,,
	\ee
	where
	\begin{align}
	(\Phi^{(1)}|=&\sum_{n=0}^{D-1}(n,n|\,,\quad (\Phi^{(2)}|=\sum_{n=0}^{D-1}(n,D-1-n|\,,\\
	|\rho^{(1)})=&\sum_{n=0}^{D-1}\rho_n|n,n)\,,\quad |\rho^{(2)})=\sum_{n=0}^{D-1}  \rho_n|n,D-1-n)\,,
	\end{align}
	with $\rho_n>0$. We show
	\be
	(\Phi^{(a)}|S^{\prime}_{\alpha_1}\ldots S^{\prime}_{\alpha_m}|\rho^{(b)})=0\,, \quad {\rm for }\ a\neq b\,,
	\label{eq:condition}
	\ee
	for all sequences $\{\alpha_j\}$. Note that we prove this also for $\sum_j|\alpha_j|\equiv 1$ ($\bmod$ 2). In particular, Eq.~\eqref{eq:condition} can be established by using the explicit representation~\eqref{eq:decomposition_unitary}. First, we note
	\be
	\mathbb{S}^\prime_{\alpha}=\sum_{\{i_n,k_n\}}c(\{i_n,k_n\})
	\begin{tikzpicture}[baseline={([yshift=-0.5ex]current bounding box.center)}, scale=0.7]
	\draw (0,-0.5) node[left]{$i_1$} -- (0,0.5) node[left]{$j_1$};
	\draw (0,0.5) -- (0,1.5) node[left]{$k_1$};
	\draw (-0.5,0) node[left]{} -- (0.5,0) node[right]{};
	\draw (-0.5,1) node[left]{} -- (0.5,1) node[right]{};
	\filldraw[fill=white,draw=black] (0,0) circle [radius=0.25];
	\fill[black] (0,1) circle [radius=0.25];	
	\draw(1,0)node{$\ldots$};
	\draw(1,1)node{$\ldots$};
	\draw (2,-0.5) node[left]{$i_J$} -- (2,0.5) node[left]{$j_J$};
	\draw (2,0.5) -- (2,1.5) node[left]{$k_J$};
	\draw (1.5,0) node[left]{} -- (2.5,0) node[right]{};
	\draw (1.5,1) node[left]{} -- (2.5,1) node[right]{};
	\fill[black] (2,1) circle [radius=0.25];	
	\filldraw[fill=white,draw=black] (2,0) circle [radius=0.25];
	\end{tikzpicture}\,.
	\ee
Here $c(\{i_n,k_n\})$ are numerical coefficients arising from expanding $\sigma_\alpha$	in Eq.~\eqref{eq:decomposition_unitary} in terms of the operators $\ket{l}\gtimes \bra{m}$ and reordering. Next, recall that, due to Eq.~\eqref{eq:second_kind_apb}, $U^{j,k}=(\sigma^{x})^{|j|+|k|}\otimes B^{j,k}$, and 
	\bea
	\begin{tikzpicture}[baseline={([yshift=-0.5ex]current bounding box.center)}, scale=0.7]
	
	\draw (0,-0.5) node[left]{$i$} -- (0,0.5) node[left]{$j$};
	\draw (0,0.5) -- (0,1.5) node[left]{$k$};
	\draw (-0.5,0) node[left]{$\gamma$} -- (0.5,0) node[right]{$\delta$};
	\draw (-0.5,1) node[left]{$\alpha$} -- (0.5,1) node[right]{$\beta$};
	
	\filldraw[fill=white,draw=black] (0,0) circle [radius=0.25];
	\fill[black] (0,1) circle [radius=0.25];
	
	\end{tikzpicture}
	&=&
	 \mathbb{U}^{k,i}_{(\alpha,\gamma),(\beta,\delta)}|\gamma)|\alpha)|k\rangle\langle i|(\beta|(\delta|\,,
	\eea
	where $\mathbb{U}^{k,i}_{(\alpha,\gamma),(\beta,\delta)}$ was defined in Eq.~\eqref{eq:unitary_stacked}, and corresponds to the matrix
	\be
	\mathbb{U}^{k,i}=\sum_{j}\left(U^{\ast j,k}\otimes U^{j,i}\right)\left(Z^{(|i|+|k|)}\otimes\openone\right)\,,
	\label{eq:compact_representation}
	\ee 
	where $Z=\sigma^z\otimes \openone$. We can now expand $(\Phi^{(a)}|S^{\prime}_{\alpha_1}\ldots S^{\prime}_{\alpha_m}|\rho^{(b)})$ using the above expressions, arriving at

		\begin{align}
		(\Phi^{(a)}|S^{\prime}_{\alpha_1}\ldots S^{\prime}_{\alpha_m}|\rho^{(b)})=\sum_{\{i_n,j_n,k_n\}} \Lambda(\{i_n,j_n,k_n\}) {\rm tr}\left[\left(\sigma^{x}\right)^{a+1}\left(\sigma^{x}\right)^{\sum_s( |i_s|+|k_s|)}\left(\sigma^{z}\right)^{\sum_s( |i_s|+|k_s|)}\left(\sigma^{x}\right)^{b+1}\right]\,,
		\label{eq:calculation}
		\end{align}
	\end{widetext}
	where $\Lambda(\{i_n,j_n,k_n\})$ is a coefficient that does not depend on $a$ and $b$, and whose exact form is irrelevant for our discussion. Here, we used that the eigenstates  $(\Phi^{(a)}|$ and $|\rho^{(a)})$, when viewed as operators, correspond  to the matrices $\left(\sigma^{x}\right)^{a+1}\otimes \openone$ and $\left(\sigma^{x}\right)^{a+1}\otimes \rho$ (importantly, $\rho$ does not depend on $a$). Now,  if  $a\neq b$ the traces in the rhs are all zero, because they are either proportional to ${\rm tr}(\sigma^x)$ or ${\rm tr}(\sigma^z)$, and Eq.~\eqref{eq:condition} follows immediately. Furthermore, since $\Lambda(\{i_n,j_n,k_n\})$ does not depend on $a$, $b$, we also have
	\be
	(\Phi^{(1)}|S^{\prime}_{\alpha_1}\ldots S^{\prime}_{\alpha_m}|\rho^{(1)})= (\Phi^{(2)}|S^{\prime}_{\alpha_1}\ldots S^{\prime}_{\alpha_m}|\rho^{(2)})\,.
	\label{eq:condition_2}
	\ee
	It follows now from Eqs.~\eqref{eq:ESE=0},\eqref{eq:condition},~\eqref{eq:condition_2} that
	\be
	E^{\prime} S^\prime_{\alpha_1}\ldots  S^\prime_{\alpha_m}E^{\prime} =0\,,
	\label{eq:final_step}
	\ee
	if $\sum_{j}|\alpha_j|\equiv 0$ ($\bmod $ 2). On the other hand, Eq.~\eqref{eq:final_step} is also true if $\sum_{j}|\alpha_j|\equiv 1$ ($\bmod $ 2). This is because Eq.~\eqref{eq:condition} still holds, and $\braket{\Phi^{(j)}|\mathcal{O}|\rho^{^{(j)}}}=0$ if $\mathcal{O}$ is odd, since $\bra{\Phi^{(j)}}$ and $\ket{\rho^{(j)}}$ are either both even or both odd. 
	
	Now, any element, $S$, in the algebra generated by $S^\prime_{\alpha}$ must have zero eigenvalues. Indeed, whether $S$ is even or odd, $S^{ 2}$ is even, and thus ${\rm tr}(S^{2N})=\sum_{j}\lambda_j^{2N}=0$ for all $N$, where $\lambda_j$ are the eigenvalues of $S$. This means that $\lambda_j^2=0$ for all $j$, and thus $\lambda_j=0$ for all $j$. Accordingly, any element, $S$, in the algebra generated by $S^\prime_{\alpha}$ is nilpotent. It follows then from a result by Nagata and Higman~\cite{nagata1952,higman1952ordering}, improved later by Razmyslov~\cite{razmyslov1974trace}, that there exists some $J^\prime < D^2$ such that
	\be
	S^{\prime}_{\alpha_1}\ldots S^{\prime}_{\alpha_{J^\prime}}=0\,,
	\ee
	for any set of $\alpha$'s. At this point the proof can be completed by following without modification the one of Prop.~III.3 in Ref.~\cite{cirac2017matrix}. 
\end{proof}

\begin{cor}\label{cor:local_dimension_degenerate}
Let $\mathcal{U}$ be a type II fMPU, and denote by $d$, $d_e$ and $d_o$ the dimensions of the local physical space, and the corresponding even and odd subspaces, respectively. Then, $d$ is even, and $d_e=d_o$. 
\end{cor}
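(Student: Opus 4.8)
My plan is to reduce the statement to the computation of the superdimension $d_e-d_o$ of the local physical space via a parity-weighted transfer matrix, exploiting the rigid form of the local tensor that the degenerate structure forces (and noting that blocking $k$ sites replaces $d_e-d_o$ by $(d_e-d_o)^k$, so it suffices to prove $d_e=d_o$ for a blocked, simple representative).

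\emph{Step 1 (normal form).} By Proposition~\ref{prop:GCF_fMPUs} a type II fMPU in GCF is generated by a degenerate graded normal tensor $\mathcal U/\sqrt d$; hence by Proposition~\ref{prop:degenerate_GIF} and Lemma~\ref{prop:degenerate_CGF} I may assume, after an even gauge transformation, that $U^{n,m}=(\sigma^x)^{|n|+|m|}\otimes B^{n,m}$ with auxiliary parity $Z=\sigma^z\otimes\openone$, cf.\ Eq.~\eqref{eq:graded_canonical_form_deg}. The essential point is that each of the two auxiliary copies entering the transfer matrix \eqref{eq:fermionic_transfer_matrix} carries a $\mathbb C^2$ factor on which the physical indices act only through $(\sigma^x)^{|n|+|m|}$.

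\emph{Step 2 (weighted transfer matrices).} I would introduce the transfer matrix obtained by inserting the single-site parity on the contracted physical leg of $U^\dagger U$, namely $\frac1d\sum_{n,m}(-1)^{|m|}\bar U^{n,m}\otimes U^{n,m}$, and the analogous object for $UU^\dagger$, namely $\frac1d\sum_{n,m}(-1)^{|n|}U^{n,m}\otimes\bar U^{n,m}$. On the $\mathbb C^2\otimes\mathbb C^2$ factor both are built from $(\sigma^x\otimes\sigma^x)^{|n|+|m|}$, so diagonalizing in the $\sigma^x\otimes\sigma^x$ eigenbasis reduces each of them to two copies of a pair of operators $M_\pm$ acting on $\mathbb C^{D/2}\otimes\mathbb C^{D/2}$. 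A direct check shows that the input-weighted object places $M_+$ on the $\sigma^x\otimes\sigma^x=+1$ sector and $M_-$ on the $-1$ sector, whereas the output-weighted one places them the other way around; this sign reversal is the crux of the whole argument.

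\emph{Step 3 (unitarity).} Since the same tensor also generates a unitary PBC type II fMPU (Sec.~\ref{sec:periodic_fMPUs}), with boundary operator $S\propto X=\sigma^x\otimes\openone$, I have $U^{(N)\dagger}_PU^{(N)}_P=U^{(N)}_PU^{(N)\dagger}_P=\openone$. Because the fMPU is even, inserting the total parity $\mathcal P^{(N)}$ gives $\frac1{d^N}{\rm tr}(\mathcal P^{(N)}U^{(N)\dagger}_PU^{(N)}_P)=\frac1{d^N}{\rm tr}\,\mathcal P^{(N)}=r^N$ with $r:=(d_e-d_o)/d$, and the identical value from $U^{(N)}_PU^{(N)\dagger}_P$. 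Rewriting both through the periodic trace formula \eqref{eq:trace_transfer_periodic} with the $X\otimes X$ boundary and the type II normalization $1/\sqrt2$, Step~2 turns them into ${\rm tr}(M_+^N)-{\rm tr}(M_-^N)=r^N$ and ${\rm tr}(M_-^N)-{\rm tr}(M_+^N)=r^N$. Adding these yields $2r^N=0$ for all $N\ge1$, hence $d_e-d_o=0$, i.e.\ $d_e=d_o$ and $d=2d_e$ is even.

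The main obstacle I anticipate is the careful bookkeeping of the fermionic grading signs in Step~2: verifying that inserting the physical parity on the lower versus the upper leg of the doubled local tensor indeed sends $M_\pm$ to opposite $\sigma^x\otimes\sigma^x$ sectors (without this the two relations of Step~3 would coincide and the argument would collapse to the trivial identity $r^N=r^N$), together with correctly tracking the factor $1/2$ and the $X\otimes X$ boundary insertion originating from $S\propto X/\sqrt2$ in the periodic trace formula.
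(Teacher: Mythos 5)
Your proposal is correct in substance, and its skeleton coincides with the paper's: you block until simple (noting the superdimension $d_e-d_o$ is just raised to a power under blocking), pass to the degenerate normal form $U^{n,m}=(\sigma^x)^{|n|+|m|}\otimes B^{n,m}$, and invoke the PBC type~II fMPU obtained by inserting $X=\sigma^x\otimes\openone$ into the trace, whose unitarity follows from simpleness and $(X\otimes X)E_{\mathcal U}=E_{\mathcal U}$. Where you diverge is the final step. The paper concludes in one line: $U_P^{(N)}$ is an \emph{odd} unitary, and an odd invertible operator cannot exist on a Hilbert space whose even and odd sectors have different dimensions -- and they would differ, since the superdimension of $N$ sites is $(d_e-d_o)^N$. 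Your Steps~2--3 re-derive exactly this fact, i.e.\ ${\rm tr}\,\mathcal P^{(N)}=0$, through explicit parity-weighted transfer matrices $M_\pm$ and the pair of relations from $U_P^{\dagger}U_P$ versus $U_PU_P^{\dagger}$; the sector exchange you correctly identify as the crux is precisely the tensor-network shadow of the oddness of $U_P^{(N)}$, since for an odd unitary ${\rm tr}\,\mathcal P={\rm tr}(U_P\mathcal P U_P^{\dagger})=-{\rm tr}\,\mathcal P$. So your route is sound but buys nothing over the abstract dimension count, while exposing you to exactly the sign bookkeeping you worry about: for instance, the composed boundary operator is not $X\otimes X$ but $\tilde X\otimes X$, with $\tilde X_{\alpha\beta}=(-1)^{|\beta|(|\alpha|+|\beta|)}\bar X_{\alpha\beta}$ (so $\tilde X=-i\,\sigma^y\otimes\openone$ here, cf.\ the footnote below Eq.~\eqref{eq:conjugate_tensor}), and the parity-weighted analogue of Eq.~\eqref{eq:trace_transfer_periodic} must be derived afresh; if any of these signs were dropped, the two relations of your Step~3 could spuriously coincide and the contradiction would evaporate. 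One small slip worth fixing: your justification ``because the fMPU is even'' is incorrect -- $U_P^{(N)}$ is odd -- but harmless, since ${\rm tr}(\mathcal P^{(N)}U_P^{(N)\dagger}U_P^{(N)})={\rm tr}\,\mathcal P^{(N)}$ requires only unitarity.
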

\begin{proof}
We prove this by contradiction. First, note that if $d_e\neq d_o$, then the same is true by blocking arbitrarily many times. This can be simply proven by induction on the number  of blocked sites, $k$, and using the rearrangement inequality. Then, take $k$ such that $\mathcal{U}_k$ is simple, and assume wlog that $\mathcal{U}_k$ is in the form~\eqref{eq:graded_canonical_form_deg}. Due to the structure of the eigenstates of the transfer matrix associated with eigenvalue $1$, one can construct an fMPU $U^{(N)}_p$ with periodic boundary conditions by adding an operator $X=\sigma^x\otimes \openone$ into the trace (unitarity follows from simpleness and the fact that $(X\otimes X )E_{\mathcal{U}}=E_{\mathcal{U}}$). It is immediate to see that $U^{(N)}_p$ is an odd operator. However, this is a contradiction, because for any $N$ the dimensions of the even and odd subspaces are different, and there can be no odd invertible operator.
\end{proof}

Next, we show that the fMPUs introduced in Sec.~\ref{sec:gfMPUs_two_kinds} feature a $\mathbb{Z}_2$ structure with respect to composition. In fact, it is obvious that the product of two type I fMPUs is still a type I fMPU. Analogously, using Prop.~\ref{prop:normality_fmpus}, we have that the product of a type I and a type II fMPU is still a type II fMPU. In the following, we also show that the product of two type II fMPUs can be represented as a type I fMPU. The proof follows closely the logic of similar derivations presented in Ref.~\cite{bultinck2017fermionic}, and shows that the class of fMPUs introduced in Sec.~\ref{sec:gfMPUs_two_kinds} is closed with respect to composition.

\begin{prop}\label{prop:z2_structure}
	Let $\mathcal{U}$, $\mathcal{V}$ be two degenerate graded normal tensors, generating type II fMPUs $U^{(N)}$, $V^{(N)}$. Then, there exists a representation of $U^{(N)}V^{(N)}$ as a type I fMPU $\forall N$.
\end{prop}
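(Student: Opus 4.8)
The plan is to start from the composition formula~\eqref{eq:fermionic_stack_tensor} together with the fact, established in Appendix~\ref{sec:elementary_operators}, that composing two generalized fMPUs multiplies their boundary operators. If $U^{(N)}$, $V^{(N)}$ carry boundary operators $S_U=\tfrac{e^{i\alpha_U}}{\sqrt2}\openone$, $S_V=\tfrac{e^{i\alpha_V}}{\sqrt2}\openone$ (type II, a.b.c.), then $W^{(N)}:=U^{(N)}V^{(N)}$ is a generalized fMPU whose bulk tensor $\mathcal{W}$ is given by~\eqref{eq:fermionic_stack_tensor} and whose boundary operator is $S_W=S_U\gtimes S_V=\tfrac{e^{i(\alpha_U+\alpha_V)}}{2}\openone$, which is even and proportional to the identity. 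By Prop.~\ref{prop:GCF_fMPUs}, proving that $W^{(N)}$ is type I amounts to exhibiting a graded normal \emph{non-degenerate} tensor, with boundary $\propto\openone$, that generates it; so the goal is to reduce $\mathcal{W}$ to graded canonical form and verify that the resulting blocks are non-degenerate.

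Next I would insert the type II structure $U^{k,j}=(\sigma^{x})^{|k|+|j|}\otimes N_U^{k,j}$ and $V^{j,i}=(\sigma^{x})^{|j|+|i|}\otimes N_V^{j,i}$, with auxiliary parities $Z_U=Z_V=\sigma^z\otimes\openone$, into~\eqref{eq:fermionic_stack_tensor}. The composed auxiliary space then carries two distinct ``Majorana'' qubits, one from each factor, each responsible (via Lemma~\ref{lemma:fixed_points}) for one of the two unit eigenvalues of the corresponding transfer matrix. The key move is an even change of basis on these two qubits, concretely the relabeling of bond basis states $(q_1,q_2)\mapsto(q_1\oplus q_2,\,q_2)$, which commutes with $Z_W=\sigma^z_{(1)}\sigma^z_{(2)}$ and is therefore an admissible (permutation) gauge for a.b.c. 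Under it the $\sigma^x$ operators attached to the summed physical index $j$ are disentangled from one of the two qubits, which reflects the physical fact that two Majorana modes pair into a single complex fermion; the doubled degeneracy of the bond is thereby resolved.

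I would then bring $\mathcal{W}$ to GCF using Theorem~\ref{th:GCF} and analyze the blocks through the fixed-point characterization of Prop.~\ref{prop:fixed_point_rank}: the residual Majorana qubit left over after the recombination supplies a graded-invariant subspace for the even subalgebra, so by Corollary~\ref{prop:other_implication} the degenerate form~\eqref{eq:GIF_explicit} is reducible, and iterating the reduction of Theorem~\ref{th:GCF} removes the degeneracy, leaving only graded normal non-degenerate blocks. Finally, using that the blocks produced this way come in pairs related by the sign flip $Z\to-Z$, which leaves the a.b.c.\ state invariant (cf.\ the remark after Def.~\ref{def:graded_canonical_form}), I would show that their contributions organize, against the $\tfrac12\openone$ boundary, into a single type I fMPU with a unit-modulus boundary operator $e^{i\alpha}\openone$.

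The main obstacle is the fermionic sign bookkeeping. The composition sign $(-1)^{|\gamma|(|k|+|j|)}$ in~\eqref{eq:fermionic_stack_tensor}, together with the signs generated when the two Majorana qubits are reordered and brought together in the graded tensor product, is exactly what decides whether the GCF blocks add up with the correct relative phases to reproduce the type I normalization rather than reassembling into a single degenerate tensor, which would erroneously give type II. Controlling these signs, following the analogous stacking computations of Ref.~\cite{bultinck2017fermionic}, is the delicate part; once they are under control, the identification of $W^{(N)}$ as a type I fMPU is immediate.
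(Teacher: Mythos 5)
Your route is essentially the paper's own proof: the relabeling $(q_1,q_2)\mapsto(q_1\oplus q_2,\,q_2)$ is exactly the permutation $\tilde{\Pi}$ used there, the residual qubit you identify is the matrix $y\otimes\openone\otimes\openone$ that commutes with the even subalgebra generated by the composed matrices $D^{k,i}$, and your final step---two identical blocks related by the gauge $S=\sigma^x\otimes\openone$ with $Z\to -Z$, summed against the $\tfrac12\openone$ boundary---is precisely how the paper exhibits the type I tensor $PW^{k,i}P$. One small correction: $\tilde{\Pi}$ does \emph{not} commute with $Z_W=\sigma^z\otimes\sigma^z$ (it conjugates it to $\sigma^z\otimes\openone$), but this is harmless, since a permutation is an admissible gauge whenever the parity operator remains diagonal, cf.\ the remark after Lemma~\ref{equalMPS}.
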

\begin{proof}
	We denote by $Z_U$, $Z_V$ the parity operators acting on the auxiliary space for $\mathcal{U}$, $\mathcal{V}$, respectively, and assume wlog that they are in the form~\eqref{eq:parity_operator}. Let $\mathcal{W}$ be the tensor obtained by composing $\mathcal{U}$ and $\mathcal{V}$. Up to an even similarity transformation, we have
	\begin{align}
	W^{k,i}_{(\alpha,\gamma),(\beta,\delta)}=& \sum_j(-1)^{|\gamma|(|k|+|j|)}U^{k,j}_{\alpha,\beta}V^{j,i}_{\gamma,\delta}\,,\nonumber\\
	\Rightarrow W^{k,i}=&\sum_{j}\left(\openone\otimes Z_V\right)^{|k|+|j|}\left(U^{k,j}\otimes V^{j,i}\right)\,,
	\end{align}
	while the parity operator associated with $\mathcal{W}$ is $Z_{W}=Z_{U}\otimes Z_{V}$.
	Since $\mathcal{U}$ and $\mathcal{V}$ are degenerate, we can write
	\bea
	U^{k,j}&=&(\sigma^x)^{|k|+|j|}\otimes B^{k,j}\,,\\
	V^{j,i}&=&(\sigma^x)^{|j|+|i|}\otimes C^{j,i}\,,
	\eea
	and $Z_{U,V}=\sigma^z\otimes \openone$, so that, permuting the basis elements we have
	\begin{align}
	W^{k,i}=&\sum_j\left[(\sigma^x)^{|k|+|j|}\otimes (\sigma^z)^{|k|+|j|}(\sigma^x)^{|j|+|i|}\right]\nonumber\\
	\otimes &B^{k,j}\otimes C^{j,i}\,.
	\end{align}
In this basis, the parity operator is $Z_{W}=\sigma^z\otimes \sigma^z\otimes \openone\otimes \openone$. Define now the permutation operator acting non-trivially only  on the tensor product of the first two spaces,
$\Pi=\tilde{\Pi}\otimes\openone\otimes \openone$, where
\be
\tilde{\Pi}=
\begin{pmatrix}
1& 0& 0& 0\\
0 & 0&  0& 1\\
0 & 0&  1& 0\\
0 &1& 0& 0
\end{pmatrix}
\ee
We have $\tilde{\Pi}(\sigma^z\otimes \sigma^z  )\tilde{\Pi}^{-1}=\sigma^z\otimes \openone$\,, and also
\begin{align}
\tilde{\Pi} (\sigma^x)^{|k|+|j|}&\otimes (\sigma^z)^{|k|+|j|}(\sigma^x)^{|j|+|i|}\tilde{\Pi}^{-1}\nonumber\\
&=(\sigma^x)^{|k|+|i|}\otimes r_{|k|,|i|,|j|} \,,
\end{align}
where 
\begin{align}
r_{0,0,0}=&r_{1,1,1}=\openone\,,\quad r_{0,0,1}=r_{1,1,0}=y\,, \label{eq:r_form_1}\\
r_{0,1,0}=&r_{1,0,1}=\sigma^x\,,\quad r_{0,1,1}=r_{1,0,0}=\sigma^z\,,\label{eq:r_form_2}
\end{align}
and 
\be
y=\begin{pmatrix}
	0& 1\\
	-1 & 0  
\end{pmatrix}\,.
\ee
Accordingly, using $\Pi$ as a similarity transformation, we end up with a parity operator in the form~\eqref{eq:parity_operator} and
\begin{align}
W^{k,i}=&(\sigma^x)^{|k|+|i|}\otimes D^{k,i}\,,
\label{eq:form_W}
\end{align}
where
\begin{align}
D^{k,i}=\sum_{j} r_{|k|,|i|,|j|} \otimes  B^{k,j}\otimes C^{j,i}\,,
\label{eq:d_form}
\end{align}

	From Eqs.~\eqref{eq:form_W},~\eqref{eq:d_form},\eqref{eq:r_form_1} and \eqref{eq:r_form_2}, we see that the even subalgebra generated by $D^{k,i}$ commutes with the matrix $ y\otimes \openone\otimes \openone$, and is thus reducible. Accordingly, there must be a graded invariant subspace for the matrix $W^{k,i}$. In particular, we can write the corresponding projectors as
	\begin{align}
	P=\begin{pmatrix}
	\tilde{P} & 0 \\
	0 &  \tilde{Q}
	\end{pmatrix}\qquad 	Q=\openone-P=\begin{pmatrix}
	\tilde{Q} & 0 \\
	0 &  \tilde{P}
	\end{pmatrix}
	\end{align}
	where 
	\begin{align}
	\tilde{P}=\frac{(\openone-i y)}{2}\otimes \openone\,,\qquad 
	\tilde{Q}=\frac{(\openone+i y)}{2}\otimes \openone\,.
	\end{align}	
Now, $[P,Z_W]=0$, and $W^{k,i}P=PW^{k,i}P$, so that we can replace the tensors $W^{k,i}$ with $ W^{k,i}=P W^{k,i} P +Q W^{k,i} Q$, and the product $U^{(N)}V^{(N)}$ decomposes as the sum of two fMPOs. It is easy to see that these are exactly the same. This can be seen as follows. First we rewrite
	\begin{subequations}
		\begin{align}
		W^{k,i} & = \left(\begin{matrix} D^{k,i} & 0 \\ 0 & D^{k,i} \end{matrix}\right),\qquad  |i|+|k| = 0 \\
		W^{k,i} & = \left(\begin{matrix} 0 &  D^{k,i} \\  D^{k,i} & 0 \end{matrix}\right)\,, \qquad |i|+|k|= 1\,,
		\end{align}
		\label{eq:explicit_form_D}
	\end{subequations}
	so that
	\begin{subequations}
		\begin{align}
		PW^{k,i}P & = \left(\begin{matrix} \tilde{P}D^{k,i} \tilde{P} & 0 \\ 0 & \tilde{Q}D^{k,i}\tilde{Q} \end{matrix}\right),\qquad  |i|+|k| = 0 \\
		PW^{k,i}P & = \left(\begin{matrix} 0 &  \tilde{P}D^{k,i}\tilde{Q} \\  \tilde{Q}D^{k,i}\tilde{P} & 0 \end{matrix}\right)\,, \qquad |i|+|k|= 1\,,
		\end{align}
	\end{subequations}
	and
	\begin{subequations}
		\begin{align}
		QW^{k,i}Q & = \left(\begin{matrix}  \tilde{Q}D^{k,i}\tilde{Q}& 0 \\ 0 & \tilde{P}D^{k,i} \tilde{P} \end{matrix}\right),\qquad  |i|+|k| = 0 \\
		QW^{k,i}Q & = \left(\begin{matrix} 0 & \tilde{Q}D^{k,i}\tilde{P}  \\ \tilde{P}D^{k,i}\tilde{Q}  & 0 \end{matrix}\right)\,, \qquad |i|+|k|= 1\,.
		\end{align}
	\end{subequations}
It is now straightforward to see that 	$PW^{k,i}P=S 	QW^{k,i}Q S^{-1}$, where $S=\sigma^x\otimes \openone$, and $PZ_WP=-SQZ_WQS^{-1}$. Since the overall sign of the parity does not play any role, and recalling the overall factor $1/2$ coming from the product of the prefactors $1/\sqrt{2}$, we obtain that $U^{(N)}V^{(N)}$ can be represented as a type I fMPU generated by the (even) tensor $PW^{k,i}P$.
\end{proof}

Finally, we can now prove the main result of Sec.~\ref{sec:antiperiodic_fMPUs}, namely the equivalence between fQCA and fMPUs of the first and second kind.

\begin{prop}\label{eq:z_2_structure}
Up to appending inert ancillary fermionic degrees of freedom, any type I or type II fMPU with antiperiodic boundary conditions is a $1D$ fermionic QCA and viceversa .
\end{prop}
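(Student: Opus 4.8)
The statement is an equivalence, so the plan is to prove the two implications separately, treating ``fMPU $\Rightarrow$ fQCA'' as the direct consequence of simpleness and ``fQCA $\Rightarrow$ fMPU'' as the substantial part.

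For the direction ``type I/II fMPU $\Rightarrow$ fQCA'', I would invoke the preceding proposition, by which any tensor $\mathcal{U}$ generating such an fMPU becomes simple (Definition~\ref{def:simpleness_fermionic}) after blocking $k\le D^4$ sites, and then show that simpleness forces a strict causal cone. Concretely, I would contract $U^{(N)\dagger}\mathcal{O}_jU^{(N)}$ for a local operator $\mathcal{O}_j$ and apply the graphical identities~\eqref{eq:simpleness_fermionic} together with $E_{\mathcal{U}}^2=E_{\mathcal{U}}$ to ``cut'' the transfer operator immediately to the left and right of the support of $\mathcal{O}_j$, exactly as in the qudit case~\cite{cirac2017matrix}. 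The only new ingredient is the bookkeeping of the fermionic signs of Appendix~\ref{sec:elementary_operators}; since every individual tensor is even these signs are unambiguous and do not obstruct the contraction, yielding that $U^{(N)\dagger}\mathcal{O}_jU^{(N)}$ is supported on a neighbourhood of $j$ of radius set by $k$ --- the defining property of an fQCA.

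For the converse ``fQCA $\Rightarrow$ type I/II fMPU'', I would start from the structure theorem for one-dimensional fQCA of Ref.~\cite{fidkowski2019interacting}: up to stable equivalence, i.e.\ after appending inert ancillary fermionic modes, every fQCA decomposes as a finite composition of finite-depth fermionic circuits, ordinary fermion shifts, and Majorana shifts. The plan is to exhibit an fMPU representative for each elementary factor and then close under composition. Finite-depth circuits and ordinary shifts are parity-preserving and admit manifestly simple even tensors, so they are type I fMPUs; the Majorana shift is the type II fMPU of Eq.~\eqref{eq:maj_anti}, whose validity is checked in Appendix~\ref{sec:majorana_shifts}. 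To assemble the factors I would use the $\mathbb{Z}_2$ composition structure of Prop.~\ref{prop:z2_structure} and the surrounding remarks: type I $\times$ type I is type I, type I $\times$ type II is type II, and type II $\times$ type II is type I. Hence a product containing an even (odd) number of Majorana-shift factors is a type I (type II) fMPU, placing the whole fQCA in the claimed class. The freedom of appending ancillas on the fMPU side is precisely the operation $\mathcal{U}\mapsto\mathcal{U}\gtimes\openone_x$, matching the stable-equivalence freedom on the fQCA side.

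I expect the main obstacle to lie in the converse direction, specifically in matching the Majorana degree of freedom to the type II sector across compositions. One must verify that the auxiliary parity grading and the boundary operator $S$ (i.e.\ $S=\openone/\sqrt{2}$ versus $S=e^{i\alpha}\openone$) remain consistent under the $\mathbb{Z}_2$ product, so that the degenerate graded-normal structure characteristic of type II tensors (Prop.~\ref{prop:GCF_fMPUs}) is produced exactly when an odd number of Majorana shifts is present; this is where the $1/\sqrt{2}$ normalization of the boundary tensor and the even/odd sign bookkeeping of Prop.~\ref{prop:z2_structure} must be tracked with care. By contrast, the verifications that circuits and shifts yield simple type I tensors, and that simpleness gives the causal cone, are routine.
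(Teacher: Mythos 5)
Your proposal is correct and matches the paper's own proof essentially step for step: the forward direction is dispatched through simpleness of the blocked tensor (which the paper treats as immediate), and the converse uses exactly the same decomposition of fQCA from Ref.~\cite{fidkowski2019interacting} into fermion translations, Majorana shifts, and parity-even finite-depth circuits, representing each as a type I or type II fMPU and closing under products via the $\mathbb{Z}_2$ composition structure of Prop.~\ref{prop:z2_structure}. The consistency issue you flag about the boundary operator and grading under composition is precisely what Prop.~\ref{prop:z2_structure} (together with the remarks preceding it on type I $\times$ type I and type I $\times$ type II products) already settles, so there is no gap.
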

\begin{proof}
	
	Any simple tensor obviously  generates a locality-preserving fMPU, so we  only need to show that any fQCA can be represented as a type I or type II fMPU.  In fact, this is almost trivial recalling one of the main results derived in Ref.~\cite{fidkowski2019interacting}, stating that, after blocking and appending a finite number of fermionic ancillas, any $1D$ fQCA can be obtained by composing a finite number of the following  elementary operations: $(i)$ translations of fermionic modes [as defined in Eqs.~\eqref{eq:periodic_tr},~\eqref{eq:antiperiodic_tr}]; $(ii)$ translations of Majorana modes; $(iii)$ depth-two quantum circuits made of unitaries $u$, $v$ acting on neighboring sites (importantly, $u$, $v$ must have well-defined parity, which we can assume to be even wlog). In Sec.~\ref{sec:majorana_shift_operator}, we have already shown that Majorana-shift operators can be represented by type II fMPUs. Furthermore, it is straightforward to verify that translations of fermionic modes are implemented by type I fMPUs, with equal bond and physical dimensions and tensors $T^{i,j}_{\alpha,\beta}=\delta_{j\beta}\delta_{i\alpha}$. It is also simple to see that quantum circuits can be represented by type I fMPUs. This can be done by following the construction for qudits, as explicitly carried out, e.g., in Ref.~\cite{sahinoglu2018matrix}, and showing that one can always decompose the unitaries $u$ and $v$ in terms of even tensors. As a last step, one needs to show that an arbitrary product of type I and type II fMPUs can be represented as an fMPU of type I or type II. This follows from Prop.~\ref{eq:z_2_structure}, so that, putting all together, the statement is proven.
\end{proof}

\subsection{Index theory for fMPUs with ABC}
\label{sec:index_GCF}

Let $U^{(N)}$ be a type II fMPU in the standard form~\eqref{eq:standard_form_type_II}, and denote by $\tilde{\mathcal{U}}$, $\mathcal{M}$ the tensors in GCF associated with $\tilde{U}^{(N)}$ and $M^{(N)}_A$, respectively. Denote by $\mathcal{U}$ the tensor obtained by composing $\tilde{\mathcal{U}}$ and $\mathcal{M}$, and let $k$ be such that $\tilde{\mathcal{U}}_k$ is simple. Then, it is shown in Lemma~\ref{lemma:index_type_II} that the exponentiated index for $\mathcal{U}_{q}$ with $q\geq 2k$ is $\mathcal{I}_f=\sqrt{2}\tilde{\mathcal{I}}_f$, where $\tilde{\mathcal{I}}_f$ is the index of $\tilde{\mathcal{U}}_k$. Now, the tensor $\mathcal{U}_q$, obtained by composing $\tilde{\mathcal{U}}_q$ and $\mathcal{M}_q$, is not necessarily in GCF, so that one needs to make sure that the index of $\mathcal{U}_q$ coincides with that computed in the corresponding GCF. In the following, we show that this is true if we block $\tilde{q}$ times, with $\tilde{q}\geq 4k$.

\begin{lem}
Using the previous notations, the index of the tensor $\mathcal{U}_{\tilde{q}}$ is the same as the one computed in the corresponding GCF, where $\tilde{q}\geq 4k$. 
\end{lem}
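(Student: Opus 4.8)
The plan is to show that the ranks $r$ and $\ell$ of the maps $\mathcal{M}^f_{1}$, $\mathcal{M}^f_{2}$ in Eq.~\eqref{eq:f_m_maps}, when evaluated directly on the composed tensor $\mathcal{U}_{\tilde q}$, coincide with those evaluated on its unique graded canonical form $\mathcal{U}^{\rm GCF}$ (unique by Theorem~\ref{thm2_fundamental}); once this is established the two indices agree by definition. The first step is to recall why $\mathcal{U}_{\tilde q}$ fails to be in GCF: since $U^{(N)}=\tilde{U}^{(N)}M^{(N)}_A$ is a type II fMPU with antiperiodic boundary conditions, unitarity and Eq.~\eqref{eq:trace_transfer_antiperiodic} give ${\rm tr}(E^N_{\mathcal{U}_{\tilde q}})=2$ for all $N$, so that $E_{\mathcal{U}_{\tilde q}}$ has exactly two eigenvalues equal to $1$ and all the remaining ones equal to $0$. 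The composed tensor may nevertheless carry a nilpotent sector, i.e.\ a non-graded-normal block of vanishing spectral radius, which is precisely what is removed in passing to $\mathcal{U}^{\rm GCF}$ (a single degenerate graded normal block, by Prop.~\ref{prop:normality_fmpus}).

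First I would prove that the two rank computations are insensitive to this nilpotent sector. The key observation is that the ranks admit the transfer-matrix representation already exploited in Lemmas~\ref{lem:majorana_rank} and \ref{lemma:index_type_II}: using ${\rm rank}(A)={\rm rank}(A^\dagger A)$ together with the zipping identity~\eqref{eq:zip}, the ranks $r$, $\ell$ of $\mathcal{U}_{\tilde q}$ are governed by the fixed-point (rank-two) projector of $E_{\mathcal{U}_{\tilde q}}$ onto its unit eigenspace, rather than by the full bond space. Since $\mathcal{U}_{\tilde q}$ and $\mathcal{U}^{\rm GCF}$ generate the same fMPU, their transfer matrices share the same eigenvalue-$1$ eigenspace and the same action of the simpleness projector, and the nilpotent directions are annihilated once sufficiently many tensors are contracted. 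Blocking $\tilde q\geq 4k$ times guarantees that this contraction exceeds the maximal Jordan-block size of $E_{\mathcal{U}_{\tilde q}}$ (bounded by the squared bond dimension) for both tensors simultaneously, so that the projected computations, and hence the ranks, coincide.

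Having reduced the problem to the essential degenerate graded normal block, the second step is to control the residual gauge freedom. By Theorem~\ref{thm2_fundamental}, once $\mathcal{U}_{\tilde q}$ is identified (up to removal of the nilpotent sector) with a single GCF block, it is related to $\mathcal{U}^{\rm GCF}$ by $A^i=X\Pi B^i\Pi^{-1}X^{-1}$, with $X$ an even invertible matrix ($[X,Z]=0$) and $\Pi$ a permutation satisfying $\Pi^{-1}Z\Pi=\pm Z$. I would then verify that $r$ and $\ell$ are invariant under such transformations. This is immediate: writing $\mathcal{M}^f_{1,2}$ as matrices with entries $(-1)^{|n||m|}U^{n,m}_{\alpha,\beta}$ and $U^{n,m}_{\alpha,\beta}$ respectively, the gauge and permutation act only on the auxiliary labels $\alpha,\beta$, i.e.\ by left and right multiplication with the invertible matrices $\openone\otimes (X\Pi)^{\mp 1}$, while the physical signs $(-1)^{|n||m|}$ are untouched; invertible left and right multiplication preserves the rank.

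The main obstacle is the first step, namely showing that the nilpotent sector of the composed tensor does not inflate the ranks and that the rank computation is faithfully captured by the unit eigenspace of the transfer matrix. The subtlety is twofold: the reducible, non-graded-normal structure of $\mathcal{U}_{\tilde q}$ means Theorem~\ref{thm2_fundamental} cannot be invoked directly on the raw tensor, and the fermionic signs generated when rearranging graded tensors in the zipping step~\eqref{eq:zip} must be tracked carefully, since they are responsible for the asymmetry $r\neq\ell$ underlying the $\sqrt{2}$ factor. Establishing that the bound $\tilde q\geq 4k$ is sufficient for the Jordan structure of $E_{\mathcal{U}_{\tilde q}}$ to be exhausted is the quantitative heart of the argument.
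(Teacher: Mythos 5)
There is a genuine gap, and you have located it yourself: the step you defer to the end (``the main obstacle'') is the entire content of the lemma, and the mechanism you propose for it does not work. Your claim that the ranks $r,\ell$ of $\mathcal{U}_{\tilde q}$ are ``governed by the fixed-point (rank-two) projector of $E_{\mathcal{U}_{\tilde q}}$'' is precisely what must be proven, and your justification --- that the nilpotent directions are ``annihilated once sufficiently many tensors are contracted,'' with $\tilde q\geq 4k$ exceeding the maximal Jordan-block size --- is not the right mechanism. The maps $\mathcal{M}^f_{1,2}$ are built from a \emph{single} copy of the blocked tensor; converting ${\rm rank}(A)$ into ${\rm rank}(A^\dagger A)$ yields a diagram containing one $\mathcal{U}_{\tilde q}$ and one $\bar{\mathcal{U}}_{\tilde q}$ with open physical legs, not a high power $E^N$ of the transfer matrix, so no Jordan-block exhaustion ever occurs in the rank computation. (Also, the transfer matrices of $\mathcal{U}_{\tilde q}$ and of its GCF act on bond spaces of different dimensions, so they cannot literally ``share the same eigenvalue-$1$ eigenspace.'') The actual danger is that passing to the GCF compresses the auxiliary space onto the supports of $\Phi_1$ and $\rho_1^\ast$, and a priori this compression could lower $r$ or $\ell$.

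The paper closes this gap in two concrete moves that are absent from your proposal. First, it constructs the GCF explicitly: with $P$, $Q$ the orthogonal projectors onto the supports of $\Phi_1$, $\rho_1^\ast$, it shows $\mathcal{W}P=\mathcal{W}$ and $Q\mathcal{W}=\mathcal{W}$ (using that $\rho_2$, $\Phi_2$ are odd, hence traceless), builds an even projector $\tilde P$ via Jordan's Lemma, verifies through Prop.~\ref{prop:fixed_point_rank} that $\tilde{\mathcal W}=\tilde P\mathcal{W}\tilde P$ is a degenerate graded normal tensor, and exhibits invertible matrices relating $\tilde{\mathcal W}$ to $\hat{\mathcal W}=\sqrt{\Phi_1}\,\mathcal{W}\sqrt{\rho_1^\ast}$, so that ${\rm rank}(\tilde{\mathcal W})={\rm rank}(\hat{\mathcal W})$ is immediate; your gauge-invariance step via Theorem~\ref{thm2_fundamental} is correct but subsumed, since the relation to the GCF is through invertible matrices by construction. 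Second --- and this is the quantitative heart you leave open --- it proves ${\rm rank}(\mathcal{W})={\rm rank}(\hat{\mathcal W})$, i.e., that multiplying by the generally singular $\sqrt{\rho_1}$ and $\sqrt{\Phi_1^\ast}$ does not lower the ranks: it factorizes $\rho_1$ through the fixed point $\tilde\rho$ of $E_{\tilde{\mathcal U}}$ (full rank because $\tilde{\mathcal U}$ is in GCF), and uses simpleness of $\tilde{\mathcal U}_k$ together with the Majorana zipping identity~\eqref{eq:zip} to reduce both ${\rm rank}(\mathcal{U}_{\tilde q}\sqrt{\rho_1})$ and ${\rm rank}(\mathcal{U}_{\tilde q})$ to the same diagram, cf.\ Eqs.~\eqref{eq:rank_sqrt_r} and \eqref{eq:rank_plain}, and analogously for the other cut. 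The condition $\tilde q\geq 4k$ enters only to guarantee enough simple blocks of $\tilde{\mathcal U}$ on each side of the two cuts used in these computations --- it has nothing to do with Jordan-block sizes of $E_{\mathcal{U}_{\tilde q}}$. Without some version of this diagrammatic rank comparison, your argument does not go through.
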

\begin{proof}
	First, from the explicit graphical representation it is immediate to show that $\mathcal{W}=\mathcal{U}_{\tilde{q}}$ is simple, and that the transfer matrix reads $E_{\mathcal{W}}=|\rho_1)(\Phi_1|+|\rho_2)(\Phi_2|$, where $|\rho_1)$, $|\phi_1)$ are even, when seen as operators acting in the auxiliary space, and $|\rho_2)$, $|\phi_2)$ are odd. Let $P$, $Q$ be the orthogonal projectors onto the support of $\Phi_1$, $\rho_1^\ast$ respectively, where $\rho_1^\ast$ is the complex conjugate of $\rho_1$. We claim $\mathcal{W}P=\mathcal{W}$ (where matrix multiplication is  intended from right to left, as usual). To this end, we need to show that if $|v)$ is a state in the auxiliary space, such that $P|v)=0$ and $P^{\perp}|v)=|v)$, then 
	\be
	|w)=W^{n,m}_{\alpha,\beta} |\alpha)\ket{n}\bra{m}(\beta|v)=0\,,
	\ee
	where we denoted by $W^{n,m}_{\alpha,\beta}$ the matrix elements associated with $\mathcal{W}$.
	First, we note that since $\Phi_1$ is even, its support is a graded subspace, and $P$ is an even operator. Then, we can assume wlog that $v$ is even, and compute
	\be\label{eq:norm}
	(w|w)=
	\begin{tikzpicture}[baseline={([yshift=0.ex]current bounding box.center)}, scale=0.7]	
	\draw (1.25,1) arc (-270:-90:0.5);
	\draw (0.05,1) arc (90:-90:0.5);
	\draw [fill=white] (0.4,0.25) rectangle (0.9,0.75);
	\draw(0.7,-0.2)node{$E_\mathcal{W}$};
	\draw (-1.2+1.25,1) arc (-270:-90:0.5);
	\draw(1.8,0)node{$|v)$};
	\draw(1.8,1)node{$|v)$};
	\end{tikzpicture}
	=
	\begin{tikzpicture}[baseline={([yshift=0.ex]current bounding box.center)}, scale=0.7]	
	\draw (1.25+0.5,1) arc (-270:-90:0.5);
	\draw (0.05,1) arc (90:-90:0.5);
	\draw [fill=white] (0.2,0.2) rectangle (0.8,0.8);
	\draw (-1.2+1.25,1) arc (-270:-90:0.5);
	\draw [fill=white] (0.2+0.7,0.2) rectangle (0.8+0.7,0.8);
	\draw(1.8+0.5,0)node{$|v)$};
	\draw(1.8+0.5,1)node{$|v)$};
	\draw(+0.5,0.5)node{$\rho_2$};
	\draw(+1.2,0.5)node{$\Phi_2$};
	\end{tikzpicture}\,,
	\ee
	where we used that $\Phi_1|v)=0$ (since $P|v)=0$, and $P$ projects onto the support of $\Phi_1$). We see now that the rhs  of Eq.~\eqref{eq:norm} is zero, because it is proportional to the trace of $\rho_2$. Indeed, the latter is zero, since $\rho_2$ an odd operator. In the same way, one can see that $Q\mathcal{W}=\mathcal{W}$. Then, it is easy to show that it is also $\Phi_2 P=\Phi_2$, and $Q \rho_2=\rho_2$. Now, following the proof of Prop.~IV.5 in Ref.~\cite{cirac2017matrix}, we show  how to obtain the GCF of $\mathcal{W}$ using $P$ and $Q$. To this end, we use Jordan's Lemma, which guarantees a decomposition of the space $\mathbb{C}^D=(\bigoplus_i \mathbb{C}^2)\oplus
	\mathbb{C}^k$ such that in that basis, $P=\bigoplus_i |0\rangle\langle
	0|_i\oplus R$,  $Q=\bigoplus_i |v_i\rangle\langle v_i|\oplus S$, where $R$
	and $S$ are commuting projectors on $\mathbb{C}^k$. We can choose  $|0\rangle\langle
	0|_i$, $|v_i\rangle\langle v_i|$ , $R$, $S$ to be all even operators. Let us now define the (even) projector $\tilde{P}:=\bigoplus_i |0\rangle\langle 0|_i\oplus RS$. 
	We have the following properties:
	\begin{enumerate}[(i)]
		\item $P\tilde{P}=\tilde{P}$.
		\item $PQ=\tilde{P}Q$.
		\item $QP=Q\tilde{P}$.
		\item There exists an invertible $Y$ so that $\tilde{P}Q Y=\tilde{P}$.
	\end{enumerate}
	
We claim that $\tilde{\cal W}:=\tilde{P}{\cal W}\tilde{P}$ is the GCF of ${\cal W}$, when restricted to the range of $\tilde P$. First, using the properties above, it is straightforward to see that $\tilde{\cal W}$ and ${\cal W}$ define the same fMPU for all $N$. Next, we need to show that $\tilde{\cal W}$ is a degenerate graded normal tensor. To this end, we observe that the transfer operator of $\tilde{\cal W}$ is $X\mapsto {\rm tr}(\tilde{P}\Phi_1\tilde{P}X)\tilde{P}\rho_1\tilde{P}+{\rm tr}(\tilde{P}\Phi_2\tilde{P}X)\tilde{P}\rho_2\tilde{P}$. Using the properties of the projector $\tilde{P}$, it is immediate to  see that $\tilde{P}\rho_i\tilde{P}$, $\tilde{P}\Phi_i\tilde{P}$ are the left and right eigenvectors. Furthermore, they are even (odd) for $i=1$ ($i=2$), since $\tilde{P}$ is even, and using the properties of $\tilde{P}$ one can  easily see that  both $\tilde{P}\rho_1\tilde{P}$,  $\tilde{P}\Phi_1\tilde{P}$ are full rank in the range of $\tilde{P}$. Invoking Prop.~\ref{prop:fixed_point_rank} in Appendix~\ref{sec:technical_results_antiperiodic},  we have that $\tilde{\cal W}$ is a degenerate graded normal  tensor. 
	
Next, we show that $\tilde{\mathcal W}$ has the same left and right ranks as the tensor $\hat{\mathcal W}=\sqrt{\Phi_1}\mathcal W \sqrt{\rho_1^\ast}$, where again matrix multiplication is intended from right to left. For this, we take invertible matrices $X$, $Y$, $Z$ such that $X\sqrt{\Phi_1}=P$, $\sqrt{\rho^\ast}_1Z=Q$ (this can always be done: taking $Z^\prime$ invertible such that $Z^\prime \sqrt{\rho_1^\ast}=Q$, we have $Z=Z^{\prime\dagger}$) and $PQY=\tilde{P}$. Then, it is simple to show $Z\hat{\mathcal W}ZY=\tilde{\mathcal W}$, which  proves the claim.
	
Finally, we show that the rank of $\mathcal{U}_{\tilde{q}}$ is the same as $\hat{W}$. Using that for any operator ${\rm rank}(A)={\rm rank}(A^{\dagger}A)$, we have
\begin{equation}\label{eq:rank_sqrt_r}
\begin{tikzpicture}[baseline={([yshift=-.5ex]current bounding box.center)}, scale=0.7]	
\draw(-0.5,0.) node{rank};
\draw [black] (0.5,-0.5) to [round left paren] (0.5,0.5);
\draw [black] (3.2,-0.5) to [round right paren] (3.2,0.5);
\draw (0.8,0) -- (3.15,0);	
\draw (0.8+0.75,-0.6)  -- (0.8+0.75,0.6);
%
%\draw [fill=white] (0.45,-0.3) rectangle (1.05,0.3);
\draw [fill=white] (+0.8+0.45,-0.3) rectangle (+0.8+1.05,0.3);
%\draw(0.8,0.)node{$\mathcal{U}_2$};
\draw(+0.8+0.8,0.)node{$\mathcal{U}_{\tilde{q}}$};
\filldraw[fill=white,draw=black] (2.5,0) circle [radius=0.45];
\draw(+2.45,0.)node{\footnotesize $\sqrt{\rho_1}$};
\draw[dashed, thick, color=gray](0.5,0.15) -- (3.2,-0.15);
\end{tikzpicture}
=
\begin{tikzpicture}[baseline={([yshift=-1.5ex]current bounding box.center)}, scale=0.7]	
\draw(-0.5,0.25) node{rank};
\draw [black] (0.5,-0.5) to [round left paren] (0.5,1);
\draw [black] (2.7,-0.5) to [round right paren] (2.7,1);
\draw (0.8,0) -- (2.2,0);	
\draw (0.8,0.8) -- (2.2,0.8);	
\draw (0.8+0.75,-0.6)  -- (0.8+0.75,1.4);
\draw [fill=white] (+0.8+0.45,-0.3) rectangle (+0.8+1.05,0.3);
\draw [fill=black] (+0.8+0.45,0.8+-0.3) rectangle (+0.8+1.05,0.8+0.3);
\draw(+0.8+0.8,0.)node{$\mathcal{U}_{\tilde{q}}$};
\draw (2.15,0) arc (-90:90:0.4);
\filldraw[fill=white,draw=black] (2.35,0.4) circle [radius=0.25];
\draw(+2.35,0.4)node{\footnotesize $\rho_1$};
\draw[dashed, thick, color=gray](0.15,0.4) -- (3.2,0.4);	
\end{tikzpicture}
\end{equation}
where we separated input and output with a gray dotted line, and where we denoted by a black box the tensor $\bar{\mathcal{U}}_{\tilde{q}}$. Note that the cut determines the order of multiplication of the matrices involved. Next, define
\be
\mathcal{V}_{\tilde{q}}=
\begin{tikzpicture}[baseline={([yshift=-0.5ex]current bounding box.center)}, scale=0.7]	

\draw (0,-0.6) node[left]{} -- (0,1.25) node[left]{};
\draw (-0.6,0) node[left]{} -- (1,0) node[right]{};
\draw (-0.6,0.75) node[left]{} -- (1.5,0.75) node[right]{};
\filldraw[fill=white,draw=black] (0,-0.1) circle [radius=0.37];
\draw [fill=white] (-0.35,0.4) rectangle (0.35,1.1);
\draw(+0.,0.7)node{\footnotesize $\tilde{\mathcal{U}}_{\tilde{q}}$};
\draw(+0.,-0.1)node{\footnotesize $\mathcal{M}_{\tilde{q}}$};
\draw(+0.,-0.1)node{\footnotesize $\mathcal{M}_{\tilde{q}}$};
\filldraw[fill=white,draw=black] (1,0.75) circle [radius=0.35];
\draw(+0.95,0.75)node{\footnotesize $\sqrt{\tilde{\rho}}$};
\end{tikzpicture}\,,
\ee
where $\tilde{\rho}$ is the right eigenstate associated with eigenvalue $1$ of the transfer matrix $E_\mathcal{\tilde{U}}$. Since $\tilde{\mathcal{U}}$ is in GCF, the rank of $\tilde{\rho}$ is maximum, and clearly the left and right ranks for $\mathcal{V}_{\tilde{q}}$, $\mathcal{U}_{\tilde{q}}$ coincide. Finally, using once again that ${\rm rank}(A)={\rm rank}(A^{\dagger}A)$ we also have
\begin{equation}\label{eq:rank_plain}
\begin{tikzpicture}[baseline={([yshift=-.5ex]current bounding box.center)}, scale=0.7]	
\draw(-0.5,0.) node{rank};
\draw [black] (0.5,-0.5) to [round left paren] (0.5,0.5);
\draw [black] (3.,-0.5) to [round right paren] (3.,0.5);
\draw (0.8,0) -- (2.7,0);	
\draw (0.8+0.75,-0.6)  -- (0.8+0.75,0.6);
%
%\draw [fill=white] (0.45,-0.3) rectangle (1.05,0.3);
\draw [fill=white] (+0.8+0.45,-0.3) rectangle (+0.8+1.05,0.3);
\draw(+0.8+0.8,0.)node{$\mathcal{V}_{\tilde{q}}$};
\draw[dashed, thick, color=gray](0.5,0.15) -- (3.,-0.15);
\end{tikzpicture}
=
\begin{tikzpicture}[baseline={([yshift=-1.5ex]current bounding box.center)}, scale=0.7]	
\draw(-0.5,0.25) node{rank};
\draw [black] (0.5,-0.5) to [round left paren] (0.5,1);
\draw [black] (2.7,-0.5) to [round right paren] (2.7,1);
\draw (0.8,0) -- (2.2,0);	
\draw (0.8,0.8) -- (2.2,0.8);	
\draw (0.8+0.75,-0.6)  -- (0.8+0.75,1.4);
\draw [fill=white] (+0.8+0.45,-0.3) rectangle (+0.8+1.05,0.3);
\draw [fill=black] (+0.8+0.45,0.8+-0.3) rectangle (+0.8+1.05,0.8+0.3);
\draw(+0.8+0.8,0.)node{$\mathcal{V}_{\tilde{q}}$};
\draw (2.15,0) arc (-90:90:0.4);
\draw[dashed, thick, color=gray](0.15,0.4) -- (3.2,0.4);	
\end{tikzpicture}
\end{equation}

Now, it it straightforward to verify that  
\be
\rho_1=
\begin{tikzpicture}[baseline={([yshift=-0.5ex]current bounding box.center)}, scale=0.7]	
\draw (-0.55,0.4) node[left]{} -- (0.75,0.4) node[left]{};
\draw (-0.55,0.8) node[left]{} -- (0.75,0.8) node[left]{};
\draw (-0.55,1.2) node[left]{} -- (0.75,1.2) node[left]{};
\draw (-0.55,1.6) node[left]{} -- (0.75,1.6) node[left]{};
\draw [fill=white] (-0.3,0.) rectangle (0.5,2);
\draw (0.75,0.4) arc (-90:90:0.6);
\draw (0.75,0.8) arc (-90:90:0.2);
\filldraw[fill=white,draw=black] (1,1) circle [radius=0.25];
\draw(0.1,1)node{$E_{\mathcal{U}}$};
\draw(1,1)node{\footnotesize $\tilde{\rho}$};
\end{tikzpicture}
\ee
Here, we have used that the right even eigenvector associated with the transfer matrix $E_{\tilde{\mathcal{M}}}$ is simply the identity. We can now plug this expression into the rhs of Eq.~\eqref{eq:rank_sqrt_r}, and express $\mathcal{U}_{\tilde{q}}$ in terms of $\mathcal{U}$. Finally, recalling that $\mathcal{U}_k$ is simple, and making use of Eq.~\eqref{eq:zip} for the Majorana shift operator, after a straightforward calculation we obtain the rhs of Eq.~\eqref{eq:rank_plain}, so that the lhs of  Eq.~\eqref{eq:rank_sqrt_r} and \eqref{eq:rank_plain} also coincide. Hence, since the ranks for $\mathcal{V}_{\tilde{q}}$ and $\mathcal{U}_{\tilde{q}}$ are the same, we have just proven that multiplying on the right by $\sqrt{\rho_1}$ does not change the rank. In a  similar way, using that ${\rm rank}(A)={\rm rank}(AA^{\dagger})$, we can show that the rank does not change by multiplying the input auxiliary space by $\sqrt{\Phi_1^\ast}$. Finally, the same argument can be used for the NE-SW cut, thus completing the  proof. 

\end{proof}

\bibliography{./bibliography}

\end{document}